\newtheorem{theorem}{Theorem}[section]
\newtheorem{conjecture}[theorem]{Conjecture}
\newtheorem{lemma}[theorem]{Lemma}
\newtheorem{claim}[theorem]{Claim}
\newtheorem{definition}[theorem]{Definition}
\newtheorem{fact}[theorem]{Fact}
\newtheorem{observation}[theorem]{Observation}
\newcommand{\ignore}[1]{}
\newcommand{\cC}{{\cal C}}
\newcommand{\cD}{\mathcal{D}}
\newcommand{\cE}{{\cal E}}
\newcommand{\cH}{{\cal H}}
\newcommand{\cR}{{\cal R}}
\newcommand{\eps}{\varepsilon}
\newcommand{\var}{\mathrm{var}}
\newcommand{\poly}{\mathrm{poly}}
\newcommand{\calH}{{\cal H}}
\newcommand{\bone}{{\bf 1}}
\newcommand{\ba}{\mathbf{a}}
\newcommand{\bb}{\mathbf{b}}
\newcommand{\be}{\mathbf{e}}
\newcommand{\bu}{\mathbf{u}}
\newcommand{\bv}{\mathbf{v}}
\newcommand{\bw}{\mathbf{w}}
\newcommand{\bx}{\mathbf{x}}
\newcommand{\by}{\mathbf{y}}
\newcommand{\bz}{\mathbf{z}}
\newcommand{\bA}{\boldsymbol{A}}
\newcommand{\bfS}{\boldsymbol{S}}
\newcommand{\RR}{\mathbb{R}}
\newcommand{\mbone}{\mathbbm{1}}
\newcommand{\norm}[1]{\left\lVert #1 \right\rVert}
\newcommand{\ceil}[1]{\lceil#1\rceil}
\newcommand{\Exp}{\EX}
\newcommand{\EX}{\mathbf{E}}
\newcommand{\eqdef}{:=}
\newcommand{\otilde}{\widetilde{O}}
\newcommand{\hyp}[1]{\{0,1\}^{#1}}
\newcommand{\dtal}[1]{T_{\Phi_\chi}(#1)}
\newcommand{\infl}[2]{I_#1(#2)}
\newcommand{\Phiinfl}[2]{\Phi_{#1}(#2)}
\newcommand{\sortline}[1]{\mathtt{sort}(#1)}
\newcommand{\sorti}[2]{\mathtt{sort}_#2(#1)}
\newcommand{\cper}{C_{per}}
\newcommand{\clarge}{C_{lar}}
\newcommand{\Sec}[1]{\hyperref[sec:#1]{\Cref*{sec:#1}}} 
\newcommand{\Eqn}[1]{\hyperref[eq:#1]{(\ref*{eq:#1})}} 
\newcommand{\Fig}[1]{\hyperref[fig:#1]{Fig.\,\ref*{fig:#1}}} 
\newcommand{\Tab}[1]{\hyperref[tab:#1]{Tab.\,\ref*{tab:#1}}} 
\newcommand{\Thm}[1]{\hyperref[thm:#1]{Theorem\,\ref*{thm:#1}}} 
\newcommand{\Fact}[1]{\hyperref[fact:#1]{Fact\,\ref*{fact:#1}}} 
\newcommand{\Lem}[1]{\hyperref[lem:#1]{Lemma\,\ref*{lem:#1}}} 
\newcommand{\Prop}[1]{\hyperref[prop:#1]{Prop.~\ref*{prop:#1}}} 
\newcommand{\Cor}[1]{\hyperref[cor:#1]{Corollary~\ref*{cor:#1}}} 
\newcommand{\Conj}[1]{\hyperref[conj:#1]{Conjecture~\ref*{conj:#1}}} 
\newcommand{\Def}[1]{\hyperref[def:#1]{Definition~\ref*{def:#1}}} 
\newcommand{\Alg}[1]{\hyperref[alg:#1]{Alg.~\ref*{alg:#1}}} 
\newcommand{\Obs}[1]{\hyperref[obs:#1]{Obs.~\ref*{obs:#1}}} 
\newcommand{\Ex}[1]{\hyperref[ex:#1]{Ex.~\ref*{ex:#1}}} 
\newcommand{\Clm}[1]{\hyperref[clm:#1]{Claim~\ref*{clm:#1}}} 
\newcommand{\Step}[1]{\hyperref[step:#1]{Step~\ref*{step:#1}}} 
\newcommand{\DeepC}[1]{{\color{red} DeepC: #1}}
\newcommand{\Omgt}{\widetilde{\Omega}}
\begin{document}

\title{Directed Isoperimetric Theorems for Boolean Functions on the Hypergrid
and an $\otilde(n\sqrt{d})$ Monotonicity Tester}

\author{Hadley Black\\
University of California, Los Angeles\\
{\tt hablack@ucla.edu}
\and
Deeparnab  Chakrabarty\thanks{Supported by NSF-CAREER award 2041920} \\
Dartmouth\\
{\tt deeparnab@dartmouth.edu}
\and
C. Seshadhri\thanks{Supported by NSF DMS-2023495, CCF-1740850, 1839317, 1813165, 1908384, 1909790, and ARO Award W911NF1910294.} \\
University of California, Santa Cruz\\
{\tt sesh@ucsc.edu}
}

\date{}

\maketitle

\def\Inf{I}

\def\tI{\Phi}

\def\sort{\mathtt{sort}}
\def\bA{\vv{A^{\bot}}}
\def\bB{\vv{B^{\bot}}}
\def\bbA{\vv{A^{\parallel}}}
\def\bbB{\vv{B^{\parallel}}}
\def\bT{\vv{\phi}}
\def\vu{\vv{u}}
\def\vvv{\vv{v}}
\def\bU{\vv{U}}
\def\bV{\vv{V}}
\def\bS{\vv{S}}
\def\vw{\vv{w}}
\def\vz{\vv{z}}
\newcommand{\sortdown}[1]{\left(#1\right)^{\downarrow}}
\newcommand{\sortup}[1]{\left(#1\right)^{\uparrow}}

\def\LHS{\mathsf{LHS}}
\def\RHS{\mathsf{RHS}}
\def\colchi{\color{red} \chi'}
\def\coordom{\succeq_{\mathsf{coor}}}
\def\majorizes{\succeq_{\mathsf{maj}}}
\def\majorizedby{\preceq_{\mathsf{maj}}}
\def\dominates{\trianglerighteq}

\begin{abstract}
The problem of testing monotonicity for Boolean functions on the hypergrid, $f:[n]^d \to \{0,1\}$ is a classic topic in property testing. When $n=2$, the domain is the hypercube. For the hypercube case, a breakthrough result of Khot-Minzer-Safra (FOCS 2015) gave a non-adaptive, one-sided tester making $\otilde(\eps^{-2}\sqrt{d})$ queries. Up to polylog $d$ and $\eps$ factors, this bound matches the $\widetilde{\Omega}(\sqrt{d})$-query non-adaptive lower bound (Chen-De-Servedio-Tan (STOC 2015), Chen-Waingarten-Xie (STOC 2017)). For any $n > 2$, the optimal non-adaptive complexity was unknown. A previous result of the authors achieves a $\otilde(d^{5/6})$-query upper bound (SODA 2020), quite far from the $\sqrt{d}$ bound for the hypercube.

In this paper, we resolve the non-adaptive complexity of monotonicity testing for all constant $n$, up to $\poly(\eps^{-1}\log d)$ factors. Specifically, we give a non-adaptive, one-sided monotonicity tester making $\otilde(\eps^{-2}n\sqrt{d})$ queries. From a technical standpoint, we prove new directed isoperimetric theorems over the hypergrid $[n]^d$. These results generalize the celebrated directed Talagrand inequalities that were only known for the hypercube.

\end{abstract}
\thispagestyle{empty}
\setcounter{page}{0}
\newpage

\section{Introduction}

Monotonicity testing, especially over hypergrid domains, is one of the most well studied problems in property testing. 
We use $[n]$ to denote the set $\{1,2,\ldots, n\}$. The set $[n]^d$ is the $d$-dimensional hypergrid where $\bx \in [n]^d$ is a $d$-dimensional vector with $\bx_i \in [n]$.
The hypergrid is equipped with the natural partial order $\bx \preceq \by$ iff $\bx_i \leq \by_i$ for all $i\in [d]$. Note that when $n=2$, the hypergrid $[n]^d$ is isomorphic to the
hypercube $\{0,1\}^d$. 

Let $f:[n]^d \to \{0,1\}$ be a Boolean function defined on the hypergrid. The function $f$ is monotone if $f(\bx) \leq f(\by)$ whenever $\bx \preceq \by$.
The Hamming distance between two Boolean functions $f$ and $g$,
denoted as $\Delta(f,g)$, is the fraction of points where they differ.
The {\em distance to monotonicity} of a function $f:[n]^d \to \{0,1\}$ is defined as
$\eps_f := \min_{g~\textrm{monotone}} \Delta(f,g)$. 
The Boolean monotonicity testing problem on the hypergrid takes parameter $\eps$ and oracle access to $f:[n]^d \to \{0,1\}$. The objective is to design a randomized algorithm, called the tester,
that accepts a monotone function with probability $\geq 2/3$ and rejects a function $f$ with $\eps_f \geq \eps$ with probability $\geq 2/3$. 
A tester is one-sided if it accepts a monotone function with probability $1$. A tester is non-adaptive if all its queries are made in one round before seeing any responses. 

There has been a rich history of results on monotonicity testing over hypergrids,
with a significant focus on hypercubes~\cite{GGLRS00,DGLRRS99,ChSe13,ChSe13-j,BeRaYa14,ChenST14,ChDi+15,ChenDST15,KMS15,BeBl16,Chen17,BlackCS18,BlackCS20,BKR20,HY22}.
We discuss the history more in \Cref{sec:related}, but for now, we give the state of the art.
For hypercubes, after a long line of work, the breakthrough result~\cite{KMS15} of Khot, Minzer, and Safra gave
an $\otilde_\eps(\sqrt{d})$-query non-adaptive, one-sided tester. This result is tight due to a nearly matching $\widetilde{\Omega}(\sqrt{d})$-query lower bound for non-adaptive testers due to Chen, Waingarten, and Xie~\cite{Chen17}.
For general hypergrids, the best upper bound is the $\otilde_\eps(d^{5/6})$-query 
tester of the authors~\cite{BlackCS18,BlackCS20}. 

This $\widetilde{\Omega}(\sqrt{d})$ vs $\otilde(d^{5/6})$ gap for non-adaptive testers is a tantalizing and important open question in property testing.
Even for the domain $[3]^d$, the optimal non-adaptive monotonicity testing bound is unknown.
One of the main questions driving our work is: 

\begin{center}
\emph{Are there $\otilde_\eps(\sqrt{d})$-query monotonicity testers for domains beyond the hypercube?}
\end{center}

\paragraph{Directed isoperimetric theorems.} The initial seminal work on monotonicity testing,
by Goldreich, Goldwasser, Lehman, Ron, and Samorodnitsky~\cite{GGLRS00}
and Dodis, Goldreich, Lehman, Ron, Raskhodnikova and Samorodnitsky~\cite{DGLRRS99}
prove the existence of $\otilde_\eps(d)$-query testers. 
For almost a decade, it was not clear whether $o(d)$-query testers were possible. 
In~\cite{ChSe13-j}, the last two authors gave the first such tester via an exciting connection
with \emph{robust directed isoperimetric theorems}. Indeed, all $o(d)$-query testers
are achieved through such theorems.

Think of a Boolean function $f$ as the indicator for a subset of the domain.
The variance of $f$, $\var(f)$, is a measure of the volume of the indicated subset.
An isoperimetric theorem for Boolean functions relates the variance of $f$ to the ``boundary'' of the function which corresponds
to the sensitive edges and/or their endpoints.
The deep insight of these theorems comes from sophisticated ways of measuring boundary size, involving both the vertex and edge boundary.
A \emph{directed} isoperimetric theorem is an analog where we only measure ``up-boundary'' formed by monotonicity violations.
Rather surprisingly, in the directed case, one can replace the variance as a measure of volume by the distance to monotonicity.

In \Cref{tab:dir-iso}, we list some classic isoperimetric results and their directed analogues for the hypercube. 
For a point $\bx$, $I_f(\bx)$ is the number of sensitive edges incident to $\bx$.
We use $I_f$ to denote $\Exp_{\bx}[I_f(\bx)]$, the total influence of $f$, which the number of sensitive edges in $f$ divided by the domain size $2^d$.
The quantity $\Gamma_f$ is the vertex boundary size divided by $2^d$. The directed analogues of these, $I^-_f, \Gamma^-_f, I^-_f(\bx)$,
only consider sensitive edges that violate monotonicity.

\begin{table}[ht!]
\begin{center}
	\def\arraystretch{1.5}
	\begin{tabular} {| c | c | }
		\hline
		 Undirected Isoperimetry & Directed Isoperimetry \\ \hline
		$I_f \geq \Omega(\var(f))$ ~~~~ (\emph{Poincar\'{e} inequality, Folklore})& $I^-_f \geq \Omega(\eps_f)$ ~~~ (\emph{Goldreich et al.\cite{GGLRS00}})
		\\ \hline
		 $I_f\cdot \Gamma_f \geq \Omega(\var(f)^2)$ ~~~(\emph{Margulis~\cite{Mar74}})
		& $I^-_f \cdot \Gamma^-_f \geq \Omega(\eps^2_f)$ ~~(\emph{Chakrabarty, Seshadhri~\cite{ChSe13-j}})
		 \\ \hline
		 $\Exp_\bx\left[\sqrt{I_f(\bx)}\right] \geq \Omega(\var(f))$  (\emph{Talagrand ~\cite{Tal93}}) 
		& $\Exp_\bx\left[\sqrt{I^-_f(\bx)}\right] = \Omega(\frac{\eps_f}{\log d})$ ~~(\emph{Khot, Minzer, Safra~\cite{KMS15}})
		\\ \hline
	\end{tabular}
\end{center}
\caption{\em Boolean hypercube isoperimetry results and their directed analogues.
Pallavoor, Raskhodnikova, and Waingarten~\cite{PRW22} removed the $\log d$-dependence in the directed Talagrand inequality.}\label{tab:dir-iso}
\end{table}

Observe the remarkable parallel between the standard isoperimetric results and their directed versions.
The Talagrand inequality is the strongest statement, and implies all other bounds. The directed versions
imply the undirected versions, using standard inequalities regarding monotone functions. The~\cite{KMS15}
$\otilde_\eps(\sqrt{d})$-query tester is based on the directed Talagrand inequality. 

The story for hypergrids is much more complicated. From an isoperimetric perspective, a common approach is to consider the \emph{augmented hypergrid},
wherein we add edges between pairs in the same line. 
The dimension reduction technique in~\cite{DGLRRS99} used to prove the $\otilde_\eps(d)$ testers can be thought of as establishing a directed Poincar\'{e} inequality~.
In previous work~\cite{BlackCS18}, the authors proved a directed Margulis inequality, which led to the $\otilde_\eps(d^{5/6})$ query tester.
Another motivating question for our work is:

\begin{center}
\emph{Can the directed Talagrand inequality be generalized beyond the hypercube?}
\end{center}

\subsection{Main results} \label{sec:results}
We answer both questions mentioned above in the affirmative. To state our results more formally, we begin with some notation.
For any $i\in [d]$, we use $\be_i$ to denote the $d$-dimensional vector which has $1$ on the $i$th coordinate and zero everywhere else.
For a dimension $i$, a pair $(\bx,\by)$ is called \emph{$i$-aligned} if $\bx$ and $\by$ only differ on their
$i$-coordinate. An \emph{$i$-line} is a 1D line of $n$ points obtained by fixing all but the $i$th coordinate.

We define a notion of directed influence of Boolean functions on hypergrids, which generalizes the notion for Boolean functions on hypercubes.
In plain English, for a point $\bx$ we count the number of {\em dimensions} in which $\bx$ takes part in a violation. 
We call this the {\em thresholded negative influence} of $\bx$. Note that $\bx$ could participate in multiple violations along the same dimension.
Throughout this paper, we will be only talking about negative influences of functions on the hypergrid, and thus will often refer to the above
as just thresholded influence, and for brevity's sake we also don't use the superscript ``$-$'' in the notation below to denote the negative aspect.

%


\begin{definition}[Thresholded Influence]\label{def:phi-f}
	Fix $f:[n]^d \to \{0,1\}$ and a dimension $i\in [d]$. Fix a point $\bx \in [n]^d$. The thresholded influence of $\bx$ along coordinate $i$ is
    denoted $\Phi_f(\bx;i)$, and has value $1$ if there exists an $i$-aligned violation $(\bx,\by)$.
	The thresholded influence of $\bx$ is $\Phi_f(\bx) = \sum_{i=1}^d \Phi_f(\bx;i)$.
\end{definition}

Note that the thresholded influence coincides with the hypercube directed influence when $n=2$. 
Also note that for any $\bx$, $\Phi_f(\bx) \in \{0,1,\ldots, d\}$ and is independent of $n$.
We prove the following theorem, a directed Talagrand theorem for hypergrids, which generalizes the~\cite{KMS15} result.

\begin{theorem}\label{thm:dir-tal-uncolored}~
		Let $f:[n]^d \to \{0,1\}$ be $\eps$-far from monotone.
		\[
		\Exp_{\bx\in [n]^d}~\left[\sqrt{\Phi_{f}(\bx)}  \right] = \Omega\left(\frac{\eps}{\log n}\right)
		\]
\end{theorem}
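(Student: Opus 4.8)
The plan is to reduce the hypergrid directed Talagrand inequality to the hypercube one (the \cite{KMS15}/\cite{PRW22} result) via a careful bucketing-and-mapping argument, rather than re-running the KMS machinery from scratch. The first step is to set up a ``collapsing'' map from $[n]^d$ to a hypercube $\{0,1\}^d$. For each line we want to choose a threshold that splits the line into a ``low'' half and a ``high'' half, so that a Boolean function $g:\{0,1\}^d\to\{0,1\}$ can be defined whose violations along dimension $i$ at a collapsed point are witnessed by $i$-aligned violations of $f$ in the original hypergrid. A natural first attempt: for dimension $i$, and for each $(d-1)$-tuple of the other coordinates (i.e.\ each $i$-line), pick the median value of that line and let the two ``sides'' of the split be the points below vs.\ at-or-above the median. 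Then $\Phi_f(\bx;i)\ge \Phi_g(\pi(\bx);i)$ should hold by construction if we're careful about which direction the violation points. The factor $\log n$ strongly suggests a recursive halving: we build $O(\log n)$ hypercube functions $g_1,\dots,g_{\log n}$, one for each ``scale'' of a dyadic decomposition of $[n]$, apply the hypercube directed Talagrand bound to each, and argue that the distances to monotonicity of the $g_t$'s sum (or max) to at least $\Omega(\eps)$, while $\sqrt{\Phi_f(\bx)}$ dominates $\frac{1}{\sqrt{\log n}}\sum_t \sqrt{\Phi_{g_t}(\pi_t(\bx))}$ on average — or more simply, that there is one scale $t$ at which $g_t$ is $\Omega(\eps/\log n)$-far from monotone.

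The second step is the distance bookkeeping. I would argue: if $f$ is $\eps$-far from monotone on $[n]^d$, then for at least one of the $O(\log n)$ dyadic scales, the induced hypercube function (obtained by restricting each line to a pair of dyadic sub-blocks and taking majority values, or by a suitable quotient) is $\Omega(\eps/\log n)$-far from monotone. The cleanest way to see this is the contrapositive: if $f$ were close to monotone on every scale, one could stitch the per-scale ``repairs'' together to repair $f$ globally, with total cost the sum of the per-scale costs. This is essentially a dimension-reduction / telescoping argument analogous to how the hypergrid Poincaré inequality is obtained in \cite{DGLRRS99}, and it is where the $\log n$ genuinely enters. Then apply \Thm{dir-tal-uncolored}'s hypercube special case (which the excerpt grants us, since $\Phi_f$ coincides with directed hypercube influence when $n=2$) to that one good scale $g_t$: $\Exp_{\bz}[\sqrt{\Phi_{g_t}(\bz)}] = \Omega(\eps/\log n)$. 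Finally, pull this back: a uniformly random $\bx\in[n]^d$ maps under $\pi_t$ to a (nearly) uniformly random $\bz\in\{0,1\}^d$, and $\Phi_f(\bx;i)\ge \Phi_{g_t}(\pi_t(\bx);i)$ coordinatewise, so $\Exp_{\bx}[\sqrt{\Phi_f(\bx)}]\ge \Exp_{\bx}[\sqrt{\Phi_{g_t}(\pi_t(\bx))}] = \Omega(\eps/\log n)$, which is exactly the claimed bound.

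The main obstacle I anticipate is the domination step $\Phi_f(\bx;i)\ge \Phi_{g_t}(\pi_t(\bx);i)$ together with the ``one good scale'' claim — making these simultaneously true requires choosing the line-splits (and hence the maps $\pi_t$) in a way compatible with \emph{both} the isoperimetric pullback and the distance lower bound. In particular, an $i$-aligned violation of the quotient function $g_t$ must lift to an actual $i$-aligned violation of $f$ in $[n]^d$: if $g_t$ records a violation between block $B_{\mathrm{low}}$ and block $B_{\mathrm{high}}$ along dimension $i$ (meaning the ``aggregated'' value on the low block exceeds that on the high block), we need a concrete pair $\bx\preceq\by$ in $[n]^d$, $i$-aligned, with $f(\bx)>f(\by)$; this is immediate if the aggregation is ``there exists a $1$ in the low block and a $0$ in the high block,'' which is why I'd define $g_t$ via an OR/threshold rather than a majority. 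But then the distance-to-monotonicity accounting for $g_t$ becomes more delicate, since OR-aggregation can destroy distance. Reconciling this tension — likely by a hybrid where we track distance through a chain of intermediate hypergrid functions, each collapsing one dyadic level, and show the total distance lost is $O(\log n)$ times the final — is the crux of the argument, and I expect the bulk of the technical work (and any genuinely new isoperimetric input beyond \cite{KMS15}) to live there.
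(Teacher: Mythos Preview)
Your intuition about the $\log n$ arising from a dyadic/telescoping decomposition is correct and matches the paper. However, the collapsing-to-a-hypercube mechanism has a genuine gap that you yourself flag but do not resolve, and I do not see a way to resolve it along the lines you sketch.

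The problem is the pointwise domination $\Phi_f(\bx;i)\ge \Phi_{g_t}(\pi_t(\bx);i)$. For this to hold for \emph{every} $\bx$ in the fiber $\pi_t^{-1}(\bz)$, a violation of $g_t$ at $\bz$ in direction $i$ must force every point of that fiber to participate in an $i$-violation of $f$. No natural aggregation does this: with OR-aggregation, a $g_t$-violation means some point in the low block is a $1$ and the entire high block is $0$, but the other points of the low block that are $0$ need not have any $i$-violation; with majority, it is worse. So the pullback fails pointwise, and because of the square root you cannot retreat to an average domination (Jensen goes the wrong way). Your fallback of ``track distance through a chain of intermediate hypergrid functions'' is correct in spirit but no longer reduces to the hypercube directed theorem --- at each intermediate stage you are still on a hypergrid, so you have not escaped the original problem.

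The paper avoids this by never collapsing to a single hypercube function. Instead it introduces $n^d$ \emph{tracker functions} $g_\bx:\{0,1\}^d\to\{0,1\}$, one per hypergrid point, where $g_\bx(S)=(S\circ f)(\bx)$ records how the value at $\bx$ changes under partial sortings. The main technical theorem (\Thm{semisorted-reduce-to-g}) shows, for semisorted $f$, that $\Exp_\bx[\sqrt{\Phi_{f,\chi}(\bx)}]$ dominates $\Exp_\bx\Exp_S[\sqrt{I_{g_\bx,\xi_\bx}(S)}]$; the right side is then bounded below using the \emph{undirected} colored Talagrand inequality on each $g_\bx$ (not the directed KMS theorem). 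The $\log n$ enters exactly where you guessed: a chain $f=f_0,\ldots,f_{\log n}$ of progressively semisorted functions, triangle inequality to find a scale with $\Delta(f_j,f_{j+1})\ge\eps/\log n$, and a lemma that semisorting only decreases the Talagrand objective. The work you anticipated ``living there'' is real, but it is carried out via a hybrid potential and majorization arguments on vectors of influences along lines, not via a hypercube quotient.
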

\paragraph{Robust isoperimetric theorems and monotonicity testing.} For the application to monotonicity testing,
as~\cite{KMS15} showed, a significant strengthening of \Thm{dir-tal-uncolored} is required. 
The weakness of \Thm{dir-tal-uncolored}, as stated, is that the same violation/influence
is ``double-counted" at both its endpoints. 
The LHS can significantly vary  depending on whether we choose to only ``count" influences at zero-valued or one-valued points, and this is true even on the hypercube.
As a simple illustration, 
consider the function $f$ that is $1$ at the all zeros point and $0$ everywhere else.
Suppose we only count influences at one-valued points.
Then the only vertex with any $I_f^-(\bx)$ is the all $0$'s point, and this value is $d$. Therefore, the Talagrand objective is $\frac{\sqrt{d}}{2^d}$. 
On the other hand, if we count influences at zero-valued points, then $\Inf_f^-(\bx) = 1$ for the $d$ points $\be_1$ to $\be_d$,
and $0$ everywhere else. The Talagrand objective counted from zero-valued points is now much larger: $\frac{d}{2^d}$. 
Therefore, depending on how we count, one can potentially reduce the Talagrand objective, $\EX_\bx[\sqrt{\Inf_f^-(\bx)}]$.

~\cite{KMS15} define a general way of deciding which endpoint ``pays'' for a violated edge. Consider a {\em coloring}\footnote{\cite{KMS15} considered the colorings to be red/blue, but we find the $0,1$-coloring more natural.} $\chi: E \to \{0,1\}$ of every edge $(\bx,\by) \in E$ of the hypercube to either $0$ or $1$. Now, given a violated edge $(\bx, \by)$, we use this coloring to decide whose influence this edge contributes towards. More precisely, given this coloring $\chi$, the {\em colored} directed influence $\Inf^-_{f,\chi}(\bx)$ of $\bx$ is defined as 
the number of violated edges $(\bx, \by)$ incident on $\bx$ which have the same color as $f(\bx)$. 
Given a coloring, the {\em colorful} Talagrand objective equals the expected root colored directed influence. 
What~\cite{KMS15} prove is that no matter what coloring $\chi$ one chooses, 
the Talagrand objective is still large, and in particular $\Exp_\bx\left[\sqrt{I^-_{f,\chi}(\bx)}\right] = \Omega(\frac{\eps_f}{\log d})$.

We define the robust/colorful generalizations of the thresholded negative influence on hypergrids.
Consider the \emph{fully augmented hypergrid}, where we put the edge $(\bx,\by)$ if $\bx$
and $\by$ differ on only one coordinate.
Let $E$ be the set of edges in the fully augmented hypergrid.

\begin{definition}[Colorful Thresholded Influence]\label{def:phi-f-chi}
	Fix $f:[n]^d \to \{0,1\}$ and $\chi:E\to \{0,1\}$. Fix a dimension $i\in [d]$ and a point $\bx \in [n]^d$. The colorful thresholded negative influence of $\bx$ along coordinate $i$ is denoted $\Phi_{f,\chi}(\bx;i)$, and has value $1$ if there exists an $i$-aligned violation $(\bx,\by)$
    such that $\chi(\bx,\by) = f(\bx)$, and has value $0$ otherwise.
	The colorful thresholded negative influence of $\bx$ is $\Phi_{f,\chi}(\bx) = \sum_{i=1}^d \Phi_{f,\chi}(\bx;i)$.
\end{definition}

The main result of our paper is a robust directed Talagrand isoperimetry theorem for Boolean functions on the hypergrid.
It is a strict generalization of the KMS Talagrand theorem for hypercubes.
\begin{mdframed}[backgroundcolor=gray!20,topline=false,bottomline=false,leftline=false,rightline=false] 
	\begin{theorem}\label{thm:dir-tal}~
		Let $f:[n]^d \to \{0,1\}$ be $\eps$-far from monotone, and let $\chi:E\to \{0,1\}$ be an arbitrary coloring of the edges of the augmented hypergrid.
		\[
		\Exp_{\bx\in [n]^d}~\left[\sqrt{\Phi_{f,\chi}(\bx)}  \right] = \Omega\left(\frac{\eps}{\log n}\right)
		\]
	\end{theorem}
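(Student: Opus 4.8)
**The plan is to reduce the colorful hypergrid Talagrand inequality to the colorful hypercube Talagrand inequality of \cite{KMS15} via a dimension-reduction / projection argument.** The key obstacle is that the hypergrid has lines of length $n$ rather than $2$, so a single "dimension $i$" does not correspond to a single edge but to $\binom{n}{2}$ possible aligned pairs. My strategy would be to associate to $f:[n]^d\to\{0,1\}$ a family of Boolean functions on sub-hypercubes and argue that the hypergrid thresholded influence dominates (up to the $\log n$ loss) the appropriate hypercube influences.

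First I would set up a way to turn $[n]$ into a product of roughly $\log n$ two-point factors. Concretely, recursively bisect each $i$-line: at the top level split $[n]$ into a "low" half $L$ and "high" half $H$; this defines a Boolean "coordinate" on that line (which half you are in), and recursing gives $\lceil \log_2 n\rceil$ levels of bisection per dimension, hence an embedding of the line into $\{0,1\}^{\lceil\log_2 n\rceil}$. The crucial point is that if $(\bx,\by)$ is an $i$-aligned violation of $f$ with $\bx_i < \by_i$, then at \emph{some} level of the recursion $\bx_i$ and $\by_i$ fall on opposite sides of a bisection, so the two points are separated by a Boolean coordinate at that level. Thus every $i$-aligned violation "charges" at least one of the $\lceil\log_2 n\rceil$ Boolean sub-coordinates associated with dimension $i$.

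Next I would, for each of the $d\lceil\log_2 n\rceil$ Boolean sub-coordinates, define a restricted function on the hypercube $\{0,1\}^{d\lceil\log_2 n\rceil}$ obtained by treating all the bisection bits as the domain (the value of $f$ is then well-defined only after also fixing which sub-interval within each half you are in — so more carefully, I would fix a uniformly random representative point within each sub-interval, or better, argue via an averaging over a distribution on sub-hypercube restrictions). For a fixed choice of "fine" coordinates, $f$ restricts to a Boolean function $g$ on a hypercube $\{0,1\}^{d}$ (one bit per dimension, recording which half of the \emph{current} interval $\bx_i$ lies in). A violation of $g$ lifts to a violation of $f$, and the edge coloring $\chi$ on $E$ restricts to a coloring on the hypercube edges, so $\Phi_{f,\chi}(\bx)\geq \Inf^-_{g,\chi}(\bar\bx)$ for the corresponding points. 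Applying the KMS colorful directed Talagrand inequality (\Cref{tab:dir-iso}, third row, colorful version) to $g$ gives $\Exp[\sqrt{\Inf^-_{g,\chi}}]=\Omega(\eps_g)$ whenever $g$ is $\eps_g$-far from monotone. The final ingredient is a distance-preservation statement: averaged over the random choice of bisection structure / fine coordinates, $\Exp[\eps_g] = \Omega(\eps/\log n)$ — i.e., the restrictions of an $\eps$-far function are, on average, $\Omega(\eps/\log n)$-far from monotone. This is where the $\log n$ loss enters, and proving it is the main technical obstacle; it should follow from a "directed dimension reduction for the hypergrid" lemma showing that distance to monotonicity is roughly preserved under random bisection restrictions, losing only a factor related to the number of recursion levels.

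\textbf{The hard part} will be making the bisection/restriction argument rigorous so that (a) the edge coloring $\chi$ on the full augmented hypergrid pushes down consistently to colorings on all the sub-hypercubes, (b) the thresholded influence $\Phi_{f,\chi}(\bx)$ — which counts \emph{dimensions} with a violation, not edges — correctly upper-bounds the hypercube colorful influence after restriction, and (c) the distance-to-monotonicity bound $\Exp[\eps_g]=\Omega(\eps/\log n)$ holds. For (c) I would likely build a violating matching of size $\Omega(\eps n^d)$ in $f$ (guaranteed since $f$ is $\eps$-far), then show each matched violating pair survives in some restriction at some level with probability $\Omega(1/\log n)$, and convert surviving matched pairs back into a lower bound on $\Exp[\eps_g]$ via the standard fact that distance to monotonicity is at least half the maximum violating-matching size. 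Once these pieces are in place, combining the KMS bound on each $g$ with the averaging over restrictions yields $\Exp_{\bx\in[n]^d}[\sqrt{\Phi_{f,\chi}(\bx)}] = \Omega(\eps/\log n)$, which is exactly \Thm{dir-tal}. (Note that \Thm{dir-tal-uncolored} is the special case $\chi\equiv$ trivial, so this single argument subsumes it.)
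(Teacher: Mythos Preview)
Your approach differs fundamentally from the paper's. You want to reduce to the KMS \emph{directed} colorful hypercube inequality by restricting $f$ to $2^d$ sub-hypercubes obtained via binary bisection of each axis. The paper does something entirely different: it defines \emph{tracker functions} $g_\bx(S)=(S\circ f)(\bx)$ (recording how the value at $\bx$ evolves under partial sorts), proves via a delicate majorization/hybrid argument that $T_{\Phi_\chi}(f)\ge \EX_\bx\EX_S[\sqrt{I_{g_\bx,\xi_\bx}(S)}]$ for suitable hypercube colorings $\xi_\bx$, and then invokes the \emph{undirected} colorful Talagrand inequality on each $g_\bx$. The $\log n$ loss enters through a separate ``semisorting'' reduction (\Cref{lem:semisorting-decreases} and \Cref{clm:triangle-ineq}), not through any restriction argument.

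Your step~(c) is a genuine gap, and under the most natural reading of your construction it is simply false. If ``level $k$'' means ``fix all bits of each coordinate except bit $k$,'' then the sub-hypercubes at all levels together see only pairs $(\bx,\bx\pm 2^k\be_i)$, which is the standard hypercube embedding of $[n]$ into $\{0,1\}^{\log n}$. But monotonicity on $[n]$ is strictly stronger than monotonicity on this hypercube: take $d=1$, $n=4$, $f=(0,1,0,1)$. Then $\eps_f=1/4$ (the pair $(1,2)$ is a violation), yet at level~$0$ the sub-hypercubes are $\{0,2\}$ and $\{1,3\}$ with $f$ constant on each, and at level~$1$ they are $\{0,1\}$ and $\{2,3\}$ with $f$ monotone on each. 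Every restriction $g$ is monotone, so $\sum_k\EX[\eps_g]=0$ while $\eps_f=1/4$. Your matching sketch also fails structurally: a matched violating pair $(\bx,\by)$ with $\bx\prec\by$ may differ in many coordinates (or, even along a single axis, at positions differing in several bits), and there is no single sub-hypercube at any level that contains both as adjacent vertices.

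If instead you mean the ``random representative in each half'' version, the counterexample above no longer gives zero, but you are now essentially asking for a domain-reduction theorem from $[n]^d$ down to $\{0,1\}^d$ that preserves $\eps$ up to $O(\log n)$. This is not known; the domain-reduction results you might have in mind (\cite{BlackCS20,HY22}) only reduce $n$ to $\poly(d/\eps)$, and the paper's own ``centrist'' example in \Cref{sec:no-n} is designed precisely to show that random $2^d$ sub-hypercubes can dramatically fail to see violations. The paper sidesteps all of this by never restricting the domain: the sort operator satisfies $\eps_f\le\Delta(f,[d]\circ f)\le 2\eps_f$ (\Cref{clm:2appx}) and never creates violations in other dimensions (\Cref{clm:sortS}), which is exactly the structural leverage your restriction scheme lacks.
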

\end{mdframed}

As a consequence of this theorem, we can (up to log factors) resolve the question of non-adaptive
monotonicity testing on hypergrids with constant $n$. We note that the best bound for any $n > 2$
was $\otilde(d^{5/6})$. Even for the simplest non-hypercube case of $n=3$, it was open whether the optimal non-adaptive complexity of monotonicity testing is $\sqrt{d}$.

\begin{mdframed}[backgroundcolor=gray!20,topline=false,bottomline=false,leftline=false,rightline=false] 
\begin{theorem}\label{thm:mono-testing}~
    Consider Boolean functions over the hypergrid, $f:[n]^d \to \{0,1\}$.
    There is a one-sided, non-adaptive tester for monotonicity that makes $O(\eps^{-2} n \sqrt{d} \log^5(nd))$ queries.
\end{theorem}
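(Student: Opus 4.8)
The plan is to deduce \Thm{mono-testing} from \Thm{dir-tal} by transplanting the reduction of Khot, Minzer, and Safra~\cite{KMS15} --- which turns a colorful directed Talagrand inequality on $\{0,1\}^d$ into an $\otilde_\eps(\sqrt d)$-query monotonicity tester --- to the hypergrid, the single new ingredient being that a violation along coordinate $i$ now lives on an $n$-point line rather than on a single edge. I would proceed in three stages: (1) use \Thm{dir-tal} to pin down one dyadic scale $\tau$ at which an $\Omega(\eps/(\sqrt\tau\,\polylog(nd)))$-fraction of points carry $\Omega(\tau)$ colorful thresholded violations; (2) fix a coloring $\chi$ (legitimate because \Thm{dir-tal} is universally quantified over $\chi$) and define a ``directed random walk'' tester on the fully augmented hypergrid; (3) run the KMS persistence analysis on that walk, paying exactly one factor of $n$ for the lines.

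For Stage 1, since $\Phi_{f,\chi}(\bx)\in\{0,1,\dots,d\}$, a dyadic bucketing of its value has only $O(\log d)$ buckets, so averaging the bound $\Exp_\bx[\sqrt{\Phi_{f,\chi}(\bx)}]=\Omega(\eps/\log n)$ of \Thm{dir-tal} yields a scale $\tau=2^j$ with
\[
\prob_{\bx\in[n]^d}\big[\Phi_{f,\chi}(\bx)\ge\tau\big]=\Omega\!\left(\frac{\eps}{\sqrt{\tau}\,\log n\,\log d}\right).
\]
Call these points \emph{heavy}; by the order-reversing complement symmetry $f(\bx)\mapsto 1-f(\text{reflection of }\bx)$ (which swaps $0$s with $1$s and ``up'' with ``down'') we may assume at least half the heavy mass sits on $1$-valued points, and each such $\bx$ then has $\ge\tau$ coordinates $i$ whose $i$-line above $\bx$ contains a $0$-valued point $\by$ with $\chi(\bx,\by)=f(\bx)$. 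For Stage 2 we choose $\chi$ \emph{after} seeing $f$ and $\tau$: each $i$-aligned violation $(\bx,\by)$ with $f(\bx)=1>0=f(\by)$ is charged (via $\chi(\bx,\by)=f(\bx)$ vs.\ $\chi(\bx,\by)=f(\by)$) to whichever endpoint the walk below is more likely to witness the edge from, which costs at most a factor $2$ relative to the better endpoint and makes \Thm{dir-tal}, applied with this $\chi$, describe precisely the violations the tester has a good chance of catching. The tester then runs $T=\Theta(\eps^{-2}\,n\sqrt d\,\log^5(nd))$ independent trials, rejecting iff some trial produces a comparable violated pair; a trial samples $\bx\in[n]^d$ uniformly, samples a dyadic $\ell\in\{1,\dots,2^{\lceil\log_2 d\rceil}\}$ uniformly, and, if $f(\bx)=1$, performs an $\ell$-step \emph{up-walk} --- repeatedly choose a not-yet-maximal coordinate uniformly at random and raise it to a uniformly random larger value --- reaching $\by\succeq\bx$, queries $f(\bx)$ and $f(\by)$, and rejects if $f(\bx)=1>0=f(\by)$; if $f(\bx)=0$ it does the mirror-image down-walk. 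All query points are fixed before any answer is seen (non-adaptive), and rejection always exhibits a genuine certificate of non-monotonicity (one-sided).

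For Stage 3, fix the good scale $\tau$ from Stage 1: a trial draws $\ell=\tau$ with probability $\Omega(1/\log d)$ and a heavy $1$-valued $\bx$ with probability $\Omega(\eps/(\sqrt\tau\,\polylog(nd)))$. Conditioned on both, I would transplant the KMS ``persistence'' argument: after discarding a $o(1)$-fraction of the heavy points (those failing a persistence test, bounded via a Margulis-type inequality and \Thm{dir-tal}), the $\tau$-step up-walk from a retained heavy $\bx$ lands on one of $\bx$'s $\ge\tau$ violating line-partners with probability $\Omega(1/\polylog(nd))$ over its choice of coordinates, and on the relevant step it hits the exact partner value with probability $\ge 1/(n-1)$, since that value is one of at most $n-1$ larger entries on the line --- this is the \emph{only} point at which the hypergrid costs a factor $n$ over the hypercube, where the up-move $0\mapsto 1$ is forced. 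Multiplying these events, a trial rejects with probability $\Omega\!\big(\eps^2/(n\sqrt d\,\polylog(nd))\big)$ whenever $\eps_f\ge\eps$ --- the $\sqrt d$ being the usual KMS optimization over scales $\tau\le d$ and the two factors of $1/\eps$ arising as in \cite{KMS15} --- so $T$ trials reject with probability $\ge2/3$, which gives \Thm{mono-testing}.

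I expect the main obstacle to be making Stage 3 pay the factor $n$ only once. A naive adaptation would, at each of the $\Theta(\tau)$ walk steps, have to guess where on its line to land --- an $n^{\Theta(\tau)}$ loss --- so one must instead show the KMS persistence bookkeeping depends only on \emph{which} coordinates the walk has raised, not on \emph{by how much}, so that exactly one step (the catching step) must commit to a precise line-value. This requires controlling the up-walk so that it does not overshoot past any not-yet-used line-partner before that step, and checking that the KMS split into persistent and non-persistent points, together with the internal Margulis-type bootstrap, survives replacing hypercube edges by the (much larger) edge set of the fully augmented hypergrid. Once that is in place, the dyadic bucketing, the up/down case split, and verifying one-sidedness and non-adaptivity are routine.
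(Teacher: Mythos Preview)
Your high-level architecture is the same as the paper's --- use the directed Talagrand inequality to find structure in the violations, then run a KMS-style random walk analysis with persistence --- and you correctly identify that the crux is paying the factor $n$ only once. However, there are genuine gaps.

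First, a minor but real point: your tester as written is adaptive, since it branches on $f(\bx)$ to decide up-walk vs.\ down-walk, yet you assert non-adaptivity in the next sentence. The paper's tester always walks upward from $\bx$ regardless of $f(\bx)$; this is easy to fix but should be fixed.

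More substantively, the paper does not do dyadic bucketing on $\Phi_{f,\chi}$ for a single cleverly-chosen $\chi$. It uses the full universally-quantified colorful theorem to show the bipartite violation graph is \emph{robust} in the sense of KMS Def.~6.4, and then invokes KMS Lemma~6.5 to extract a \emph{good subgraph}: a bipartite piece with max degree $\leq k$ on both sides, $\sigma n^d$ vertices on one side, $\Omega(\sigma k n^d)$ edges, and $\sigma\sqrt{k}=\Theta(\eps/\log^2(nd))$. Your Stage~2 choice of $\chi$ (``charge each violated edge to whichever endpoint the walk is more likely to witness it from'') is circular --- $\tau$ comes from bucketing on $\Phi_{f,\chi}$, which already depends on $\chi$ --- and in any case yields only a one-sided \emph{lower} bound on heavy-point degree, not the two-sided \emph{upper} bound the KMS capture analysis actually needs (to control the event that the walk hits a second viable coordinate of $\bx$). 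This is precisely why the colorful version is required: it is what rules out the adversary concentrating the Talagrand mass on a few high-degree vertices.

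The decisive gap is persistence. You assert that the KMS bookkeeping depends only on which coordinates are raised, not by how much, but you do not say how to prove persistence for a hypergrid walk without losing another factor of $n$. A hypergrid point has up to $n-1$ aligned neighbors per coordinate, so the total influence $I_f$ of \Def{totalinf} can be $\Theta(n\sqrt d)$ rather than $\Theta(\sqrt d)$, and a direct KMS-style union bound over walk steps would cost a factor of $n$ in the non-persistence fraction --- on top of the factor of $n$ at the catching step. The paper's solution is to sample a random \emph{sub-hypercube} $H=\prod_i\{a_i,b_i\}\subset[n]^d$ first and run the KMS hypercube walk on $H$; it shows this distribution coincides with the path tester's. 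Persistence is then literally KMS's hypercube lemma applied to $f|_H$, using $\EX_H[I_H]=I_f/(n-1)$, with the case $I_f>9(n-1)\sqrt d$ handled separately (then $I^-_f$ is already large enough for the length-$1$ tester). The single factor of $n$ appears exactly once, as the probability that $H$ contains the specific violated partner $\by$ along the catching coordinate. This hypercube-embedding device is what makes the $n$-dependence linear, and it is the concrete idea your proposal is missing.
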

\end{mdframed}

\paragraph{The importance of being robust.} We briefly explain why the 
robust Talagrand version is central
to the monotonicity testing application. All testers that have a $o(d)$-query complexity are versions of a \emph{path tester},
which can be thought of as querying endpoints of a directed random walk in the hypercube. Consider a function $f$
as the indicator for a set $\bone_f$, where the violating edges
form the ``up-boundary" between $\bone_f$ and its complement. To analyze the random walk, 
we would like to lower bound the probability that a random walk starts in $\bone_f$, crosses over the boundary, and stays
in $\overline{\bone_f}$, that is, the set of $0$'s. To analyze this, one needs some structural properties in the graph induced by the boundary edges, which~\cite{KMS15} 
express via their notion of a ``good subgraph''. In particular, one needs that there be a large number of edges, but also that 
they are regularly spread out among the vertices. It doesn't seem that the ``uncolored'' Talagrand versions (like~\Cref{thm:dir-tal-uncolored}) 
are strong enough to prove this regularity, but the robust version can ``weed out'' high-degree vertices via a definition of a suitable coloring function $\chi$.
In short, the robust version of the Talagrand-style isoperimetric theorem is much more expressive. Indeed, these style of robust results
have found other applications in distribution testing~\cite{CaChGa+21} as well.

\paragraph{The dependence on $n$.} Given \Thm{mono-testing}, it is natural to ask whether the dependence
on $n$ is necessary. Previous \emph{domain reduction} theorems have
shown that one can reduce $n$ to $\poly(d)$ in a black box manner~\cite{BlackCS20,HY22}. The monotonicity tester
based on the directed Margulis inequality for hypergrids has a logarithmic dependence on $n$~\cite{BlackCS18}.
Combining with domain reduction, we get a $\otilde(\poly(\eps^{-1}) d^{5/6})$-query tester.
It is an outstanding open problem to remove the dependence on $n$ from \Thm{mono-testing}.
In \Cref{sec:no-n}, we outline an approach to do so using the directed Talagrand inequality of \Thm{dir-tal}.

\subsection{Challenges} \label{sec:challenges}

We explain the challenges faced in proving \Thm{dir-tal} and \Thm{mono-testing}. The KMS proof of the directed Talagrand inequality for the hypercube is a
tour-de-force~\cite{KMS15}, and there are many parts of their proof that do not generalize for $n > 2$. 
We begin by giving an overview of the KMS proof for the hypercube case. 

For the time being, let us focus on the uncolored case. 
For convenience, let $T(f) = \EX_\bx[\sqrt{I^-_f(\bx)}]$ denote the hypercube directed Talagrand objective for a  $f:\hyp{d} \to \hyp{}$.
To lower bound $T(f)$,~\cite{KMS15} transform the function $f$ to a function $g$ using a sequence of what they call {\em split} operators.
The $i$th split operator applied to $f$ replaces the $i$th coordinate/dimension by two new coordinates $(i,+)$ and $(i,-)$. 
One way to think of the split operator is that takes the $\left((0,\bx_{-i}), (1, \bx_{-i})\right)$ edge and converts it into a square.
(Here, $\bx_{-i}$ denotes the collection of coordinates in $\bx$ skipping $\bx_i$.) The ``bottom" and ``top" corners
of the square store the original values of the edge, while the ``diagonal" corners store the min and max values (of the edge).
The definition of this remarkably ingenious operator ensures that the split function
is monotone in $(i,+)$ and anti-monotone in $(i,-)$. The final function $g:\{0,1\}^{2d} \to \{0,1\}$ obtained by splitting
on all coordinates has the property that it is either monotone or anti-monotone on all coordinates. That is, $g$ is unate (or pure, as~\cite{KMS15} call them), 
and for such functions the directed Talagrand inequality can be proved via a short reduction to the undirected case.

The utility of the split operator comes from the main technical contribution of~\cite{KMS15} (Section 3.4),
where it is shown that splitting cannot increase the directed Talagrand objective. This is a ``roll-your-sleeve-and-calculate'' argument that follows a case-by-case analysis. 
So, we can lower bound $T(f) \geq T(g)$. Since $g$ is unate, one can prove $T(g) = \Omega(\eps_g)$ (the distance of $g$ to monotonicity).
But how does one handle $\eps_g$, or $g$ more generally?
This is done by relating splitting to the classic {\em switch operator} in monotonicity testing, introduced
in \cite{GGLRS00}.
The switch operator for the $i$th coordinate can be thought of as modifying the edges along the $i$-dimension:
for any $i$-edge violation $(\bx,\by)$, this operator switches the values, thereby fixing the violation.
The switching operator has the remarkable property of never increasing monotonicity violations in other dimensions;
hence, switching in all dimensions leads to a monotone function.
\cite{KMS15} observe that the function $g$ basically ``embeds" disjoint variations of $f$, wherein
each variation is obtained by performing a distinct sequence of switches on $f$. 
The function $g$ contains all possible such variations of $f$, stored
cleverly so that $g$ is unate. One can then use properties of the switch
operators to relate $\eps_g$ to $\eps_f$. (The truth is more complicated; we will come 
back to this point later.)\smallskip

\noindent
{\bf Challenge \#1, splitting on hypergrids?} The biggest challenge in trying to generalize the~\cite{KMS15} argument
is to generalize the split operator. One natural 
starting point would be to consider the \emph{sort} operator, defined in~\cite{DGLRRS99}, which generalizes the switch operator:
the sort operator in the $i$th coordinate sorts the function along all $i$-lines.
But it is not at all clear how to split the $i$th coordinate into a set of coordinates that
contains the information about the sort operator thereby leading to a pure/unate function.
In short, sorting is a much more complicated operation than switching, and it is not clear how to succinctly encode this information 
using a single operator. 
%
%
%

We address this challenge by a reorientation of the KMS proof. Instead of looking at operators on dimensions to understand effects of switching/sorting, 
we do this via what we call ``tracker functions'' which are $n^d$ different Boolean functions tracking the changes in $f$. We discuss this more in~\Cref{sec:main-ideas}. \smallskip

\noindent
{\bf Challenge \#2, the case analysis for decreasing Talagrand objective.} As mentioned earlier,
the central calculation of KMS is in showing that splitting does not increase the directed
Talagrand objective. This is related (not quite, but close enough) to showing
that the switch operator does not increase the Talagrand objective. 
A statement like this is proven in KMS by case analysis; there are $4$ cases, for the possible values
a Boolean function takes on an edge. One immediately sees that such an approach
cannot scale for general $n$, since the number of possible Boolean
functions on a line is $2^n$. Even with our new idea of tracking functions, 
we cannot escape this complexity of arguing how the Talagrand-style objective decreases upon a sorting 
operation, and a case-by-case analysis depending on the values of the function is infeasible.

We address this challenge by a connection to the theory of majorization. We show
that the sort operator is (roughly) a majorizing operator on the vector of influences.
The concavity of the square root function implies that sorting along lines cannot increase
the Talagrand objective. More details are given in the next section. \smallskip

\noindent
{\bf Challenge \#3, the colorings.} Even if we circumvented the above
issues, the robust colored Talagrand objective brings a new set of issues.
Roughly speaking, colorings decide which points ``pay" for violations of the Talagrand objective, the switching/sorting operator
move points around by changing values, and the high-level argument to prove $T(f)$ drops is showing that these violations ``pay'' for the moves.
In the hypercube, a switch either changes the values on all the points
of the edge or none of the points, and this binary nature makes the handling of colors in the KMS proof fairly 
easy, merely introducing a few extra cases in their argument. 
Sorting, on the other hand, can change an arbitrary set of points, and in particular,
even in the case of $n=3$, a point participating in a violation may not change value in a sort.

%
%

To address this challenge, as we apply the sort operators to obtain a handle on our function, 
we also need to {\em recolor} the edges such that we obtain the drop in the $T$-objective. 
Once again, the theory of majorization is the guide. This part of the proof is perhaps the most technical portion of our paper. \smallskip

\noindent
{\bf Other minor challenges: the telescoping argument and tester analysis:} The issues
detailed here are not really conceptual challenges, but they do require some work
to handle the richer hypergrid domain.

Recall that the KMS analysis proves the chain of inequalities, $T(f) \geq T(g) = \Omega(\eps_g)$.
Unfortunately, it can happen that $\eps_g \ll \eps_f$. In this case, KMS observe
that one could redo the entire argument on random restrictions of $f$ to half the coordinates.
If the corresponding $\eps_g$ is still too small, then one restricts on one-fourth of the coordinates, 
so on and so forth. One can prove that somewhere along these $\log d$ restrictions, one must have $\eps_g = \Omega(\eps_f)$.
Pallavoor, Raskhodnikova, and Waingarten~\cite{PRW22} improve this analysis to remove a $\log d$ loss from the final bound.
We face the same problems in our analysis, and have to adapt the analysis to our setting. 

Finally, the tester analysis of KMS for the hypercube can be ported to the hypergrid path tester,
with some suitable adaptations of their argument. It is convenient to think of the \emph{fully augmented hypergrid},
where all pairs that lie along a line are connected by an edge. We can essentially view the hypergrid tester
as sampling a random hypercube from the fully augmented hypergrid, and then performing
a directed random walk on this hypercube. We can then piggyback on various tools from KMS for the hypercube tester,
to bound the rejection probability of the path tester for hypergrids.

\subsection{Main Ideas}\label{sec:main-ideas}
We sketch some key ideas needed to prove~\Cref{thm:dir-tal} and address the challenges detailed earlier. 
We begin with a key conceptual contribution of this paper. Given a function $f:[n]^d \to \hyp{}$, we define a collection of Boolean functions on the hypercube
called {\em tracker functions}. 
We will lower bound the directed Talagrand objective on the hypergrid by the undirected Talagrand objective on these tracker functions.
Indeed, the inspiration of these tracker functions arose out of understanding the analysis in~\cite{KMS15}, in particular, the intermediate ``$g$'' function in their Section $4$.
As an homage, we also denote our tracker functions with the same Roman letter, even though it is different from their function.

\subsubsection{Tracker functions $g_\bx$ for all $\bx \in [n]^d$}

Let us begin with the sort operator discussed earlier.
Without loss of generality, fix the ordering of coordinates in $[d]$ to be $(1,2,\ldots,d)$.
The operator $\sort_i$ for $i\in [d]$ sorts the function on every $i$-line. Given a subset $S\subseteq [d]$ of coordinates,
the function $(S\circ f)$ is obtained by sorting $f$ on the coordinates in $S$ in that order. 

Sorting along any dimension 
cannot increase the number of violations along any other dimension, and therefore upon sorting on all dimensions, the result is a monotone function~\cite{DGLRRS99}.
Suppose $f$ is $\eps$-far from monotone.
Clearly, the total number of points changed by sorting along all dimensions must be at least $\eps n^d$.
While this is not obvious here, it will be useful to 
to {\em track} how the function value changes when we sort along a 
certain subset $S$ of coordinates. The intuitive idea is: if the function value changes for most such partial sortings, then perhaps the function is far from being monotone.
To this end, for every point $\bx \in [n]^d$, we define a Boolean function $g_\bx : 2^{[d]} \to \{0,1\}$ that tracks how the function value $f$ changes 
as we apply the sort operator a subset $S$ of the coordinates. It is best to think of the domain of $g_\bx$ as subsets $S\subseteq [d]$.

\begin{definition}[Tracker Functions $g_\bx$]\label{def:gx}
	Fix an $\bx\in [n]^d$. The tracker function $g_\bx : \{0,1\}^d \to \{0,1\}$ is defined as
	\[
	\forall S\subseteq [d], ~~~~ g_\bx(S) := \left(S\circ f\right) (\bx)
	\]
\end{definition}
\noindent
We provide an illustration of this definition in~\Cref{fig:tracker-illus}. 
\begin{figure}[ht!]
	\begin{center}
		\includegraphics[trim = 200 150 200 10, clip, scale=0.5]{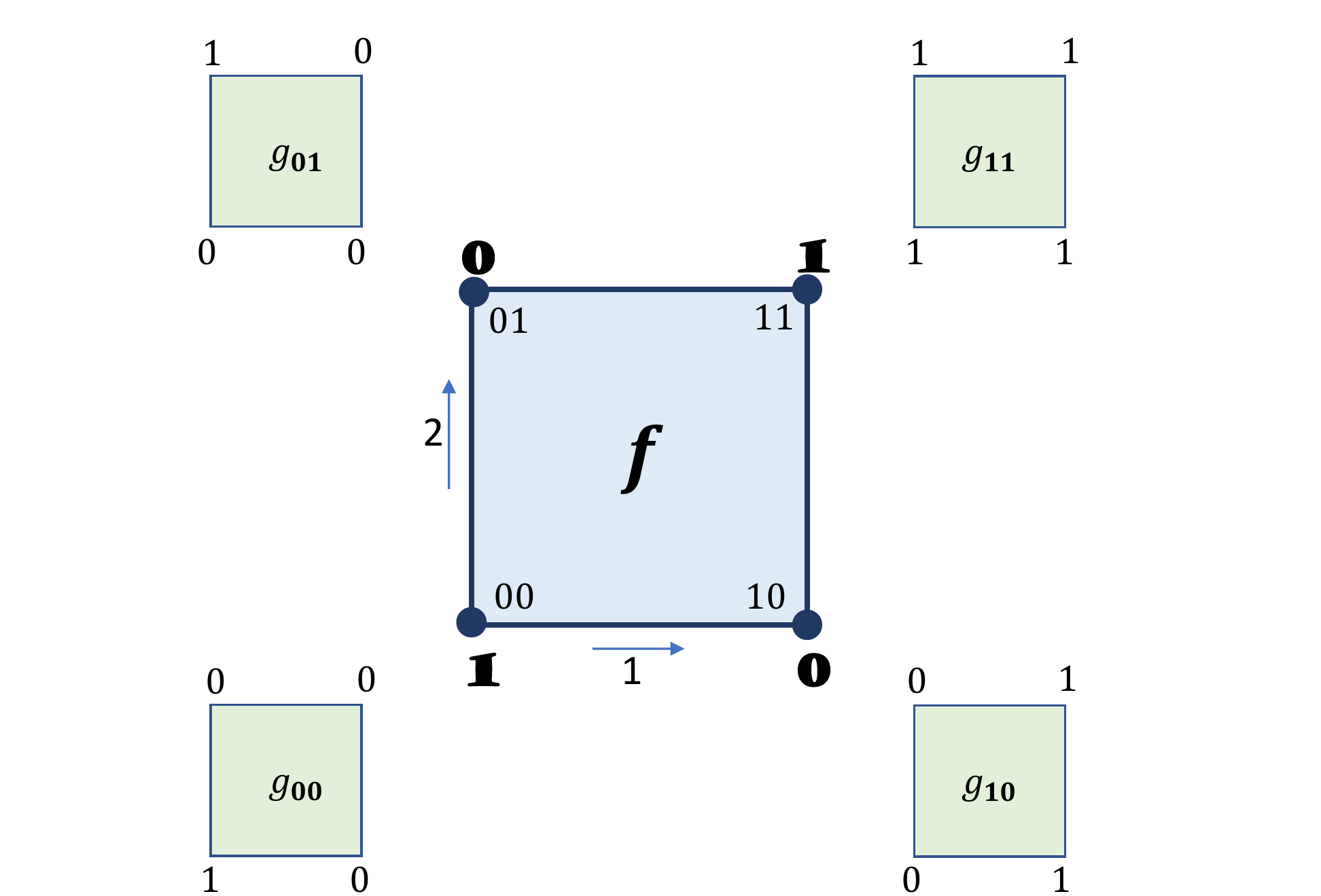}
	\end{center}
	\caption{\em The blue function $f:[n]^d\to\{0,1\}$ is defined in the middle using bold, gothic characters. We have $d=2$ and $n=2$.
		For each of the $4$ points of this square, we have four different $g_\bx:\{0,1\}^2 \to \{0,1\}$ and they are described in the 
		four green squares.
		For any $S\subseteq \{1,2\}$, if we focus on the corresponding corners of the four squares, then we get the function $(S\circ f)$.
		For instance, if $S = \{2\}$, then if we focus on the top left corners, then starting from $g_{00}$ and moving clockwise we get $(0, 1, 1, 0)$.
		These will precisely the function $f$ (read clockwise from $00$) after we sort along dimension $2$. 
	}\label{fig:tracker-illus}	
\end{figure}

\noindent
Note that when $f$ is a monotone function, all the functions $g_\bx$ are constants.
Sorting does not change any values, so $g_\bx(S)$ is always $f(\bx)$.
On the other hand, if $f$ is not monotone along dimension $i$, then there are points such that $g_\bx(\{i\}) \neq f(\bx)$. 
Indeed, one would expect the typical variance of these $g_\bx$ functions
to be related to the distance to monotonicity of $f$ (technically not true, but we come to this point later).

The tracker functions help us lower bound the (colorful) Talagrand objective for thresholded influence, in particular, the LHS in~\Cref{thm:dir-tal}. 
Recall that the Talagrand objective is the expected square root of the colorful thresholded influences on the hypergrid function $f$.
We lower bound this quantity by the expected Talagrand objective on the {\em undirected} (colorful, however) influence of the various $g_\bx$ functions.
Note that $g_\bx$ functions are defined on hypercubes.
So we reduce the robust directed Talagrand inequality on hypergrids to
robust undirected Talagrand inequalities on hypercubes.
This is the main technical contribution of our paper. Let us define the (colored) influences of these $g_\bx$ functions.

\begin{definition}[Influence of the Tracking Functions]
	Fix a $\bx \in [n]^d$ and consider the tracking function $g_\bx : \{0,1\}^d \to \{0,1\}$.
	Fix a coordinate $j\in [d]$. The influence of $g_\bx$ at a subset $S$ along the $j$th coordinate is defined as
	\[
	\Inf^{= j}_{g_\bx}(S) = 1 ~~\textrm{iff}~~g_\bx(S) \neq g_\bx(S\oplus j)~~~~\textrm{that is}~~~ (S\circ f)(\bx) \neq (S\oplus j~~\circ f)(\bx)
	\]
\end{definition}
\noindent
In plain English, the influence of the $j$th coordinate at a subset $S$ is $1$ if the function value (the hypergrid function) changes when we include the dimension $j$
to be sorted.
Once again, note that the same sensitive edge $(S, S\oplus j)$ is contributing towards both $\Inf^{=j}_{g_\bx}(S)$ and $\Inf^{=j}_{g_\bx}(S\oplus j)$.
We define a robust, colored version of these influences.

\begin{definition}[Colorful Influence of the Tracking Functions]
	Fix a $\bx \in [n]^d$ and consider the tracking function $g_\bx : \{0,1\}^d \to \{0,1\}$. 
	Fix any arbitrary coloring $\xi_\bx : E(2^{[d]}) \to \{0,1\}$ of the Boolean {\em hypercube}. 
	Fix a coordinate $j\in [d]$. The influence of $g_\bx$ at a subset $S$ along the $j$th coordinate is defined as
	\[
	\Inf^{= j}_{g_\bx, \xi_\bx}(S) = 1 ~~\textrm{iff}~~g_\bx(S) \neq g_\bx(S\oplus j)~~~~\textbf{and}~~~ g_\bx(S) = \xi_\bx(S, S\oplus j)
	\]
	The colorful total influence at the point $S$ in $g_\bx$ is defined as 
	\begin{equation}\label{eq:def-colorful-total-infl-gx}
		I_{g_\bx, \xi_\bx}(S) := \sum_{j=1}^d I^{=j}_{g_\bx, \xi_\bx} (S) \notag
	\end{equation}
\end{definition}
\noindent
As before, for a sensitive edge $(S, S\oplus j)$ of $g_\bx$, we count it towards the influence of the endpoint whose value equals the color $\xi_\bx(S, S\oplus j)$. 
The main technical contribution of this paper is proving that for any function $f:[n]^d \to \{0,1\}$ and any arbitrary coloring $\chi: E\to \{0,1\}$ of the 
hypergrid edges, for every $\bx\in [n]^d$ there \underline{exists} a coloring $\xi_\bx:E(2^{[d]}) \to \{0,1\}$ of the Boolean {\em hypercube} edges, such that 
\begin{equation}\label{eq:hope2}
T_{\Phi_\chi}(f) := \Exp_{\bx\in [n]^d}~\left[\sqrt{\Phi_{f,\chi}(\bx)}  \right] ~~\gtrapprox~~ \Exp_{\bx\in [n]^d} \Exp_{S\subseteq [d]}~~[\sqrt{I_{g_\bx, \xi_\bx}(S)}] \tag{H1}
\end{equation}
We explain the $\approx$ in the above inequality in the next subsection.
\smallskip

Why is a statement like~\eqref{eq:hope2} useful? Because the RHS terms are Talagrand objectives on colored influences on the usual undirected hypercube.
Therefore, we can apply undirected Talagrand bounds (known from KMS, \Cref{thm:kms-und}) to get an upper bound on the variance.
\begin{restatable}[Corollary of Theorem 1.8 in~\cite{KMS15}]{corollary}{corkms}
	\label{cor:kms}
	Fix $f:[n]^d \to \{0,1\}$. Fix an $\bx \in [n]^d$ and consider the tracking function $g_\bx : \{0,1\}^d \to \{0,1\}$. 
	Consider any {\em arbitrary} coloring $\xi_\bx : E(2^{[d]}) \to \{0,1\}$ of the Boolean {\em hypercube}. Then, for every $\bx \in [n]^d$, we have
	\[
	\Exp_{S\subseteq [d]}~~[\sqrt{I_{g_\bx, \xi_\bx}(S)}] = \Omega(\var(g_\bx))
	\]
\end{restatable}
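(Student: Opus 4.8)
The plan is to derive the statement as an immediate instantiation of the robust (colorful) \emph{undirected} Talagrand inequality on the hypercube, namely Theorem 1.8 of~\cite{KMS15}, applied separately for each fixed $\bx \in [n]^d$ to the Boolean function $g_\bx : \{0,1\}^d \to \{0,1\}$ and the coloring $\xi_\bx$. The entire content of the corollary is the observation that the pair $(g_\bx, \xi_\bx)$, together with the influence quantity $I_{g_\bx,\xi_\bx}$ defined above, is \emph{literally} an instance of the hypercube-with-edge-coloring setup to which that theorem speaks.

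Concretely, I would proceed as follows. First, recall the precise form of the KMS inequality: for every $m$, every Boolean $h:\{0,1\}^m \to \{0,1\}$, and every edge coloring $\psi : E(\{0,1\}^m) \to \{0,1\}$, if one sets $I_{h,\psi}(z)$ to be the number of edges $(z,z')$ with $h(z) \neq h(z')$ and $\psi(z,z') = h(z)$, then $\Exp_{z}\big[\sqrt{I_{h,\psi}(z)}\big] = \Omega(\var(h))$. (In~\cite{KMS15} the two colors are named red and blue and the ``ownership'' rule for a bichromatic edge is phrased via that palette; identifying red/blue with $0/1$ so that an edge is owned by its endpoint whose function value matches its color is a cosmetic relabeling, and the KMS statement holds for \emph{every} coloring, hence also under this identification.) Second, fix $\bx$ and take $m = d$, $h = g_\bx$ as in~\Def{gx}, and $\psi = \xi_\bx$. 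Unwinding the definition of the colorful influence of the tracking function, $\Inf^{=j}_{g_\bx,\xi_\bx}(S)$ is exactly the indicator that the hypercube edge $(S, S\oplus j)$ is bichromatic for $g_\bx$ and carries the color $g_\bx(S)$, so $I_{g_\bx,\xi_\bx}(S) = \sum_{j} \Inf^{=j}_{g_\bx,\xi_\bx}(S)$ is precisely $I_{h,\psi}(S)$ for this $h$ and $\psi$. Applying Theorem 1.8 of~\cite{KMS15} then gives $\Exp_{S\subseteq[d]}\big[\sqrt{I_{g_\bx,\xi_\bx}(S)}\big] = \Omega(\var(g_\bx))$ for this $\bx$, and since $\bx$ was arbitrary this is the claim.

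I do not expect any genuine obstacle in this step: it is a black-box invocation of~\cite{KMS15}, and all of the real work lives upstream, in proving the key inequality~\eqref{eq:hope2}, which is what lets us replace the directed thresholded colorful influences on $[n]^d$ by the ordinary undirected colored influences of the hypercube functions $g_\bx$. The only thing that needs (minor) care is the bookkeeping that (i) the present paper's ``value equals color'' convention is the KMS red/blue ownership rule under a relabeling, and (ii) Theorem 1.8 of~\cite{KMS15} is insensitive to the arity $m$ of the cube and holds for an arbitrary edge coloring — both are immediate from their proof, so the corollary follows verbatim. (The degenerate case in which $g_\bx$ is constant is trivial, since then $\var(g_\bx) = 0$.)
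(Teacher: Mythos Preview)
Your proposal is correct and matches the paper's treatment: the corollary is stated without a separate proof precisely because it is a direct instantiation of the colorful undirected Talagrand inequality (restated in the paper as \Cref{thm:kms-und}) with $g = g_\bx$ and $\xi = \xi_\bx$. Your unpacking of the definitions to verify that $I_{g_\bx,\xi_\bx}(S)$ coincides with the colored influence in \Cref{def:col-inf} is exactly the (only) thing to check.
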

\noindent
The final piece of the puzzle connects $\var(g_\bx)$'s with the distance to monotonicity. Ideally, we would have liked to have a statement such as the following true.
\begin{equation}\label{eq:hope1}
	\Exp_{\bx\in [n]^d} \left[\var(g_\bx)\right] \approx \Omega(\eps_f) \tag{H2} 
\end{equation}
We now see that \eqref{eq:hope2}, \Cref{cor:kms}, and \eqref{eq:hope1} together implies~\Cref{thm:dir-tal} (indeed without the $\log n$). 

\subsubsection{High level description of our approaches}

\paragraph{Addressing the $\approx$ in \eqref{eq:hope2} via semisorting.} As stated, we do not know if \eqref{eq:hope2} is true. However, we establish \eqref{eq:hope2} for 
{\em semisorted} functions $f:[n]^d \to \{0,1\}$. A function $f$ is semisorted if on any line $\ell$, the restriction of the function on the first half is sorted and the restriction 
on the second half is sorted. This may seem like a simple subclass of functions, but note that all functions on the Boolean hypercube ($n=2$)
are vacuously semisorted. Thus, proving \Cref{thm:dir-tal} on semi-sorted functions is already a generalization of the~\cite{KMS15} result. \Cref{thm:semisorted-reduce-to-g} is the formal 
restatement of \eqref{eq:hope2}.

We reduce \Thm{dir-tal} on general functions to the same bound for semisorted functions.
Consider semisorting $f$, which means we sort $f$ on each half of every line. 
Suppose the Talagrand objective did not increase \emph{and} the distance to monotonicity
did not decrease. Then \Thm{dir-tal} on the semisorted version of $f$ implies
\Thm{dir-tal} on $f$. 
What we can prove is that: given the semisorted function, one can find a {\em recoloring} of the hypergrid edges such that the Talagrand objective doesn't increase. The precise statement is given in~\Cref{lem:semisorting-decreases}. We comment on our techniques to prove such a statement in a later paragraph.

Although semisorting can't increase the Talagrand objective, it can clearly reduce the distance to monotonicity. However, a relatively simple inductive
argument proves \Thm{dir-tal} with a $\log n$ loss.
Any function can be turned into a completely sorted (aka monotone) function by performing ``$\log n$ semisorting steps'' at varying scales.
In each scale, we consider many disjoint small hypergrids, and 
convert a semisorted function defined over a small hypergrid to another semisorted function over a hypergrid of double the size (the next scale).
In one of these scales, we will find a semisorted function that has $\Omega(\eps/\log n)$ distance from its sorted version.
One can average \Thm{dir-tal} over all the small hypergrids at this scale to bound the Talagrand objective of the whole function by $\Omega(\eps/\log n)$. This is the step where we incur the $\log n$-factor loss.
This argument is not complicated, and we provide illustrated details in~\Cref{sec:semisorted}.

The real work happens in proving \Cref{thm:dir-tal-semisorted}, that is,~\eqref{eq:hope2} for semisorted functions.

\paragraph{Approach to proving \eqref{eq:hope2} for semisorted functions.} 
Recall, we have a fixed adversarial coloring $\chi:E \to \{0,1\}$. 
The proof follows a ``hybrid argument'' where we define a potential that is modified over $d+1$ rounds. 
At the beginning of round $0$ it takes the value $\Exp_{\bx \in [n]^d}[\sqrt{\Phi_{f,\chi}(\bx)}]$ which is the LHS of~\eqref{eq:hope2}. At the end of round $d$ it takes the value $\Exp_{\bx \in [n]^d}\Exp_{S\subseteq [d]}[\sqrt{I_{g_\bx, \xi_\bx}(S)}]$ which is the RHS of \eqref{eq:hope2}.
The proof follows by showing that the potential decreases in each round.

Let us describe the potential. Let us first write this without any reference to the colorings (so no $\chi$'s and $\xi_\bx$'s), and then subsequently address the colorings.
At stage $i$, fix a subset $S \subseteq [i]$. Define
\begin{equation}\label{eq:hybrid}
R_{i}(S) := \Exp_{\bx \in [n]^d} \left[\sqrt{~\sum_{j=1}^{i} I^{=j}_{g_\bx} (S) ~~+~~ \sum_{j=i+1}^d \Phi_{S\circ f}(\bx; j) }~\right] \tag{Hybrid}
\end{equation}
We remind the reader that $S\circ f$ is the function $f$ after the dimensions corresponding to $i\in S$ have been sorted. 
Thus, $R_i(S)$ is a ``hybrid" Talagrand objective, with two different kinds of influences being summed.
Consider point $\bx \in [n]^d$. On the first $i$ coordinates,
we sum the undirected influence (along these coordinates) of $S$ on the function $g_\bx$. On the coordinates $i+1$ to $d$, 
we sum to directed influence along these coordinates in the function $S \circ f$. 
The potential is $\Lambda_i := \Exp_{S\subseteq [i]} [R_i(S)]$. 

To make some sense of this, consider the extreme cases of $i=0$ and $i=d$. When $i=0$,
we only have the second $\Phi_{S \circ f}$ term. Furthermore, $S$ is empty since $S \subseteq [i]$.
So $\Lambda_0$ is precisely the original directed Talagrand objective, the LHS of \eqref{eq:hope2}.
When $i = d$, we only have the $I^{=j}_{g_\bx}$ terms. Taking expectation
over $S \subseteq [d]$ to get $\Lambda_d$, we deduce that $\Lambda_d$ is the RHS of \eqref{eq:hope2}.

%
%

We will prove $\Lambda_{i-1} \geq \Lambda_i$ for all $1\leq i\leq d$.
To choose a uar set in $[i]$, we can choose a uar subset of $[i-1]$
and then add $i$ with $1/2$ probability.
Hence, $\Lambda_i = (\EX_{S \subseteq [i-1]} [R_i(S) + R_i(S+i)])/2$,
while $\Lambda_{i-1} = \EX_{S \subseteq [i-1]} [R_{i-1}(S)]$.
So, if we prove that $R_{i-1}(S)$ is at least both $R_i(S)$ and $R_i(S+i)$, then $\Lambda_{i-1} \geq \Lambda_i$. 
The bulk of the technical work in this paper is involved in proving these two inequalities, so let us spend a little time explaining what proving this entails.

Let's take the inequality $R_{i-1}(S) \geq R_i(S)$. Refer again to \eqref{eq:hybrid}. When we go from $R_{i-1}(S)$ to $R_i(S)$, 
under the square root, the term $\Phi_{S\circ f}(\bx;i)$ is replaced by $I^{=i}_{g_\bx} (S)$. 
To remind the reader, the former term is the indicator of whether $\bx$ participates in a $i$-violation after the coordinates in $S \subseteq [i-1]$ have been sorted. 
The latter term is whether $g_\bx(S+i)$ equals $g_\bx(S)$, that is, whether the (hypergrid) function value at $\bx$ changes between sorting on coordinates in $S$ and $S+i$.
Just by parsing the definitions, one can observe that $\Phi_{S \circ f}(\bx;i) \geq I^{=i}_{g_\bx}(S)$; if a point is modified on sorting in the $i$-coordinate, 
then it must be participating in some $i$-violation (note that vice-versa may not be true and thus we have an inequality and not an equality). The quantity under the square-root {\em point-wise} dominates (ie, for every $\bx$) when we move from $R_{i-1}(S)$ to $R_i(S)$.  Thus, $R_{i-1}(S) \geq R_i(S)$.

The other inequality $R_{i-1}(S) \geq R_i(S+i)$, however, is much trickier to establish.  
In $R_i(S+i)$, the second summation under the square-root, the $\Phi$ terms, are actually on a {\em different} function. 
The $\Phi_{S\circ f}(\bx; j)$ terms in $R_{i-1}(S)$ are the thresholded influences of the function after sorting on coordinates in $S$.
But in $R_i(S+i)$, these terms are $\Phi_{(S+i)\circ f} (\bx;j)$, the thresholded influences of $\bx$ for the function after sorting on $S+i$.
Although, it is true that sorting on more coordinates cannot increase the total number of violations along any dimension, 
this fact is {\em not} true point-wise.  So, a point-wise argument as in the previous inequality is not possible. 

The argument for this inequality proceeds {\em line-by-line}. One fixes an $i$-line $\ell$ and considers the vector of ``hybrid function'' values on this line. 
We then consider this vector when moving from $R_{i-1}(S)$ to $R_i(S+i)$, and we need to show that the {\em sum of square roots} can get only smaller. 
This is where one of our key insights comes in: the theory of majorization can be used to assert these bounds.
Roughly speaking, a vector $\ba$ (weakly) majorizes a vector $\bb$ if the sum of the $k$-largest coordinates of $\ba$ 
dominates the sum of the $k$-largest coordinates of $\bb$, for every $k$. 
A less balanced vector majorizes a more balanced vector.
If the $\ell_1$-norms of these vectors are the same, then the sum of square roots 
of the entries of $\ba$ is at most the sum of square roots of that of $\bb$. 
This follows from concavity of the square-root function.

Our overarching mantra throughout this paper is this: whenever we perform an operation and the hybrid-influence-vector induced by a line changes, the new vector majorizes the old vector.
Specifically, these vectors are generated by look at the terms of $R_{i-1}(S)$ and $R_i(S+i)$
restricted to $i$-lines.

To prove this vector-after-operation majorizes vector-before-operation, we need some structural assumptions on the function. Otherwise, it's not hard to construct examples where this just fails.
The structure we need is precisely the {\em semisortedness} of $f$. When a function is semisorted, 
the majorization argument goes through. At a high level, when $f$ is semisorted, 
the vector of influences (along a line) satisfy various monotonicity properties.
In particular, when we (fully) sort on some coordinate $i$, we can show
the points losing violations had low violations to begin with. That is, 
the vector of violations becomes less balanced, and the majorization follows.

The above discussion disregarded the colors. With colors, the situation is noticeably more difficult. Although the function $f$ is assumed to be semisorted, the coloring $\chi:E \to \{0,1\}$ is adversarial. So even though the vector
of influences may have monotonicity properties, the colored influences may not have this structure.
So a point with high influence could have much lower colored influence. Note that the sort operator
is insensitive to the coloring. So the majorization argument discussed above might not hold when 
looking at colored influences. 

With colors, \eqref{eq:hybrid} is replaced by the actual quantity~\eqref{eq:rhs-quantity} described in~\Cref{sec:mainworkhorse}. 
To carry out the majorization argument, we need to construct a family of colorings $\xi_\bx$ on the $n^d$
different hypercubes. We also need 
$2^d$ many different auxiliary colorings $\chi_S$ of the hypergrid, constructed after every sort operation.
The argument is highly technical. But all colorings are
chosen to follow our mantra: vector after operation should majorize vector before operation. 
The same principle is also used to prove~\Cref{lem:semisorting-decreases} which claims that semisorting an interval can only decrease the Talagrand objective, after a recoloring. 

The details of the actual $R_i(S)$ hybrid function and the strategy to use them is presented in~\Cref{sec:mainworkhorse}. The most technical part of the paper 
is in~\Cref{sec:proofoflemma6}, which proves that 
the potential decreases in each round. 

\paragraph{Addressing the $\approx$ in \eqref{eq:hope1} via random sorts.} To finally complete the argument, we need \eqref{eq:hope1} 
that relates the average variance of the $g_\bx$ functions to the distance to monotonicity of $f$.
As discussed earlier, \eqref{eq:hope1} is false, even for the case of hypercubes. 
Nevertheless, one can use \eqref{eq:hope2} and \Cref{cor:kms} to prove a lower bound on $T_{\Phi_{\chi}}(f)$ with respect to $\eps_f$. 
This is the telescoping argument of KMS, refined in~\cite{PRW22}.  We describe the main ideas below.
The first observation (see~\Cref{thm:semisorted-reduce-conv}) is that $	\Exp_{\bx\in [n]^d} \left[\var(g_\bx)\right]$ is roughly 
$\EX_S[\Delta(S \circ f, \overline{S} \circ f]$
where $S$ is a uniform random subset of coordinates.
The distance to monotonicity $\eps_f$ is approximated by $\Delta\left(f, S\circ \overline{S}\circ f\right)$ which, by the triangle inequality, is at most 
$\Delta(f, S\circ f) + \Delta(S\circ f, \overline{S}\circ f)$. Thus, we get a relation between $\eps_f$, the expected $\var(g_\bx)$, and the distance between $f$ and a ``random sort'' of $f$.
Therefore, if \eqref{eq:hope1} is not true, then a random sort of $f$ must be still far from being monotone, and then one can repeat the whole argument on just this random sort itself.
In one of these $\log d$ ``repetitions'', the \eqref{eq:hope1} must be true since in the end we get a monotone function (which can't be far from being monotone). 
And this suffices to establish \Cref{thm:dir-tal}. We re-assert that the main ideas are already present in~\cite{KMS15,PRW22}. However, we require a more general presentation
to make things work for hypergrids. These details can be found in~\Cref{sec:semisorted-tal-dist}.

\subsection{Related Work} \label{sec:related}

Monotonicity testing has seen much activity since its introduction around 25 years ago~\cite{Ras99,EKK+00,GGLRS00,DGLRRS99,LR01,FLNRRS02,HK03,AC04,HK04,ACCL04,E04,SS08,Bha08,BCG+10,FR,BBM11,RRSW11,BGJ+12,ChSe13,ChSe13-j,ChenST14,BeRaYa14,BlRY14,ChenDST15,ChDi+15,KMS15,BeBl16,Chen17,BlackCS18,BlackCS20,BKR20,HY22}.

We have already covered much of the previous work on Boolean monotonicity testing
over the hypercube, but give a short recap. For convenience of presentation, in some results, we subsume $\eps$-dependencies
using the notation $O_\eps$.
The problem was introduced by Goldreich et al.~\cite{GGLRS00} and Raskhodnikova \cite{Ras99}, who described an $O(d/\eps)$-query tester.
Chakrabarty and Seshadhri~\cite{ChSe13-j} achieved the first sublinear in dimension query complexity of $\otilde_\eps(d^{7/8})$ using directed isoperimetric inequalities.
Chen, Servedio, and Tan~\cite{ChenST14} improved the analysis to $\otilde_\eps(d^{5/6})$ queries. Fischer et al.~\cite{FLNRRS02} had first shown an $\Omega(\sqrt{d})$-query
lower bound for non-adaptive, one-sided testers, by a short and neat construction.
The non-adaptive, two-sided $\widetilde{\Omega}(\sqrt{d})$ lower bound is much harder to attain, and was done by Chen, Waingarten, and Xie~\cite{Chen17},
improving on the $\Omega(d^{1/2-c})$ bound from~\cite{ChenDST15}, which itself improved on the $\widetilde{\Omega}(d^{1/5})$ bound of \cite{ChenST14}. ~\cite{KMS15} gave an $\otilde_\eps(\sqrt{d})$-query tester, via the
robust directed Talagrand inequality.

While this resolves the non-adaptive testing complexity (up to $\poly(\eps^{-1}\log d)$ factors) for the hypercube,
the adaptive complexity is still open. 
The first polynomial lower bound of $\widetilde{\Omega}(d^{1/4})$ for adaptive testers was given
by Belovs and Blais~\cite{BeBl16} and has since been improved to $\widetilde{\Omega}(d^{1/3})$ by Chen, Waingarten, and Xie~\cite{Chen17}.
Chakrabarty and Seshadhri~\cite{ChSe19} gave an adaptive $\otilde_\eps(I_f)$-query tester, thereby showing that adaptivity
can help in monotonicity testing. The $d^{1/3}$ vs $\sqrt{d}$ query complexity gap is an outstanding open question
in property testing.

There has been work on approximating the distance to monotonicity in $\poly(d,\eps_f)$-queries.
Fattal and Ron~\cite{FR} gave the first non-trivial result of an $O(d)$-approximation, and Pallavoor, Raskhodnikova, and Waingarten~\cite{PRW22}
gave a non-adaptive $O(\sqrt{d})$-approximation (all running in $\poly(d,\eps_f)$ time). They also
show that non-adaptive $\poly(d)$-time algorithms cannot beat this approximation factor.

The above discussion is only for Boolean valued functions on the hypercube. For arbitrary ranges,
the original results on monotonicity testing gave an $O(d^2/\eps)$-query tester~\cite{GGLRS00,DGLRRS99}. 
Chakrabarty and Seshadhri~\cite{ChSe13} proved that $O(d/\eps)$-queries suffices for monotonicity testing,
matching the lower bound of $\Omega(d)$ of Blais, Brody, and Matulef~\cite{BBM11}. The latter bound
holds even when the range size is $\sqrt{d}$.
A recent result of Black, Kalemaj, and Raskhodnikova showed a smooth trade-off between the $\sqrt{d}$ bound
for the Boolean range and the $d$ bound for arbitrary ranges (\cite{BKR20}). Consider functions $f:\hyp{d} \to [r]$.
They gave a tester with query complexity $\otilde_\eps(r\sqrt{d})$, achieved by extending
the directed Talagrand inequality to arbitrary range functions. Their techniques are quite black-box
and carry over to other posets. We note that their techniques
can also be ported to our setting, so we can get an $\otilde_\eps(rn\sqrt{d})$-query monotonicity
tester for functions $f:[n]^d \to [r]$.

We now discuss monotonicity testing on the hypergrid. We discuss more about the $\eps$-dependencies,
since there have been interesting relevant discoveries. As mentioned above, \cite{DGLRRS99}
gives a non-adaptive, one-sided $O((d/\eps)\log^2(d/\eps))$-query tester. This was improved to $O((d/\eps)\log(d/\eps))$ by Berman, Raskhodnikova, and Yaroslavtsev~\cite{BeRaYa14}.
This paper also showed an interesting adaptivity gap for 2D functions $f:[n]^2 \to \hyp{}$:
there exists an $O(1/\eps)$-query adaptive tester (in fact, for any constant dimension $d$), and they show 
an $\Omega(\log(1/\eps)/\eps)$ lower bound for non-adaptive testers.
Previous work~\cite{BlackCS18} by the authors gave an $\otilde_\eps(d^{5/6}\log n)$-query tester, by proving
a directed Margulis inequality on augmented hypergrids. Another work~\cite{BlackCS20}
of the authors, and subsequently a work~\cite{HY22} by Harms and Yoshida, designed domain reduction methods for monotonicity testing, showing how $n$ can be reduced to $\poly(\eps^{-1},d)$ by subsampling
the hypergrid.

For hypergrid functions with arbitrary ranges, the optimal complexity is known to be $\Theta(d\log n)$~\cite{ChSe13,ChSe14}.
When the range is $[r]$ and $d=1$, one can get $O(\log r)$-query testers~\cite{PaRaVa18}. 

\section{Preliminaries} \label{sec:prelims}

A central construct in our proof is the \emph{sort} operator. 

\begin{definition} \label{def:sortline} Consider a Boolean function on the line $h:[n] \to \hyp{}$.  
The sort operator $\sortline{}$ is defined as follows.
$$ \sortline{h}(b) =
\begin{cases}
    0 & \textrm{if} \ b < n-\|h\|_1 \\
    1 & \textrm{if} \ b \geq n-\|h\|_1
\end{cases}$$
\end{definition}

Thus, the sort operator ``moves" the values on a line to ensure that it is sorted.
Note that $\sortline{h}$ and $h$ have exactly the same number of zero/one valued points.
We can now define the sort operator for any dimension $i$. This operator
takes a hypergrid function and applies the sort operator on every $i$-line.

\begin{definition} \label{def:sortop} Let $i$ be a dimension
and $f:[n]^d \to \hyp{}$. The sort operator for dimension $i$, $\sorti{}{i}$, is defined as follows.
For every $i$-line $\ell$, $\sorti{f}{i}|_\ell = \sortline{f|_\ell}$.

Let $S$ be an ordered list of dimensions, denoted $(i_1, i_2, \ldots, i_k)$. 
The function $S \circ f$
is obtained by applying the $\sorti{}{i}$ operator in the order given by $S$. Namely,
$$ S \circ f = \sorti{\sorti{\ldots \sorti{f}{{i_1}}}{{i_{k-1}}}}{{i_k}} $$
\end{definition}

Somewhat abusing notation, we will treat the ordered list of dimensions $S$
as a set, with respect to containing elements. The key property of the sort
operator is that it preserves the sortedness of \emph{other} dimensions. 

\begin{claim} \label{clm:sortS} The function $S \circ f$ is monotone
along all dimensions in $S$.
\end{claim}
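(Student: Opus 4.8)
The plan is to prove \Clm{sortS} by induction on the position of a dimension within the ordered list $S = (i_1, \ldots, i_k)$, exploiting the fact that a single application of $\sorti{}{i}$ makes $f$ monotone along dimension $i$, together with a key ``preservation'' property: applying $\sorti{}{j}$ to a function that is already monotone along dimension $i$ keeps it monotone along dimension $i$. Concretely, fix a dimension $i_m \in S$. The function $S \circ f$ is obtained from $\sorti{(i_1 \circ \cdots \circ i_{m-1} \circ f)}{i_m}$ by subsequently applying $\sorti{}{i_{m+1}}, \ldots, \sorti{}{i_k}$. After the application of $\sorti{}{i_m}$, every $i_m$-line is sorted, so the intermediate function is monotone along dimension $i_m$. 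It then suffices to show that each later operator $\sorti{}{i_{m+1}}, \ldots, \sorti{}{i_k}$ preserves monotonicity along $i_m$, and then $S \circ f$ is monotone along $i_m$; since $i_m$ was arbitrary, we are done.

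So the real content is the lemma: if $h : [n]^d \to \hyp{}$ is monotone along dimension $i$, then $\sorti{h}{j}$ is monotone along dimension $i$, for any $j \neq i$ (the case $j = i$ is trivial since $\sortline{}$ is idempotent). To prove this, I would fix a pair of points $\bx \preceq \by$ that are $i$-aligned, say $\bx_i < \by_i$ and $\bx_{-i} = \by_{-i}$, and show $\sorti{h}{j}(\bx) \le \sorti{h}{j}(\by)$. The operator $\sorti{}{j}$ only reshuffles values within $j$-lines, and $\bx, \by$ lie on two distinct but ``parallel'' $j$-lines $\ell_\bx$ and $\ell_\by$ (they agree on all coordinates except $i$, hence in particular on $j$; wait — they agree on coordinate $j$, so they lie on $j$-lines indexed by the same $j$-coordinate value but differing in the $i$-th slot of the line index). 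The value $\sorti{h}{j}(\bx)$ equals $1$ iff the $j$-coordinate of $\bx$ is at least $n - \|h|_{\ell_\bx}\|_1$, i.e. iff $\|h|_{\ell_\bx}\|_1 \ge n - \bx_j$; similarly $\sorti{h}{j}(\by) = 1$ iff $\|h|_{\ell_\by}\|_1 \ge n - \by_j$. Since $\bx_j = \by_j$, the threshold is the same, so it suffices to show $\|h|_{\ell_\bx}\|_1 \le \|h|_{\ell_\by}\|_1$. But monotonicity of $h$ along dimension $i$ gives a pointwise inequality $h(\bz) \le h(\bz')$ whenever $\bz \in \ell_\bx$, $\bz' \in \ell_\by$ share the same $j$-coordinate (they are $i$-aligned with $\bz_i = \bx_i < \by_i = \bz'_i$), and summing this pointwise inequality over the $n$ points of the line yields $\|h|_{\ell_\bx}\|_1 \le \|h|_{\ell_\by}\|_1$, as needed.

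I would then assemble these pieces: by the lemma, monotonicity along $i_m$ is preserved through each of the operators $\sorti{}{i_{m+1}}, \ldots, \sorti{}{i_k}$ applied in order, so $S \circ f$ is monotone along $i_m$; ranging over all $m \in [k]$ gives that $S \circ f$ is monotone along every dimension in $S$. I expect the main (though still modest) obstacle to be purely bookkeeping: carefully tracking which coordinates are ``frozen'' versus ``varying'' so that the two points $\bx, \by$ genuinely land on parallel $j$-lines and the sort thresholds coincide — the inequality $\bx_j = \by_j$ is what makes the argument work, and one must be sure the hypotheses $j \neq i$ and $\bx_{-i} = \by_{-i}$ are used correctly. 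No deep idea is required beyond the observation that sorting along $j$ replaces a line's values by a threshold function whose cutoff depends only on the $\ell_1$-mass of the line, and that $\ell_1$-mass is monotone in $i$ when $h$ is.
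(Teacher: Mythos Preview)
Your proposal is correct and follows essentially the same approach as the paper: prove the preservation lemma (if $h$ is monotone along $i$ then $\sorti{h}{j}$ remains monotone along $i$) by comparing the $\ell_1$-masses of two parallel $j$-lines via the pointwise inequality coming from $i$-monotonicity, and then induct along the list $S$. The paper's proof is marginally terser but the argument is the same.
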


\begin{proof} We will prove the following statement: if $f$
is monotone along dimension $i$, then $\sorti{f}{j}$ is monotone
along both dimensions $i$ and $j$. A straightforward induction (which we omit)
proves the claim.

By construction, the function $\sorti{f}{j}$ is monotone along dimension $j$.
Consider two arbitrary points $\bx \preceq \bx'$ that are $i$-aligned (meaning
that they only differ in their $i$-coordinates).
We will prove that $\sorti{f}{j}(\bx) \leq \sorti{f}{j}(\bx')$, which will prove
that $\sorti{f}{j}$ is monotone along dimension $i$.

For convenience, let the $j$-lines containing $\bx$ and $\bx'$
be $\ell$ and $\ell'$, respectively. Note that these $j$-lines
only differ in their $i$-coordinates. Let $c$ denote the $j$-coordinate
of $\bx$ (and $\bx'$). Observe that $\sorti{f}{j}{\bx} = \sorti{f}{j}|_\ell(c)$
(analogously for $\bx'$). 

Note that, $\forall c \in [n]$, $f|_\ell(c) \leq f|_{\ell'}(c)$.
This is because $f$ is monotone along dimension $i$, and $\ell$
has a lower $i$-coordinate than that of $\ell'$.
Hence, $\|f|_{\ell}\|_1 \leq \|f|_{\ell'}\|_1$.
By the definition of the sort operator, $\forall c \in [n], \sortline{f|_{\ell}}(c) \leq \sortline{f|_{\ell'}}(c)$.
Thus, $\sorti{f}{j}|_\ell(c) \leq \sorti{f}{j}|_{\ell'}(c)$, implying
$\sorti{f}{j}(\bx) \leq \sorti{f}{j}(\bx')$.
\end{proof}

A crucial property of the sort operator is that it can never increase 
the distance between functions. This property, which was first established in~\cite{DGLRRS99} (Lemma 4), will be used in
\Cref{sec:semisorted-tal-dist}, where we apply our main isoperimetric theorem
on random restrictions.
We provide a proof for completeness.
\begin{claim} \label{clm:sort-hamm} Let $f, f': [n]^d \to \hyp{}$
be two Boolean functions. For any ordered set $S \subseteq [d]$,
$$ \Delta(S \circ f, S \circ f') \leq \Delta(f,f')$$
\end{claim}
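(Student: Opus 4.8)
The plan is to reduce \Cref{clm:sort-hamm} to the one-dimensional, single-sort case and then bootstrap by induction on the size of $S$. First I would establish the following atomic statement: for a single dimension $i$ and any two Boolean functions $f,f':[n]^d\to\hyp{}$, we have $\Delta(\sorti{f}{i},\sorti{f'}{i})\leq\Delta(f,f')$. Since both $\sorti{}{i}$ and the Hamming distance decompose over the $i$-lines (each point lies on exactly one $i$-line, and $\Delta$ is the average over lines of the per-line disagreement fraction), it suffices to prove the statement for a single line, i.e.\ for $h,h':[n]\to\hyp{}$, $\Delta(\sortline{h},\sortline{h'})\leq\Delta(h,h')$, where here $\Delta$ counts the number of disagreeing coordinates in $[n]$.

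For the one-line inequality, recall from \Cref{def:sortline} that $\sortline{h}$ is the indicator of the top $\|h\|_1$ coordinates and $\sortline{h'}$ is the indicator of the top $\|h'\|_1$ coordinates. Without loss of generality assume $\|h\|_1\leq\|h'\|_1$; then $\sortline{h}\leq\sortline{h'}$ pointwise, and $\Delta(\sortline{h},\sortline{h'})=\|h'\|_1-\|h\|_1$. On the other hand, $\|h'\|_1-\|h\|_1=\sum_b (h'(b)-h(b))\leq\sum_b |h'(b)-h(b)|=\Delta(h,h')$. This immediately gives the per-line bound, hence by averaging the per-dimension bound $\Delta(\sorti{f}{i},\sorti{f'}{i})\leq\Delta(f,f')$.

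Next I would handle a general ordered set $S=(i_1,\ldots,i_k)$ by induction on $k$. The base case $k=0$ is trivial since $\emptyset\circ f=f$. For the inductive step, write $S=S'\cdot(i_k)$ where $S'=(i_1,\ldots,i_{k-1})$. Then
\[
\Delta(S\circ f,\,S\circ f')=\Delta\big(\sorti{(S'\circ f)}{i_k},\,\sorti{(S'\circ f')}{i_k}\big)\leq\Delta(S'\circ f,\,S'\circ f')\leq\Delta(f,f'),
\]
using the atomic inequality for the first step and the induction hypothesis for the second. This closes the induction and proves the claim.

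I do not anticipate a serious obstacle here; the only thing to be careful about is the bookkeeping that both $\Delta$ and the sort operator factor cleanly over $i$-lines, so that the multidimensional single-sort bound really does follow from the one-dimensional one without any cross-line interaction. Everything else is the elementary observation that sorting two $0/1$ sequences of given weights to their canonical ``all ones at the top'' form can only decrease (never increase) their symmetric difference, which is exactly the $\|h'\|_1-\|h\|_1\leq\Delta(h,h')$ computation above.
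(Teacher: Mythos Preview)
Your proposal is correct and follows essentially the same approach as the paper: reduce to a single sort by induction on $|S|$, decompose $\Delta$ over $i$-lines, and on each line observe that $\Delta(\sortline{h},\sortline{h'})=\big|\|h\|_1-\|h'\|_1\big|\leq\Delta(h,h')$ via the triangle inequality. The only cosmetic difference is that the paper leaves the induction on $|S|$ implicit (``it suffices to prove this bound when $S$ is a singleton'') and writes the line computation with absolute values rather than your WLOG assumption.
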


\begin{proof} It suffices to prove this bound when $S$
is a singleton. We prove that for any $i \in [d]$, $\Delta(\sorti{f}{i}, \sorti{f'}{i}) \leq \Delta(f,f')$. In the following, we will use the simple fact that for monotone functions $h, h':[n] \to \hyp{}$, $\Delta(h,h') = \Big| \|h\|_1 - \|h'\|_1 \Big|$.
Also, we use the equality $\|\sortline{h}\|_1 = \|h\|_1$.
\begin{eqnarray*}
\Delta(\sorti{f}{i}, \sorti{f'}{i}) & = & \sum_{\ell \ \textrm{$i$-line}} \Delta(\sorti{f}{i}|_\ell,
    \sorti{f'}{i}|_\ell) = \sum_{\ell} \Big| \| \sorti{f}{i}|_\ell\|_1 - \| \sorti{f'}{i}|_{\ell}\|_1 \Big| \nonumber \\
    & = & \sum_{\ell} \Big| \|f|_\ell\|_1 - \|f'|_{\ell}\|_1  \Big| \\
    & = & \sum_{\ell} \Big| \sum_{c \in [n]} f|_\ell(c) - \sum_{c \in [n]} f|_{\ell}(c)  \Big| \\
    & \leq & \sum_{\ell} \sum_{c \in [n]} \Big|f|_\ell(c) - f'|_\ell(c) \Big| = \Delta(f,f')
\end{eqnarray*}
\end{proof}

The method of obtaining a monotone function via repeated sorting is close to being optimal.
For hypercubes, this result was established by~\cite{FR} (Lemma 4.3) and also present in~\cite{KMS15} (Lemma 3.5).
The proofs goes through word-for-word applied to hypergrids.

\begin{claim}\label{clm:2appx}
	For any function $f:[n]^d \to \{0,1\}$, 
	\[
		\eps_f \leq \Delta(f, [d]\circ f) \leq 2\eps_f
	\]
\end{claim}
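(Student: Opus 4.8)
The plan is to prove the two inequalities separately, leaning entirely on the structural facts about the sort operator already established in this section.

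For the lower bound $\eps_f \le \Delta(f, [d]\circ f)$, I would first observe that $[d]\circ f$ is a monotone function: this is exactly \Clm{sortS} applied with $S = [d]$, which says $[d]\circ f$ is monotone along every dimension, hence monotone. Since $\eps_f$ is by definition the minimum Hamming distance from $f$ to any monotone function, and $[d]\circ f$ is one such function, we immediately get $\eps_f \le \Delta(f, [d]\circ f)$.

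For the upper bound $\Delta(f, [d]\circ f) \le 2\eps_f$, let $g$ be a monotone function achieving $\Delta(f,g) = \eps_f$. The key auxiliary observation is that sorting fixes a monotone function, i.e. $[d]\circ g = g$: for monotone $g$ every line restriction $g|_\ell$ is already sorted, so $\sortline{}$ acts as the identity and hence each $\sorti{}{i}$ leaves $g$ unchanged. Now the triangle inequality gives
\[
\Delta(f, [d]\circ f) \le \Delta(f, g) + \Delta(g, [d]\circ f) = \eps_f + \Delta([d]\circ g, [d]\circ f),
\]
and applying \Clm{sort-hamm} with $S = [d]$ yields $\Delta([d]\circ g, [d]\circ f) \le \Delta(g, f) = \eps_f$. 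Combining, $\Delta(f, [d]\circ f) \le 2\eps_f$.

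I do not expect any real obstacle: the argument is just a short combination of \Clm{sortS} (to know $[d]\circ f$ is monotone), the trivial fact that repeated sorting fixes a monotone function, \Clm{sort-hamm} (contraction of Hamming distance under sorting), and the triangle inequality. The only step warranting a sentence of care is verifying $[d]\circ g = g$ for monotone $g$, which follows because each one-dimensional sort operator is the identity on an already-sorted line.
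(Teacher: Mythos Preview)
Your proof is correct and essentially identical to the paper's: both use \Clm{sortS} to see $[d]\circ f$ is monotone for the lower bound, and for the upper bound take a closest monotone function, observe that sorting fixes it, apply the triangle inequality, and invoke \Clm{sort-hamm}.
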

\begin{proof}
	The first inequality is obvious since $[d]\circ f$ is monotone as established in~\Cref{clm:sortS}.	
	Let $h$ be the monotone function closest to $f$, that is, $\eps_f =\Delta(f, h)$.
So,
	\[
	\Delta(f, [d]\circ f) \underbrace{\leq}_{\text{triangle ineq}} \Delta(f, h) + \Delta([d]\circ f, h) 
	\underbrace{=}_{\text{since}~h = [d]\circ h} \Delta(f, h) + \underbrace{\Delta([d]\circ f, [d]\circ h)}_{\leq \Delta(f,h) ~	\text{by~\Cref{clm:sort-hamm}}}~ \leq 2\Delta(f,h) = 2\eps_f
	\]	
\end{proof}

We provide one more simple claim about the sort operator that will be used throughout \Cref{sec:proofoflemma6}. Given $h,h' \colon [n] \to \{0,1\}$, define 
\[
\Delta^-(h,h') = |\{c \in [n] \colon h(c) > h'(c)\}| \text{ and } \Delta^+(h,h') = |\{c \in [n] \colon h(c) < h'(c)\}| \text{.}
\]

\begin{claim} \label{clm:sort-violations} Let $h,h' \colon [n] \to \{0,1\}$ be any two functions. Then, $\Delta^-(\sortline{h},\sortline{h'}) \leq \Delta^-(h,h')$. \end{claim}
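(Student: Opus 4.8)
The statement to prove is: for any two functions $h, h' \colon [n] \to \{0,1\}$, we have $\Delta^-(\sortline{h}, \sortline{h'}) \leq \Delta^-(h, h')$, where $\Delta^-(a,b) = |\{c : a(c) > b(c)\}|$ counts the coordinates where $a$ is $1$ and $b$ is $0$. The plan is to observe that the two sorted functions $\sortline{h}$ and $\sortline{h'}$ are both of the form ``zeros followed by ones'', with the threshold determined solely by the $\ell_1$-norms $\|h\|_1$ and $\|h'\|_1$ (recall $\|\sortline{h}\|_1 = \|h\|_1$ from \Cref{def:sortline}). So $\sortline{h}(c) > \sortline{h'}(c)$ happens exactly on the coordinates $c$ that are ``high enough'' to be a $1$ in $\sortline{h}$ but not in $\sortline{h'}$, i.e. $n - \|h\|_1 \leq c < n - \|h'\|_1$. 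This set is empty unless $\|h\|_1 > \|h'\|_1$, and in that case it has size exactly $\|h\|_1 - \|h'\|_1$. Hence $\Delta^-(\sortline{h}, \sortline{h'}) = \max\{0,\ \|h\|_1 - \|h'\|_1\}$.

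First I would prove the clean formula $\Delta^-(\sortline{h},\sortline{h'}) = \max\{0, \|h\|_1 - \|h'\|_1\}$ directly from \Cref{def:sortline}, as sketched above; this is a one-line computation. Next I would relate $\|h\|_1 - \|h'\|_1$ to the unsorted quantities: since $\|h\|_1 = \sum_c h(c)$ and $\|h'\|_1 = \sum_c h'(c)$, we have
\[
\|h\|_1 - \|h'\|_1 = \sum_{c \in [n]} \big(h(c) - h'(c)\big) = \Delta^+(h,h') - \Delta^-(h,h'),
\]
because each coordinate contributes $+1$ to the sum exactly when $h(c) > h'(c)$ (i.e. counted by $\Delta^-$... wait, let me be careful: $h(c) - h'(c) = 1$ iff $h(c) = 1, h'(c) = 0$, which is $\Delta^-$; and $h(c) - h'(c) = -1$ iff $h(c) = 0, h'(c) = 1$, which is $\Delta^+$). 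So actually $\|h\|_1 - \|h'\|_1 = \Delta^-(h,h') - \Delta^+(h,h')$. Therefore
\[
\Delta^-(\sortline{h},\sortline{h'}) = \max\{0,\ \Delta^-(h,h') - \Delta^+(h,h')\} \leq \Delta^-(h,h'),
\]
since $\Delta^+(h,h') \geq 0$. That completes the proof.

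This argument has no real obstacle — the only thing to be careful about is the bookkeeping of which of $\Delta^+$, $\Delta^-$ corresponds to the sign of $h(c) - h'(c)$, and the edge cases where one of the sorted functions is all-zeros or all-ones (handled uniformly by the formula, since $n - \|h\|_1$ and $n - \|h'\|_1$ are valid thresholds in $\{0, 1, \ldots, n\}$). The intuition worth stating for the reader is that sorting can only help: a coordinate contributes to $\Delta^-(\sortline h, \sortline{h'})$ only because $h$ has strictly more ones than $h'$ overall, and the ``surplus'' of ones in $h$ over $h'$ is at most the number of places where $h$ beats $h'$ pointwise. One could alternatively give a fully ``combinatorial'' proof by pairing up the ones of $h'$ with ones of $h$, but the norm computation above is cleaner and I would present that.
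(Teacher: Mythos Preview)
Your proof is correct and is essentially the same as the paper's: both compute $\Delta^-(\sortline{h},\sortline{h'}) = \max\{0,\ \|h\|_1 - \|h'\|_1\} = \max\{0,\ \Delta^-(h,h') - \Delta^+(h,h')\} \leq \Delta^-(h,h')$. The only cosmetic difference is that the paper splits into the two cases $\|h\|_1 \leq \|h'\|_1$ and $\|h\|_1 \geq \|h'\|_1$ rather than writing the single $\max$ formula.
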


\begin{proof} Observe that if $\norm{h}_1 \leq \norm{h'}_1$, then $\Delta^-(\sortline{h},\sortline{h'}) = 0$ and so we are done. On the other hand if $\norm{h}_1 \geq \norm{h'}_1$, then we have
\[
\Delta^-(\sortline{h},\sortline{h'}) = \norm{h}_1 - \norm{h'}_1 = \sum_{c\in[n]} h(c) - h'(c) = \Delta^-(h,h') - \Delta^+(h,h') \leq \Delta^-(h,h') \text{.}
\]
\end{proof}


\subsection{Colorful Influences and the Talagrand Objective} \label{sec:tal-obj}

We will need undirected, colorful Talagrand inequalities for proving \Thm{dir-tal}. 
For the sake of completeness, we explicitly define the undirected colored influence.

\begin{definition} \label{def:col-inf} Consider a function $g:\hyp{d} \to \hyp{}$
and a $0$-$1$ coloring $\xi$ of the edges of the hypercube $\hyp{d}$.
The influence of $\bz \in \hyp{d}$, denoted $I_{g,\xi}(\bz)$,
is the number of sensitive edges incident to $\bz$ whose color has value $f(\bz)$.

(An edge is sensitive if both endpoints have different values.)
\end{definition}

Talagrand's theorem asserts that $\EX_\bz[\sqrt{I_g(\bz)}] = \Omega(\var(g))$~\cite{Tal93}. 
The robust/colored version proven by KMS asserts this to be true for arbitrary colored
influences.

\begin{theorem}[Paraphrasing Theorem 1.8 of~\cite{KMS15}] (Colored Talagrand Theorem on the Undirected Hypercube)\label{thm:kms-und}
There exists an absolute constant $C > 0$ such that 
for any function $g:\{0,1\}^d \to \{0,1\}$ and any $0$-$1$ coloring 
$\xi$ of the edges of the hypercube,
\[
\Exp_{\bz\in \{0,1\}^d} \left[\sqrt{\Inf_\xi(\bz)} \right] \geq C \cdot \var(g)
\]
\end{theorem}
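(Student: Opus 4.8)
This is exactly the robust (colored) undirected Talagrand inequality of Khot--Minzer--Safra, so in the paper the cleanest option is to invoke \cite{KMS15}; the approach I would take to reprove it is the following. The plan is to reduce to \emph{unate} functions (monotone or anti-monotone in every coordinate, called \emph{pure} in \cite{KMS15}), for which the colored objective can be controlled by the classical uncolored inequality $\EX_\bz[\sqrt{I_g(\bz)}] = \Omega(\var(g))$ of Talagrand~\cite{Tal93}.

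The engine is the \emph{split} operator. Splitting coordinate $i$ replaces it by two new coordinates and turns each $i$-edge $(\bx,\bx+\be_i)$ carrying values $(a,b)$ into a $2$-dimensional square whose four corners carry $a$, $b$, $\min(a,b)$, and $\max(a,b)$. Iterating the split over all $d$ coordinates produces $g' \colon \hyp{2d} \to \hyp{}$, and I would record two structural facts. First, splitting preserves the mean: the four square values sum to $2(a+b)$ while the number of points doubles, so $\EX[g'] = \EX[g]$ and hence $\var(g') = \var(g)$. Second, $g'$ is unate, since each split makes one of its two new coordinates monotone and the other anti-monotone, and --- by an argument in the spirit of \Cref{clm:sortS} --- a later split does not destroy unateness in the coordinates already split. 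The third and central step is to push the adversarial coloring $\xi$ forward to a coloring $\xi'$ of $\hyp{2d}$ under which the colored objective does not increase, i.e. $\EX_\bz[\sqrt{I_{g,\xi}(\bz)}] \geq \EX_{\bz'}[\sqrt{I_{g',\xi'}(\bz')}]$; this is a square-by-square calculation showing that rerouting influence credit according to the coloring, together with concavity of $\sqrt{\cdot}$, can only lower a sum of square roots.

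Granting these, the conclusion is immediate: $\EX_\bz[\sqrt{I_{g,\xi}(\bz)}] \geq \EX_{\bz'}[\sqrt{I_{g',\xi'}(\bz')}] = \Omega(\var(g')) = \Omega(\var(g))$, the middle step being the unate case. For the unate case I would normalize $g'$ to monotone by flipping coordinates (which alters no influence and no variance) and then argue that for a monotone function any coloring keeps the influence credit ``spread out'': the colored influences are carried by sensitive up-edges at $0$-points and sensitive down-edges at $1$-points, the monotone structure prevents the boundary from concentrating on a few vertices, and so Talagrand's classical bound survives the coloring up to a constant.

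I expect the third step of the second paragraph --- that a split with an adversarial coloring cannot increase the objective --- to be the main obstacle, exactly as in \cite{KMS15}: since the coloring is chosen after seeing the function, a sensitive edge may be forced to be ``paid for'' by an endpoint that already has large influence $I$, where the marginal gain $\sqrt{I+1}-\sqrt{I}\approx 1/(2\sqrt{I})$ is negligible. This is precisely why a naive induction on the dimension fails and why the detour through the split-to-unate reduction, where the rigidity of unate functions tames the coloring, is the right route; consistently bookkeeping the induced coloring across the $d$ successive splits is the remaining technical chore. For the present paper this theorem is needed only as a black box --- \Cref{thm:dir-tal} reduces \emph{to} it via the tracker functions --- so none of this machinery has to be reproduced here.
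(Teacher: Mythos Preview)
You are right that the paper does not prove this theorem --- it is simply cited from~\cite{KMS15} and used as a black box (via \Cref{cor:kms}) --- so your bottom line matches the paper. However, the reproof sketch you offer has the architecture of the KMS argument inverted. In~\cite{KMS15}, the split operator is the device that reduces the \emph{directed} colored objective (with $\eps_f$) to the \emph{undirected} colored objective on a unate function (with $\var$); the undirected colored inequality itself is proved by a separate, direct argument and serves as the base of that reduction, not its target.

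The concrete gap is your handling of the unate case. Observe that the undirected colored objective depends only on the bipartite graph of sensitive edges between $g^{-1}(0)$ and $g^{-1}(1)$ together with the assignment of each edge to one endpoint; unateness adds a partial order consistent with the bipartition but places no constraint on the coloring. So the statement ``for a monotone function any coloring keeps the influence credit spread out, and Talagrand's classical bound survives the coloring up to a constant'' is exactly the theorem you are trying to prove, restricted to monotone inputs, and is not obviously easier. Indeed, your own split reduction shows the monotone case implies the general case, so it cannot be dismissed by a one-line appeal to the uncolored inequality. The obstacle you flag (pushing the coloring through a split) is real but secondary; the primary missing idea is an honest proof of the colored bound on the unate side, which in~\cite{KMS15} is done by a different mechanism than splitting.
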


It will be convenient in our analysis to formally define the Talagrand objective for colored, thresholded influences 
on the hypergrid.
 
\begin{definition}[Colored Thresholded Talagrand Objective]
	Given any Boolean function $f:[n]^d \to \{0,1\}$ and $\chi:E \to \{0,1\}$, 
	we define the Talagrand objective with respect to the colorful thresholded influence as
	\[
	T_{\Phi_\chi}(f) := \Exp_\bx~\left[\sqrt{\Phi_{f,\chi}(\bx)}  \right] 
	\]
	where, $\Phi_{f,\chi}$ is defined in~\Cref{def:phi-f-chi}.
\end{definition}

\subsection{Majorization} \label{sec:major}

It is convenient to think of the Talagrand objective as a ``norm'' of a vector. Throughout the paper, we (ab)use the following notation:
\[
\textrm{given a vector $\bv \in \RR^t_{\geq 0}$},~~~ \norm{\bv}_{1/2} := \sum_{i=1}^t \sqrt{\bv_i}\text{.}
\]
If we imagine an $n^d$-dimensional vector indexed by the points of the hypergrid, we see that the Talagrand objective is precisely the norm of 
the vector whose $\bx$'th entry is $\Phi_{f,\chi}(\bx)$. Most often, however, we would be considering the Talagrand objective line-by-line, with the 
natural ordering of the line defining a natural ordering on the vector. To be more precise, fix a dimension $i\in [d]$ and fix an $i$-line $\ell$. 
An $i$-line is a set of $n$ points which only differ in the $i$th coordinate. This line $\ell$ defines a vector $\vv{\Phi_\ell(f)} \in \RR_{\geq 0}^n$ 
whose $j$th coordinate, for $1\leq j\leq n$ is precisely $\Phi_{f,\chi}(\bx)$ where $\bx \in \ell$ has $\bx_i = j$. Note that
\[
\forall i\in [d],~~~	T_{\Phi_{\chi}}(f) = \frac{1}{n^d} \sum_{i\text{-lines}~\ell} \norm{\vv{\Phi_\ell(f)}}_{1/2}\text{.}
\] 
Our proof to establish (the correct version of)~\eqref{eq:hope2} proceeds via a hybrid argument that modifies the function and the coloring in various stages.
In each stage, we prove that the norm decreases. We use the following facts from the theory of majorization.

In the rest of this subsection all vectors, unless explicitly mentioned, live in $\RR^t_{\geq 0}$ for some positive integer $t$.
Given a vector $\ba$, we use $\sortdown{\ba}$ and $\sortup{\ba}$ to denote the vectors obtained by sorting $\ba$ in decreasing and increasing order, respectively. Given two  vectors 
$\ba$ and $\bb$ with the same $\ell_1$ norm,
we say $\ba \majorizes \bb$ if for all $1\leq k \leq t$, $\sum_{i\leq k} \sortdown{\ba}_i \geq \sum_{i\leq k} \sortdown{\bb}_i$. 

Throughout this paper, when we apply majorization the LHS vector would be sorted (either increasing or decreasing) while the RHS vector would be unsorted.
To be absolutely clear which is which, when $\ba$ is sorted decreasing, we use the notation $\ba \majorizes \sortdown{\bb}$ and when $\ba$ is sorted increasing we use the notation
$\ba \majorizes \sortup{\bb}$. Here is a simple standard fact that connects majorization to the Talagrand objective; 
it uses the fact that the sum of square roots is a symmetric concave function, and is thus Schur-concave.

\begin{fact}[Chapter 3,~\cite{MarshallOA11}]
	Let $\ba$ and $\bb$ be two vectors such that $\ba \majorizes \bb$. Then, $\norm{\ba}_{1/2} \leq \norm{\bb}_{1/2}$.
\end{fact}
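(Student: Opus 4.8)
The plan is to recognize this as the statement that the symmetric concave map $\bv \mapsto \norm{\bv}_{1/2} = \sum_i \sqrt{\bv_i}$ is Schur-concave, and to give the standard short argument. First I would observe that $\norm{\cdot}_{1/2}$ depends only on the multiset of coordinates of its argument, so without loss of generality I may replace $\ba$ by $\sortdown{\ba}$ and $\bb$ by $\sortdown{\bb}$; this changes neither side and preserves $\ba \majorizes \bb$ (both vectors still have the same $\ell_1$ norm, which is built into the paper's definition of $\majorizes$).

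Next I would invoke the classical Hardy--Littlewood--P\'olya / Muirhead characterization of majorization, which is exactly the content available from the cited chapter of~\cite{MarshallOA11}: since $\ba \majorizes \bb$ and $\norm{\ba}_1 = \norm{\bb}_1$, there is a finite chain $\ba = \bv^{(0)}, \bv^{(1)}, \dots, \bv^{(m)} = \bb$ in which each $\bv^{(k+1)}$ is obtained from $\bv^{(k)}$ by a single \emph{Robin Hood transfer}: one selects coordinates $p, q$ with $\bv^{(k)}_p \ge \bv^{(k)}_q$ and a real $\delta \in (0,\, \bv^{(k)}_p - \bv^{(k)}_q]$, decreases the $p$-th coordinate by $\delta$, increases the $q$-th coordinate by $\delta$, and leaves the rest unchanged. (If a more self-contained route is preferred: $\ba \majorizes \bb$ with equal $\ell_1$ norm is equivalent to $\bb = D\ba$ for a doubly stochastic $D$, which by Birkhoff--von Neumann is a convex combination $\sum_k \lambda_k P_k$ of permutation matrices; then concavity of $t\mapsto \sqrt t$ together with permutation-invariance of $\norm{\cdot}_{1/2}$ gives $\norm{\bb}_{1/2} = \norm{\sum_k \lambda_k P_k\ba}_{1/2} \ge \sum_k \lambda_k \norm{P_k\ba}_{1/2} = \norm{\ba}_{1/2}$ directly.)

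The core estimate is then a one-variable fact: for reals $0 \le y \le x$ and $0 \le \delta \le x - y$,
\[
\sqrt{x - \delta} + \sqrt{y + \delta} \ \ge\ \sqrt{x} + \sqrt{y}.
\]
This is immediate from concavity of $t \mapsto \sqrt{t}$ on $[0,\infty)$: both $x-\delta$ and $y+\delta$ lie in $[y,x]$, hence each is a convex combination of the endpoints $x$ and $y$ with complementary weights (the two values sum to $x+y$), so applying concavity to each and adding yields the inequality. Applying this coordinatewise (on the pair $(p,q)$, the other coordinates being untouched), every Robin Hood transfer can only weakly increase $\norm{\cdot}_{1/2}$, and therefore $\norm{\ba}_{1/2} = \norm{\bv^{(0)}}_{1/2} \le \norm{\bv^{(1)}}_{1/2} \le \dots \le \norm{\bv^{(m)}}_{1/2} = \norm{\bb}_{1/2}$, as desired.

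I expect the only genuine obstacle to be invoking the transfer characterization of majorization cleanly; everything else is elementary. Since the statement is quoted verbatim from~\cite[Chapter~3]{MarshallOA11}, I would in practice simply cite it, but the argument above makes the paper self-contained. One subtlety I would flag is that equality of $\ell_1$ norms is essential here (without it the conclusion can fail), so wherever this fact is used later in the paper one must verify that the two vectors supplied indeed have equal $\ell_1$ norm — which is precisely what the notations $\ba \majorizes \sortdown{\bb}$ and $\ba \majorizes \sortup{\bb}$ are set up to guarantee.
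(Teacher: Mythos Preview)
Your proposal is correct and aligns with the paper's own justification: the paper does not give a proof at all, but simply cites \cite{MarshallOA11} and remarks that the sum of square roots is a symmetric concave function and hence Schur-concave. Your Robin Hood transfer (or Birkhoff--von Neumann) argument is exactly the standard way to unpack that one-line justification, so there is nothing to add.
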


\noindent
Next, we state and prove a simple but key lemma repeatedly used throughout the analysis.
\begin{mdframed}[backgroundcolor=blue!10,topline=false,bottomline=false,leftline=false,rightline=false]
	\begin{lemma}\label{lem:sum-of-vectors}
		Let $\bU = \sum_i \bw_i$ be a finite sum of $t$-dimensional non-negative vectors.
		Let $\bS := \sum_i \sortdown{\bw_i}$. Then, $\bS \majorizes \sortdown{\bU}$.
		Analogously, if $\bS := \sum_i \sortup{\bw_i}$, then $\bS \majorizes \sortup{\bU}$.
	\end{lemma}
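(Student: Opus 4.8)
\textbf{Proof plan for Lemma~\ref{lem:sum-of-vectors}.}

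The plan is to reduce the two claims to a single, clean statement: for any index set $T \subseteq [t]$ with $|T| = k$, the partial sum $\sum_{j \in T} \bU_j$ is at most $\sum_{i \in I} \bigl(\sortdown{\bw_i}_1 + \cdots + \sortdown{\bw_i}_k\bigr)$, i.e.\ at most the sum of the $k$ largest entries of the $\bw_i$'s. Taking $T$ to be the indices of the $k$ largest entries of $\bU$ gives exactly $\sum_{j \le k} \sortdown{\bU}_j \le \sum_i \sum_{j \le k} \sortdown{\bw_i}_j = \sum_{j \le k} \bS_j$, which is the required majorization inequality $\bS \majorizes \sortdown{\bU}$. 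Since both sides have equal $\ell_1$-norm (each is $\sum_i \norm{\bw_i}_1$), no separate check of the total is needed beyond observing it. The increasing version follows by the identical argument with ``largest'' replaced by ``smallest'', or simply by applying the decreasing case to the negated/reflected vectors.

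First I would fix $k$ and a set $T$ of $k$ coordinates. The key elementary observation is that for each individual vector $\bw_i$, the partial sum over any $k$ coordinates is bounded by the sum of its $k$ largest coordinates: $\sum_{j \in T} (\bw_i)_j \le \sum_{j \le k} \sortdown{\bw_i}_j$. This is immediate from the definition of $\sortdown{\cdot}$. Then I sum this inequality over $i$ and swap the order of summation on the left:
\[
\sum_{j \in T} \bU_j \;=\; \sum_{j \in T} \sum_i (\bw_i)_j \;=\; \sum_i \sum_{j \in T} (\bw_i)_j \;\le\; \sum_i \sum_{j \le k} \sortdown{\bw_i}_j \;=\; \sum_{j \le k} \Bigl(\sum_i \sortdown{\bw_i}\Bigr)_j \;=\; \sum_{j \le k} \bS_j \text{.}
\]
Now specialize $T$ to be the set of indices achieving the $k$ largest values of $\bU$, so that the left-hand side equals $\sum_{j \le k} \sortdown{\bU}_j$; this proves $\sum_{j\le k}\sortdown{\bU}_j \le \sum_{j\le k} \bS_j$ for every $k$. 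Finally, since $\norm{\bU}_1 = \sum_i \norm{\bw_i}_1 = \norm{\bS}_1$, the $k = t$ case holds with equality, so the two vectors have the same $\ell_1$-norm and the chain of inequalities is exactly the definition of $\bS \majorizes \sortdown{\bU}$.

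For the increasing variant, I would run the same computation choosing $T$ to index the $k$ \emph{smallest} entries of $\bU$ and using $\sum_{j\in T}(\bw_i)_j \ge \sum_{j \le k}\sortup{\bw_i}_j$; this gives $\sum_{j\le k}\sortup{\bU}_j \ge \sum_{j\le k}\bS_j$ for all $k$, which together with equal $\ell_1$-norms is precisely $\bS \majorizes \sortup{\bU}$ under the paper's convention (the sorted-increasing vector on the left). I do not anticipate a genuine obstacle here — the lemma is purely combinatorial and the only thing to be careful about is matching the paper's notational convention that in ``$\bS \majorizes \sortdown{\bU}$'' the left vector is already sorted, so that the partial-sum inequalities are stated for the sorted-order prefixes rather than for arbitrary coordinate subsets; I would state the subset version as the substantive step and then note the specialization.
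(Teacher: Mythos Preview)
Your proof is correct and is essentially the same as the paper's: both fix $k$, take $T$ to be the indices of the $k$ largest entries of $\bU$, and use the elementary bound $\sum_{j\in T}(\bw_i)_j \le \sum_{j\le k}\sortdown{\bw_i}_j$ summed over $i$. The paper phrases the last step via an auxiliary vector $\bS' = \sum_i \sortdown{\bw_i[T]}$, but this is just a rewriting of your inequality; your version is in fact slightly more direct.
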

\end{mdframed}
\begin{proof}
	We prove the first statement; the second analogous statement has an absolutely analogous proof.
	We begin by noting $\bS$ is a sorted decreasing vector since it is a sum of sorted decreasing vectors.
	For brevity, let's use $\bV := \sortdown{\bU}$. Next, we note that 
	$\norm{\bS}_1 = \norm{\bV}_1 = \sum_i \norm{\bw_i}_1$.
	
	Now fix a $1\leq \tau \leq t$. We need to show $\sum_{j=1}^\tau \bS_j\geq \sum_{j=1}^\tau \bV_j$. 
	Consider the $\tau$ largest coordinates of $\bU$, and let them comprise $T\subseteq [t]$ where $|T| = \tau$.
	Consider the $\tau$-dimensional vectors $\bw_i[T]$ where we restrict our attention to only these coordinates.
	Let $\bS'$ be the $\tau$-dimensional vector formed by the sum of the sorted versions $\sortdown{\bw_i[T]}$.
	Note that $\sum_{j=1}^{\tau} \bS'_j = \sum_{j=1}^\tau \bV_j$. 
	Also note that for any $1\le j\leq \tau$, the number $\bS'_j$ equals $\sum_i(\text{$j$th max of $\bw_i[T]$})$ and $\bS_j$ equals $\sum_{i} (\text{$j$th max of $\bw_i$})$.
	Thus, $\bS_j \geq \bS'_j$, proving that $\sum_{j=1}^\tau \bS_j\geq \sum_{j=1}^\tau \bV_j$. \end{proof} 
	


%

\section{Semisorting and Reduction to Semisorted Functions}\label{sec:semisorted}

As we mentioned earlier when we stated~\eqref{eq:hope2}, we do not know if this is a true statement for an arbitrary function.
It is true for what we call semisorted functions, and proving this would be the bulk of the work. In this section, we define what semisorted
functions are, we prove that the Talagrand objective can only decrease when one moves to a semisorted function, and therefore how one can reduce to proving~\Cref{thm:dir-tal}
only for semisorted functions. \smallskip

Fix a function $f:[n]^d \to \{0,1\}$. Fix a coordinate $i$ and fix an interval $I = [a,b]$.
Semisorting $f$ on this interval in dimension $i$ leads to a function $h:[n]^d \to \{0,1\}$ as follows. We take every $i$-line $\ell$
and consider the function restricted on the interval $I$ on this line, and we sort it. The following lemma shows that semisorting on any $(i,I)$ pair
can only reduce the Talagrand objective. We defer its proof to~\Cref{sec:semisorting-can-only-reduce}.

\begin{mdframed}[backgroundcolor=blue!10,topline=false,bottomline=false,leftline=false,rightline=false] 
\begin{lemma}[Semisorting only decreases $T_\Phi$.]\label{lem:semisorting-decreases}
	Let $f$ be any hypergrid function and let $\chi$ be any bicoloring of the augmented hypergrid edges. Let $i\in [d]$ be any dimension and $I$ be any interval $[a,b]$.
	There exists a (re)-coloring $\chi'$ of the edges of the augmented hypergrid such that
	\[
		T_{\Phi_\chi}(f) \geq T_{\Phi_{\chi'}}(h)
	\]
	where $h$ is the function obtained upon semisorting $f$ in dimension $i$ on the interval $I$.
\end{lemma}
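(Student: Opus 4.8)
\textbf{Proof plan for \Cref{lem:semisorting-decreases}.}

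The plan is to argue line-by-line along the fixed dimension $i$, and to build the recoloring $\chi'$ locally on each $i$-line so that the majorization principle (our ``mantra'') applies. Fix an $i$-line $\ell$. The semisorting operation only touches the points on $\ell$ lying in the interval $I=[a,b]$; it leaves the values on $\ell \setminus I$ untouched and, more importantly, it does not change the value on \emph{any} point outside the line $\ell$. So it suffices to show that for each $i$-line $\ell$, after a suitable recoloring of the augmented-hypergrid edges incident to $\ell$, we have $\norm{\vv{\Phi_\ell(f)}}_{1/2} \geq \norm{\vv{\Phi_\ell(h)}}_{1/2}$, and that the recoloring of edges \emph{not} incident to $\ell$ is left unchanged (so the contributions of all other lines in the identity $T_{\Phi_\chi}(f) = \frac{1}{n^d}\sum_{\ell'} \norm{\vv{\Phi_{\ell'}(f)}}_{1/2}$ are literally unaffected). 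Here I am using the expression for $T_{\Phi_\chi}$ as a sum over $i$-lines from \Cref{sec:major}. Actually a subtlety arises: edges incident to a point $\bx\in\ell$ that go \emph{along a different dimension} $j\neq i$ connect $\bx$ to points off $\ell$, and recoloring such an edge changes $\Phi_{\cdot,\chi'}(\by;j)$ for the off-line endpoint $\by$. To avoid disturbing other lines, I will only recolor edges \emph{within} the line $\ell$ (i.e. $i$-aligned pairs both of whose endpoints are in $\ell$) plus possibly choose colors on $i$-aligned edges incident to $\ell$ — but these $i$-aligned edges between a point of $\ell$ and... no, all $i$-aligned pairs with one endpoint in $\ell$ have \emph{both} endpoints in $\ell$. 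Good: the $i$-aligned edges form exactly the internal structure of $i$-lines, so recoloring them touches no other line. The non-$i$-aligned edges incident to $\ja\in\ell$ I will leave colored as in $\chi$.

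Next, the heart of the matter: for a single $i$-line $\ell$, compare the vector of thresholded colorful influences before and after semisorting. Split the coordinate-sum $\Phi_{f,\chi}(\bx) = \sum_j \Phi_{f,\chi}(\bx;j)$ into the $j=i$ term and the $j\neq i$ terms. For $j \neq i$: an $i$-aligned semisort permutes/changes values only within $\ell$, but whether $\bx$ participates in a $j$-aligned violation for $j\ne i$ can change. This is precisely where I expect the semisortedness-style structural input is needed — except here $f$ is arbitrary, so I must instead use that semisorting is itself a \emph{sort of one interval}, hence (by the same reasoning as \Cref{clm:sortS}, restricted to an interval) it cannot create new $j$-violations for points in $I$ \emph{relative to each other}, though it can shuffle which point is violated. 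The cleaner route: show that the new influence vector (over the line $\ell$, with its coordinates, counting all dimensions) is obtained from the old one by an operation that, after the recoloring, is dominated in the majorization order. Concretely, I would define $\chi'$ on the $i$-aligned edges of $\ell$ to ``sort'' the colors consistently with the sorted values of $h$ on $I$ — assigning to the $i$-edges inside $I$ colors that make the colorful influence of $h$ as small as possible given the value pattern — and on the non-$i$ edges keep $\chi$. Then apply \Cref{lem:sum-of-vectors}: write the influence vector on $\ell$ as a sum of indicator vectors, one per dimension $j$ (the vector $\bw_j$ whose $c$-th entry is $\Phi(\cdot;j)$ at the point of $\ell$ with $i$-coordinate $c$); show that for each $j$ the sorted version of the ``after'' vector $\bw_j^{(h)}$ is majorized by a rearrangement of $\bw_j^{(f)}$, and that summing and using \Cref{lem:sum-of-vectors} with the \Cref{fact} on $\norm{\cdot}_{1/2}$ gives the claim.

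I expect the main obstacle to be the $j\neq i$ coordinates combined with the adversarial coloring: sorting an interval along $i$ can move a point $\bx$ that was in a $j$-violation to a position where it is no longer in a $j$-violation (and vice versa), and because the coloring $\chi$ on the non-$i$ edges is fixed and adversarial, the colorful influence $\Phi_{\cdot,\chi}(\bx;j)$ need not behave monotonically. The resolution I would pursue is the one flagged in the introduction: since we are free to recolor \emph{all} edges (including non-$i$ edges, as long as we are careful about the bookkeeping for other lines — and in fact one can recolor a non-$i$ edge $(\bx,\by)$ with $\bx\in\ell$ as long as one correspondingly re-examines the line through $\by$; but if we recolor \emph{symmetrically} for all lines simultaneously this is consistent), we can always recolor so that the colorful influence equals the true (uncolored) thresholded influence at whichever endpoint we want. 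The safe version: prove the lemma by first establishing it in the \emph{uncolored} case via the majorization argument line-by-line (using \Cref{clm:sort-violations} and \Cref{clm:sortS} applied to an interval to control $j$-violations), and then observe that the colored objective is always at least ``half'' of — more precisely, can be related to — the uncolored objective by choosing $\chi'$ to concentrate each violated edge's charge on the endpoint chosen by the optimal uncolored analysis; this is exactly the flavor of recoloring argument promised for the main theorem, and \Cref{lem:semisorting-decreases} is its simplest instance. I would write this out carefully in \Cref{sec:semisorting-can-only-reduce}, leaning on \Cref{lem:sum-of-vectors} and the Schur-concavity \Cref{fact} as the only nontrivial external inputs.
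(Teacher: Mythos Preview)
Your high-level scaffold is right---work line-by-line along dimension $i$, decompose the influence vector as a sum of Boolean indicator vectors (one per dimension $j$ and per target line), and apply \Cref{lem:sum-of-vectors} together with Schur-concavity. But there are two genuine gaps.

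\textbf{First, you must recolor the non-$i$ edges.} Your initial commitment to leave $\chi$ unchanged on edges $(\bx,\bx')$ with $\bx'=\bx+a\be_j$, $j\neq i$, is fatal. After semisorting, the set of $\bx\in\ell$ with $\bx_i\in I$ that are in a $j$-violation with $\ell'=\ell+a\be_j$ becomes a \emph{contiguous} subinterval, and its size can only shrink; but with an adversarial $\chi$ the colorful indicator vector $\bu_{\ell'}$ can become \emph{more} balanced after semisorting, which increases $\norm{\cdot}_{1/2}$. The paper's fix is exactly to recolor: for each pair of $i$-lines $(\ell,\ell')$, take the multiset of $\chi$-colors on the violating edges $\{(\bx,\bx+a\be_j):\bx_i\in I\}$ in $f$, sort it decreasing, and assign the first $|U|$ of these (left to right) as the $\chi'$-colors on the (fewer) violating edges in $h$. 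This forces the new indicator vector $\bv_{\ell'}$ to be sorted decreasing and coordinate-wise dominated by $\sortdown{\bu_{\ell'}}$, which is precisely the hypothesis needed for \Cref{lem:sum-of-vectors}. Your worry about ``disturbing other lines'' is a red herring: the recoloring is defined globally (for every pair $\ell,\ell'$ at once) \emph{before} any line-by-line inequality is proved, and since semisorting acts on all $i$-lines simultaneously, the same recoloring that serves the ones on $\ell$ also serves the zeros on $\ell'$---the paper splits each line's vector into its $f=1$ and $f=0$ parts and argues separately.

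\textbf{Second, the ``safe version'' does not work.} You cannot prove the colored statement by first proving the uncolored one and then choosing $\chi'$ to concentrate charge: the adversarial $\chi$ can make the colorful objective arbitrarily smaller than the uncolored one, and there is no inequality relating $T_{\Phi_{\chi'}}(h)$ back to the uncolored $T_\Phi(f)$ in the right direction. The recoloring has to be built directly from $\chi$, preserving exact multiset information about the old colors, not chosen fresh.

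A third piece you do not mention: the $i$-aligned edges $(\bx,\by)$ with exactly one endpoint in $I$ (e.g.\ $\bx_i\in I$, $\by_i>b$, $f(\by)=0$) also need a separate recoloring---the paper sorts their colors so that the resulting $\bbB$ vector is again decreasing and dominated by $\sortdown{\bbA}$. Without this, the $j=i$ term in the influence vector is uncontrolled.
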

\end{mdframed}
\noindent
A function $f:[n]^d \to \{0,1\}$ is called {\em semisorted} if for any $i\in [d]$ and any $i$-line $\ell$, the function restricted to the first $n/2$ points is sorted increasing
and the function restricted to the second half is also sorted increasing. It is instructive to note that when $n=2$, that is when the domain is the hypercube, every function is semisorted.
This shows that semisorted functions form a non-trivial family. However, the semisortedness is a property that allows us to prove that
\eqref{eq:hope2} holds. In particular, we prove this theorem.
\begin{mdframed}[backgroundcolor=gray!20,topline=false,bottomline=false,leftline=false,rightline=false] 
\begin{restatable}[Connecting Talagrand Objectives of $f$ and Tracker Functions]{theorem}{semisorted}
\label{thm:semisorted-reduce-to-g}
	Let $f \colon [n]^d \to \{0,1\}$ be a {\bf \em semisorted} function and let $\chi \colon E\to \{0,1\}$ be an arbitrary coloring of the edges of the fully augmented hypergrid.
Then for every $\bx \in [n]^d$, one can find a coloring $\xi_\bx$ of the edges of the Boolean hypercube such that
\[
T_{\Phi_\chi}(f) := \Exp_{\bx \in [n]^d}~\left[\sqrt{\Phi_{f,\chi}(\bx)}  \right] ~~\geq~~ \Exp_{\bx \in [n]^d} ~\Exp_{S\subseteq [d]}~~[\sqrt{I_{g_\bx, \xi_\bx}(S)}] \text{.}
\]
\end{restatable}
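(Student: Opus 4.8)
\textbf{Proof plan for \Cref{thm:semisorted-reduce-to-g}.}
The plan is to run the hybrid argument sketched in \Cref{sec:main-ideas}: introduce, for each $0 \le i \le d$, the potential $\Lambda_i := \Exp_{S \subseteq [i]}[R_i(S)]$, where $R_i(S)$ is the colored analogue of the hybrid quantity \eqref{eq:hybrid} (the actual form with colorings is the quantity that will be pinned down in \Cref{sec:mainworkhorse}). As observed there, $\Lambda_0 = T_{\Phi_\chi}(f)$ is the LHS and $\Lambda_d = \Exp_{\bx}\Exp_{S \subseteq [d]}[\sqrt{I_{g_\bx,\xi_\bx}(S)}]$ is the RHS, once the final colorings $\xi_\bx$ are read off from the sequence of auxiliary colorings produced along the way. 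So it suffices to prove $\Lambda_{i-1} \ge \Lambda_i$ for every $1 \le i \le d$. Writing a uniform $S \subseteq [i]$ as a uniform $S \subseteq [i-1]$ together with an independent coin for whether $i \in S$, we have $\Lambda_i = \tfrac12\Exp_{S\subseteq[i-1]}[R_i(S) + R_i(S+i)]$ and $\Lambda_{i-1} = \Exp_{S\subseteq[i-1]}[R_{i-1}(S)]$, so the theorem reduces to the two inequalities $R_{i-1}(S) \ge R_i(S)$ and $R_{i-1}(S) \ge R_i(S+i)$, for every $S \subseteq [i-1]$.

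The first inequality, $R_{i-1}(S) \ge R_i(S)$, is the easy one: passing from $R_{i-1}(S)$ to $R_i(S)$ replaces the term $\Phi_{S\circ f}(\bx;i)$ under the square root by $I^{=i}_{g_\bx}(S)$ while leaving every other term and the function $S \circ f$ untouched, and one checks directly from the definitions that $\Phi_{S\circ f,\chi_S}(\bx;i) \ge I^{=i}_{g_\bx,\xi_\bx}(S)$ for a suitably chosen $\xi_\bx$ (a point whose value changes when coordinate $i$ is sorted must participate in an $i$-violation, and the color can be matched). Since the drop is pointwise in $\bx$ and the line-norm $\|\cdot\|_{1/2}$ is monotone under pointwise domination, $R_{i-1}(S) \ge R_i(S)$ follows termwise. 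The real content is the second inequality $R_{i-1}(S) \ge R_i(S+i)$: here the $\Phi$-terms for dimensions $j > i$ are evaluated on the different function $(S+i)\circ f$ rather than $S \circ f$, and sorting on coordinate $i$ need not decrease violations pointwise — only in aggregate along each $i$-line. The approach is to fix an $i$-line $\ell$, collect the relevant summands into a vector indexed by the $n$ points of $\ell$, and show that the vector built from $R_i(S+i)$-terms majorizes the vector built from $R_{i-1}(S)$-terms; then \Cref{lem:sum-of-vectors} together with Schur-concavity of $\|\cdot\|_{1/2}$ gives the inequality line-by-line, hence after averaging over $\bx$. Semisortedness of $f$ (and, inductively, of $S \circ f$, since sorting preserves semisortedness) is exactly what makes this majorization hold: it forces the vector of per-line violations to have monotonicity structure, so that fully sorting coordinate $i$ only removes violations from points that were already low, making the vector less balanced. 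For the colored version, one must simultaneously cook up the auxiliary colorings $\chi_S$ (one hypergrid coloring after each sort) and the hypercube colorings $\xi_\bx$, all chosen so that "vector after operation majorizes vector before operation" continues to hold despite $\chi$ being adversarial; this is where \Cref{lem:sum-of-vectors} is applied with the $\bw_i$ being the contributions of individual violations, sorted appropriately.

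I expect the colored second inequality to be the main obstacle, for two reasons. First, the majorization has to be engineered rather than observed: the adversarial coloring $\chi$ can hide influence at high-degree points, so the naive "sort makes things less balanced" intuition can fail at the level of colored influences, and one must define the recolorings $\chi_S, \xi_\bx$ carefully (and consistently across the $d$ rounds, so that the telescoped $\xi_\bx$ at the end is a single legitimate hypercube coloring). Second, the bookkeeping is delicate: $R_i(S+i)$ mixes an undirected term $I^{=i}_{g_\bx}$, the already-converted undirected terms for $j < i$, and directed $\Phi$-terms for $j > i$ on the resorted function, and one has to verify that the vector decomposition feeding \Cref{lem:sum-of-vectors} correctly accounts for all of them on each $i$-line. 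The secondary points — that sorting preserves semisortedness along other coordinates and on each half-line (a variant of \Cref{clm:sortS}), and that $S \circ f$ stays semisorted so the induction closes — are routine and I would dispatch them quickly. The detailed construction of $R_i$ with colors is deferred to \Cref{sec:mainworkhorse} and the round-by-round decrease to \Cref{sec:proofoflemma6}; the present proof is the assembly of those pieces into the telescoping bound $\Lambda_0 \ge \Lambda_1 \ge \cdots \ge \Lambda_d$.
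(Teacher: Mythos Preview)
Your plan is essentially the paper's own approach: the potential $\Lambda_i = \Exp_{S\subseteq[i]}[R_i(S)]$, the reduction to the two inequalities $R_{i-1}(S) \ge R_i(S)$ and $R_{i-1}(S) \ge R_i(S+i)$ via the Potential Drop Lemma, and the line-by-line majorization using semisortedness and \Cref{lem:sum-of-vectors} are exactly what the paper does in \Cref{sec:mainworkhorse} and \Cref{sec:proofoflemma6}.

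One correction worth flagging: you describe $R_{i-1}(S) \ge R_i(S)$ as ``the easy one,'' following from pointwise domination after replacing only the $i$-th term. That is the uncolored intuition. In the colored hybrid, \emph{all} terms carry colorings that change from round to round --- $R_{i-1}(S)$ uses $\chi_S^{(i-1)}$ and $\xi^{(i-1)}_\bx$, while $R_i(S)$ uses the recolorings $\chi_S^{(i)}$ and $\xi^{(i)}_\bx$ --- so the $j<i$ and $j>i$ terms are not literally ``untouched.'' The paper's \Cref{sec:proofoflemma6} handles this first inequality with the same recolor-then-majorize machinery as the second (conditions (a)--(d) there), splitting each $i$-line into the intervals $W,A,C,O$ determined by semisortedness and arguing majorization interval-by-interval. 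It is less work than the second inequality, but it is not a one-line pointwise argument once colors are present.
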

\end{mdframed}
\noindent
We can use the above theorem to get set the intuition behind~\eqref{eq:hope1} correct, and prove~\Cref{thm:dir-tal} for semisorted functions. We state this below, but we defer the proof 
of this to~\Cref{sec:semisorted-tal-dist}. At this point we remind the reader again that this is not at all trivial, but the proof ideas are generalizations of those present in~\cite{KMS15,PRW22}
for the hypercube case.
\begin{mdframed}[backgroundcolor=gray!20,topline=false,bottomline=false,leftline=false,rightline=false] 
	\begin{restatable}[\Cref{thm:dir-tal} for semisorted functions.]{theorem}{dirtalsemisorted}\label{thm:dir-tal-semisorted}
		Let $f:[n]^d \to \{0,1\}$ be a {\bf \em semisorted} function that is $\eps$-far from monotone. Let $\chi:E\to \{0,1\}$ be an arbitrary coloring of the edges of the augmented hypergrid.
		Then there is a constant $C''$ such that 
		\[
		T_{\Phi_\chi}(f) := \Exp_\bx~\left[\sqrt{\Phi_{f,\chi}(\bx)}  \right]  \geq C''\eps
		\]
	\end{restatable}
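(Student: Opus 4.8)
The plan is to prove \Cref{thm:dir-tal-semisorted} by chaining together the three ingredients that have already been set up, namely \Cref{thm:semisorted-reduce-to-g}, \Cref{cor:kms}, and a distance-to-monotonicity bound (the content deferred to \Cref{sec:semisorted-tal-dist}, which plays the role of \eqref{eq:hope1}). Concretely, \Cref{thm:semisorted-reduce-to-g} gives, for the semisorted $f$ and the adversarial coloring $\chi$, a family of hypercube colorings $\{\xi_\bx\}_{\bx \in [n]^d}$ with
\[
T_{\Phi_\chi}(f) \;\ge\; \Exp_{\bx \in [n]^d} \Exp_{S \subseteq [d]}\left[\sqrt{I_{g_\bx, \xi_\bx}(S)}\right].
\]
Applying \Cref{cor:kms} pointwise in $\bx$ (the inner expectation over $S$ is exactly the undirected colored Talagrand objective of the hypercube function $g_\bx$ with coloring $\xi_\bx$), this is $\ge C \cdot \Exp_{\bx}[\var(g_\bx)]$ for an absolute constant $C$. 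So everything reduces to showing $\Exp_{\bx}[\var(g_\bx)] = \Omega(\eps/\log n)$ — or at least $\Omega(\eps)$, since $f$ is semisorted; the $\log n$ loss should come in the separate reduction from general to semisorted functions via \Cref{lem:semisorting-decreases}, not here.

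For the variance bound I would follow the telescoping/triangle-inequality argument of \cite{KMS15} as refined by \cite{PRW22}, which the excerpt sketches in \Cref{sec:main-ideas}. First I would record the identity (to be proven in \Cref{sec:semisorted-tal-dist}, cf.\ \Cref{thm:semisorted-reduce-conv}) that $\Exp_{\bx}[\var(g_\bx)]$ is, up to constants, $\Exp_{S \subseteq [d]}[\Delta(S \circ f, \overline{S} \circ f)]$, where $S$ is a uniformly random subset of coordinates and $\overline{S} = [d]\setminus S$; the point is that $\var(g_\bx) = 2\Pr_{S}[g_\bx(S)=1]\Pr_S[g_\bx(S)=0] \cdot (1/2) \cdot$ something, which one relates to the collision probability $\Pr_{S,T}[g_\bx(S)\neq g_\bx(T)]$, and then one replaces the independent pair $(S,T)$ by the coupled pair $(S,\overline{S})$ (this coupling step is exactly where a constant is lost, and is standard). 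Next, by \Cref{clm:2appx}, $\eps_f \le \Delta(f, [d]\circ f)$, and writing $[d]\circ f = \overline{S}\circ(S\circ f)$ and using the triangle inequality,
\[
\eps_f \;\le\; \Delta(f, S\circ f) \;+\; \Delta(S\circ f, \overline{S}\circ S\circ f) \;\le\; \Delta(f, S\circ f) \;+\; \Delta(S\circ f, \overline{S}\circ f),
\]
where the last step uses \Cref{clm:sort-hamm} (sorting on $\overline S$ contracts distances, and $\overline S \circ \overline S \circ f = \overline S \circ f$). Taking expectations over $S$, either $\Exp_S[\Delta(S\circ f, \overline S\circ f)] \ge \eps_f/2$ — in which case we are done, $\Exp_\bx[\var(g_\bx)] = \Omega(\eps_f)$ and the theorem follows — or $\Exp_S[\Delta(f, S\circ f)] \ge \eps_f/2$, meaning a random half-sort of $f$ is still $\Omega(\eps_f)$-far from monotone on average.

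In the second case one iterates: apply the whole argument to a random restriction of $f$ obtained by sorting on a random half of the coordinates, and recurse. After at most $O(\log d)$ such halving steps the function has been fully sorted (hence is monotone, hence $0$-far), so the ``bad'' case cannot persist all the way down; at some scale we must land in the ``good'' case and extract $\Exp_\bx[\var(g_\bx)] = \Omega(\eps_f)$ for the corresponding restricted/sorted function, whose thresholded Talagrand objective lower-bounds that of the original $f$ (since sorting a coordinate only merges lines in other coordinates and cannot increase $T_{\Phi_\chi}$ — this monotonicity of the objective under sorting, with an appropriate recoloring, is essentially \Cref{lem:semisorting-decreases} applied to a full line, or a direct variant). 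The \cite{PRW22} refinement avoids a $\log d$ loss here by a smarter choice of which scale to stop at; I would import that optimization essentially verbatim since it is domain-agnostic once phrased in terms of the quantities $\Delta(f, S\circ f)$ and $\Delta(S\circ f,\overline S\circ f)$.

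The main obstacle I anticipate is not any single inequality above — each is a short consequence of \Cref{clm:sort-hamm}, \Cref{clm:2appx}, and concavity — but rather making the ``tracker variance $\approx$ average coupled sort-distance'' identity (\Cref{thm:semisorted-reduce-conv}) precise with honest constants, and making sure the iteration on random restrictions interacts correctly with the fixed adversarial coloring $\chi$: each restriction step changes the function and therefore should come with a recolored $\chi$, and one must check that the bound $T_{\Phi_\chi}(f) \ge T_{\Phi_{\chi'}}(f')$ survives the restriction/sorting with the recoloring supplied by \Cref{lem:semisorting-decreases}. This bookkeeping is routine in spirit but is where the real care is needed; the genuinely hard work — establishing \eqref{eq:hope2} for semisorted functions, i.e.\ \Cref{thm:semisorted-reduce-to-g} — is assumed here and carried out elsewhere in the paper.
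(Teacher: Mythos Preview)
Your overall architecture is right and matches the paper: apply \Cref{thm:semisorted-reduce-to-g}, then \Cref{cor:kms}, to get $T_{\Phi_\chi}(f)\gtrsim \EX_\bx[\var(g_\bx)]\asymp \EX_S[\Delta(S\circ f,\overline S\circ f)]$ (this is exactly \Cref{thm:semisorted-reduce-conv}), then run a KMS/PRW telescoping argument. The gap is in your recursion step. You write that the bad case ``$\EX_S[\Delta(f,S\circ f)]\ge\eps_f/2$'' means ``a random half-sort of $f$ is still $\Omega(\eps_f)$-far from monotone,'' and then you propose to recurse on $S\circ f$, using \Cref{lem:semisorting-decreases} to say $T_{\Phi_\chi}(f)\ge T_{\Phi_{\chi'}}(S\circ f)$. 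But $\Delta(f,S\circ f)$ is the Hamming distance between $f$ and $S\circ f$; it says nothing about $\eps_{S\circ f}$. Indeed $S\circ f$ can be exactly monotone while $\Delta(f,S\circ f)$ is large, so recursing on $S\circ f$ collapses. Likewise your termination claim ``after $O(\log d)$ halvings the function is fully sorted'' only makes sense for the sort-iteration, which is the broken one.

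The object to recurse on is the random \emph{restriction} $h\sim\cR_S$ (fix the $\overline S$-coordinates uniformly), not the sort $S\circ f$. The bridge is the identity $\Delta(f,S\circ f)=\EX_{h\sim\cR_S}[\Delta(h,S\circ h)]$, and then \Cref{clm:2appx} turns $\Delta(h,S\circ h)$ into $\Theta(\eps_h)$; so the bad case really says the \emph{restrictions} are far from monotone on average (this is packaged as \Cref{clm:triangle} in the paper, with $\delta(\cdot)$ replacing $\eps$ to avoid constant-factor leakage across levels). The Talagrand objective comparison you need is then not \Cref{lem:semisorting-decreases} but \Cref{clm:tal-restrict}: $T_{\Phi_\chi}(f)\ge \sqrt{2}\cdot\EX_{S}\EX_{h\sim\cR_S}[T_{\Phi_\chi}(h)]$, a one-line Jensen bound whose $\sqrt{2}$ gain per level is precisely what closes the induction without a $\log d$ loss. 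Termination is by the dimension dropping to $O(1)$ (handled by a trivial Poincar\'e-type base case), not by the function becoming sorted. Note also that restrictions of semisorted functions are semisorted, so the induction hypothesis applies cleanly; you never need to worry about whether $S\circ f$ stays semisorted or how to recolor at each level.
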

\end{mdframed}

\Cref{lem:semisorting-decreases} shows that the Talagrand objective can't rise on semisorting. The distance to monotonicty, however, can fall. In the remainder of the section we show 
how we can reduce to the semisorted case with a loss of $\log n$, and in particular, we use~\Cref{thm:dir-tal-semisorted} to prove~\Cref{thm:dir-tal}.

\paragraph{Sequence of Semisorted Functions and Reduction to the Semisorted Case.}
We now describe a semi-sorting process which gives a way of getting from $f$ to a monotone function. Without much loss of generality, let us assume $n=2^k$ which we can assume by padding.
Iteratively coarsen the domain $[n]^d = [2^k]^d$ as follows.	
First ``chop'' this hypergrid into $2^d$ many $[n/2]^d = [2^{k - 1}]^d$ hypergrids by slicing through the ``middle'' in each of the $d$-coordinates.
More precisely, these $2^d$ hypergrids can be indexed via $\bv \in \{0,1\}^d$, where given such a vector, the corresponding hypergrid is
\[
H_\bv = \prod_{i=1}^d \{\bv_i \cdot \frac{n}{2} + 1, \bv_i \cdot \frac{n}{2} + 2, \cdots, \bv_i \cdot \frac{n}{2} + \frac{n}{2} \}
\]
Each hypergrid $H_\bv$ is an $[n/2]^d = [2^{k-1}]^d$ hypergrid. Let us denote the collection of all these hypergrids as the set $\cH_1$.
So, $\cH_1$ has $2^d$ many hypergrids and each hypergrid has dimension $[n/2]^d = [2^{k-1}]^d$.
Repeat the above operation on each hypergrid in $\cH_1$. More precisely, each hypergrid $H_\bv$ in $\cH_1$ will lead to $2^d$ hypergrids each with dimension $[n/4]^d = [2^{k-2}]^d$.
The total number of such hypergrids, which we collect in the collection $\cH_2$, is $2^d \times 2^d = (2^2)^d$. More generally, we have a family $\cH_i$ consisting of $\left(2^i\right)^d$ many
hypergrids of dimension $[n/2^i]^d = [2^{k - i}]^d$. The collection $\cH_{k-1}$ consists of $(2^{k-1})^d$ many $d$-dimensional hypercubes.

\begin{figure}[ht!]
	\begin{center}
		\includegraphics[trim = 0 150 0 60, clip, scale=0.4]{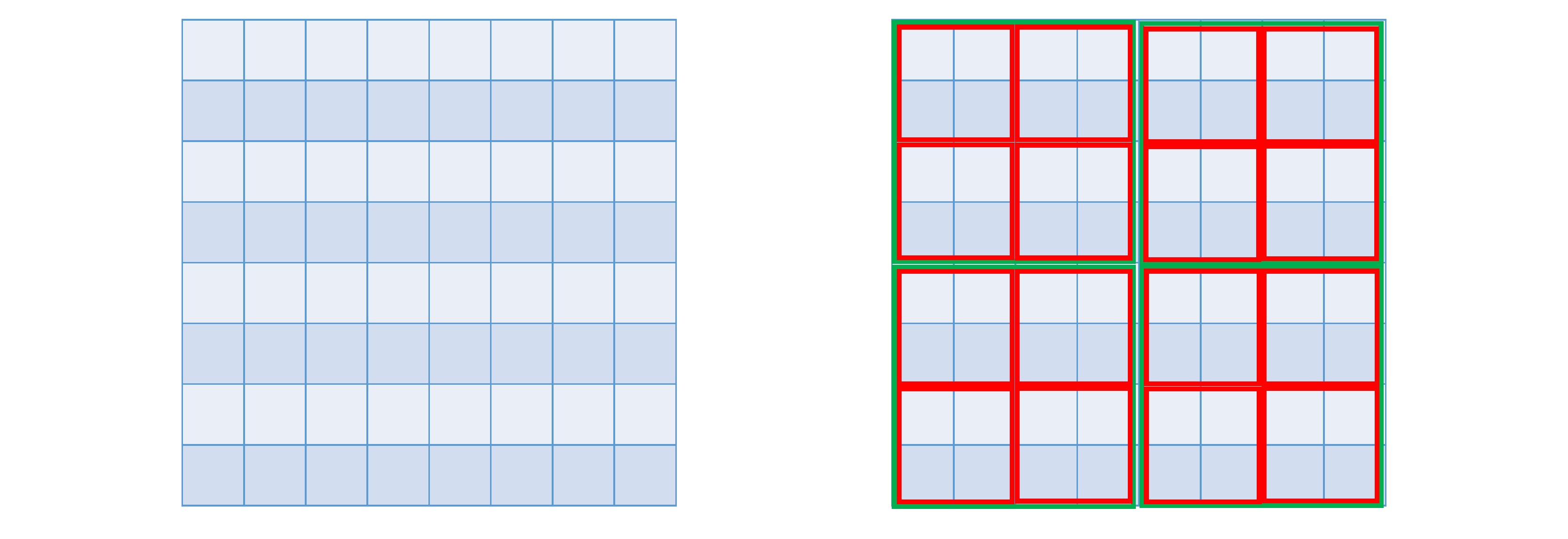}
		\caption{\em In the figure, we see an example with $d = 2$ and $n = 8 =2^{3}$. There are $2^2$ many $4\times 4$ green (hyper)-grids, 
			and $4^2$ many $2\times 2$ red squares.}
	\end{center}
\end{figure}
\noindent
Note that in any family $\cH_i$ for $1\leq i\leq k-1$, each $H\in \cH_i$ is a sub-hypergrid of $[n]^d$. We let $f_H$ denote the restriction of $f$ only to this subset $H$ of the domain.
Also, let $\cH_0$ denote the singleton set containing only one hypergrid, $[n]^d$.
Define the function $f_1 : [n]^d \to \{0,1\}$ as follows: consider every hypergrid\footnote{these will be hypercubes} $H$ in $\cH_{k-1}$ and  apply the sort operator on $f_H$ for all these hypergrids. Note that $f_1$ is a monotone function when restricted to $H\in \cH_{k-1}$. Recursively define $f_i$ as follows: consider every hypergrid $H\in \cH_{k-i}$ and apply the sort operator on $(f_{i-1})_H$ for all these hypergrids. 
\Cref{fig:semisortwithvals}~is an illustration for $d = 2$ and $k = 3$, i.e. $n=8$.

\begin{figure}[ht!]
	\begin{center}
		\includegraphics[trim = 0 45 0 30, clip, scale=0.5]{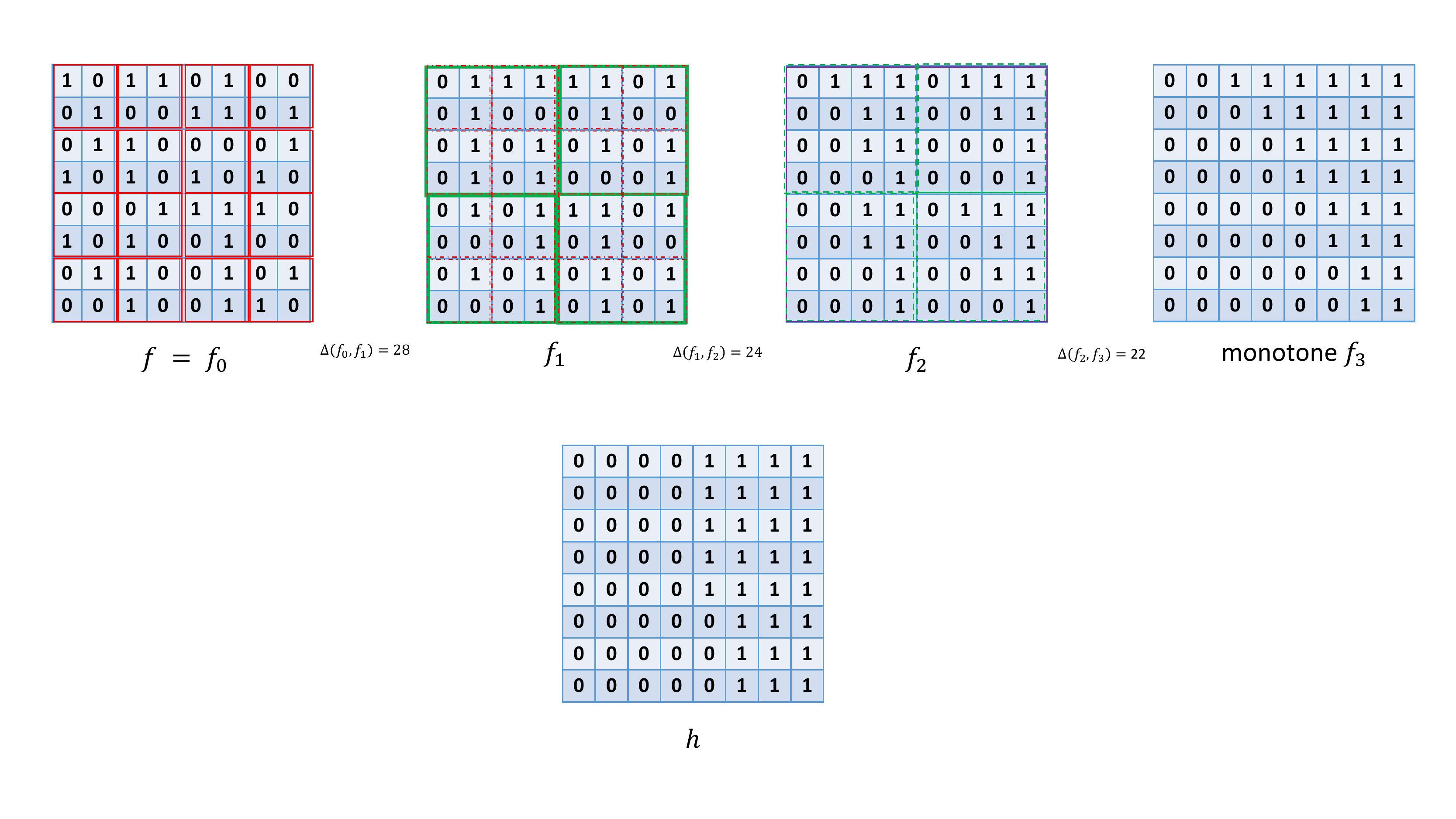}
		\caption{\em The function $f=f_0$ is described to the left, and then one obtains $f_1, f_2$ and $f_3$. The function $h$ which is obtained doing sort on the whole of $f$
			is described below. Note $h\neq f_3$.}\label{fig:semisortwithvals}
	\end{center}
\end{figure}

\begin{claim}\label{clm:triangle-ineq}
	There must exist an $0\leq j \leq k-1$ such that $\Delta(f_{j}, f_{j+1}) \geq \eps_f/k$.
\end{claim}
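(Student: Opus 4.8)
The plan is to use the telescoping/triangle-inequality structure: the sequence $f = f_0, f_1, \ldots, f_k$ progressively sorts $f$ at finer and finer scales, and $f_k$ is fully monotone, so the total ``distance traversed'' along the chain must be at least $\eps_f$. Dividing by the number of steps gives one step of size at least $\eps_f/k$.

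First I would observe that $f_k$ is a monotone function. Indeed, $f_k$ is obtained from $f_{k-1}$ by applying the sort operator on each hypergrid $H \in \cH_0$, i.e., on all of $[n]^d$; more to the point, tracing through the recursive definition, $f_i$ is monotone when restricted to each $H \in \cH_{k-i}$, and for $i = k$ this is the single hypergrid $[n]^d$. (Alternatively: each $f_i$ sorts every sub-line of length $2^{k-i+1}\cdot\!$-ish within the relevant hypergrids, so that after $k$ rounds every full line is sorted, hence by \Cref{clm:sortS}-type reasoning $f_k$ is monotone.) Since $f_k$ is monotone, $\Delta(f, f_k) \geq \eps_f$ by definition of $\eps_f$.

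Next I would apply the triangle inequality for Hamming distance along the chain:
\[
\eps_f \leq \Delta(f_0, f_k) \leq \sum_{j=0}^{k-1} \Delta(f_j, f_{j+1}).
\]
Since the right-hand side is a sum of $k$ nonnegative terms and is at least $\eps_f$, by averaging (pigeonhole) at least one term satisfies $\Delta(f_j, f_{j+1}) \geq \eps_f/k$, which is exactly the claim.

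The only point requiring a little care — and the main (mild) obstacle — is justifying that $f_k$ is genuinely monotone. This is not completely immediate from the construction since at each scale we only sort \emph{within} sub-hypergrids, not across their boundaries. The cleanest way to see it: prove by downward induction on $i$ that $f_i$ is monotone along every line when restricted to any $H \in \cH_{k-i}$, using \Cref{clm:sortS} (sorting a coordinate preserves monotonicity along previously-sorted coordinates) together with the fact that applying the sort operator on a sub-hypergrid of half the side length, and then on the combined hypergrid, yields a function sorted on full lines of that combined hypergrid. For $i = k$ the relevant hypergrid is all of $[n]^d$, giving monotonicity of $f_k = h$-style object on the whole domain. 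Everything else is a one-line triangle inequality plus averaging, so I would keep the write-up short.
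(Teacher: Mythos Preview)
Your proposal is correct and is essentially the paper's proof: triangle inequality along the chain $f_0,\ldots,f_k$, plus $\Delta(f_0,f_k)\ge \eps_f$ because $f_k$ is monotone, then pigeonhole. One simplification: you over-worry about the monotonicity of $f_k$; since $\cH_0=\{[n]^d\}$, the step producing $f_k$ applies the sort operator on the \emph{entire} hypergrid to $f_{k-1}$, so $f_k=[d]\circ f_{k-1}$ is monotone directly by \Cref{clm:sortS}, and no inductive argument across scales is needed.
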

\begin{proof}
	This follows from triangle inequality and the fact that $\Delta(f_0, f_k) \geq \eps_f$.
\end{proof}
\begin{proof}[\bf Proof of~\Cref{thm:dir-tal}]
	We now show how~\Cref{thm:dir-tal} follows from~\Cref{lem:semisorting-decreases} and~\Cref{thm:dir-tal-semisorted} via an averaging argument.
	We fix the $j$ as in~\Cref{clm:triangle-ineq}.
	By~\Cref{lem:semisorting-decreases} we get that for any function $f$ and any coloring $\chi$, there exists a recoloring $\chi'$ such that
$T_{\Phi_{\chi}}(f) \geq T_{\Phi_{\chi'}}(f_{j})$. Now consider the hypergrids in $H\in \calH_{k-j-1}$. Let $f_j|_H$ be the function restricted to this sub-domain $H$.
Note that the function $f_j|_H$ is indeed semisorted by construction. Therefore, by~\Cref{thm:dir-tal-semisorted} (on the coloring $\chi'$) we know that for all $H\in \calH_{k-j-1}$, 
\[
T_{\Phi_{\chi'}}({f_j|_H}) \geq C''\cdot \eps_{f_{j}|_H}	
\]
By~\Cref{clm:2appx}, we know that $2\eps_{f_{j}|_H} \geq \Delta(f_{j}|_H, f_{j+1}|_H)$. Taking expectation over $H\in \cH_{k-j-1}$, 
we see that the LHS is at most (at most since we only consider violations staying in $H$) $T_{\Phi_{\chi'}}(f_j)$, while the RHS is precisely $\Delta(f_j,f_{j+1})/2 \geq \eps_f/2k$.
Putting everything together, we get
$
T_{\Phi_\chi}(f)   \geq \frac{C''\eps_f}{2\log n}
$
proving~\Cref{thm:dir-tal}.
\end{proof}

\subsection{Semisorting only decreases the Talagrand objective: Proof of~\Cref{lem:semisorting-decreases}}\label{sec:semisorting-can-only-reduce}

Let us first describe the coloring $\chi'$. 
\begin{itemize}
\item First let us describe the recoloring of pairs of points $(\bx, \bx')$ which differ only in some coordinate $j\neq i$
and $\bx_i = \bx'_i$ lies in the interval $[a,b]$. We go over all these edges by considering pairs of $i$-lines which differ on a single coordinate $j\neq i$.
More precisely, if $\ell = \bx \pm t\be_i$ then $\ell' = \bx' \pm t\be_i$ for some $\bx' = \bx + a\be_j$ with $a > 0$. We now consider re-coloring the pairs
$(\bx, \bx' = \bx + a\be_j)$ as follows.

Let $V$ denote the points $\bx \in \ell$ such that (a) $\bx_i \in I$, (b) $f(\bx) = 1$, but (c) $f(\bx+ a\be_j) = 0$. That is $(\bx, \bx+a\be_j)$ is a violation.
Consider all edges $E_V := \{(\bx, \bx+a\be_j)~:~\bx\in V\}$ and let $\vv{\chi}$ be the $|E_V|$ dimensional $0,1$-vector which are the $\chi$ values of edges in $E_V$ going left to right.

Now consider the function $h$ where $I$ has been sorted on both $\ell$ and $\ell'$. Let $U$ denote the points $\bx \in \ell$ such that (a) $\bx_i \in I$, (b) $h(\bx) = 1$, but (c) $h(\bx+ a\be_j) = 0$. That is $(\bx, \bx+a\be_j)$ is a violation in $h$. Firstly note that $|U| \leq |V|$ and furthermore, these $|U|$ points form a contiguous interval of $I$.
We now describe the recoloring $\chi'$  of the edges in $E_U := \{(\bx, \bx+a\be_j)~:~\bx\in V\}$; all the other recolorings are immaterial since they don't contribute to $T_{\Phi_{\chi'}}(h)$ since the edges are not violating. We take the $|V|$-dimensional vector $\vv{\chi}$, sort in {\em decreasing} order, and then take the first $|U|$ coordinates and use them to define $\chi'(e)$ for $e\in E_U$, left to right. See~\Cref{fig:semisorting-reduces} for an illustration.

\begin{figure}[ht!]
	\includegraphics[trim = 0 300 0 50, clip, scale=0.5]{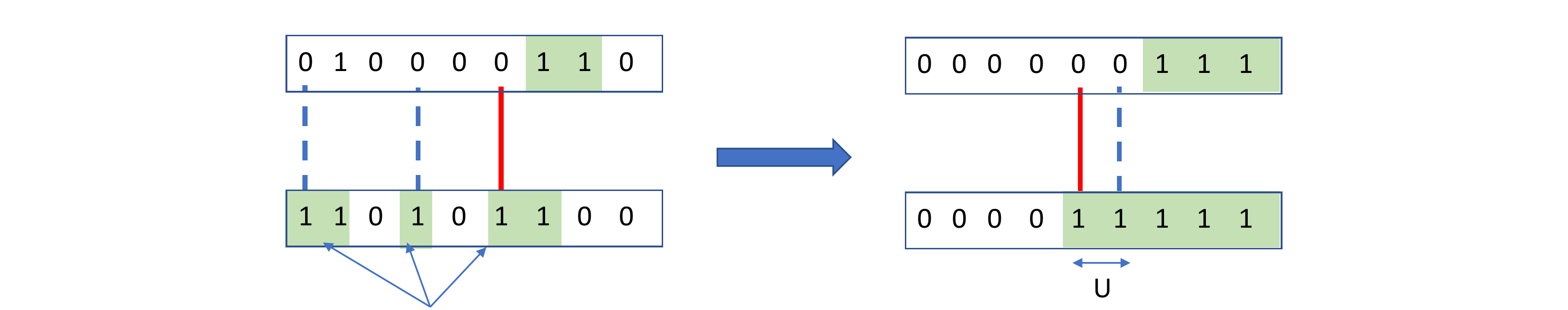}
	\caption{\em We are considering only the interval $I$. The line below is $\ell$ and the line above is $\ell'$. The green shaded zones correspond to where the function evaluates to $1$s.
		The situation to the right is after sorting. Only the violating edges are marked. On the left,
		the red solid edges are colored $\chi(e) = 1$ while the blue dashed 
	are colored $\chi(e) = 0$. On the right, the color-coding is the same but for $\chi'$. All other unmarked edges inherit the same colors as $\chi$. }\label{fig:semisorting-reduces}
\end{figure}

\item Now we describe recoloring of pairs of points $(\bx, \by)$ which only differ in coordinate $i$. First, if both $\bx_i$ and $\by_i$ lie in $I$, or if they both lie outside $I$, then we leave their colors unchanged. Furthermore, if $(\bx, \by)$ is {\em not} a violating pair in $f$, then we leave its color unchanged. 
Now consider a $\by$ to the right of $I$, that is, $\by_i > b$ and $f(\by) = 0$. Consider the $\bx$'s with $\bx_i$ in $I$ with $f(\bx) = 1$, each of which forms a violation with $\by$. Suppose there are $k$ many of them, of which $k_0$ of them are colored $0$
and $k_1$ of them are colored $1$. We now consider the picture in $h$, and once again there are exactly $k$ (possibly different) points in the interval which are violating with $\by$ in $h$.
Going from left to right, we color the first $k_1$ of them $1$ and the next $k_0$ of them $0$, in $\chi'$. 
We now do a similar thing for a $\bz$ to the left of $I$, that is, $\bz_i < a$ and $f(\bz) = 1$. We now consider the $\bx$'s with $\bx_i \in I$ with $f(\bx) = 0$, each of which forms a violation with $\bz$.
As before, suppose there are $k$ many of them $k_1$ of them colored $1$ and $k_0$ of them colored $0$. In $g$ also there are $k$ locations with which $\bz$ is a violation. 
We, once again, going from left to right, color the first $k_1$ of them $1$ and the next $k_0$ of them $0$, in $\chi'$.	
See~\Cref{fig:semisorting-reduces-2} for an illustration.

\begin{figure}[ht!]
	\includegraphics[trim = 0 300 0 0, clip, scale=0.5]{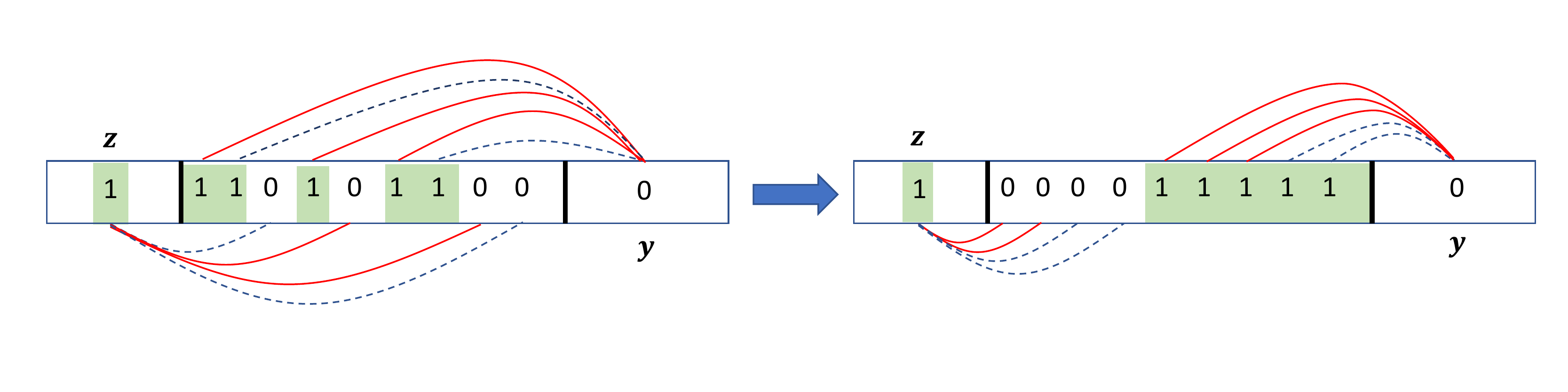}
	\caption{\em The two vertical black lines demarcate $I$. 
		The green shaded zones correspond to where the function evaluates to $1$s. The situation to the right is after sorting. 
		$\by$ is a point with $f(\by) = 0$ to the right of $I$; $\bz$ is a point with $f(\bz) = 1$ to the left of $I$.
		Only the violating edges incident to $\by$ and $\bz$ are marked. On the left,
		the red solid edges are colored $\chi(e) = 1$ while the blue dashed 
		are colored $\chi(e) = 0$. On the right, the color-coding is the same but for the recoloring $\chi'$. All other unmarked edges incident of $\by$ or $\bz$ inherit the same colors as $\chi$. 
	Edges with both endpoints in $I$ or both endpoints outside $I$ also inherit the same color.}\label{fig:semisorting-reduces-2}
\end{figure}

\end{itemize}

Now we prove the lemma ``line-by-line''. In particular, we want to prove for any $i$-line $\ell$, we have 
\[
\sum_{\bx\in \ell} \sqrt{\Phi_{f,\chi}(\bx)} \geq 	\sum_{\bx\in \ell} \sqrt{\Phi_{h,\chi'}(\bx)}
\]
Note that it suffices to prove the above for $\bx$ whose $\bx_i \in I$.

To prove the above inequality, it is best to consider the two vectors $\vv{\Phi_\chi(f)}$ and $\vv{\Phi_{\chi'}(h)}$ which are $|I|$-dimensional whose  $\bx$th coordinate is precisely $\Phi_{f,\chi}(\bx)$
and $\Phi_{h,\chi'}(\bx)$ respectively. We want to prove 
\begin{equation}\label{eq:line}
	\norm{\vv{\Phi_\chi(f)}}_{1/2} \geq \norm{\vv{\Phi_{\chi'}(h)}}_{1/2}
\end{equation}

First we divide the $|I|$ coordinates of $\vv{\tI_\chi(f)}$ into $O \cup Z$ corresponding to when $f(\bx) = 1$ and $f(\bx) = 0$. Let's call these two vectors $\vv{\tI^{(1)}_\chi(f)}$ and $\vv{\tI^{(0)}_\chi(f)}$. The former vector is $|O|$ dimensional, the latter is $|Z|$ dimensional, and $\vv{\tI_\chi(f)}$ is obtained by some splicing of these two vectors.
We will do the same for the coordinates of $\vv{\tI_{\chi'}(h)}$ to obtain $\vv{\tI^{(1)}_{\chi'}(h)}$ and $\vv{\tI^{(0)}_{\chi'}(h)}$. Note that since sorting doesn't change the number of $0$s or $1$,
both these vectors are $|O|$ and $|Z|$ dimensional, respectively.
We now set to prove
\begin{equation}\label{eq:2cases}
	\norm{\vv{\tI^{(1)}_\chi(f)}}_{1/2} \geq \norm{\vv{\tI^{(1)}_{\chi'}(h)}}_{1/2}  ~~~\textrm{and}~~~ \norm{\vv{\tI^{(0)}_\chi(f)}}_{1/2} \geq \norm{\vv{\tI^{(0)}_{\chi'}(h)}}_{1/2}
\end{equation}
and this will prove \eqref{eq:line}. We prove the first inequality; the proof of the second is analogous. 
For brevity's sake, for the rest of the section we drop the superscript $(1)$ from $\vv{\tI^{(1)}}$. \medskip

\noindent
The plan is to write $\vv{\tI_\chi(f)}$ as a sum of (Boolean) vectors, and then show that $\vv{\tI_{\chi'}(h)}$ is dominated by the sum of sorts of those Boolean vectors. Then we invoke~\Cref{lem:sum-of-vectors}.

We write $\vv{\tI_\chi(f)}$ as a sum of Boolean vectors as follows.
Fix any other $i$-line $\ell' := \ell + a\be_j$ for some $j\neq i$ and $a > 0$. Define the following 
$(0,1)$-vector also indexed by elements of $O$.
\[
\bu_{\ell'}(\bx) = 1 ~~\text{if $f(\bx + a\be_j) = 0$ and $\chi(\bx, \bx+a\be_j) = 1$}
\]
That is, $\bu_{\ell'}(\bx) = 1$ if the projection of $\bx$ onto $\ell'$, $(\bx, \bx' := \bx + a\be_j)$, is a violating edge in $f$ with $\chi$-color $1$.

Define the following vector $\bA$ as follows. 

\begin{definition} \label{eq:def1}
	For any $\bx\in  O$, 
	\[
	\bA(\bx) = \sum_{j\neq i} \underbrace{\min\left(1, \sum_{\ell' = \ell + a\be_j} \bu_{\ell'}(\bx) \right)}_{\text{let's call this}~\bw_j(\bx) \in \{0,1\}} ~~=:\sum_{j\neq i} \bw_j(\bx)
	\]
\end{definition}

Finally, for $\bx \in O$, define
\[
\bbA(\bx) = 1~~\textrm{if there is some $\by$ to its right, potentially outside the interval $I$ with $f(\by) = 0$ and $\chi(\bx,\by) = 1$}
\]
Using the vectors, we can write
\begin{observation}\label{obs:phiA}
	For any $\bx \in O$,
	\[
	\vv{\tI_\chi(f)}(\bx) =  \bA(\bx) + \bbA(\bx)
	\]
\end{observation}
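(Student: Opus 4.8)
The plan is to unwind the three definitions involved---$\Phi_{f,\chi}(\bx)$ from \Cref{def:phi-f-chi}, $\bA(\bx)$ from \Cref{eq:def1}, and the displayed definition of $\bbA(\bx)$---and match them term by term. The one hypothesis to exploit throughout is that $\bx \in O$, i.e.\ $f(\bx) = 1$. As a consequence, any $j$-aligned pair $(\bx,\by)$ that is a monotonicity violation must have $\by \succ \bx$ (the value cannot increase going up, so a violating partner of $\bx$ lies above it) and $f(\by) = 0$; and for such a pair to contribute to $\Phi_{f,\chi}(\bx;j)$ we additionally need $\chi(\bx,\by) = f(\bx) = 1$.

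First I would split $\Phi_{f,\chi}(\bx) = \sum_{j=1}^d \Phi_{f,\chi}(\bx;j)$ into the lone term $j=i$ and the $d-1$ terms with $j\neq i$, and treat each separately. For $j=i$: a violating partner $\by$ of $\bx$ lies on the same $i$-line $\ell$ with $\by_i > \bx_i$---that is, ``to the right'' of $\bx$ on $\ell$, possibly outside $I$---and has $f(\by)=0$. Hence $\Phi_{f,\chi}(\bx;i) = 1$ iff some such $\by$ also satisfies $\chi(\bx,\by)=1$, which is exactly the definition of $\bbA(\bx)$; so $\Phi_{f,\chi}(\bx;i) = \bbA(\bx)$. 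For $j\neq i$: a $j$-aligned $\by \succ \bx$ has the form $\by = \bx + a\be_j$ with $a > 0$, and this point lies on the $i$-line $\ell' = \ell + a\be_j$ as the projection of $\bx$ onto $\ell'$. By the definition of $\bu_{\ell'}$, the two conditions $f(\by)=0$ and $\chi(\bx,\by)=1$ say precisely that $\bu_{\ell'}(\bx)=1$. Therefore $\Phi_{f,\chi}(\bx;j)=1$ iff $\bu_{\ell'}(\bx)=1$ for at least one line $\ell' = \ell + a\be_j$ with $a>0$, i.e.\ $\Phi_{f,\chi}(\bx;j) = \min\bigl(1, \sum_{\ell' = \ell + a\be_j}\bu_{\ell'}(\bx)\bigr) = \bw_j(\bx)$. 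The $\min(1,\cdot)$ here is not an approximation: $\Phi_{f,\chi}(\bx;j)$ is itself an indicator, capped at $1$ even when $\bx$ participates in several $j$-aligned violations, and $\bw_j$ mirrors exactly this thresholding.

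Summing over $j$ then yields $\Phi_{f,\chi}(\bx) = \bbA(\bx) + \sum_{j\neq i}\bw_j(\bx) = \bbA(\bx) + \bA(\bx)$, and since the $\bx$-entry of the vector $\vv{\tI_\chi(f)}$ is by construction $\Phi_{f,\chi}(\bx)$ (restricted to $\bx \in O$, the superscript $(1)$ having been dropped), this is the claimed identity. I expect no real obstacle: the statement is pure definitional bookkeeping. The only points requiring care are the direction of the violations---which is what makes ``to its right'' in the definition of $\bbA$ and ``$a>0$'' in the definition of $\bu_{\ell'}$ the right choices given $f(\bx)=1$---and keeping straight that we are working only with $\bx \in O$, so that $f(\bx)=1$ may be used freely.
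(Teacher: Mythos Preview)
Your proposal is correct and follows exactly the approach the paper (implicitly) takes: the paper states this observation without proof, treating it as immediate from the definitions, and your unpacking of $\Phi_{f,\chi}(\bx)$ into the $j=i$ term (yielding $\bbA(\bx)$) and the $j\neq i$ terms (yielding $\bA(\bx)$) is precisely the intended verification.
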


Now let's consider the situation after $I$ is sorted. The ones of $O$ now ``shift around''; indeed, they are the $|O|$ many right most points.
Let's call these locations $O'$ and note $|O'| = |O|$.

Now define the $|O'|=|O|$ dimensional vector $\bv_{\ell'}$ where for $\bx \in O'$ 
\[
\bv_{\ell'}(\bx) = 1 ~~\text{if $h(\bx + a\be_j) = 0$ and $\chi'(\bx, \bx+a\be_j) = 1$}
\]

Now we will use the property of the recoloring we performed. We claim two things:


\begin{claim}\label{clm:crucial-1}
	The number of $1$s in $\bv_{\ell'}$ is at most the number of $1$s in $\bu_{\ell'}$, and $\bv_{\ell'}$ is sorted decreasing.
\end{claim}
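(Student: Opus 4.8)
\textbf{Proof proposal for Claim~\ref{clm:crucial-1}.}

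The plan is to analyze the two assertions separately, using the explicit description of the recoloring $\chi'$ on edges $(\bx,\bx+a\be_j)$ with $\bx_i\in I$. Recall the recoloring restricted to the pair of lines $(\ell,\ell')$: we let $V$ be the set of points $\bx\in\ell$ with $\bx_i\in I$, $f(\bx)=1$, $f(\bx+a\be_j)=0$ (the $f$-violations pointing right into $\ell'$); we form the $0/1$ vector $\vv{\chi}$ of their $\chi$-colors read left to right; and in $h$ we let $U$ be the analogous set of $h$-violations, which after semisorting $I$ on both $\ell$ and $\ell'$ forms a contiguous block, and we color $E_U$ using the first $|U|$ coordinates of the \emph{decreasing} sort of $\vv{\chi}$, read left to right. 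Observe that $\bu_{\ell'}$ is exactly the restriction of $\vv{\chi}$ to the coordinates in $O$ (the $f(\bx)=1$ locations inside $I$), i.e. the number of $1$s in $\bu_{\ell'}$ is exactly the number of $1$s in $\vv{\chi}$; and $\bv_{\ell'}$ restricted to $O'$ records the $\chi'$-colors of $E_U$.

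For the cardinality bound, I would argue $|U|\le|V|$: semisorting on the interval $I$ in dimension $i$ cannot create new up-violations in dimension $j$ between the lines $\ell$ and $\ell'$. Concretely, restricted to $I$, the number of such violations is $\Delta^-\!\big(f|_\ell|_I,\ f|_{\ell'}|_I\big)$, and after semisorting it becomes $\Delta^-\!\big(\sortline{f|_\ell|_I},\ \sortline{f|_{\ell'}|_I}\big)$, which is at most the former by \Cref{clm:sort-violations} (applied to the interval $I$ viewed as a line of length $|I|$). Hence the number of $1$s in $\bv_{\ell'}$, which is at most $|E_U|=|U|$, is at most $|V|=$ the number of $1$s in $\bu_{\ell'}$. (In fact the number of $1$s in $\bv_{\ell'}$ equals $\min(|U|,\text{number of }1\text{s in }\vv{\chi})\le$ number of $1$s in $\bu_{\ell'}$, but the weaker inequality suffices.)

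For the sortedness assertion, note two things. First, the coordinates in $U$ form a contiguous block of $I$: after sorting $I$ on $\ell$ the $1$s of $f|_\ell|_I$ occupy the topmost $\|f|_\ell|_I\|_1$ positions, after sorting on $\ell'$ the $0$s of $f|_{\ell'}|_I$ occupy the bottommost $|I|-\|f|_{\ell'}|_I\|_1$ positions, and $U$ is the intersection of these two intervals, hence an interval. Second, $O'$ (the $1$-locations of $h|_\ell|_I$) is the topmost $|O|$ positions of $I$, so the block $U$ is a \emph{suffix} of $O'$ (it consists of the top few $1$-positions). On this block $E_U$ we assigned the colors using the first $|U|$ entries of $\sortdown{\vv{\chi}}$ read left to right — that is, in decreasing order along the line — so the vector of $\chi'$-colors on $E_U$ is sorted decreasing. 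Extending by $0$s on the remaining coordinates of $O'$ (these are not $h$-violations, so $\bv_{\ell'}$ is $0$ there) keeps the whole vector $\bv_{\ell'}$ sorted decreasing, since the nonzero part is a suffix consisting of a decreasing run followed by zeros — wait, I need the decreasing run to sit \emph{above} the zeros. Here I should double check the geometry: $U$ is a suffix of $O'$, i.e. it is the \emph{bottom} part of the $1$-block, so the coordinates of $O'\setminus U$ lie \emph{above} $U$ and carry value $0$ in $\bv_{\ell'}$, while $U$ carries a decreasing run. That would make $\bv_{\ell'}$ \emph{not} decreasing. The main obstacle is precisely pinning down this geometry correctly: one must verify that under the indexing convention used to read vectors ``left to right,'' the violation block $U$ is positioned so that $\bv_{\ell'}$ comes out decreasing — this likely forces the convention that the line is indexed so larger $i$-coordinate is ``earlier,'' or that $O'$ is taken at the bottom; in any case the correct reading is that $E_U$ occupies the leftmost $|U|$ coordinates among those in $O'$ that can be violations, and the decreasing-sort assignment then yields a decreasing $\bv_{\ell'}$. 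Once this bookkeeping is fixed, both assertions follow immediately, and \Cref{lem:sum-of-vectors} can be invoked with $\bU=\vv{\tI_\chi(f)}=\bA+\bbA=\sum_{j\ne i}\bw_j+\bbA$ and the claim that $\vv{\tI_{\chi'}(h)}\preceq$ the coordinatewise sum of the decreasing sorts, which is the content of the companion claim to follow.
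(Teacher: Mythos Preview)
Your approach matches the paper's, but the sortedness argument is not actually completed: you correctly sense something is off with the geometry, yet you do not resolve it, and your tentative guess (``$U$ is a suffix of $O'$'') is the wrong one.

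Here is the fix. After semisorting $I$ on $\ell$, the $1$-block $O'$ is the \emph{rightmost} $|O|$ positions of $I$. After semisorting $I$ on $\ell'$, the $0$'s of $h|_{\ell'}|_I$ sit at the \emph{leftmost} positions of $I$. Hence, within $O'$, the points $\bx$ with $h(\bx+a\be_j)=0$ are exactly those with the smallest $i$-coordinate; that is, $U$ is a \emph{prefix} of $O'$, not a suffix. Now $\bv_{\ell'}$, read left to right on $O'$, equals the $\chi'$-colors on the prefix $U$ (which by construction are the first $|U|$ entries of $\sortdown{\vv{\chi}}$, hence decreasing) followed by $0$'s on $O'\setminus U$. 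This is a decreasing vector, as claimed. The paper makes exactly this argument, phrased as a short contradiction: if $\bx_i<\by_i$ in $O'$ and $\bv_{\ell'}(\by)=1$, sortedness of $h$ on $\ell'$ forces $\bx\in U$ as well, and then the decreasing assignment of $\chi'$ on $E_U$ forces $\chi'(\bx,\bx')\geq \chi'(\by,\by')=1$.

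One more slip in your cardinality paragraph: you write ``$|V|=$ the number of $1$s in $\bu_{\ell'}$,'' which is false ($|V|$ counts all violations, not just those colored $1$). Your parenthetical, however, is the correct argument: the number of $1$'s in $\bv_{\ell'}$ equals the number of $1$'s among the first $|U|$ entries of $\sortdown{\vv{\chi}}$, which is $\min(|U|,\#\{1\text{'s in }\vv{\chi}\})\le \#\{1\text{'s in }\vv{\chi}\}=\#\{1\text{'s in }\bu_{\ell'}\}$. The bound $|U|\le|V|$ (which you prove via \Cref{clm:sort-violations}) is not needed for this part.
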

\begin{proof}
	The number of $1$s in $\bu_{\ell'}$ is precisely the number of violating edges of the form $(\bx, \bx')$ in $f$, where $\bx_i \in I$ and $\bx' = \bx + a\be_j$ and $\chi(\bx,\bx') = 1$.
	Similarly, 	the number of $1$s in $\bu_{\ell'}$ are precisely the number of violating edges of the form $(\bx, \bx')$ in $h$, where $\bx_i \in I$ and $\bx' = \bx + a\be_j$ and $\chi'(\bx,\bx') = 1$.
	When we recolored to get $\chi'$ we made sure by property (a) that the latter number is smaller.
	
	Take $\bx$ and $\by$ in $O$, with $\bx_i < \by_i$, but suppose, for the sake of contradiction, $\bv_{\ell'}(\bx) = 0$ and $\bv_{\ell'}(\by) = 1$.
	The latter implies $h(\by' := \by + a\be_j) = 0$ and $\chi'(\by,\by') = 1$. Since $h$ is sorted on $\ell'$, $h(\bx' := \bx + a\be_j) = 0$ as well. 
	Since $\bx \in O$, $h(\bx) = 1$ which means $(\bx, \bx')$ is a violating edge in $h$. $\bv_{\ell'}(\bx) = 0$ implies $\chi'(\bx,\bx') = 0$. 
	But this violates property (b) of $\chi'$.
\end{proof}

%

What we need is the following corollary.
\begin{equation}\label{eq:coordom-sort}
	\textrm{For any $\ell' = \ell + a\be_j$}, ~~~ \bv_{\ell'} ~\leq_{\mathrm{coor}} \sortdown{\bu_{\ell'}}
\end{equation}
where recall that $\sortdown{z}$ is the sorted-decreasing version of $z$.

Just as we defined $\bA$, define the $|O|$-dimensional vector $\bB$ as follows.
\begin{definition}\label{eq:def2}
	For any $\bx\in O'$, 
	\[
	\bB(\bx) = \sum_{j\neq i} \underbrace{\min\left(1, \sum_{\ell' = \ell + a\be_j} \bv_{\ell'}(\bx) \right)}_{\text{let's call this}~\bz_j(\bx) \in \{0,1\}} ~~=:\sum_{j\neq i} \bz_j(\bx)
	\]
\end{definition}
Note that for every $j \neq i$, $\bw_j$ and $\bz_j$ are $|O|=|O'|$ dimensional Boolean vectors which we index by $\bx \in O$ and $\bx \in O'$, respectively.
\begin{claim}\label{clm:z-vs-w}
	For all $j$, $\bz_j \leq_{\mathrm{coor}} \sortdown{\bw_j}$.
\end{claim}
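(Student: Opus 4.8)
The plan is to deduce \Cref{clm:z-vs-w} directly from the per-translate comparison \eqref{eq:coordom-sort} (equivalently \Cref{clm:crucial-1}) by a support-counting argument, without needing majorization again at this step. Recall that for a fixed $j\neq i$ the vectors $\bw_j$ and $\bz_j$ are by \Cref{eq:def1} and \Cref{eq:def2} the coordinatewise truncations $\bw_j=\min\bigl(1,\sum_{\ell'=\ell+a\be_j}\bu_{\ell'}\bigr)$ and $\bz_j=\min\bigl(1,\sum_{\ell'=\ell+a\be_j}\bv_{\ell'}\bigr)$. Hence their supports are exactly the unions of the supports of the summands: $\mathrm{supp}(\bw_j)=\bigcup_{\ell'}\mathrm{supp}(\bu_{\ell'})$ and $\mathrm{supp}(\bz_j)=\bigcup_{\ell'}\mathrm{supp}(\bv_{\ell'})$, where $\ell'$ ranges over the translates $\ell+a\be_j$ with $a>0$.

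First I would note that for every such translate, $\mathrm{supp}(\bu_{\ell'})\subseteq\mathrm{supp}(\bw_j)$, and in particular $\norm{\bu_{\ell'}}_1\le\norm{\bw_j}_1$. By \Cref{clm:crucial-1}, $\bv_{\ell'}$ has at most as many ones as $\bu_{\ell'}$ and is sorted decreasing, so $\norm{\bv_{\ell'}}_1\le\norm{\bu_{\ell'}}_1\le\norm{\bw_j}_1$, and being a sorted-decreasing $0$-$1$ vector its support is precisely the initial segment of length $\norm{\bv_{\ell'}}_1$. Therefore $\mathrm{supp}(\bv_{\ell'})$ is contained in the initial segment of length $\norm{\bw_j}_1$, which is exactly $\mathrm{supp}(\sortdown{\bw_j})$. (This uses that $O$ and $O'$ are indexed compatibly by the $i$-coordinate, so that "sorted decreasing" puts ones on the same end for the $\bv_{\ell'}$ and for $\sortdown{\bw_j}$; this is the indexing already fixed by \Cref{clm:crucial-1} and \eqref{eq:coordom-sort}.)

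Taking the union over all translates then gives $\mathrm{supp}(\bz_j)=\bigcup_{\ell'}\mathrm{supp}(\bv_{\ell'})\subseteq\mathrm{supp}(\sortdown{\bw_j})$, and since $\bz_j$ and $\sortdown{\bw_j}$ are both $0$-$1$ vectors, containment of supports is the same as $\bz_j\le_{\mathrm{coor}}\sortdown{\bw_j}$, which is the claim. There is no real obstacle here beyond the bookkeeping of the indexing conventions; the only inputs are the definitions of $\bw_j,\bz_j$ and \Cref{clm:crucial-1}. (Alternatively one could sum \eqref{eq:coordom-sort} over $\ell'$, apply \Cref{lem:sum-of-vectors} to $\sum_{\ell'}\sortdown{\bu_{\ell'}}$, and compare $\ell_1$-masses, but the direct support argument is cleaner and suffices.)
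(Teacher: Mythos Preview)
Your proof is correct and takes essentially the same approach as the paper. The paper's own proof is a single line stating that the claim follows from \eqref{eq:coordom-sort} and the definitions in \Cref{eq:def1} and \Cref{eq:def2}; you have simply spelled out the support-counting details that make this implication go through, using exactly the same inputs (\Cref{clm:crucial-1} and the definitions of $\bw_j,\bz_j$).
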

\begin{proof}
Follows from \eqref{eq:coordom-sort}, and the defintions of $\bz_j$ and $\bw_j$ as described in~\Cref{eq:def1} and~\Cref{eq:def2}.
\end{proof}
%

Finally, for $\bx \in O'$, define the $|O'| = |O|$ dimensional vector $\bbB$ as
\[
\bbB(\bx) = 1~~\textrm{if there is some $\by$ to its right, outside the interval $I$ with $h(\by) = f(\by) = 0$ and $\chi'(\bx,\by) = 1$\text{.}}
\]
Just as in~\Cref{obs:phiA}, note that
\begin{observation}\label{obs:Phib}
	For any $\bx \in O'$,
	\[
	\vv{\Phi_{\chi'}(h)}(\bx) =  \bB(\bx) + \bbB(\bx)
	\]
\end{observation}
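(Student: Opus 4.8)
The plan is to prove this observation by the same dimension-by-dimension bookkeeping that gives \Cref{obs:phiA}, now applied to the semisorted function $h$ and the recoloring $\chi'$ rather than $f$ and $\chi$. Fix $\bx \in O'$. Since $O'$ consists of the $|O|$ rightmost points of $I$ on the line $\ell$ and $h$ is $1$ on all of $O'$, we have $h(\bx)=1$. I would then split $\Phi_{h,\chi'}(\bx) = \Phi_{h,\chi'}(\bx;i) + \sum_{j\neq i}\Phi_{h,\chi'}(\bx;j)$ and show that the second sum equals $\bB(\bx)$ while the first term equals $\bbB(\bx)$.

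For a coordinate $j\neq i$: by \Cref{def:phi-f-chi}, $\Phi_{h,\chi'}(\bx;j)=1$ iff there is a $j$-aligned violation $(\bx,\by)$ with $\chi'(\bx,\by)=h(\bx)=1$. Because $h(\bx)=1$, any such $\by$ has the form $\bx+a\be_j$ with $a>0$ and $h(\bx+a\be_j)=0$, and it sits on the $i$-line $\ell'=\ell+a\be_j$; these are exactly the $\ell'$ with $\bv_{\ell'}(\bx)=1$. Hence $\Phi_{h,\chi'}(\bx;j) = \min(1,\sum_{\ell'=\ell+a\be_j}\bv_{\ell'}(\bx)) = \bz_j(\bx)$, and summing over $j\neq i$ gives $\bB(\bx)$ by \Cref{eq:def2}.

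The coordinate $j=i$ is the only part needing a little care. Here $\Phi_{h,\chi'}(\bx;i)=1$ iff there is an $i$-aligned violation $(\bx,\by)$ with $\chi'(\bx,\by)=1$, and $h(\bx)=1$ forces $\by$ to lie on $\ell$ with $\by_i>\bx_i$ and $h(\by)=0$. I would now invoke sortedness of $h$ on $I$: by \Cref{def:sortline}, the one-valued points of $h$ inside $I$ on the line $\ell$ are precisely the rightmost $|O|$ positions, i.e.\ $O'$, so no $\by$ with $\by_i\in I$ and $\by_i>\bx_i$ can have $h(\by)=0$. Thus $\by$ must lie strictly to the right of $I$, so $\by_i\notin I$ and $h(\by)=f(\by)$ (semisorting alters no value outside $I$). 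Therefore $\Phi_{h,\chi'}(\bx;i)=1$ iff there is some $\by$ to the right of $\bx$ and outside $I$ with $h(\by)=f(\by)=0$ and $\chi'(\bx,\by)=1$, which is exactly the defining condition of $\bbB(\bx)$. Adding the two contributions yields $\vv{\Phi_{\chi'}(h)}(\bx)=\bB(\bx)+\bbB(\bx)$.

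The main (and essentially only) obstacle is the argument just given for $j=i$: one must check that restricting the witness $\by$ in the definition of $\bbB$ to lie outside $I$ loses nothing, and this is exactly where semisortedness of $h$ on $I$ enters — it also explains why $\bbB$ can afford this restriction whereas the corresponding quantity $\bbA$ for the un-sorted $f$ must allow $\by$ to be ``potentially outside the interval''. Everything else is just unwinding \Cref{def:phi-f-chi}, \Cref{eq:def2}, and the definition of $\bv_{\ell'}$.
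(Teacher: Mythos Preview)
Your proof is correct and follows exactly the approach the paper intends: the paper states the observation without proof, merely pointing to the parallel with \Cref{obs:phiA}, and you have spelled out precisely what that parallel amounts to. Your remark that sortedness of $h$ on $I$ is what lets $\bbB$ restrict the witness $\by$ to lie outside $I$ (whereas $\bbA$ cannot) is a nice clarification of a point the paper leaves implicit.
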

We now connect $\bbA$ and $\bbB$ as follows.
\begin{claim}\label{eq:googoo}
	$\bbB \leq_{\mathrm{coor}} \sortdown{\bbA}$	
\end{claim}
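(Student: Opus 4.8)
\textbf{Proof proposal for Claim~\ref{eq:googoo}.}

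The plan is to show that $\bbB$ is, after the recoloring, exactly a decreasing $\{0,1\}$-vector (supported on a prefix of the left-to-right order on $O'$), recall that $\sortdown{\bbA}$ is by definition a decreasing $\{0,1\}$-vector supported on a prefix, and then reduce the coordinatewise inequality to the single comparison ``$\|\bbB\|_1\le\|\bbA\|_1$''. First I would pin down the support of $\bbB$. Since $h$ is the semisort of $f$ on $I$ in dimension $i$, the values outside $I$ are unchanged ($h(\by)=f(\by)$ whenever $\by_i>b$), and on $I\cap\ell$ the $1$'s of $h$ form the top suffix, which is exactly $O'$. Hence for $\bx\in O'$ there is no $i$-aligned violation $(\bx,\by)$ with $\by$ inside $I$, so $\bbB(\bx)=1$ iff there exists $\by$ strictly to the right of $I$ on $\ell$ with $f(\by)=0$ and $\chi'(\bx,\by)=1$. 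Now I would invoke the second bullet of the construction of $\chi'$: for each such $\by$ (which forms a violation with all of $O'$ in $h$), the rule gives color $1$ under $\chi'$ to precisely the $k_1(\by)$ leftmost points of $O'$, where $k_1(\by):=|\{\bx\in O:\chi(\bx,\by)=1\}|$. Taking the union over all valid $\by$, $\bbB$ equals $1$ on exactly the $m$ leftmost coordinates of $O'$ and $0$ elsewhere, where $m:=\max_{\by}k_1(\by)$ (and $m=0$ if there is no valid $\by$). In particular $\bbB$ is sorted decreasing with $\|\bbB\|_1=m$.

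Next, $\sortdown{\bbA}$ is, by definition of the decreasing sort of a $\{0,1\}$-vector, equal to $1$ on its $M$ leftmost coordinates and $0$ elsewhere, where $M:=|\{\bx\in O:\bbA(\bx)=1\}|=\|\bbA\|_1$. Consequently, to establish $\bbB\leq_{\mathrm{coor}}\sortdown{\bbA}$ it suffices to prove $m\le M$: the $1$'s of $\bbB$ occupy positions $1,\dots,m$, and if $m\le M$ these are contained in positions $1,\dots,M$, which carry the $1$'s of $\sortdown{\bbA}$, so the coordinatewise inequality $\bbB_p\le(\sortdown{\bbA})_p$ holds at every position $p$.

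The inequality $m\le M$ is the conceptual core, and it is a short witnessing argument. I would take $\by^{*}$ to the right of $I$ on $\ell$ with $f(\by^{*})=0$ attaining the maximum, so $m=k_1(\by^{*})$. For each of the $k_1(\by^{*})$ points $\bx\in O$ with $\chi(\bx,\by^{*})=1$ we have $f(\bx)=1$, $f(\by^{*})=0$, $\bx_i\le b<\by^{*}_i$ (so $\by^{*}$ lies to the right of $\bx$ along dimension $i$), and $\chi(\bx,\by^{*})=1$; hence $\by^{*}$ itself is a valid witness showing $\bbA(\bx)=1$. Therefore $\{\bx\in O:\chi(\bx,\by^{*})=1\}\subseteq\{\bx\in O:\bbA(\bx)=1\}$, which gives $m\le M$ and completes the proof. (Note that $\bbA$ is allowed to use witnesses $\by$ inside $I$ as well, but this only enlarges $\{\bx:\bbA(\bx)=1\}$, so it cannot hurt the inequality.)

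I do not expect a genuine obstacle. The step requiring the most care is reading the recoloring rule in the correct orientation, so that color-$1$ edges are pushed to the \emph{left} within $O'$ (matching the decreasing sort of the color vector $\vv{\chi}$) and thus $\bbB$ really is supported on a prefix; the only other point to watch is using a consistent left-to-right indexing of $O$ and $O'$, so that ``the $m$ leftmost coordinates'' and ``the $M$ leftmost coordinates'' are compared in the same coordinate system. The companion inequality for the zero-valued points (the $(0)$-superscript analogue) goes through identically using the $\bz$-to-the-left part of the recoloring.
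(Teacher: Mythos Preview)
Your proposal is correct and follows essentially the same approach as the paper: both arguments establish that $\bbB$ is sorted decreasing (supported on a prefix of $O'$) and that $\|\bbB\|_1 \leq \|\bbA\|_1$, from which the coordinatewise domination is immediate. Your direct computation $\|\bbB\|_1 = \max_{\by} k_1(\by)$ together with the witnessing argument for $m\le M$ is in fact slightly cleaner than the paper's version, which phrases the same count as a proof by contradiction, but the content is identical.
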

\begin{proof}
	Similar to~\Cref{clm:crucial-1}, this follows from the following claim.
	\begin{claim}\label{clm:crucial-2}
		The number of $1$s in $\bbB$ is at most that in $\bbA$, and 	$\bbB$ is sorted decreasing.
	\end{claim}
	\begin{proof}
		This also follows from the way we recolor $\chi'$ the pairs of the form $(\bx, \by)$ with $\by$ lying to the right of $I$ and $f(\by) = 0$.
		First let's show $\bbB$ is sorted decreasing. Take two points $\bx$ and $\bz$ with $a < \bx_i < \bz_i < b$ both evaluating to $1$ in $g$.
		Say, $\bbB(\bz) = 1$ implying there is some $\by$ with $g(\by) = f(\by) = 0$ to the right of $I$ s.t. $\chi'(\bz,\by) = 1$.
		However, the way we recolor the edges incident on $\by$, this implies $\chi'(\bx, \by) = 1$ as well. But that would imply $\bbB(\bx) = 1$.
		
		The first part of the claim also follows from the way we recolor. Suppose the number of ones in $\bbA$ is $t$.
		That is, only $t$ of the points in $O$ have $1$-colored edges going to the right of the interval. Consider the subset $W$ of these outer endpoints.
		The function value, both $f$ and $g$, are $0$ here. Note that none of these points in $W$ have more than $t$ edges incident on them which are colored $1$ in $\chi$.
		Now note that in $\chi'$, this number of $1$-edges are conserved, and so for every $\bw \in W$, the number of $1$-colored violating edges is still $\leq t$.
		Now suppose for contradiction $\bbB$ has $(t+1)$ ones. Take the right most point $\bx$ and consider the violating edge $(\bx, \by)$ which is colored $1$ in $\chi'$.
		By construction, this $\by$ must have $1$-colored edges to all the $(t+1)$ points (since we color them $1$ left-to-right). This contradicts the number of $1$-edges incident on $\by$.
\end{proof}\end{proof}

To summarize, we have from~\Cref{obs:phiA} and~\Cref{eq:def1}, 
\[
\vv{\tI_\chi(f)} = \sum_{j\neq i} \bw_j + \bbA
\]
that is, we have written the LHS as a sum of Boolean vectors.
And, we have from~\Cref{obs:Phib} and~\Cref{eq:def2}, followed by~\Cref{clm:z-vs-w} and~\eqref{eq:googoo} that
\[
\vv{\tI_{\chi'}(h)} = \sum_{j\neq i} \bz_j + \bbB ~~~\leq_{\mathrm{coor}}~~~ \underbrace{\sum_{j\neq i} \sortdown{\bw_j} + \sortdown{\bbA}}_{\text{call this $\vv{s\Phi}$}}
\]
Trivially, we have $\norm{\vv{\tI_{\chi'}(h)}}_{1/2} \leq \norm{\vv{s\Phi}}_{1/2}$, and from~\Cref{lem:sum-of-vectors}, we get $\norm{\vv{s\Phi}}_{1/2} \leq \norm{	\vv{\tI_\chi(f)}}_{1/2}$, completing the proof of the first part of~\eqref{eq:2cases}.

\section{Connecting with the Distance to Monotonicity: Proof of~\Cref{thm:dir-tal-semisorted}} \label{sec:semisorted-tal-dist}

In this section, we set the intuition behind~\eqref{eq:hope1} straight. We show how the isoperimetric theorem~\Cref{thm:semisorted-reduce-to-g} 
on semisorted functions can be used to prove~\Cref{thm:dir-tal-semisorted}.
We begin by recalling the corollary of the undirected, colored Talagrand
objective on the hypercube.
\corkms*
%
%
\noindent
As mentioned earlier, one can't show \eqref{eq:hope1}, that is, 
$\Exp_\bx [\var(g_{\bx})] = \Omega(\eps_f)$. Indeed, there are examples
of functions even over the hypercube where the above bound does \emph{not} hold.
KMS deal with this problem by applying \Thm{semisorted-reduce-to-g}
to random restrictions of $f$. One can show that there
is some restriction where the corresponding $\Exp_\bx [\var(g_{\bx})]$ is large. They referred
to these calculations as the ``telescoping argument". This argument was quantitatively improved
by Pallavoor-Raskhodnikova-Waingarten~\cite{PRW22}.

In this section, we port that argument to the hypergrid setting. Our proof
is different in its presentation, though the key ideas are the same as KMS.
Our first step is to convert \Thm{semisorted-reduce-to-g} to a more convenient form,
using the undirected~\Thm{kms-und}.

\begin{theorem}\label{thm:semisorted-reduce-conv}
There exists a constant $C' > 0$ such that for any 
semisorted function $f:[n]^d \to \{0,1\}$ and any arbitrary coloring $\chi:E\to \{0,1\}$ of the augmented hypergrid, we have
		\[
				T_{\Phi_\chi}(f) \geq C' \cdot \EX_S[\Delta(S \circ f, \overline{S} \circ f)]\text{.}
		\]
\end{theorem}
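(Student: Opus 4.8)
The plan is to chain the two ``black-box'' facts already in hand — the isoperimetric reduction \Cref{thm:semisorted-reduce-to-g} for semisorted functions and the colored undirected Talagrand bound \Cref{cor:kms} on the tracker functions — and then bridge the remaining gap to $\Exp_S[\Delta(S\circ f, \overline{S}\circ f)]$ with a single elementary inequality about the variance of a Boolean function.

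First I would invoke \Cref{thm:semisorted-reduce-to-g}: since $f$ is semisorted and $\chi$ is an arbitrary coloring, it produces, for each $\bx \in [n]^d$, a hypercube edge-coloring $\xi_\bx$ with $T_{\Phi_\chi}(f) \geq \Exp_\bx \Exp_{S}[\sqrt{I_{g_\bx,\xi_\bx}(S)}]$. For each fixed $\bx$ I then apply \Cref{cor:kms} to the tracker function $g_\bx$ with this very coloring $\xi_\bx$ — the corollary is valid for \emph{any} coloring — to get $\Exp_S[\sqrt{I_{g_\bx,\xi_\bx}(S)}] \geq C_0\, \var(g_\bx)$ for an absolute constant $C_0 > 0$. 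Averaging over $\bx$ yields $T_{\Phi_\chi}(f) \geq C_0\, \Exp_\bx[\var(g_\bx)]$, so the whole theorem reduces to comparing $\Exp_\bx[\var(g_\bx)]$ with $\Exp_S[\Delta(S\circ f,\overline{S}\circ f)]$.

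Next I would rewrite the target quantity through the tracker functions. By \Cref{def:gx}, $(S\circ f)(\bx) = g_\bx(S)$ and $(\overline{S}\circ f)(\bx) = g_\bx(\overline{S})$, so $\Delta(S\circ f, \overline{S}\circ f) = \Pr_\bx[g_\bx(S) \neq g_\bx(\overline{S})]$; averaging over a uniform $S\subseteq [d]$ and swapping the two expectations gives $\Exp_S[\Delta(S\circ f,\overline{S}\circ f)] = \Exp_\bx \Pr_S[g_\bx(S)\neq g_\bx(\overline{S})]$. It therefore suffices to prove the pointwise bound $\var(g_\bx) \geq \tfrac14\Pr_S[g_\bx(S)\neq g_\bx(\overline{S})]$ and take $C' = C_0/4$. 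For this, write $p = \Pr_S[g_\bx(S)=1]$, so $\var(g_\bx) = p(1-p)$; since $S\mapsto \overline{S}$ is a bijection of $2^{[d]}$, the antipode $\overline{S}$ is again uniform, so $\Pr_S[g_\bx(S)=1,\ g_\bx(\overline{S})=0]\leq \min(p,1-p)$ and the analogous bound holds for $\Pr_S[g_\bx(S)=0,\ g_\bx(\overline{S})=1]$, whence $\Pr_S[g_\bx(S)\neq g_\bx(\overline{S})] \leq 2\min(p,1-p)\leq 4p(1-p) = 4\var(g_\bx)$, the middle step being immediate since the larger of $p,1-p$ is at least $1/2$.

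As for obstacles: there is essentially none \emph{in this statement} — all the real difficulty lives inside \Cref{thm:semisorted-reduce-to-g} (and, upstream, \Cref{lem:semisorting-decreases}). The only point worth flagging is conceptual rather than technical: one would like to assert $\Exp_\bx[\var(g_\bx)] \asymp \eps_f$ directly, which is the heuristic \eqref{eq:hope1}, but that fails even on the hypercube, and this is precisely why the statement is phrased with $\Exp_S[\Delta(S\circ f,\overline{S}\circ f)]$; the factor $4$ above is the harmless price for that reformulation, and the genuine relation to $\eps_f$ is recovered afterwards via the telescoping/random-restriction argument that follows in this section.
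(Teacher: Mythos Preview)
Your proposal is correct and follows essentially the same approach as the paper: chain \Cref{thm:semisorted-reduce-to-g} with \Cref{cor:kms} to get $T_{\Phi_\chi}(f)\geq C_0\,\Exp_\bx[\var(g_\bx)]$, rewrite $\Exp_S[\Delta(S\circ f,\overline{S}\circ f)]$ via the tracker functions, and close with the elementary bound $\Pr_S[h(S)\neq h(\overline{S})]\leq 4\var(h)$. The paper states this last step as a standalone claim (\Cref{clm:var-prob}) with a slightly different phrasing of the same counting argument, but the content is identical.
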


\begin{proof} By \Thm{semisorted-reduce-to-g}, there exists some colorings $\xi_\bx$ such that $T_{\Phi_\chi}(f) \geq \EX_\bx\EX_S[\sqrt{\infl{{g_\bx,\xi_\bx}}{S}}]$.
By the undirected Talagrand bound \Thm{kms-und},
$\EX_S[\sqrt{\infl{{g_\bx,\xi_\bx}}{S}}] \geq C\cdot \var(g_\bx)$.

\begin{eqnarray}
    \EX_S[\Delta(S \circ f, \overline{S} \circ f)] & = & \EX_S \EX_\bx[\mbone((S \circ f)(\bx) \neq (\overline{S} \circ f)(\bx))] \nonumber \\
    & = & \EX_S \EX_\bx[\mbone(g_\bx(S) \neq g_\bx(\overline{S}))] \nonumber \\
    & = & \EX_\bx \EX_S[\mbone(g_\bx(S) \neq g_\bx(\overline{S}))] \leq 4\EX_\bx[\var(g_\bx)] \label{eq:delta-var}
\end{eqnarray}
(The final inequality uses \Clm{var-prob}, stated below.)
Hence, $\EX_\bx\EX_S[\sqrt{\infl{{g_\bx,\xi}}{S}}] \geq (C/4)\EX_S[\Delta(S \circ f, \overline{S} \circ f)]$.

\begin{claim} \label{clm:var-prob} For any Boolean function $h:\hyp{d} \to \hyp{}$, $\Pr_S[h(S) \neq h(\overline{S})] \leq 4\var(h)$.
\end{claim}

\begin{proof} Recall that $\var(h) = 4\Pr_S[h(S) = 0] \Pr_S[h(S) = 1]$.
Hence, $\var(h) = 4\max_{b \in \{0,1\}} \Pr_S[h(S) = b] \min_{b \in \{0,1\}} \Pr_S[h(S) = b]$.
Since one of the values is taken with probability at least $1/2$, $\var(h) \geq 2 \min_{b \in \{0,1\}} \Pr_S[h(S) = b]$.

Let $\bfS = \{S \ | \ h(S) \neq h(\overline{S})\}$. Observe that half the sets in $\bfS$ have an $h$-value of $1$,
and the other half have value zero. Hence, $\Pr_S[h(S) \neq h(\overline{S})] \leq 2 \min_{b \in \{0,1\}} \Pr_S[h(S) = b]$. 
Combining with the bound from the previous paragraph, $\Pr_S[h(S) \neq h(\overline{S})] \leq 4\var(h)$.
\end{proof}

\end{proof}

We now give some definitions and claim regarding the Talagrand objective of random restrictions
of functions.

\begin{definition} \label{def:restrict} Let $S \subseteq [d]$ be a subset of coordinates.
The \emph{distribution of restrictions on $S$}, denoted $\cR_S$, is supported over functions and generated
as follows. We pick a uar setting of the coordinates in $\overline{S}$, and output the function under this restriction.
(Hence, $h \sim \cR_S$ has domain $[n]^S$.)
\end{definition}

The isoperimetric theorem of \Thm{semisorted-reduce-to-g} holds for any
ordering of the coordinates. In this section, we will need to randomize the ordering
of the sort operators.
We will represent an ordering
as a permutation $\pi$ over $[d]$. Abusing notation, for any subset $S \subseteq [d]$,
$\pi(S)$ is the induced ordered list of $S$.

\begin{definition} \label{def:delta} For any function $h: [n]^k \to \hyp{}$, define $\delta(h)$
to be $\EX_\pi[\Delta(h, \pi([k]) \circ h)]$.
\end{definition}

By \Clm{sortS}, sorting on all coordinates leads to a monotone function.
Thus, $\delta(h)$ is at least the distance of $h$ to monotonicity. 
We will perform our analyses in terms of $\delta(f)$, since it is more
amenable to a proof by induction over domain size.

The following claim is central to the final induction, and relates $\delta(f)$
to $\EX_S[\Delta(S \circ f, \overline{S} \circ f)]$. This is the (only) claim
where we need to permute the coordinates. All other claims and theorems
hold for an arbitrary ordering of the coordinates (when defining $S \circ f$).

\begin{claim} \label{clm:triangle} $\delta(f) \leq \EX_S \EX_{h \sim \cR(S)}[\delta(h)] + \EX_\pi \EX_S[\Delta(\pi(S) \circ f, \pi(\overline{S}) \circ f)]$
\end{claim}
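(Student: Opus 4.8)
The plan is to verify the inequality by (i) rewriting the first term on the right-hand side as a pure ``sort-on-$S$'' quantity via a restriction identity, and (ii) chaining triangle inequalities for Hamming distance, using \Cref{clm:sort-hamm} to absorb leftovers. First I would record a \emph{commutation fact}: for any ordered set $S$, any setting $\rho$ of the coordinates in $\overline{S}$, and any ordering of $S$, sorting along the dimensions of $S$ commutes with restricting the coordinates in $\overline{S}$; that is $\left(\pi(S)\circ f\right)\big|_\rho = \pi(S)\circ\left(f\big|_\rho\right)$. This holds because each operator $\sorti{\cdot}{i}$ with $i\in S$ acts on $i$-lines, and fixing the $\overline{S}$-coordinates merely selects a subfamily of those lines. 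Averaging $\Delta\!\left(h,\pi(S)\circ h\right)$ over $h\sim\cR(S)$ (equivalently over $\rho$), and using that a uniformly random global permutation $\pi$ induces a uniformly random ordering $\pi(S)$ of $S$, this fact yields
\[
\EX_S\,\EX_{h\sim\cR(S)}\!\left[\delta(h)\right] \;=\; \EX_\pi\,\EX_S\!\left[\Delta\!\left(f,\ \pi(S)\circ f\right)\right].
\]
So it suffices to show $\EX_\pi\!\left[\Delta\!\left(f,\pi([d])\circ f\right)\right]\ \le\ \EX_\pi\EX_S\!\left[\Delta\!\left(f,\pi(S)\circ f\right)\right]+\EX_\pi\EX_S\!\left[\Delta\!\left(\pi(S)\circ f,\pi(\overline{S})\circ f\right)\right]$.

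Next I would fix $\pi$ and $S$ and insert the intermediate functions $\pi(S)\circ f$ and $\pi(\overline{S})\circ f$ and apply the triangle inequality:
\[
\Delta\!\left(f,\pi([d])\circ f\right)\ \le\ \Delta\!\left(f,\pi(S)\circ f\right)\ +\ \Delta\!\left(\pi(S)\circ f,\ \pi(\overline{S})\circ f\right)\ +\ \Delta\!\left(\pi(\overline{S})\circ f,\ \pi([d])\circ f\right).
\]
The first two summands are exactly the quantities appearing in the claim. For the leftover third summand I would use \Cref{clm:sort-hamm} (sorting never increases Hamming distance) together with the symmetry $S\leftrightarrow\overline{S}$ present under $\EX_\pi\EX_S$: relabelling $S$ to $\overline{S}$ fixes both of the two quantities above, so it is enough to charge the aggregate of the third terms against these two already-present quantities, pulling the comparison back to $f$ and to $\pi(\overline{S})\circ f$ by peeling off sort operators one at a time and invoking \Cref{clm:sort-hamm} at each step.

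The main obstacle is precisely this last charging. Since $\pi([d])\circ f$ is a \emph{full} sort in the order $\pi$, which in general equals neither $\pi(\overline{S})\circ\pi(S)\circ f$ nor $\pi(S)\circ\pi(\overline{S})\circ f$ (sort operators do not commute, and the monotone function produced can depend on the order), the third term cannot be eliminated by a naive identity; the content of the argument is that after averaging over $(\pi,S)$ and repeatedly applying \Cref{clm:sort-hamm}, it is dominated by a combination of the first two terms whose total coefficient is $1$. Once the displayed inequality is established for every $(\pi,S)$ after this charging, taking $\EX_\pi\EX_S$ and substituting the identity from the first paragraph gives the claim. Finally, I would remark for the sequel that the second term $\EX_\pi\EX_S[\Delta(\pi(S)\circ f,\pi(\overline{S})\circ f)]$ is controlled by $T_{\Phi_\chi}(f)$ via \Cref{thm:semisorted-reduce-conv} applied to each ordering $\pi$, which is what makes this the useful inductive step.
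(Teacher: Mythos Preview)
Your first step is correct and matches the paper: the commutation identity gives $\EX_S\EX_{h\sim\cR(S)}[\delta(h)] = \EX_\pi\EX_S[\Delta(f,\pi(S)\circ f)]$. But your handling of the ``leftover third summand'' $\Delta(\pi(\overline{S})\circ f,\,\pi([d])\circ f)$ is a genuine gap. You say you will charge it to the first two terms ``by peeling off sort operators one at a time and invoking \Cref{clm:sort-hamm},'' but \Cref{clm:sort-hamm} only lets you remove the \emph{same} sort from \emph{both} arguments; it does not let you remove operators that appear on only one side. There is no evident way to bound $\EX_\pi\EX_S[\Delta(\pi(\overline{S})\circ f,\pi([d])\circ f)]$ by a combination of the two desired terms with total coefficient $1$, and your proposal does not supply one.

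The paper avoids this third term entirely by a different choice of intermediate point. Instead of inserting $\pi([d])\circ f$, it inserts $\pi(S)\circ\pi(\overline{S})\circ f$: first sort on $\overline{S}$ in $\pi$-order, then on $S$ in $\pi$-order. The triangle inequality then has only two pieces,
\[
\Delta\!\bigl(f,\ \pi(S)\circ\pi(\overline{S})\circ f\bigr)\ \le\ \Delta\!\bigl(f,\ \pi(S)\circ f\bigr)\ +\ \Delta\!\bigl(\pi(S)\circ f,\ \pi(S)\circ\pi(\overline{S})\circ f\bigr),
\]
and the second piece is handled cleanly: since $\pi(S)\circ\pi(S)\circ f=\pi(S)\circ f$ (sorting twice on the same dimensions is idempotent), \Cref{clm:sort-hamm} gives $\Delta(\pi(S)\circ f,\pi(S)\circ\pi(\overline{S})\circ f)\le\Delta(\pi(S)\circ f,\pi(\overline{S})\circ f)$. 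The reason the left-hand side still recovers $\delta(f)$ is the observation that when $\pi$ is uniform and $S$ is uniform and independent, the concatenated ordering $\pi(S)\circ\pi(\overline{S})$ of $[d]$ is itself a uniformly random permutation; hence $\EX_\pi\EX_S[\Delta(f,\pi(S)\circ\pi(\overline{S})\circ f)]=\delta(f)$. This is the missing idea in your attempt.
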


\begin{proof} Let us consider an arbitrary ordering of dimensions.
By triangle inequality,
\begin{eqnarray*}
\Delta(f, S \circ \overline{S} \circ f) \leq \Delta(f, S \circ f) + \Delta(S \circ f, S \circ \overline{S} \circ f)
\end{eqnarray*}
Observe that $S \circ S \circ f = S \circ f$, since sorting repeatedly on a dimension does not modify a function.
Hence, $\Delta(S \circ f, S \circ \overline{S} \circ f) = \Delta(S \circ S \circ f, S \circ \overline{S} \circ f)
\leq \Delta(S \circ f, \overline{S} \circ f)$. The latter inequality holds because sorting only reduces the Hamming distance
between functions (\Clm{sort-hamm}). Plugging this bound in and taking expectations over ordered subset $S$ of dimensions:
\begin{equation} \label{eq:restrict}
\EX_S[\Delta(f, S \circ \overline{S} \circ f)] \leq \EX_S[\Delta(f, S \circ f)] + \EX_S[\Delta(S \circ f, \overline{S} \circ f)]
\end{equation}
Observe that $S \circ f$ only changes the function in the dimensions in $S$, and can be thought
to act on the restrictions of $f$ (to $S$). Hence $\EX_S[\Delta(f, S \circ f)] = \EX_{h \sim \cR(S)}[\Delta(h, S \circ h)]$.
Roughly speaking, the quantity $\Delta(f, S \circ \overline{S} \circ f)$ is $\eps(f)$
and $\EX_{h \sim \cR(S)}[\Delta(h, S \circ h)]$ is $\EX_{h \sim \cR{S}} \eps(h)$. So we would
hope that \Eqn{restrict} implies $\eps(f) \leq \eps(h) + \EX_S[\Delta(S \circ f, \overline{S} \circ f)]$. 

Unfortunately, the quantities are only constant factor approximations of $\eps(f), \eps(h)$.
So by converting \Eqn{restrict} in terms of $\eps(f)$, we would potentially lose a constant factor
in \Eqn{restrict}.

To avoid this problem, we deal with $\delta(f)$ instead. By randomly permuting $S$ and taking expectations,
the quantities in \Eqn{restrict} can be replaced by $\delta(\cdot)$ terms.
Taking expectations over a uar $\pi$, \Eqn{restrict} implies
\begin{equation}
\EX_\pi\EX_S[\Delta(f, \pi(S) \circ \pi(\overline{S}) \circ f)] \leq \EX_\pi\EX_S[\Delta(f, \pi(S) \circ f)] + \EX_\pi\EX_S[\Delta(\pi(S) \circ f, \pi(\overline{S}) \circ f)]
\end{equation}

Note that the switching order in the LHS, $\pi(S) \circ \pi(\overline{S})$, is uniformly random.
Moreover,
$$ \EX_\pi \EX_S \EX_{h \sim \cR(S)}[\Delta(h, \pi(S) \circ h)] = \EX_S \EX_h \EX_\pi [\Delta(h, \pi(S) \circ h)] = \EX_S \EX_h[\delta(h)] $$
Combining all our bounds, we get that $\delta(f) \leq \EX_S \EX_{h \sim \cR(S)}[\delta(h)] + \EX_\pi \EX_S[\Delta(\pi(S) \circ f, \pi(\overline{S}) \circ f)]$.
\end{proof}

We prove a useful claim about the Talagrand objective of restrictions, made in~\cite{PRW22}.

\begin{claim} \label{clm:tal-restrict} Let $p \in (0,1)$, and $\cH(p)$ be the distribution
of subsets of $[d]$ generated by selecting each element with iid probability $p$. Then,
$\dtal{f} \geq (1/\sqrt{p}) \cdot \EX_{S \sim \cH(p)} \EX_{h \sim \cR_S} [\dtal{h}]$.
\end{claim}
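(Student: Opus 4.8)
The plan is to relate the Talagrand objective $\dtal{f} = T_{\Phi_\chi}(f) = \Exp_{\bx \in [n]^d}[\sqrt{\Phi_{f,\chi}(\bx)}]$ of $f$ on the full hypergrid $[n]^d$ to the average Talagrand objective of its random restrictions $h \sim \cR_S$, where $S \sim \cH(p)$ keeps each coordinate independently with probability $p$. The key observation is that a restriction $h$ lives on the domain $[n]^S$, and the thresholded influence $\Phi_{h,\chi|_S}(\bx)$ only counts violations in the surviving coordinates $S$. So $\Phi_{f,\chi}(\bx) = \sum_{i=1}^d \Phi_f(\bx;i)$ decomposes as a sum over coordinates, and when we restrict to $S$ we only keep the terms $i \in S$. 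Crucially, $\Phi_f(\bx;i)$ restricted to the line through $\bx$ depends only on the values of $f$ on that $i$-line, which all survive in a restriction that keeps coordinate $i$; so for $i \in S$, $\Phi_h(\bx;i) = \Phi_f(\bx;i)$ (where $\bx$ is the restricted point). Thus $\Phi_{h,\chi}(\bx) = \sum_{i \in S} \Phi_f(\bx;i) \le \Phi_{f,\chi}(\bx)$.

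First I would set up the probability space: choose $\bx \in [n]^d$ uniformly, then choose $S \sim \cH(p)$, then $h \sim \cR_S$ is the restriction of $f$ fixing coordinates outside $S$ to the values $\bx_{\overline S}$; the restricted point is $\bx_S \in [n]^S$. The pair $(\bx, S)$ determines both the restriction $h$ and the point in it. Then
\[
\EX_{S \sim \cH(p)} \EX_{h \sim \cR_S}[\dtal{h}] = \EX_{\bx} \EX_S \left[ \sqrt{\textstyle\sum_{i \in S} \Phi_f(\bx;i)} \right].
\]
So it suffices to prove the pointwise-in-$\bx$ inequality
\[
\EX_S\left[\sqrt{\textstyle\sum_{i \in S} \Phi_f(\bx;i)}\,\right] \le \sqrt{p} \cdot \sqrt{\textstyle\sum_{i=1}^d \Phi_f(\bx;i)} = \sqrt{p}\cdot\sqrt{\Phi_{f,\chi}(\bx)}.
\]
This is a clean scalar fact: writing $a_i = \Phi_f(\bx;i) \ge 0$ and letting $\mathbbm{1}_i$ be the indicator that $i \in S$, by concavity of the square root (Jensen) $\EX_S[\sqrt{\sum_i \mathbbm{1}_i a_i}] \le \sqrt{\EX_S[\sum_i \mathbbm{1}_i a_i]} = \sqrt{p \sum_i a_i}$. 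Averaging over $\bx$ and rearranging gives $\dtal{f} \ge (1/\sqrt p)\,\EX_{S\sim\cH(p)}\EX_{h\sim\cR_S}[\dtal h]$.

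The one point that needs care — and which I expect to be the main (minor) obstacle — is the bookkeeping around colorings: the claim as stated writes $\dtal{h}$ without specifying a coloring, so I would note that the inequality holds for every coloring $\chi$ of the augmented hypergrid $E([n]^d)$ simultaneously, with $h$ colored by the induced restriction $\chi$ of $\chi$ to the edges of the augmented hypergrid on $[n]^S$ (an edge of $[n]^S$ extends uniquely to an edge of $[n]^d$ once the fixed coordinates are specified). Under this induced coloring, the identity $\Phi_{h,\chi}(\bx_S;i) = \Phi_{f,\chi}(\bx;i)$ for $i \in S$ holds verbatim from \Cref{def:phi-f-chi}, since the condition "there is an $i$-aligned violation $(\bx,\by)$ with $\chi(\bx,\by) = f(\bx)$" is about the same $i$-line and the same edge colors. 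With that identification in place, the Jensen computation above goes through unchanged, completing the proof.
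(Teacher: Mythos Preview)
Your proposal is correct and follows essentially the same approach as the paper: both arguments identify $\EX_{S}\EX_{h\sim\cR_S}[\dtal{h}]$ with $\EX_\bx\EX_S\big[\sqrt{\sum_{i\in S}\Phi_{f,\chi}(\bx;i)}\big]$ via the observation that restriction preserves the per-coordinate thresholded influences for surviving coordinates, and then apply Jensen's inequality to the concave square root to pull out the factor $\sqrt{p}$. Your explicit treatment of the induced coloring on the restriction is exactly what the paper handles with the single remark that ``the colorings of $h$ are inherited from the coloring of $f$.''
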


\begin{proof} Fix a set $S$. For any subset $S$ of coordinates, let the define the influence in $S$ as $\Phiinfl{f,\chi}{\bx;S} \eqdef \sum_{i \in S} \Phiinfl{f,\chi}{\bx;i}$.
We are just summing the influences over the coordinates of $S$.

Consider the quantity $\EX_{h \sim \cR_S} [\dtal{h}] = \EX_{h \sim {\cR_S}} \EX_{\bz} [\sqrt{\Phiinfl{h,\chi}{\bz}}]$.
Note that $\bz$ denotes a uar setting of the coordinates in $S$. The colorings of $h$ are inherited from the coloring of $f$.
Each function $h$ is indexed
by a (uar) setting of $\overline{S}$. Hence, 

\begin{equation}
\EX_{h \sim {\cR_S}} \EX_{\bz} [\sqrt{\Phiinfl{h,\chi}{\bz}}] = \EX_{\bx} [\sqrt{\Phiinfl{f,\chi}{\bx; S}}]
\end{equation}

The point $\bx$ is uar in the entire domain $[n]^d$. 
Note that $\EX_{S \sim \cH(p)} [\Phiinfl{f,\chi}{\bx; S}]$
is precisely $p \cdot \Phiinfl{f,\chi}{\bx; S}$, since each coordinate is independently picked in $S$ with probability $p$. 
\begin{eqnarray*}
    \EX_{S \sim \cH(p)} \EX_{h \sim \cR_S} [\dtal{h}] & = & \EX_{S} \EX_\bx [\sqrt{\Phiinfl{f,\chi}{\bx; S}}] \\
    & = & \EX_\bx \EX_{S}[\sqrt{\Phiinfl{f,\chi}{\bx; S}}] \\
    & \leq & \EX_\bx \Big[\sqrt{\EX_{S}[\Phiinfl{f,\chi}{\bx; S}]}\Big]
    = \EX_\bx \Big[\sqrt{p \cdot \Phiinfl{f,\chi}{\bx; S}}\Big] = \sqrt{p} \cdot \dtal{f}
\end{eqnarray*}
The inequality above is a consequence of the concavity of the square root function and Jensen's inequality.
\end{proof}
Now we have all the ingredients to prove~\Cref{thm:dir-tal-semisorted} whice we restate below for convenience.
\begin{mdframed}[backgroundcolor=gray!20,topline=false,bottomline=false,leftline=false,rightline=false] 
\dirtalsemisorted*
\end{mdframed}
%
%

\begin{proof} The proof is by induction over the dimension $d$ of the domain. 
Formally, we will prove a lower bound of $(C'/10)\eps$, where $C'$ is the constat of \Thm{semisorted-reduce-conv}.

Let us first prove the base case, when $d \leq 10$. 
Note that $\Phi_{f,\chi}(\bx) = \sum_{i=1}^d \Phi_{f,\chi}(\bx;i)$,
where each term in the summation is 0-1 valued. Hence, 
by the $l_1$-$l_2$-inequality, $\sqrt{\Phi_{f,\chi}(\bx)} \geq \sum_{i=1}^d \Phi_{f,\chi}(\bx;i)/d = \Phi_{f,\chi}(\bx)/d$.

Thus, $T_{\Phi_\chi}(f) \geq \EX_\bx[\Phi_{f,\chi}(\bx)]/d$. Furthermore,
$\EX_\bx[\Phi_{f,\chi}(\bx)] = \sum_{i=1}^d \EX_{\bx}[\Phi_{f,\chi}(\bx;i)]$. We can 
break the expectation over $\bx$ into lines as follows.
$$\EX_\bx[\Phi_{f,\chi}(\bx)] = \sum_{i=1}^d \EX_{\ell \ \textrm{uar $i$-line}} \EX_c [\Phi_{{f|_\ell},\chi}(c)]$$
(The coordinate $c$ is uar in $[n]$.) Now, for a Boolean function $f|_\ell$ on a line, if the distance to monotonicity 
is $\eps$, then there are at least $\eps n$ violating pairs~\cite{EKK+00}, and thus for any coloring $\chi$, we have 
$\EX_c [\Phi_{{f|_\ell},\chi}(c)] \geq \eps(f|_\ell)$, 
and $\sum_{i=1}^d \EX_{\ell \ \textrm{uar $i$-line}} \eps(f|_\ell) = \Omega(\eps(f))$. 

Hence, $T_{\Phi_\chi}(f) = \Omega(\eps/d)$. For $d \leq 10$, the lemma holds, and so henceforth we assume $d\geq 10$.

Now for the induction step. We now break into cases.

\underline{Case 1, $\EX_\pi \EX_S[\Delta(\pi(S) \circ f, \pi(\overline{S}) \circ f)] \geq \delta(f)/10$:} By \Thm{semisorted-reduce-conv},
$\dtal{f} \geq c\cdot \EX_S[\Delta(S \circ f, \overline{S} \circ f)]$ (for any ordering of coordinates). So
$\dtal{f} \geq c\cdot \EX_\pi \EX_S[\Delta(\pi(S) \circ f, \pi(\overline{S}) \circ f)] \geq (c/10) \cdot \delta(f)$.

\medskip

\underline{Case 2, $\EX_\pi \EX_S[\Delta(S \circ f, \overline{S} \circ f)] < \delta(f)/10$:} By \Clm{triangle},
$\EX_S \EX_{h \sim \cR(S)}[\delta(h)] \geq  \delta(f) - \EX_\pi \EX_S[\Delta(S \circ f, \overline{S} \circ f)]$.
In this case, we can lower bound $\EX_S \EX_{h \sim \cR(S)}[\delta(h)] \geq (9/10)\delta(f)$.
Note that $S$ is drawn from the distribution $\cH(1/2)$.
When $S \neq [d]$, we can apply induction to $\dtal{h}$ for $h \sim \cR(S)$. Hence,
\begin{eqnarray}
    \EX_{S \sim \cH(1/2)} \EX_{h \sim \cR(S)} [\dtal{h}] & \geq & 2^{-d} \sum_{S \neq [d]} \EX_{h \sim \cR(S)}[\dtal{h}] \geq 2^{-d} \cdot (c/10) \cdot \sum_{S \neq [d]} \EX_{h \sim \cR(S)} [\delta(h)] \notag \\
    & = & 2^{-d} \cdot (c/10) \cdot \Big(\sum_{S \subseteq [d]} \EX_{h \sim \cR(S)} [\delta(h)] - \EX_{h \sim \cR([d])} [\delta(h)]\Big)\notag \\
    & = & (c/10) \Big(\EX_S \EX_{h \sim \cR(S)} [\delta(h)] - 2^{-d} \delta(f)\Big) \ \ \ \ \textrm{($h \sim \cR([d])$ is $f$)} \notag \\
    &\geq& (c/10)\cdot(9/10)\cdot \delta(f) - 2^{-d}\cdot(c/10)\cdot \delta(f) \ \ \ \textrm{(by case condition)}\notag \\
        & = & (9/10 - 2^{-d}) \cdot (c/10) \cdot \delta(f) \geq (4/5) \cdot (c/10) \cdot \delta(f) \label{eq:tal-restrict}
\end{eqnarray}
By \Clm{tal-restrict}, $\dtal{f} \geq \sqrt{2}\cdot \EX_{S \sim \cH(1/2)} \EX_{h \sim \cR(S)} [\dtal{h}]$.
Combining with the inequality of \Eqn{tal-restrict}, $\dtal{f} \geq (\sqrt{2}\cdot 4/5) \cdot (c/10) \cdot \delta(f) \geq (c/10) \cdot \delta(f)$.
\end{proof}


\section{Connecting Talagrand Objectives of $f$ and the Tracker Functions}\label{sec:mainworkhorse}

In this section and the next, we establish our main technical result~\Cref{thm:semisorted-reduce-to-g} relating the Talagrand objectives 
on the colorful thresholded influence of the hypergrid function $f:[n]^d \to \{0,1\}$ and the Talagrand objectives on the undirected influence of the 
tracker functions. We restate the theorem below for convenience.
\begin{mdframed}[backgroundcolor=gray!20,topline=false,bottomline=false,leftline=false,rightline=false] 
\semisorted*
\end{mdframed}
\noindent
To prove \Cref{thm:semisorted-reduce-to-g} we need to describe the coloring $\xi_\bx$ for each $\bx$ in $[n]^d$. We proceed doing so in $d$ stages. 
\begin{itemize}
    \item For every $i \in \{0,1,\ldots,d\}$ and for every $\bx \in [n]^d$, we 	define a {\em partial} edge coloring $\xi^{(i)}_\bx$ of the hypercube which assigns a $\{0,1\}$ value to every hypercube edge of the form $(T, T\oplus j)$ for all $j \leq i$, and for all $T\subseteq [i]$. The process will begin with the null coloring, $\xi^{(0)}_{\bx}$, and end with a complete coloring, $\xi_\bx := \xi^{(d)}_\bx$, for every $\bx \in [n]^d$. 
    \item For every $i \in \{0,1,\ldots,d\}$ and every $S \subseteq [i]$ we will also define a coloring $\chi_S^{(i)}$ of the edges of the augmented hypergrid. We start with $\chi^{(0)}_{\emptyset} := \chi$ where $\chi$ is the original coloring which, recall, is adversarially chosen.
\end{itemize}


For every $i \in \{0,1,\ldots,d\}$ and $S \subseteq [i]$ we will use the above colorings to define the $(i,S)$-\emph{hybrid Talagrand objective}
\begin{equation}\label{eq:rhs-quantity}
R_{i}(S) := \Exp_{\bx \in [n]^d} \sqrt{~\sum_{j=1}^{i} I^{=j}_{g_\bx, \xi^{(i)}_\bx} (S) ~~+~~ \sum_{j=i+1}^d \Phi_{S\circ f,\chi_S^{(i)}}(\bx; j) }\text{.} \tag{Colorful Hybrid}
\end{equation}
Recall that $S\circ f$ is the function obtained after sorting $f$ on the coordinates in $S$. Note that $R_i(S)$ is well-defined given the partial colorings $\xi_\bx^{(i)}$ for each $\bx \in [n]^d$ as defined above. Also
observe that since $\chi_\emptyset^{(0)} := \chi$, the arbitrary coloring specified in the theorem statement, we have that
$R_0(\emptyset)$ is precisely the LHS in the statement of \Cref{thm:semisorted-reduce-to-g}, that is, $R_0(\emptyset) = \Exp_{\bx \in [n]^d}~\left[\sqrt{\Phi_{f,\chi}(\bx)} \right]$. Additionally, since we use $\xi_\bx := \xi_\bx^{(d)}$, observe that $\Exp_{S\subseteq [d]}[R_d(S)]$ is precisely the RHS in the statement of \Cref{thm:semisorted-reduce-to-g}.

With the above setup in mind, we show that the following \Cref{lem:lhs2-and-lhs3} suffices to prove \Cref{thm:semisorted-reduce-to-g}.

\begin{mdframed}[backgroundcolor=blue!10,topline=false,bottomline=false,leftline=false,rightline=false] 
	\begin{lemma}[Potential Drop Lemma]\label{lem:lhs2-and-lhs3}
		Fix $i \in \{1,\ldots,d\}$, $\xi^{(i-1)}_\bx$ for all $\bx \in [n]^d$, and $\chi_S^{(i-1)}$ for every $S \subseteq [i-1]$, which all satisfy the specifications described in the previous paragraph.
		There exists a choice of $\xi^{(i)}_\bx$ for every $\bx\in [n]^d$ and $\chi_S^{(i)}$, $\chi_{S+i}^{(i)}$ for every $S \subseteq [i-1]$ all  satisfying the specifications described in the previous paragraph, such that for all $S \subseteq [i-1]$, we have (a) $R_{i-1}(S) \geq R_i(S)$ and (b) $R_{i-1}(S)\geq R_i(S+i)$.
	\end{lemma}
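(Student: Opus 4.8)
The plan is to keep $\chi^{(i)}_S := \chi^{(i-1)}_S$ for every $S\subseteq[i-1]$, and to build the coordinate-$i$ edges of each $\xi^{(i)}_\bx$ together with the new hypergrid colorings $\chi^{(i)}_{S+i}$ so that part (a) holds pointwise in $\bx$ and part (b) holds line-by-line via majorization. \emph{Part (a)} is the easy half: going from $R_{i-1}(S)$ to $R_i(S)$ for $S\subseteq[i-1]$, the only term under the square root that changes is the $j=i$ term, which goes from $\Phi_{S\circ f,\chi^{(i-1)}_S}(\bx;i)$ to $I^{=i}_{g_\bx,\xi^{(i)}_\bx}(S)$ (the $j\le i-1$ terms are unchanged since $\xi^{(i)}_\bx$ extends $\xi^{(i-1)}_\bx$ on $[i-1]$-edges, and the $j\ge i+1$ terms are unchanged since $\chi^{(i)}_S=\chi^{(i-1)}_S$). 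Now if $g_\bx(S)\ne g_\bx(S+i)$, i.e.\ $\bx$ changes value when $S\circ f$ is sorted along $i$, then $\bx$ lies on the ``wrong side'' of the sorted $i$-line through it and therefore forms an $i$-aligned violation of $S\circ f$ with some point of that line in which $\bx$ takes value $g_\bx(S)$. Hence for each $\bx$ we color the bottom--top edge $(S,S+i)$ of the $g_\bx$-hypercube by $g_\bx(S)$ when $\Phi_{S\circ f,\chi^{(i-1)}_S}(\bx;i)=1$ and by $1-g_\bx(S)$ when it equals $0$; this guarantees $I^{=i}_{g_\bx,\xi^{(i)}_\bx}(S)\le\Phi_{S\circ f,\chi^{(i-1)}_S}(\bx;i)$ for every $\bx$, so the vector under the square root only drops and $R_{i-1}(S)\ge R_i(S)$.

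\emph{Part (b)} is the substance and is argued one $i$-line at a time. Fix an $i$-line $\ell\subseteq[n]^d$, and let $\bU$ be the length-$n$ vector whose $\bx$-entry ($\bx\in\ell$) is $\sum_{j\le i-1}I^{=j}_{g_\bx,\xi^{(i-1)}_\bx}(S)+\sum_{j\ge i}\Phi_{S\circ f,\chi^{(i-1)}_S}(\bx;j)$, so that $\norm{\bU}_{1/2}$ is the contribution of $\ell$ to $R_{i-1}(S)$; define $\bV$ analogously for $R_i(S+i)$. Each of $\bU$ and $\bV$ is a sum of $d$ Boolean vectors, one per coordinate $j$, since every $I^{=j}$ and every $\Phi(\cdot;j)$ is $\{0,1\}$-valued. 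The goal is to choose the colorings that remain free --- the top-face edges of each $\xi^{(i)}_\bx$ and the hypergrid coloring $\chi^{(i)}_{S+i}$ --- so that, for each $j$, the $j$-th Boolean vector of $\bV$ is coordinatewise dominated by the decreasingly-sorted $j$-th Boolean vector of $\bU$. Summing these $d$ coordinatewise inequalities and invoking \Cref{lem:sum-of-vectors} together with the Schur-concavity fact of \Cref{sec:major} then yields $\norm{\bV}_{1/2}\le\norm{\bU}_{1/2}$, and summing over all $i$-lines $\ell$ gives $R_{i-1}(S)\ge R_i(S+i)$. The essential structural input is that $S\circ f$, hence $(S\circ f)|_\ell$, is semisorted; this holds because semisortedness is preserved by every sort operator (each $\sort_j$ commutes with restricting to a half of a line in another coordinate, and preserves monotonicity along every coordinate by the argument in the proof of \Cref{clm:sortS}).

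It then remains to choose the colorings coordinate by coordinate. For $j>i$, the step $S\circ f\mapsto(S+i)\circ f$ is a full sort of $\ell$; by \Cref{clm:sort-violations} the number of points of each line parallel to coordinate $j$ carrying a downward violation can only shrink, and semisortedness forces the points that lose $j$-violations to be precisely those with few violations, so the \emph{uncolored} $j$-violation vector becomes less balanced. To carry this over to colored influences we must take $\chi^{(i)}_{S+i}$ to be a recoloring of $\chi^{(i-1)}_S$ built exactly as in the proof of \Cref{lem:semisorting-decreases}: for each relevant parallel line the $\chi^{(i-1)}_S$-colors of the old violating edges are read off, sorted decreasingly, and re-assigned to the (fewer, contiguous) violating edges of $(S+i)\circ f$; one then decomposes $\Phi(\cdot;j)$ into the sum of the per-parallel-line Boolean vectors and applies \Cref{lem:sum-of-vectors} as there. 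For $j=i$ we reuse the bottom--top coloring fixed in part (a); because $g_\bx(S)\ne g_\bx(S+i)$ always places $\bx$ into an $i$-violation of $S\circ f$, the new $i$-term on $\ell$ can be charged against the old $i$-term (and, where forced, some of the old $j>i$ terms). For $j<i$, the relevant edges $(S+i,\,(S\oplus j)+i)$ of $\xi^{(i)}_\bx$ lie on the top face of the $g_\bx$-hypercube and are still unassigned; we set them by transporting $\xi^{(i-1)}_\bx$ through the same sort-and-rematch recipe, consistently across all $S$ sharing a given top-face edge.

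The main obstacle is this colored bookkeeping for $j\ge i$. Without colors, ``sorting a semisorted line makes the violation vector less balanced'' is clean and immediate; but $\chi^{(i-1)}_S$ is adversarial, so the colored-influence vector need not inherit any monotone structure, and a point with many $j$-violations may carry no colored $j$-influence. One must define $\chi^{(i)}_{S+i}$ and the top-face part of each $\xi^{(i)}_\bx$ so that simultaneously (i) they remain legal partial colorings, consistent across all $S$ that share an edge, (ii) the colored counts do not increase under the sort, and (iii) every resulting $j$-vector lies coordinatewise below the decreasingly-sorted old $j$-vector, and all of this for every $i$-line at once. Making these three requirements hold together is the technical heart of the argument and takes up essentially all of \Cref{sec:proofoflemma6}.
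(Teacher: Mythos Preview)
Your part (a) is cleaner than what the paper does and is correct: with $\chi^{(i)}_S=\chi^{(i-1)}_S$ and $\xi^{(i)}_\bx=\xi^{(i-1)}_\bx$ on all $(S,S\oplus j)$ edges for $j\le i-1$, only the $j=i$ term changes, and your per-point choice of $\xi^{(i)}_\bx(S,S+i)$ forces $I^{=i}_{g_\bx,\xi^{(i)}_\bx}(S)\le\Phi_{S\circ f,\chi}(\bx;i)$, so the inequality is pointwise. The paper in fact recolors both $\chi$ and the $\xi$'s in this half as well, but only because it proves the stronger region-by-region majorization statements (a)--(d) there; for the bare inequality your shortcut suffices.

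Part (b), however, has a real gap. Your plan is to show that for every $j$ the $j$-th Boolean summand of $\bV$ is coordinatewise below the \emph{decreasingly} sorted $j$-th summand of $\bU$, on the whole line, and then invoke \Cref{lem:sum-of-vectors}. This cannot be arranged. The recoloring $\chi^{(i)}_{S+i}$ lets you redistribute which endpoint of a violation ``pays'', but it cannot move the support of $\bv_j$: for $j>i$, $\Phi_{(S+i)\circ f,\chi''}(\bx;j)$ can only be $1$ at points $\bx$ that actually lie in a $j$-violation of $(S+i)\circ f$, and whether $\bx$ is a $1$-point or a $0$-point in that function is fixed by the sort. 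After sorting, the $0$-points sit at the left of $\ell$ and the $1$-points at the right; there is no reason the $0$-points carry any $j$-violation at all, so you cannot force the $1$'s of $\bv_j$ to the leftmost positions. Already for $j=i$ your own coloring fails the test: take the semisorted line $(0,1,1,0,0,1)$ with every $i$-violation colored $0$; then $\bu_i=(0,0,0,1,1,0)$ while your $\xi^{(i)}_\bx(S,S+i)$ makes $\bv_i=(0,1,1,0,0,0)$, and $\bv_i\not\le_{\mathrm{coor}}\sortdown{\bu_i}$. Your parenthetical ``and, where forced, some of the old $j>i$ terms'' is exactly the symptom: once you start charging across $j$'s, the clean sum-of-sorted-vectors argument collapses.

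The missing idea is to \emph{never compare the whole line at once}. The paper partitions $\ell$ by function value, both before and after the sort: the old $1$-region $I\cup B\cup O$ versus the new $1$-region $Q\cup C\cup O$ (same cardinality, since sorting preserves the number of $1$'s), and the old $0$-region $W\cup C$ versus the new $0$-region $W\cup Z$. On the $1$-side one proves majorization with the \emph{decreasing} sort, on the $0$-side with the \emph{increasing} sort; semisortedness is exactly what makes each per-$j$ vector monotone in the right direction on each region. The recolorings $\chi''$ and the top-face $\xi'$ are then built region-by-region (down-sort the old colors, truncate to the new smaller violation interval), and \Cref{lem:sum-of-vectors} is applied separately on the two regions. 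Correspondingly, the $(S,S+i)$ color is chosen \emph{line-globally}---all $1$ or all $0$, according to whether the colored $i$-influence is identically $1$ on the $A$-side or on the $C$-side (one of the two always holds)---rather than per point; this is what makes the $j=i$ term slot into both halves simultaneously. Your per-point rule for $\xi^{(i)}_\bx(S,S+i)$, while fine for (a), does not have this property.
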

\end{mdframed}

\begin{proof}[\bf Proof of \Cref{thm:semisorted-reduce-to-g}:]
	Consider the following binary tree with $d+1$ levels. Each level $i \in \{0,1,\ldots,d\}$ has $2^i$ nodes indexed by subsets $S\subseteq [i]$.
	Every such node is associated with a coloring $\chi_S^{(i)}$ of the augmented hypergrid edges. The level $i$ is also associated with a partial coloring $\xi^{(i)}_\bx$ for every $\bx\in [n]^d$.
	
	The $0$'th level contains a single node indexed by $\emptyset$. The associated augmented hypergrid coloring
	is $\chi_\emptyset^{(0)} := \chi$. The partial coloring $\xi^{(0)}_\bx$ is null for all $\bx\in [n]^d$.
	We associate the value $R_0(\emptyset) = T_{\Phi_{\chi}}(f)$ with the root. 
	
	For $1\leq i\leq d$, we describe the children of each node in level $i-1$. Each node in level $i-1$ is indexed by some $S\subseteq [i-1]$. We associate this node with the value $R_{i-1}(S)$. This node has two children at level $i$: one, the left child, indexed by $S$ and the other, the right child, indexed by $S+i$. The coloring of the hypergrid edges at the left child is defined as $\chi_S^{(i)}$ from the lemma, and that of the hypergrid edges at the right child is defined as $\chi_{S+i}^{(i)}$ from the lemma. The left and right children hold the quantites $R_{i}(S)$ and $R_i(S+i)$, respectively. At level $i$, the partial coloring $\xi_\bx^{(i-1)}$ is also extended to $\xi_\bx^{(i)}$ for every $\bx \in [n]^d$ as stated in the lemma. From the lemma, we have $R_{i-1}(S) \geq R_i(S)$ and $R_{i-1}(S) \geq R_{i}(S+i)$. This immediately implies the following:
	\[
	\text{For all } i \in \{1,\ldots,d\} \text{, we have } \Exp_{S \subseteq [i-1]}[R_{i-1}(S)] \geq \Exp_{S \subseteq [i]}[R_{i}(S)]
	\]
	and chaining these $d$ inequalities together yields $R_0(\emptyset) \geq \Exp_{S \subseteq [d]}[R_d(S)]$.
	
	Now consider the leaf nodes of this tree, which hold the values $R_d(S)$ for every $S \subseteq [d]$. 
	Observe that $R_d(S) = \Exp_{\bx \in [n]^d} \left[\sqrt{I_{g_\bx, \xi_\bx}(S)}\right]$ since $\xi_\bx := \xi_\bx^{(d)}$. Recalling that $R_0(\emptyset) = T_{\Phi_{\chi}}(f)$ yields
	\[
	T_{\Phi_{\chi}}(f) = R_0(\emptyset) \geq \Exp_{S\subseteq [d]} [R_d(S)] = \Exp_{S\subseteq [d]}\Exp_{\bx \in [n]^d} \left[\sqrt{I_{g_\bx, \xi_\bx}(S)}\right]
	\]
and this establishes the claim after exchanging the expectations. \end{proof}

\section{Proof of Potential Drop~\Cref{lem:lhs2-and-lhs3}}\label{sec:proofoflemma6}

Recall $i \in \{1,\ldots,d\}$ is fixed. For brevity's sake, we will fix a set $S \subseteq [i-1]$ and call $h := (S\circ f)$. Let's refer to $\chi^{(i-1)}_S$ as simply $\chi$ without confusing with the original $\chi$ in the theorem.
The two colorings $\chi^{(i)}_S$ and $\chi^{(i)}_{S+i}$ that we construct will be simply called $\chi'$ and $\chi''$, respectively.
Let's call the partial colorings $\xi^{(i-1)}_\bx$ as simply $\xi_\bx$.
We will call the coloring $\xi^{(i)}_\bx$ which we need to construct simply $\xi'_\bx$ in the latter. Recall that $\xi_{\bx}$ is defined on all edges $(T,T\oplus j)$ for $T \subseteq [i-1]$ and $j \leq i-1$ and in order to prove the lemma we will need to define $\xi_\bx'$ on all edges $(T \oplus j)$ for $T \subseteq [i]$ and $j \leq i$. 

Fix an $i$-line $\ell$. We prove the lemma line-by-line. To be precise, let us consider the following vectors. First,
\begin{equation}\label{eq:rhs-vector-def}
	\vv{L}_{\ell} := \left(~~ \underbrace{\sum_{j=1}^{i-1} I_{g_\bx,\xi_\bx}^{=j}(S)}_{\vv{L^{(1)}}_{\ell}} ~+~ \underbrace{\Phi_{h,\chi}(\bx;i)}_{_{\vv{L^{(2)}}_{\ell}}} ~+~\underbrace{\sum_{j=i+1}^d \Phi_{h,\chi}(\bx;j)}_{\vv{L^{(3)}}_{\ell}}~~~~:~\bx\in \ell\right)
\end{equation}
Observe that
\begin{equation}\label{eq:rhs}
	R_{i-1}(S) = \frac{1}{n^d} \sum_{i\text{-lines } \ell}\norm{\vv{L}_{\ell}}_{1/2} = \frac{1}{n^d} \sum_{i\text{-lines } \ell} \norm{\vv{L^{(1)}}_{\ell} + \vv{L^{(2)}}_{\ell} + \vv{L^{(3)}}_{\ell}}_{1/2}
\end{equation}
where, recall, we are (ab)using the notation $\norm{v}_{1/2} := \sum_i \sqrt{v_i}$.

\noindent
Define
\begin{equation}\label{eq:lhs2-vector-def}
	\vv{R}_{\ell} := \left(~~\underbrace{\sum_{j=1}^{i-1} I_{g_\bx,\color{red} \xi'_\bx}^{=j}(S)}_{\vv{R^{(1)}}_{\ell}} ~+~ \underbrace{ I_{g_\bx,{\color{red} \xi'_\bx}}^{=i}(S)}_{\vv{R^{(2)}}_{\ell}} ~+~
	\underbrace{\sum_{j=i+1}^d \Phi_{h,\color{red} \chi'}(\bx;j)}_{\vv{R^{(3)}}_{\ell}}~   ~~~:~\bx\in \ell\right)
\end{equation}
where we have denoted, in red, the recolorings that we need to define.
The ``first'' RHS term is
\begin{equation}\label{eq:rhs}
	R_{i}(S) := \frac{1}{n^d} \sum_{i\text{-lines } \ell} \norm{\vv{R}_{\ell}}_{1/2} = \frac{1}{n^d} \sum_{i\text{-lines } \ell} \norm{\vv{R^{(1)}}_{\ell} + \vv{R^{(2)}}_{\ell} + \vv{R^{(3)}}_{\ell}}_{1/2}
\end{equation}
Similarly, define
\begin{equation}\label{eq:lhs3-vector-def}
	\vv{M}_{\ell} := \left(~~\underbrace{\sum_{j=1}^{i-1} I^{=j}_{g_\bx,\color{red} \xi'_\bx}(S+i)}_{\vv{M^{(1)}}_{\ell}} ~+~ \underbrace{I^{=i}_{g_\bx,{\color{red} \xi'_\bx}}(S+i)}_{\vv{M^{(2)}}_{\ell}} ~+~
	\underbrace{\sum_{j=i+1}^d \Phi_{i\circ h,\color{red} \chi''}(\bx;j)}_{\vv{M^{(3)}}_{\ell}}~   ~~~:~\bx\in \ell\right)
\end{equation}
and notice that the ``second'' RHS term is
\begin{equation}\label{eq:rhs}
	R_{i}(S+i) := \frac{1}{n^d} \sum_{i\text{-lines } \ell}\norm{\vv{M}_{\ell}}_{1/2} = \frac{1}{n^d} \sum_{i\text{-lines } \ell} \norm{\vv{M^{(1)}}_{\ell} + \vv{M^{(2)}}_{\ell} + \vv{M^{(3)}}_{\ell}}_{1/2}
\end{equation}

Observe now that it suffices to prove that there exists colorings $\chi', \chi''$, and $\xi'_\bx$'s such that $\norm{\vv{L}_{\ell}}_{1/2} \geq \norm{\vv{R}_{\ell}}_{1/2}$ and $\norm{\vv{L}_{\ell}}_{1/2} \geq \norm{\vv{M}_{\ell}}_{1/2}$ for all $i$-lines $\ell$. Thus, we now fix an $i$-line $\ell$ and drop the subscript, $\ell$, from all the previously defined vectors for brevity. We define $\LHS := \norm{\vv{L}}_{1/2}$, $\RHS_1 := \norm{\vv{R}}_{1/2}$, $\RHS_2 := \norm{\vv{M}}_{1/2}$, and set out to prove $\LHS \geq \RHS_1$ and $\LHS \geq \RHS_2$.



\paragraph{A Picture of the Line.}
Since $h$ is semisorted, the picture of $h$ restricted to $\ell$ looks like this. The green zone is where the function is $1$. 
Without loss of generality we assume $\ell$ has more ones than zeros.
We use $A$ to denote the ones on the left and $C$ to denote the zeros on the right. We use $k := |C|$, and $B\subseteq A$ are the $k$ right most ones
in the left side.
\begin{figure}[h!]
	\includegraphics*[trim = 0 320 0 100, clip, scale = 0.5]{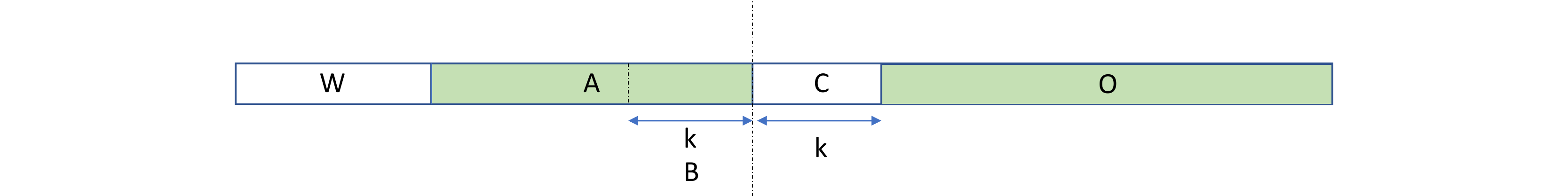}
\end{figure}
Throughout, we will use the notation $\vv{A}_X$ to denote the sub-vector of $\vv{A}$ defined on $\ell$ with coordinates restricted to $x\in X$; we will always use this notation when $X$ is a contiguous interval.  Indeed, these $X$'s will be always picked from $\{W, A, C, O, B, A\setminus B\}$ or unions of these, always making sure they form a contiguous interval. 

\paragraph{High Level Idea.} Before we venture into proving the inequalities, we would like to remind the reader again
of the proof strategy discussed in~\Cref{sec:main-ideas}.  We need to define the colorings $\chi'$, $\chi''$, and 
also $\xi_\bx^{(i)}$'s such that the objective after recoloring satisfy the inequality we desire to prove. This going to hinge upon showing that the vector 
obtained after operation either majorizes or is coordinate-wise dominated by a vector that majorizes the vector before the operation. 
In particular, these are the conditions (a)-(d) and (e)-(h) mentioned below in the grey boxes. 
To show these properties, we would be crucially using the property that the function $f$ is semi-sorted which leads to certain monotonicity properties that allows us to claim them. In particular, we would be using~\Cref{lem:sum-of-vectors} when establishing almost all the conditions mentioned above.
There is a certain sense of repetition in which these arguments are made, however, we have provided all the details for completeness.

\subsection{Proving $\LHS \geq \RHS_1$} 
During the proof of $\LHS \geq \RHS_1$, we will define the coloring $\chi'$ on all edges of the fully augmented hypergrid and $\xi'_\bx(S, S\oplus j)$ where $j\leq i$ for all $\bx \in [n]^d$.
We will not specify $\xi'_\bx(S + i, S + i \oplus j)$ since these won't be needed to prove this inequality; we will describe them when we prove $\LHS \geq \RHS_2$.

Before we describe the recolorings, it is useful to describe the plan of the proof. This will motivate why we recolor as we do.
We will actually consider 
\[
\LHS = \norm{\vv{L}_{W}}_{1/2} + \norm{\vv{L}_{A}}_{1/2} + \norm{\vv{L}_C}_{1/2} + \norm{\vv{L}_{O}}_{1/2} \]
and
\[
\RHS_1 = \norm{\vv{R}_{W}}_{1/2} + \norm{\vv{R}_{A}}_{1/2} + \norm{\vv{R}_C}_{1/2} + \norm{\vv{R}_{O}}_{1/2}
\] 
and argue domination term-by-term.

More precisely, we find recolorings $\chi', \xi'$ such that
\begin{mdframed}[backgroundcolor=gray!20,topline=false,bottomline=false,leftline=false,rightline=false]
	\begin{center}
		\begin{enumerate}
			\item[(a)] $\vv{R^{(q)}_A} \majorizes \sortdown{\vv{L^{(q)}_A}}$ and $\vv{R^{(q)}_O} \majorizes \sortdown{\vv{L^{(q)}_O}}$, for $q\in \{1,3\}$, 
				\item[(b)] $\exists \vv{L'^{(2)}_A}$ such that $\vv{L'^{(2)}_A} \majorizes \sortdown{\vv{L^{(2)}_A}}$ and $\vv{L'^{(2)}_A} \coordom \vv{R^{(2)}_A}$,
			\item[(c)] $\vv{R^{(q)}_W} \majorizes \sortup{\vv{L^{(q)}_W}}$ and $\vv{R^{(q)}_C} \majorizes \sortup{\vv{L^{(q)}_C}}$, for $q\in \{1,3\}$, 
		\item[(d)] $\exists \vv{L'^{(2)}_C}$ such that $\vv{L'^{(2)}_C} \majorizes \sortup{\vv{L^{(2)}_C}}$ and $\vv{L'^{(2)}_C} \coordom \vv{R^{(2)}_C}$.
		\end{enumerate}
	\end{center}
\end{mdframed}
Let us see why the above conditions suffice to prove the inequality. 
The second part of (b) implies that $\norm{\vv{R_A}}_{1/2} \leq \norm{\vv{R^{(1)}_A} + \vv{R^{(3)}_A} + \vv{L'^{(2)}_A}}_{1/2}$.
Part (a) and the first part of (b), along with~\Cref{lem:sum-of-vectors}, implies $\vv{R^{(1)}_A} + \vv{R^{(3)}_A} + \vv{L'^{(2)}_A} \majorizes \sortdown{\vv{L_A}}$.
And so, $\norm{L_A}_{1/2} \geq \norm{R_A}_{1/2}$. A similar argument using (c) and (d) implies $\norm{L_C}_{1/2} \geq \norm{R_C}_{1/2}$.

One last observation is needed to complete the proof. Note that $R^{(2)}_W$ is the {\bf zero} vector: the points $x\in W$ don't change value even when $\ell$ is sorted.
Also note that $L^{(2)}_W$ is the zero vector; the points $x\in W$ don't participate in a violation in direction $i$. 
And therefore, part (c) along with~\Cref{lem:sum-of-vectors} implies $\vv{R_W} \majorizes \sortup{\vv{L_W}}$ implying $\norm{L_W}_{1/2} \geq \norm{R_W}_{1/2}$.
Similarly, $R^{(2)}_O \equiv L^{(2)}_O \equiv \mathbf{0}$, and thus part (a) along with~\Cref{lem:sum-of-vectors} implies $\norm{L_O}_{1/2} \geq \norm{R_O}_{1/2}$.

\subsubsection{Proving (a) and (c) for $q=3$} \label{sec:ac3}

\paragraph{Defining the Coloring $\chi'$:}

We will now describe the coloring $\chi'$ on all edges of the form $(\bx,\bx + a\be_j)$ where $j \geq i+1$, $h(\bx) = 1$ and $h(\bx+a\be_j) = 0$. For all other edges $e$, we simply define $\chi'(e) = \chi(e)$ as these edges do not play a role in proving the inequality.

Given a pair of $i$-lines $\ell$ and $\ell' = \ell + a\be_j$ for $j\geq i+1$ and $a > 0$, we consider the set of violations from $\ell$ to $\ell'$ in $h$: 
\begin{align} \label{eq:violations}
    V := \{(\bx,\bx+a\be_j) \colon \bx \in \ell \text{, } h(\bx) = 1\text{, and } h(\bx+a\be_j) = 0\} \text{.}
\end{align}
Since $h$ is semi-sorted, it's clear that we can write $V = V_L \cup V_R$ as a union of two intervals, in the sense that $\{\bx \colon (\bx,\bx+a\be_j) \in V_L\}$ is an interval in the lower half of $\ell$ and $\{\bx \colon (\bx,\bx+a\be_j) \in V_R\}$ is an interval in the upper half of $\ell$. Similarly, the upper endpoints form two intervals in $\ell'$. We then obtain $\chi'$ by down-sorting $\chi$ on each of these intervals, moving left-to-right:
\[
(\chi'(e) \colon e \in V_L) = \sortdown{\chi(e) \colon e \in V_L} \text{ and } (\chi'(e) \colon e \in V_R) = \sortdown{\chi(e) \colon e \in V_R}\text{.}
\]
We provide the following illustration for clarity. The white and green intervals represent where $h = 0$ and $h=1$, respectively. The vertical arrows represent violated edges. Blue edges have color $0$ and red edges have color $1$. The left picture depicts the original coloring, $\chi$, and the right picture depicts the recoloring $\chi'$. 

\hspace*{-1cm}\includegraphics*[clip, scale = .5, trim = 0 300 0 40]{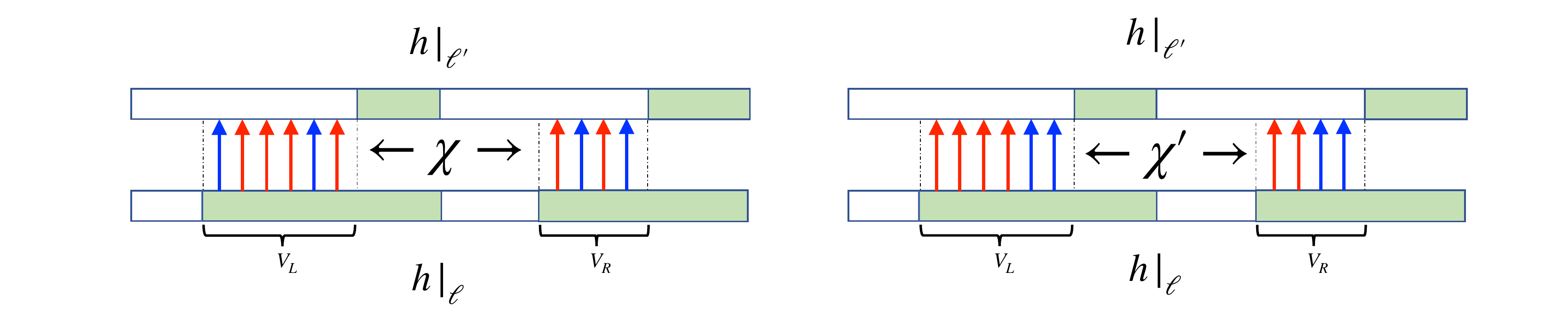}

We now return to our fixed $i$-line $\ell$ and set out to prove parts (a) and (c) for $q=3$, given this coloring $\chi'$. Let's recall our illustration of $\ell$ and our definition of the intervals $W,A,C,O$. 
\begin{figure}[h!]
	\includegraphics*[trim = 20 320 0 100, clip, scale = 0.5]{figs/semisorted-k-try2}
\end{figure}




\paragraph{Proving (a) for $q=3$:} 

Fix $j \geq i+1$ and a $i$-line $\ell' := \ell + a\be_j$. Let $A' := \{\bx\in A~:~ h(\bx+a\be_j) = 0\}$ and $O' := \{\bx\in O~:~ h(\bx+a\be_j) = 0\}$. 
Since $h$ is semi-sorted, it is not hard to see that $A'$ and $O'$ are prefixes of $A$ and $O$, respectively. 
\begin{claim}
	If $\bx_i < \bx'_i$ in $A$ such that $\bx' \in A'$, then $\bx\in A'$. The same is true for $O$ and $O'$.
\end{claim}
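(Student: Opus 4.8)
The plan is to unwind the definition of $A'$ (and $O'$) and use only the structural fact, recorded in the ``picture of the line'' above, that $h$ is semisorted along direction $i$. Recall $h = S\circ f$ with $S\subseteq[i-1]$, so the sorts applied so far never touch coordinate $i$ and hence preserve semisortedness along $i$; in particular, on every $i$-line the restriction of $h$ to the lower half $\{1,\dots,n/2\}$ of the $i$-coordinates is monotone nondecreasing, and likewise on the upper half. The one structural observation behind the claim is that in the decomposition $\ell = W\cup A\cup C\cup O$, the blocks $W$ and $A$ together are exactly the lower half of the $i$-coordinate range (the zeros $W$ followed by the ones $A$), and $C\cup O$ is exactly the upper half; in particular $A$ lies entirely in the lower half and $O$ entirely in the upper half, so in particular neither block straddles the midpoint of the line.

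Given this, I would fix $\bx,\bx'\in A$ with $\bx_i<\bx'_i$ and $\bx'\in A'$, and argue $\bx\in A'$. Consider the $i$-line $\ell':=\ell+a\be_j$. Since $\bx,\bx'\in A$, the points $\bx+a\be_j$ and $\bx'+a\be_j$ lie on $\ell'$ and have $i$-coordinates $\bx_i<\bx'_i$, both inside the lower half $\{1,\dots,n/2\}$. Because $h$ is semisorted along direction $i$, the restriction of $h|_{\ell'}$ to the lower half is monotone nondecreasing, so
\[
h(\bx+a\be_j)\;=\;h|_{\ell'}(\bx_i)\;\le\;h|_{\ell'}(\bx'_i)\;=\;h(\bx'+a\be_j)\;=\;0,
\]
where the last equality is precisely the hypothesis $\bx'\in A'$. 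Hence $h(\bx+a\be_j)=0$, i.e.\ $\bx\in A'$, which shows $A'$ is a prefix of $A$. The argument for $O$ versus $O'$ is word-for-word identical: $O$ sits in the upper half $\{n/2+1,\dots,n\}$, and $h|_{\ell'}$ restricted to the upper half is monotone nondecreasing by semisortedness, so $h(\bx+a\be_j)\le h(\bx'+a\be_j)=0$ whenever $\bx_i<\bx'_i$ in $O$ with $\bx'\in O'$.

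I expect no real obstacle here; the only point requiring a moment's care is the structural observation that $A$ and $O$ do not cross the midpoint of the line, which is immediate from the shape of a semisorted line (zeros-then-ones on each half) together with the normalization that $\ell$ has at least as many ones as zeros. Once that is in place, the claim is a one-line application of semisortedness of $h$ along coordinate $i$.
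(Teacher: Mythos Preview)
Your proof is correct and takes essentially the same approach as the paper. The paper's proof is the single line ``Since $h$ is semisorted, $h(\bx' + a\be_j) = 0$ implies $h(\bx+a\be_j) = 0$,'' and your argument is simply a careful unpacking of this: you make explicit that $A$ (respectively $O$) lies entirely in one half of the line, so that the monotonicity from semisortedness of $h$ on $\ell' = \ell + a\be_j$ applies.
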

\begin{proof}
	Since $h$ is semisorted, $h(\bx' + a\be_j) = 0$ implies $h(\bx+a\be_j) = 0$.
\end{proof}
Moreover, observe that our definition of $\chi'$ gives us
\[
(\chi'(\bx,\bx+a\be_j) \colon \bx \in A') = \sortdown{\chi(\bx,\bx+a\be_j) \colon \bx \in A'}
\]
and
\[
(\chi'(\bx,\bx+a\be_j) \colon \bx \in O') = \sortdown{\chi(\bx,\bx+a\be_j) \colon \bx \in O'} \text{.}
\]

Let's investigate what this leads to. These are key properties.
\begin{definition}
	Fix $j \geq i + 1$ and fix an $i$-line $\ell' := \ell + a\be_j$ for $a > 0$. Define the following two boolean vectors
	\[
	\bv^{R}_{j,a} := \left(\bone{\left(h(\bx + a\be_j) = 0 ~~\textbf{and}~~~\chi'(\bx, \bx+a\be_j) = 1\right) }~~:~\bx\in A \right)
	\]
	and 
	\[
	\bv^{L}_{j,a} := \left(\bone{\left(h(\bx + a\be_j) = 0 ~~\textbf{and}~~~\chi(\bx, \bx+a\be_j) = 1\right) }~~:~\bx\in A\right)
	\]
	
\end{definition}
\noindent
Observe, for $\bx\in A$, 
\begin{equation}\label{eq:L3R3-piece1}
	\Phi_{h,\chi'}(\bx;j) = \min\left(1, \sum_a \bv^{R}_{j,a}(\bx)\right) ~~\textrm{and}~~~	\Phi_{h,\chi}(\bx; j) = \min\left(1, \sum_a \bv^{L}_{j,a}(\bx) \right)
\end{equation}

\begin{claim}
	Fix a $j \geq i + 1$ and $a > 0$. For any two $\bx_i < \bx'_i$ in $A$, we have $\bv^{R}_{j,a}(\bx) \geq \bv^{R}_{j,a}(\bx')$.
	That is, the vector $\bv^{R}_{j,a}$ is sorted decreasing.
\end{claim}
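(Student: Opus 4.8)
\textbf{Proof plan for the claim that $\bv^{R}_{j,a}$ is sorted decreasing.}

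The plan is to unwind the three definitions that feed into $\bv^{R}_{j,a}$ and show that each one is compatible with the decreasing-sortedness we want. Recall the vector is indexed by $\bx \in A$, the block of $1$-valued points in the lower half of the (semisorted) line $\ell$ for the function $h = S\circ f$, and its entry at $\bx$ is the indicator that simultaneously (i) $h(\bx + a\be_j)=0$ and (ii) $\chi'(\bx,\bx+a\be_j)=1$. I will fix two points $\bx_i < \bx'_i$ in $A$ with $\bv^{R}_{j,a}(\bx') = 1$ and argue $\bv^{R}_{j,a}(\bx)=1$.

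First I would handle condition (i). Since $\bx'\in A'$ means exactly $h(\bx'+a\be_j)=0$, and the preceding claim (established just above: ``if $\bx_i<\bx'_i$ in $A$ and $\bx'\in A'$ then $\bx\in A'$,'' which itself follows from semisortedness of $h$ applied along dimension $i$ to the pair of $j$-aligned lines $\ell,\ell'$) gives $\bx\in A'$, i.e. $h(\bx+a\be_j)=0$. So condition (i) propagates downward in $A$ exactly as needed. Second, condition (ii): here I invoke the definition of $\chi'$ on the interval $A'$, namely $(\chi'(\bx,\bx+a\be_j)\colon \bx\in A') = \sortdown{\chi(\bx,\bx+a\be_j)\colon \bx\in A'}$. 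Both $\bx$ and $\bx'$ lie in $A'$ (we just showed $\bx\in A'$), and within the contiguous interval $A'$ the $\chi'$-values are sorted in decreasing order left to right; since $\bx_i<\bx'_i$, the point $\bx$ appears weakly to the left of $\bx'$, so $\chi'(\bx,\bx+a\be_j) \geq \chi'(\bx',\bx+a\be_j) = 1$, hence equals $1$. Combining, $\bv^{R}_{j,a}(\bx)=1$. Since $\bx_i<\bx'_i$ were arbitrary in $A$, the vector is sorted decreasing.

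I do not expect a genuine obstacle here — this is a ``parse the definitions'' claim. The only point requiring care is making sure the two facts being used are the ones already proved: (a) that $A'$ is a prefix of $A$ (so that ``$\bx'\in A'$'' forces ``$\bx\in A'$'' for $\bx$ to the left), which is the immediately preceding claim and uses semisortedness of $h$, and (b) that $\chi'$ was defined by down-sorting $\chi$ on the interval $A'$ (and more generally on $V_L$, of which the edges out of $A'$ form the relevant sub-interval since $A$ sits in the lower half of $\ell$). Both are in hand from the construction of $\chi'$ given just before the claim, so the proof is a two-sentence verification. I would write it as: fix $\bx_i<\bx'_i$ in $A$ with $\bv^{R}_{j,a}(\bx')=1$; the first conjunct together with the prefix property of $A'$ gives $\bx\in A'$; the second conjunct together with the decreasing-sortedness of $\chi'$ on $A'$ gives $\chi'(\bx,\bx+a\be_j)=1$; hence $\bv^{R}_{j,a}(\bx)=1$, proving the vector is sorted decreasing.
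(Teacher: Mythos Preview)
Your proposal is correct and matches the paper's proof essentially line for line: first use semisortedness of $h$ (equivalently, the prefix property of $A'$) to propagate condition (i), then use the down-sorting definition of $\chi'$ on $A'$ to propagate condition (ii). The only blemish is a typo where you wrote $\chi'(\bx',\bx+a\be_j)$ but meant $\chi'(\bx',\bx'+a\be_j)$.
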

\begin{proof}	
	Since $h$ is semisorted $h(\bx' + a\be_j) = 0$ implies $h(\bx + a\be_j) = 0$.
	Furthermore, since both these are violations, by design $\chi'(\bx', \bx'+a\be_j) = 1$ implies $\chi'(\bx, \bx+a\be_j) = 1$. 
\end{proof}
\begin{claim}\label{clm:L2-permutation}
	Fix a $j \geq i + 1$ and $a > 0$. The vectors $\bv^{R}_{j,a}$ and $\bv^{L}_{j,a}$ are permutations of one another.
\end{claim}
\begin{proof}
	This is precisely how $\chi'$ is defined: it only permutes the colorings on the violations incident on $A$.
\end{proof}
\noindent
In conclusion, using the observation~\eqref{eq:L3R3-piece1}, we conclude that we can write 
\[
\vv{L^{(3)}_A} = \left(\sum_{j=i+1}^d 	\Phi_{h,\chi}(\bx;j)~~:~~\bx \in A\right)
\]
as a weighted sum of Boolean vectors, and the above two claims imply that the vector
\[
\vv{R^{(3)}_A} = \left(\sum_{j=i+1}^d 	\Phi_{h,\chi'}(\bx;j)~~:~~\bx\in A\right)
\]
is the same weighted sum of the {\em sorted decreasing} orders of those Boolean vectors.
Therefore, we can conclude using~\Cref{lem:sum-of-vectors}, 
\begin{mdframed}[backgroundcolor=gray!20,topline=false,bottomline=false,leftline=false,rightline=false]
	\begin{equation}\label{eq:L3R3A}
		\vv{R^{(3)}_A} \majorizes \sortdown{\vv{L^{(3)}_A}}
	\end{equation}
\end{mdframed}
An absolutely analogous argument with $O$'s replacing $A$'s gives us
\begin{mdframed}[backgroundcolor=gray!20,topline=false,bottomline=false,leftline=false,rightline=false]
	\begin{equation}\label{eq:L3R3O}
		\vv{R^{(3)}_O} \majorizes \sortdown{\vv{L^{(3)}_O	}}
	\end{equation}
\end{mdframed}

\paragraph{Proving (c) for $q=3$:}

The picture is similar, but reversed, when we consider the points in $W\cup C$, where $h(\bx) = 0$. Recall the definition of $W$ and $C$ as in the illustration. Fix $j \geq i+1$ and a $i$-line $\ell'' := \ell - a\be_j$. Let $W' := \{\bx\in W~:~ h(\bx-a\be_j) = 1\}$ and $C' := \{\bx\in C~:~ h(\bx-a\be_j) = 1\}$. 
It is not hard to see that $W'$ and $C'$ are suffixes of $W$ and $C$, respectively. 
\begin{claim}
	If $\bx_i < \bx'_i$ in $W$ such that $\bx \in W'$, then $\bx'\in W'$. The same is true for $C$ and $C'$.
\end{claim}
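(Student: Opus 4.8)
The plan is to prove this by mirroring, in the opposite orientation, the argument used for the preceding prefix claim concerning $A$ and $A'$, relying only on the semisortedness of $h$. Recall the setup: $W$ (resp. $C$) is the contiguous block of zero-valued points of $h$ in the lower (resp. upper) half of the fixed $i$-line $\ell$, and for the fixed $i$-line $\ell'' := \ell - a\be_j$ with $j \geq i+1$ and $a > 0$ we have defined $W' := \{\bx \in W : h(\bx - a\be_j) = 1\}$ and $C' := \{\bx \in C : h(\bx - a\be_j) = 1\}$.

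First I would observe that translating by $-a\be_j$ changes only the $j$-coordinate, leaving the $i$-coordinate untouched, so the set of $i$-coordinates occupied by $\{\bx - a\be_j : \bx \in W\}$ equals the set occupied by $W$; since $W$ lies in the lower half of $\ell$, these shifted points all lie in the lower half of $\ell''$. Symmetrically, $\{\bx - a\be_j : \bx \in C\}$ lies in the upper half of $\ell''$. Next I would invoke semisortedness of $h$: its restriction to the lower half of $\ell''$ is increasing along coordinate $i$, so for $\bx_i < \bx'_i$ with $\bx, \bx' \in W$ we have $h(\bx - a\be_j) \leq h(\bx' - a\be_j)$. Hence $\bx \in W'$, i.e. $h(\bx - a\be_j) = 1$, forces $h(\bx' - a\be_j) = 1$, and since $\bx' \in W$ by hypothesis this yields $\bx' \in W'$. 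The argument for $C$ and $C'$ is identical, now using that $h$ restricted to the upper half of $\ell''$ is increasing along coordinate $i$.

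I do not anticipate a genuine obstacle; the only thing requiring care is orientation bookkeeping. Because here we look at the line $\ell - a\be_j$ lying below $\ell$ and track the value $1$ — rather than the value $0$ on the line $\ell + a\be_j$ above $\ell$ as in the $A/A'$ case — the monotone value propagates toward larger $i$-coordinates rather than smaller ones, which is precisely why $W'$ and $C'$ come out as suffixes of $W$ and $C$ rather than prefixes, as asserted.
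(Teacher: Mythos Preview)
Your proposal is correct and follows essentially the same approach as the paper: both rely on the semisortedness of $h$ along the line $\ell''$ to conclude that $h(\bx - a\be_j) = 1$ forces $h(\bx' - a\be_j) = 1$ when $\bx_i < \bx'_i$. You have simply spelled out more of the bookkeeping (that the translation preserves the $i$-coordinate and keeps the points in the appropriate half of $\ell''$) than the paper's one-line justification does.
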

\begin{proof}
	Since $h$ is semisorted, $h(\bx - a\be_j) = 1$ implies $h(\bx'-a\be_j) = 1$.
\end{proof}
Again, observe that our definition of $\chi'$ gives us
\[
(\chi'(\bx-a\be_j,\bx) \colon \bx \in W') = \sortdown{\chi(\bx-a\be_j,\bx) \colon \bx \in W'}
\]
and
\[
(\chi'(\bx-a\be_j,\bx) \colon \bx \in C') = \sortdown{\chi(\bx-a\be_j,\bx) \colon \bx \in C'}
\]

\begin{definition}
	Fix $j \geq i + 1$ and fix an $i$-line $\ell'' := \ell - a\be_j$ for $a > 0$. Define the following two boolean vectors
	\[
	\bv^{R}_{j,a} := \left(\bone{\left(h(\bx - a\be_j) = 1 ~~\textbf{and}~~~\chi'(\bx-a\be_j, \bx) = 0\right)}~~:~\bx\in C\right)
	\]
	and 
	\[
	\bv^{L}_{j,a} := \left(\bone{\left(h(\bx - a\be_j) = 1 ~~\textbf{and}~~~\chi(\bx-a\be_j, \bx) = 0\right) }~~:~\bx\in C\right)
	\]
\end{definition}
Observe, for $\bx\in C$, 
\begin{equation}
	\Phi_{h,\chi'}(\bx;j) = \min\left(1, \sum_a \bv^{R}_{j,a}(\bx)\right) ~~\textrm{and}~~~	\Phi_{h,\chi}(\bx;j) = \min\left(1, \sum_a \bv^{L}_{j,a}(\bx) \right)
\end{equation}

\begin{claim}
	Fix a $j \geq i + 1$ and $a > 0$. For any two $\bx_i > \bx'_i$ in $C$, we have $\bv^{R}_{j,a}(\bx) \geq \bv^{R}_{j,a}(\bx')$.
	That is, the vector $\bv^{R}_{j,a}$ is sorted increasing when considered left to right.
\end{claim}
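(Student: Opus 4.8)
The plan is to mirror, in reversed orientation, the argument that was just used for the analogous statement about $A$ and the line $\ell' = \ell + a\be_j$, now working with the line $\ell'' = \ell - a\be_j$ that lies \emph{below} $\ell$ in coordinate $j$, and with the zero-valued block $C$, which sits in the upper half of $\ell$. First I would unpack what $\bv^{R}_{j,a}(\bx) = 1$ means for $\bx \in C$: it asserts both that $h(\bx - a\be_j) = 1$ — so $\bx$ lies in the suffix $C' = \{\bx \in C : h(\bx - a\be_j) = 1\}$ and $(\bx - a\be_j,\bx)$ is a violation — and that $\chi'(\bx - a\be_j,\bx) = 0$. So it suffices to fix $\bx_i > \bx'_i$ in $C$ with $\bv^{R}_{j,a}(\bx') = 1$ and verify the two implications (i) $\bx \in C'$ and (ii) $\chi'(\bx - a\be_j,\bx) = 0$.

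For (i): the points $\bx - a\be_j$ and $\bx' - a\be_j$ carry the $i$-coordinates of $\bx$ and $\bx'$, hence both lie in the upper half of $\ell''$, on which $h$ is sorted increasing because $h$ is semisorted; therefore $h(\bx' - a\be_j) = 1$ and $\bx'_i < \bx_i$ force $h(\bx - a\be_j) = 1$, and since $h(\bx) = 0$ this makes $(\bx - a\be_j,\bx)$ a violation, i.e.\ $\bx \in C'$. For (ii): by the definition of $\chi'$ recorded in this subsection, $(\chi'(\by - a\be_j,\by) : \by \in C') = \sortdown{\chi(\by - a\be_j,\by) : \by \in C'}$, so reading left to right the $1$-colored edges come first and the $0$-colored ones occupy a suffix of $C'$; since $\chi'(\bx' - a\be_j,\bx') = 0$, both $\bx',\bx$ lie in $C'$, and $\bx'_i < \bx_i$, the point $\bx$ lies in that suffix too, giving $\chi'(\bx - a\be_j,\bx) = 0$. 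Combining (i) and (ii) yields $\bv^{R}_{j,a}(\bx) = 1 \geq \bv^{R}_{j,a}(\bx')$, which is exactly the claim, and since $\bx,\bx'$ were arbitrary this says $\bv^{R}_{j,a}$ is sorted increasing left to right.

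There is no real obstacle here; the only thing to be careful about is the orientation bookkeeping — $\ell''$ lies below $\ell$, the violation has its $1$-endpoint on $\ell''$ and its $0$-endpoint on $\ell$, and on $C$ we track the color matching $h(\bx) = 0$ rather than $h(\bx) = 1$. Once these are pinned down correctly, the semisortedness of $h$ and the explicit decreasing-sort definition of $\chi'$ on $C'$ do all the work, just as in the $A$-case proof above. I would keep the write-up to the two short implications (i) and (ii) and cite the earlier $A$-case argument for the parallel structure.
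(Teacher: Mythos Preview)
Your proposal is correct and follows essentially the same approach as the paper's proof: both arguments use the semisortedness of $h$ on the upper half of $\ell''$ to get implication (i), and the down-sorted definition of $\chi'$ on $C'$ to get implication (ii). Your write-up is considerably more detailed about the orientation bookkeeping (which endpoint carries the $0$-value, why the relevant points lie in the upper half), whereas the paper's proof is a terse two-liner that leaves these to the reader --- but the logical skeleton is identical.
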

\begin{proof}	
	Since $h$ is semisorted $h(\bx' - a\be_j) = 1$ implies $h(\bx - a\be_j) = 1$.
	Furthermore, since both these are violations, by design $\chi'(\bx', \bx'+a\be_j) = 0$ implies $\chi'(\bx, \bx+a\be_j) = 0$. 
\end{proof}
\begin{claim}\label{clm:L2-permutation}
	Fix a $j \geq i + 1$ and $a > 0$. The vectors $\bv^{R}_{j,a}$ and $\bv^{L}_{j,a}$ are permutations of one another.
\end{claim}
A similar argument to the one given above now implies 
$\vv{L^{(3)}_C}$ 
is a sum of Boolean vectors, and 
$\vv{R^{(3)}_C}$ 
is the sum of the {\em sorted increasing} orders of those Boolean vectors.
Using~\Cref{lem:sum-of-vectors}, we can conclude
%
\begin{mdframed}[backgroundcolor=gray!20,topline=false,bottomline=false,leftline=false,rightline=false]
	\begin{equation}\label{eq:L3R3C}
		\vv{R^{(3)}_C} \majorizes \sortup{\vv{L^{(3)}_C}}
	\end{equation}
\end{mdframed}
And an absolutely analogous argument gives
\begin{mdframed}[backgroundcolor=gray!20,topline=false,bottomline=false,leftline=false,rightline=false]
	\begin{equation}\label{eq:L3R3W}
		\vv{R^{(3)}_W} \majorizes \sortup{\vv{L^{(3)}_W}}
	\end{equation}
\end{mdframed}
This finishes the proofs of $q=3$ for (a) and (c). \medskip

\subsubsection{Proving (a) and (c) for $q=1$}

\paragraph{Defining $\xi'_{\bx}(S,S\oplus j)$ for $S \subseteq [i-1]$ and $j \leq i-1$:}

We now define the partial coloring $\xi'_{\bx} := \xi^{(i)}_{\bx}$ on all edges $(S,S\oplus j)$ where $S \subseteq [i-1]$ and $j\leq i-1$ for all $\bx \in [n]^d$. These are exactly the relevant edges for the proof of parts (a) and (c) for $q=1$. Note that the partial coloring $\xi_{\bx} := \xi_{\bx}^{(i-1)}$ is defined over precisely these edges for each $\bx \in [n]^d$. The color of $\xi'_{\bx}$ on the edges $(S,S+i)$ for $S \subseteq [i-1]$ will be defined when we prove parts (b) and (d). The color of $\xi'_{\bx}$ on the edges $(S+i,S+i \oplus j)$ for $S \subseteq [i-1]$ and $j \leq i-1$ will be defined when we prove $\LHS \geq \RHS_2$.

Fix $j \leq i-1$, $S \subseteq [i-1]$, and a $i$-line $\ell$. We consider the set of $\bx \in \ell$ such that $(S,S\oplus j)$ is influential in $g_{\bx}$:
\begin{align} \label{eq:V_gx}
    V := \left\{\bx \in \ell \colon g_{\bx}(S) = 1 \text{ and } g_{\bx}(S \oplus j) = 0 \right\} \text{.}
\end{align}

Note that since $f$ is semi-sorted, we have that $(S \circ f)$ and $(S \oplus j \circ f)$ are both semi-sorted. 
Thus, we can write $V = V_L \cup V_R$ where $V_L$ and $V_R$ are intervals contained in the left and right half of $\ell$, respectively. We again obtain $\xi'_{\bx}$ by down-sorting the original coloring on these intervals:
\[
    (\xi'_{\bx}(S,S\oplus j) \colon \bx \in V_L) = \sortdown{\xi_{\bx}(S,S\oplus j) \colon \bx \in V_L}    
\]
and similarly
\[
    (\xi'_{\bx}(S,S\oplus j) \colon \bx \in V_R) = \sortdown{\xi_{\bx}(S,S\oplus j) \colon \bx \in V_R} \text{.}
\]
For all $\bx \in \ell \setminus V$, we define $\xi'_{\bx}(S,S\oplus j) := \xi_{\bx}(S,S\oplus j)$. This completely describes $\xi'_{\bx}(S,S\oplus j)$ for every $\bx \in [n]^d$. 

We provide the following illustration for clarity. Note that the picture is quite similar to the one provided in \Cref{sec:ac3}, when we defined $\chi'$. The key difference is that the bottom and top segments represent the same line $\ell$, but with different functions $S \circ f$ and $(S \oplus j) \circ f$, respectively. The vertical lines are no longer arrows to emphasize that they represent \emph{undirected edges in the hypercube} as opposed to directed edges in the augmented hypergrid.

\hspace*{-1cm}\includegraphics*[clip, scale = .5, trim = 0 300 0 20]{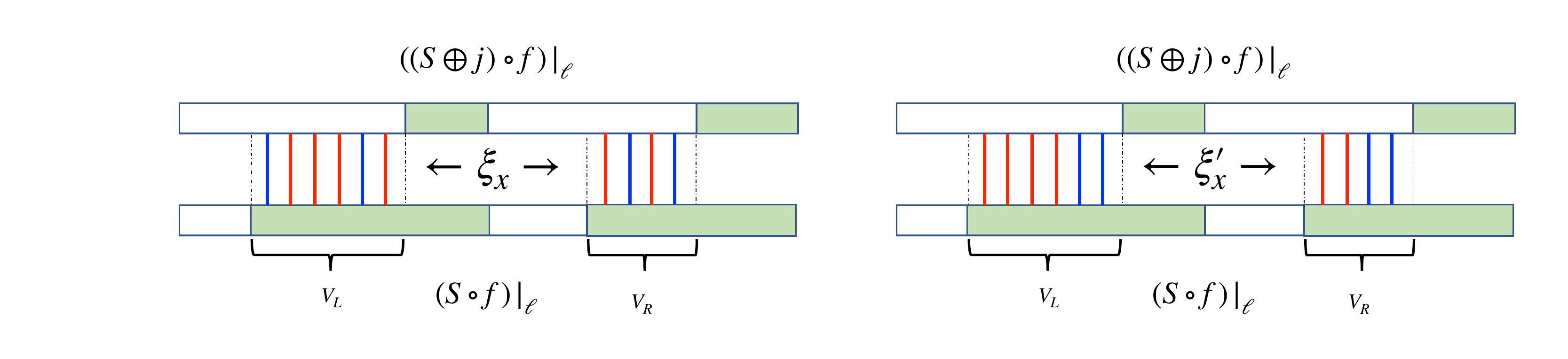}

We now return to our fixed $i$-line $\ell$ and set out to prove parts (a) and (c) for $q=1$, given the colorings $\xi'_{\bx}$. Let's recall our illustration of $\ell$ and our definition of the intervals $W,A,C,O$. Recall that $g_{\bx} = h(\bx)$ and so the definition of these intervals is the same.
\begin{figure}[h!]
	\includegraphics*[trim = 0 320 0 100, clip, scale = 0.5]{figs/semisorted-k-try2}
\end{figure}
\paragraph{Proof of Part (a) for $q=1$:}


Fix $j \leq i-1$ and let $A' = \{\bx \in A \colon g_{\bx}(S\oplus j) = 0\}$ and $O' = \{\bx \in O \colon g_{\bx}(S\oplus j) = 0\}$, which are prefixes of $A$ and $O$, respectively. From our definition of $\xi'_{\bx}(S,S\oplus j)$ above, we have
\[
(\xi'_{\bx}(S,S\oplus j) \colon \bx \in A') = \sortdown{\xi_{\bx}(S,S\oplus j) \colon \bx \in A'}
\]
and similarly
\[
(\xi'_{\bx}(S,S\oplus j) \colon \bx \in O') = \sortdown{\xi_{\bx}(S,S\oplus j) \colon \bx \in O'} \text{.}
\]


\begin{claim} \label{clm:A_sorted}
	$\left(I_{g_\bx,\xi'_\bx}^{=j}(S)~~:~~ \bx\in A\right)$ is a sorted decreasing vector, and is a permutation of 	$\left(I_{g_\bx,\xi_\bx}^{=j}(S)~~:~~ \bx\in A\right)$.
\end{claim}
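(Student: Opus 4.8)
The plan is to prove the claim by a direct chase through the definitions of the colored influence $I^{=j}_{g_\bx,\cdot}(S)$ and of the recoloring $\xi'_\bx$ on the edges $(S,S\oplus j)$. The one fact I would set up first is that for every $\bx\in A$ we have $g_\bx(S)=(S\circ f)(\bx)=h(\bx)=1$. Consequently the hypercube edge $(S,S\oplus j)$ is influential in $g_\bx$ \emph{exactly} when $g_\bx(S\oplus j)=0$, i.e.\ exactly when $\bx\in A'$; and when it is influential, $I^{=j}_{g_\bx,\xi'_\bx}(S)=1$ iff $\xi'_\bx(S,S\oplus j)=g_\bx(S)=1$. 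Thus the vector $\big(I^{=j}_{g_\bx,\xi'_\bx}(S):\bx\in A\big)$ is $0$ on $A\setminus A'$ and equals the indicator of $\{\xi'_\bx(S,S\oplus j)=1\}$ on $A'$; the identical statement holds with $\xi_\bx$ in place of $\xi'_\bx$. I would also note here, to match index sets, that within the left half of $\ell$ the condition $g_\bx(S)=h(\bx)=1$ characterizes $A$, so the set $V_L$ on which the recoloring down-sorts $\xi_\bx$ is precisely $A'$.

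For the ``sorted decreasing'' half, I would recall (from the paragraph preceding the claim) that $A'$ is a prefix of $A$, and that on $A'=V_L$ the recoloring is $(\xi'_\bx(S,S\oplus j):\bx\in A')=\sortdown{\xi_\bx(S,S\oplus j):\bx\in A'}$, which is a non-increasing Boolean sequence (all $1$'s then all $0$'s). Hence on the prefix $A'$ the vector $\big(I^{=j}_{g_\bx,\xi'_\bx}(S):\bx\in A'\big)$ is $1$ on an initial segment and $0$ afterwards, while it is identically $0$ on the suffix $A\setminus A'$; so $\big(I^{=j}_{g_\bx,\xi'_\bx}(S):\bx\in A\big)$ is $1$ on a prefix of $A$ and $0$ elsewhere, which is in particular sorted decreasing.

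For the ``permutation'' half, I would count $1$'s: the number of $1$'s in $\big(I^{=j}_{g_\bx,\xi'_\bx}(S):\bx\in A\big)$ equals $|\{\bx\in A':\xi'_\bx(S,S\oplus j)=1\}|$, and this equals $|\{\bx\in A':\xi_\bx(S,S\oplus j)=1\}|$ since down-sorting merely permutes the multiset of colors on $A'$. By the same definition chase applied with $\xi_\bx$, the number of $1$'s in $\big(I^{=j}_{g_\bx,\xi_\bx}(S):\bx\in A\big)$ is also $|\{\bx\in A':\xi_\bx(S,S\oplus j)=1\}|$. Two Boolean vectors of equal length with equally many $1$'s are permutations of each other, which closes the proof. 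I do not expect a genuine obstacle here; the only thing requiring care is the bookkeeping of the definitions — verifying that exactly the points of $A'$ contribute (points of $A\setminus A'$ being non-influential under either coloring) and that the down-sort acts on precisely the prefix $A'=V_L$.
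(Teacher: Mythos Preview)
Your proposal is correct and follows essentially the same approach as the paper's proof: both unwind the definition to see that on $A$ the colored influence is the indicator of $\{\bx\in A':\xi'_\bx(S,S\oplus j)=1\}$ (zero on $A\setminus A'$), then use the down-sort on $A'$ for the sorted-decreasing part and a count of $1$'s for the permutation part. Your explicit verification that $V_L=A'$ is a helpful bookkeeping detail that the paper leaves implicit (it simply states the resulting equality $(\xi'_\bx(S,S\oplus j):\bx\in A')=\sortdown{\xi_\bx(S,S\oplus j):\bx\in A'}$ just before the claim).
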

\begin{proof}
	Take $\bx_i < \bx'_i$ in $A$. Note that $g_\bx(S) = 1$ for both $\bx, \bx'$. Thus,
	\[
	I_{g_\bx,\xi'_\bx}^{=j}(S) = \bone\left(g_\bx(S\oplus j) = 0 ~\textbf{and}~\xi'_\bx(S,S\oplus j) = 1 \right)
	\]
	and
	\[
	I_{g_\bx,\xi_\bx}^{=j}(S) = \bone\left(g_\bx(S\oplus j) = 0 ~\textbf{and}~\xi_\bx(S,S\oplus j) = 1 \right)
	\]
	The two vectors are Boolean vectors with number of ones equal to the number of ones in $(\xi_\bx(S, S\oplus j) ~:~\bx\in A')$ which equals 
	the number of ones in $(\xi'_\bx(S, S\oplus j)~:~\bx\in A')$. Thus, they are permutations.
	By design of $\xi'_\bx$'s, this vector is sorted decreasing on $A'$, and all zeros in $A\setminus A'$ (which come to the right of $A'$).
\end{proof}
Observing that 
\[
\vv{L^{(1)}_A} = \left(\sum_{j=1}^{i-1} I_{g_\bx,\xi_\bx}^{=j}(S)~:~\bx\in A\right) ~~\textrm{and}~~\vv{R^{(1)}_A} = \left(\sum_{j=1}^{i-1} I_{g_\bx,\xi'_\bx}^{=j}(S)~:~\bx\in A\right)
\]
using~\Cref{lem:sum-of-vectors} and the claim above, we get

\begin{mdframed}[backgroundcolor=gray!20,topline=false,bottomline=false,leftline=false,rightline=false]
	\begin{equation}\label{eq:L1R1A}
		\vv{R^{(1)}_A} \majorizes \sortdown{\vv{L^{(1)}_A}}
	\end{equation}
\end{mdframed}
Absolutely analogously, we get
\begin{mdframed}[backgroundcolor=gray!20,topline=false,bottomline=false,leftline=false,rightline=false]
	\begin{equation}\label{eq:L1R1O}
		\vv{R^{(1)}_O} \majorizes \sortdown{\vv{L^{(1)}_O}}
	\end{equation}
\end{mdframed}

\paragraph{Proof of Part (c) for $q=1$:}

The picture is similar, but reversed when we consider the points in $W \cup C$, where $g_{\bx}(S) = 0$. Fix $j \leq i-1$ and define $W' := \{\bx \in W \colon g_{\bx}(S \oplus j) = 1\}$ and $C' := \{\bx \in C \colon g_{\bx}(S \oplus j) = 1\}$ which are suffixes of $W$ and $C$, respectively. From our definition of $\xi'_{\bx}(S,S\oplus j)$ above, made from the perspective of the set $S \oplus j$, 
we have
\[
(\xi'_{\bx}(S,S\oplus j) \colon \bx \in W') = \sortdown{\xi_{\bx}(S,S\oplus j) \colon \bx \in W'}
\]
and similarly
\[
(\xi'_{\bx}(S,S\oplus j) \colon \bx \in C') = \sortdown{\xi_{\bx}(S,S\oplus j) \colon \bx \in C'} \text{.}
\]


Analogous to \Cref{clm:A_sorted}, we have the following claim.
\begin{claim}
	$\left(I_{g_\bx,\xi'_\bx}^{=j}(S)~~:~~ \bx\in W\right)$ is a sorted increasing vector, and is a permutation of 	$\left(I_{g_\bx,\xi_\bx}^{=j}(S)~~:~~ \bx\in W\right)$.
\end{claim}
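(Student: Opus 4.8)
The plan is to run the proof of \Cref{clm:A_sorted} in mirror image, swapping the roles of $1$-points and $0$-points on the line and reading the recoloring rule from the endpoint $S\oplus j$ (which is the endpoint carrying value $1$ on this piece of the line) rather than from $S$. Fix $j\le i-1$. Since $g_\bx(S)=h(\bx)=0$ for every $\bx\in W$, unpacking the definition of the influence gives, for $\bx\in W$,
\[
I_{g_\bx,\xi_\bx}^{=j}(S)=\bone\bigl(g_\bx(S\oplus j)=1 \ \textbf{and}\ \xi_\bx(S,S\oplus j)=0\bigr),
\]
together with the same identity with $\xi_\bx$ replaced by $\xi'_\bx$. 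In particular both quantities are $0$ outside the set $W':=\{\bx\in W\colon g_\bx(S\oplus j)=1\}$, which (as already recorded just above the claim) is a \emph{suffix} of $W$; hence $W\setminus W'$ is a prefix of $W$ on which both vectors vanish identically.

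It then remains to compare the two vectors on $W'$. There $I_{g_\bx,\xi'_\bx}^{=j}(S)=\bone(\xi'_\bx(S,S\oplus j)=0)$, and by the recoloring rule (invoked from the perspective of $S\oplus j$) the word $(\xi'_\bx(S,S\oplus j)\colon \bx\in W')$ is the down-sorted version of $(\xi_\bx(S,S\oplus j)\colon \bx\in W')$, i.e.\ all $1$'s followed by all $0$'s reading left to right. Consequently $(I_{g_\bx,\xi'_\bx}^{=j}(S)\colon \bx\in W')$ is all $0$'s followed by all $1$'s, a sorted increasing Boolean vector, and splicing it onto the all-zero prefix coming from $W\setminus W'$ shows that the full vector $(I_{g_\bx,\xi'_\bx}^{=j}(S)\colon \bx\in W)$ is sorted increasing. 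For the permutation claim: both vectors vanish on $W\setminus W'$, while on $W'$ the number of $1$'s of $I_{g_\bx,\xi'_\bx}^{=j}(S)$ equals the number of $0$'s in $(\xi'_\bx(S,S\oplus j)\colon \bx\in W')$, which equals the number of $0$'s in $(\xi_\bx(S,S\oplus j)\colon \bx\in W')$ since $\xi'$ is a rearrangement of $\xi$ there, which in turn equals the number of $1$'s of $I_{g_\bx,\xi_\bx}^{=j}(S)$ on $W'$; two Boolean vectors of equal length with the same number of $1$'s are permutations of one another.

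Relative to \Cref{clm:A_sorted}, both the ``mirror'' (ones $\leftrightarrow$ zeros, prefix $\leftrightarrow$ suffix, decreasing $\leftrightarrow$ increasing) and the ``perspective'' ($S$ $\leftrightarrow$ $S\oplus j$) changes are cosmetic, so I do not expect a genuine obstacle. The only points that need care are (i) confirming that $W$ lies entirely in the left half of $\ell$, which, combined with semisortedness of $(S\oplus j)\circ f$ (preserved under sorting), is exactly what forces $W'$ to be a suffix of $W$ and thereby flips ``decreasing'' into ``increasing'' in the conclusion, and (ii) making sure the down-sorting applied on the $W'$-interval is the one actually produced by the construction, i.e.\ that the relevant edges incident to $W'$ are recolored under the $S\oplus j$ perspective so that the down-sort statement cited above holds verbatim.
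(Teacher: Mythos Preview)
Your proof is correct and follows exactly the approach the paper intends: it is the mirror of \Cref{clm:A_sorted}, reading the recoloring rule from the $S\oplus j$ perspective so that $W'$ coincides with the left-half interval $V_L$ on which $\xi'$ is the down-sort of $\xi$, whence $\bone(\xi'_\bx=0)$ is sorted increasing on $W'$ and zero on the prefix $W\setminus W'$. The paper itself omits the proof, simply noting it is analogous to \Cref{clm:A_sorted}.
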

Arguing similarly to the proof of \Cref{eq:L1R1A} we get
\begin{mdframed}[backgroundcolor=gray!20,topline=false,bottomline=false,leftline=false,rightline=false]
	\begin{equation}\label{eq:L1R1W}
		\vv{R^{(1)}_W} \majorizes \sortup{\vv{L^{(1)}_W}}
	\end{equation}
\end{mdframed}
and absolutely analogously, we get
\begin{mdframed}[backgroundcolor=gray!20,topline=false,bottomline=false,leftline=false,rightline=false]
	\begin{equation}\label{eq:L1R1C}
		\vv{R^{(1)}_C} \majorizes \sortup{\vv{L^{(1)}_C}}
	\end{equation}
\end{mdframed}
\eqref{eq:L3R3A}, \eqref{eq:L3R3O}, \eqref{eq:L3R3C}, \eqref{eq:L3R3W}, and \eqref{eq:L1R1A}, \eqref{eq:L1R1O}, \eqref{eq:L1R1W}, \eqref{eq:L1R1C}
establish (a) and (c). \medskip

\subsubsection{Proving (b) and (d):}

Finally, we need to establish (b) and (d). Let us recall these and also draw the picture of $\ell$ that we have been using.
\begin{itemize}
	\item[(b)] $\exists \vv{L'^{(2)}_A}$ such that $\vv{L'^{(2)}_A} \majorizes \sortdown{\vv{L^{(2)}_A}}$ and $\vv{L'^{(2)}_A} \coordom \vv{R^{(2)}_A}$.
	\item[(d)] $\exists \vv{L'^{(2)}_C}$ such that $\vv{L'^{(2)}_C} \majorizes \sortup{\vv{L^{(2)}_C}}$ and $\vv{L'^{(2)}_C} \coordom \vv{R^{(2)}_C}$.
\end{itemize}
\begin{figure}[h!]
	\includegraphics*[trim = 0 320 0 100, clip, scale = 0.5]{figs/semisorted-k-try2}
\end{figure}
\noindent
We remind the reader that 
$\vv{L^{(2)}}(\bx) = \Phi_{h,\chi}(\bx;i)$ for all $\bx\in \ell$. 
We begin with an observation which strongly uses the ``thresholded'' nature of the definition of $\Phi$.
\begin{claim}
	No matter how $\chi$ is defined, either $\vv{L^{(2)}_A}$ is the all $1$s vector, or $\vv{L^{(2)}_C}$ is the all $1$s vector. 
\end{claim}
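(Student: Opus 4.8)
The plan is to read off, from the semisortedness of $h$, exactly which pairs on the fixed $i$-line $\ell$ form $i$-aligned violations, after which the claim becomes a one-line pigeonhole argument. First I would recall the structure of $h$ restricted to $\ell$: left to right the line splits into the four contiguous blocks $W$ (zeros in the first half), $A$ (ones in the first half), $C$ (zeros in the second half), $O$ (ones in the second half), so $h\equiv 1$ on $A\cup O$ and $h\equiv 0$ on $W\cup C$. The structural observation I would establish is that the $i$-aligned violations of $h$ on $\ell$ are \emph{exactly} the pairs $(\bx,\by)$ with $\bx\in A$ and $\by\in C$: for such a pair $\bx_i<\by_i$ while $h(\bx)=1>0=h(\by)$, so it is a violation; conversely a point of $W$ has no $1$-valued point with smaller $i$-coordinate and a point of $O$ has no $0$-valued point with larger $i$-coordinate, so neither lies in any $i$-aligned violation, while a point of $A$ (resp.\ of $C$) has as its only badly-ordered oppositely-valued partners precisely the points of $C$ (resp.\ of $A$). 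If $A$ or $C$ is empty — for instance if $\ell$ is already monotone along dimension $i$ — the corresponding sub-vector is empty and the statement is vacuously true, so I would then assume both blocks are non-empty.

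Next I would rewrite $\vv{L^{(2)}}(\bx)=\Phi_{h,\chi}(\bx;i)$ as a condition on the colours $\chi$ of the complete ``bipartite'' edge family joining $A$ to $C$. Since $h(\bx)=1$ for $\bx\in A$, \Cref{def:phi-f-chi} gives $\vv{L^{(2)}}(\bx)=1$ iff $\chi(\bx,\by)=1$ for some $\by\in C$; symmetrically, since $h(\by)=0$ for $\by\in C$, we get $\vv{L^{(2)}}(\by)=1$ iff $\chi(\bx,\by)=0$ for some $\bx\in A$. This is precisely where the thresholded nature of $\Phi$ enters: one suitably coloured edge already forces the value $1$, and the whole block $A$ (resp.\ $C$) serves as a single, indivisible set of potential partners.

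Finally, the dichotomy. Suppose $\vv{L^{(2)}_A}$ is not the all-ones vector, and pick $\bx_0\in A$ with $\vv{L^{(2)}}(\bx_0)=0$; by the equivalence above this says $\chi(\bx_0,\by)=0$ for every $\by\in C$. But then this single point $\bx_0$ witnesses $\vv{L^{(2)}}(\by)=1$ for \emph{every} $\by\in C$, so $\vv{L^{(2)}_C}$ is the all-ones vector, as claimed. I do not expect any genuine obstacle here; the only points requiring care are (i) getting the orientation right when reading violations off semisortedness — a point of $A$ violates ``upward'' with every point of $C$, a point of $C$ violates ``downward'' with every point of $A$ — and (ii) keeping in mind that $\chi$ colours undirected edges, so $\chi(\bx_0,\by)$ is unambiguous. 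All the content is the observation that the violation graph of $h$ restricted to $\ell$ is the complete bipartite graph between $A$ and $C$, after which the claim is immediate; this is exactly the fact that will later let us pin down $\vv{L'^{(2)}_A}$, $\vv{L'^{(2)}_C}$ and $\vv{R^{(2)}}$ in parts (b) and (d).
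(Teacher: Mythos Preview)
Your proposal is correct and follows essentially the same approach as the paper: both arguments hinge on the observation that every pair $(\bx,\by)$ with $\bx\in A$ and $\by\in C$ is an $i$-aligned violation, so if some $\bx_0\in A$ has $\Phi_{h,\chi}(\bx_0;i)=0$ then all edges $(\bx_0,\by)$ are coloured $0$, forcing $\Phi_{h,\chi}(\by;i)=1$ for every $\by\in C$. The paper phrases this as a two-line contradiction (pick a zero in each of $A$ and $C$ and look at the single edge between them), but the content is identical.
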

\begin{proof}
	Suppose for the sake of contradiction, there exists $\bx \in A$ and $\by \in C$ such that $\Phi_{h,\chi}(\bx;i) = \Phi_{h,\chi}(\by;i) = 0$.
	But the edge $(\bx, \by)$ is a violation, and if $\chi(\bx, \by) = 1$ then $\Phi_{h,\chi}(\bx;i) = 1$, otherwise $\Phi_{h,\chi}(\by;i) = 1$. Contradiction.
\end{proof}
Next we remind the reader that $\vv{R^{(2)}}(\bx) = I^{=i}_{g_\bx, \xi'_\bx}(S)$. 
We now define the $\xi'_\bx(S, S+i)$ colorings for $\bx \in A\cup C$ using the above claim in the following simple manner.
\begin{equation}\label{eq:xi1}
\textrm{If}~~\vv{L^{(2)}_A} \equiv \bone, ~~\textrm{then}~~ \xi'_\bx(S, S+i) = 1~~\forall \bx \in A \cup C
\end{equation}
otherwise, 
\begin{equation}\label{eq:xi2}
\textrm{we have}~~\vv{L^{(2)}_C} \equiv \bone, ~~\textrm{and so we define}~~ \xi'_\bx(S, S+i) = 0~~\forall \bx \in A \cup C
\end{equation}
In the former case, we have $\vv{R^{(2)}_A} = (\underbrace{111\cdots1}_{k~\text{many}}0000)$ and $\vv{L^{(2)}_A} \equiv \bone$ and so we pick $\vv{L'^{(2)}_A} = \vv{L^{(2)}_A}$.
Also note that we have $\vv{R^{(2)}_C}$ as the all zeros vector, and so we pick $\vv{L'^{(2)}_C} = \sortup{\vv{L^{(2)}_C}}$. These satisfy (b) and (d).
In the latter case the argument is analogous. Thus, in either case we have established (b) and (d), and this completes the proof of $\LHS \geq \RHS_1$. \medskip

We remind the reader that we have now defined $\xi'_\bx(S, S\oplus j)$ for all subsets $S\subseteq [i-1]$ and $1\leq j\leq i$. In the next subsection, when we prove $\LHS \geq \RHS_2$, we will need to define $\xi'_\bx(S+i, S+i\oplus j)$ for all $j \leq i-1$. Note that for $j=i$, we have $(S+i, S+i\oplus j) = (S+i,S)$ and the coloring $\xi_{\bx}'$ has already been defined for these edges in \eqref{eq:xi1} or \eqref{eq:xi2}.


\subsection{Proving $\LHS \geq \RHS_2$} This inequality is a bit trickier to establish because the function $h$ itself now changes to $i\circ h$ in $\RHS_2$.
For instance, focusing on the illustration we have been using, upon sorting the picture looks like this.

\begin{figure}[ht!]
	\includegraphics*[trim = 0 210 0 100, clip, scale = 0.5]{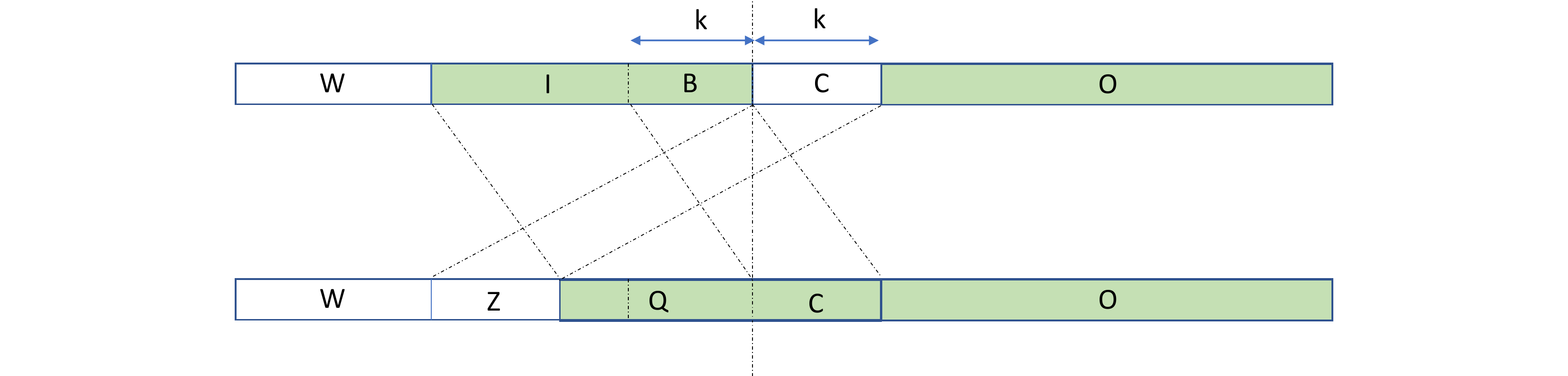}
\end{figure}

We have now partitioned the interval $A$ into $I \cup B$ where $B$ is the $k$-ones closest to the semi-sorting boundary.
After sorting, we think of the ones in $B$ moving into $C$, and the ones in $I$ shifting and moving to $Q \subseteq A$. 
The first $k$ entries of $A$, which we call $Z$, takes the value $0$ after sorting this line. 

To argue $\LHS \geq \RHS_2$, we break the vector $\vv{L}$ as
\[
\norm{\vv{L}}_{1/2} = \norm{\vv{L}_W}_{1/2} + \norm{\vv{L}_{I \cup B\cup O}}_{1/2} + \norm{\vv{L}_C}_{1/2} 
\]
and the vector $\vv{M}$ as 
\[
\norm{\vv{M}}_{1/2} = \norm{\vv{M_W}}_{1/2} + \norm{\vv{M_{Q\cup C\cup O}}}_{1/2} + \norm{\vv{M_Z}}_{1/2} 
\]
and argue vector-by-vector. The plan of the proof is similar to the previous case. We want to find recolorings $\chi''$ and $\xi'_\bx$ such that

\begin{mdframed}[backgroundcolor=gray!20,topline=false,bottomline=false,leftline=false,rightline=false]
	\begin{center}
		\begin{enumerate}
			\item[(e)] $\exists \vv{M'^{(q)}_{QCO}}$ such that $\vv{M'^{(q)}_{QCO}} \majorizes \sortdown{\vv{L^{(q)}_{IBO}}}$ and $\vv{M'^{(q)}_{QCO}} \coordom \vv{M^{(q)}_{QCO}}$, for $q\in \{1,3\}$.
			\item[(f)] $\exists \vv{L'^{(2)}_{IBO}}$ such that $\vv{L'^{(2)}_{IBO}} \majorizes \sortdown{\vv{L^{(2)}_{IBO}}}$ and $\vv{L'^{(2)}_{IBO}} \coordom \vv{M^{(2)}_{QCO}}$.
			\item[(g)] $\exists \vv{M'^{(q)}_{WZ}}$ such that $\vv{M'^{(q)}_{WZ}} \majorizes \sortup{\vv{L^{(q)}_{WC}}}$ and $\vv{M'^{(q)}_{WZ}} \coordom \vv{M^{(q)}_{WZ}}$, for $q\in \{1,3\}$.
			\item[(h)] $\exists \vv{L'^{(2)}_{WC}}$ such that $\vv{L'^{(2)}_{WC}} \majorizes \sortup{\vv{L^{(2)}_{WC}}}$ and $\vv{L'^{(2)}_{WC}} \coordom \vv{M^{(2)}_{WZ}}$.
		\end{enumerate}
	\end{center}
\end{mdframed}

Let us see why the above conditions suffice to prove the inequality. 
The second part of (f) implies that $\norm{\vv{M_{QCO}}}_{1/2} \leq \norm{\vv{M^{(1)}_{QCO}} + \vv{M^{(3)}_{QCO}} + \vv{L'^{(2)}_{IBO}}}_{1/2}$.
Part (e) and the first part of (f), along with~\Cref{lem:sum-of-vectors}, implies $\vv{M^{(1)}_{QCO}} + \vv{M^{(3)}_{QCO}} + \vv{L'^{(2)}_{IBO}} \majorizes \sortdown{\vv{L_{IBO}}}$.
And so, $\norm{L_{IBO}}_{1/2} \geq \norm{M_{IBO}}_{1/2}$. Now, by the second part of (g) and the second part of (h) we have $\norm{\vv{M_{WC}}}_{1/2} \leq \norm{\vv{M'^{(1)}_{WC}} + \vv{M'^{(3)}_{WC}} + \vv{L'^{(2)}_{WC}}}_{1/2}$ and by the first part of (g) and (h) we have $\norm{\vv{M'^{(1)}_{WC}} + \vv{M'^{(3)}_{WC}} + \vv{L'^{(2)}_{WC}}}_{1/2} \majorizes \sortup{\vv{L_{WC}}}$. Thus, $\norm{L_{WZ}}_{1/2} \geq \norm{M_{WZ}}_{1/2}$.



\subsubsection{Proving (e) and (g) for $q=3$} \label{sec:eg3}

\paragraph{Defining the Coloring $\chi''$:} We now describe the coloring $\chi''$ on all edges of the form $(\bx,\bx + a\be_j)$ where $j \geq i+1$, $(i\circ h)(\bx) = 1$ and $(i \circ h)(\bx + a\be_j) = 0$. For all other edges $e$, we simply define $\chi''(e) = \chi(e)$. 

Given a pair of $i$-lines $\ell$ and $\ell' = \ell + a\be_j$ for $j \geq i+1$ and $a > 0$ we consider the set of violations from $\ell$ to $\ell'$ in $h$ and in $i \circ h$. As before, the violations in $h$ form two a union of two intervals $V = V_L \cup V_R$. Recall the definition of $V$ in \Eqn{violations}. Since $(i \circ h)$ is sorted in dimension $i$, the violations from $\ell$ to $\ell'$ in $(i \circ h)$ form a single interval which we will call $U$:
\[
U := \left\{(\bx,\bx+a\be_j) \colon \bx \in \ell \text{, } (i\circ h)(\bx) = 1)\text{, and } (i\circ h)(\bx + a\be_j) = 0\right\} \text{.}
\]
Since the sort operator can only reduce the number of violations in a dimension, we have $|U| \leq |V|$ (\Cref{clm:sort-violations} applied to $h|_{\ell}$ and $h|_{\ell'}$). We define $J$ to be the interval of $|V|-|U|$ points directly to the right of $U$ so that $U \cup J$ is an interval of size $|V|$. We then define 
\[
(\chi''(e) \colon \bx \in U \cup J) = \sortdown{\chi(e)  \colon e \in V} \text{.}
\]

We now have a complete description of $\chi''$. We provide the following illustration for clarity. The white and green intervals represent where $h = 0$ and $h=1$, respectively. The vertical arrows represent violated edges. Blue edges have color $0$ and red edges have color $1$. The left picture depicts the original coloring, $\chi$, and the original function, $h$. The right picture depicts the recoloring, $\chi''$, and the function after sorting, $i \circ h$. 

\includegraphics*[trim = 60 260 0 20, clip, scale = 0.5]{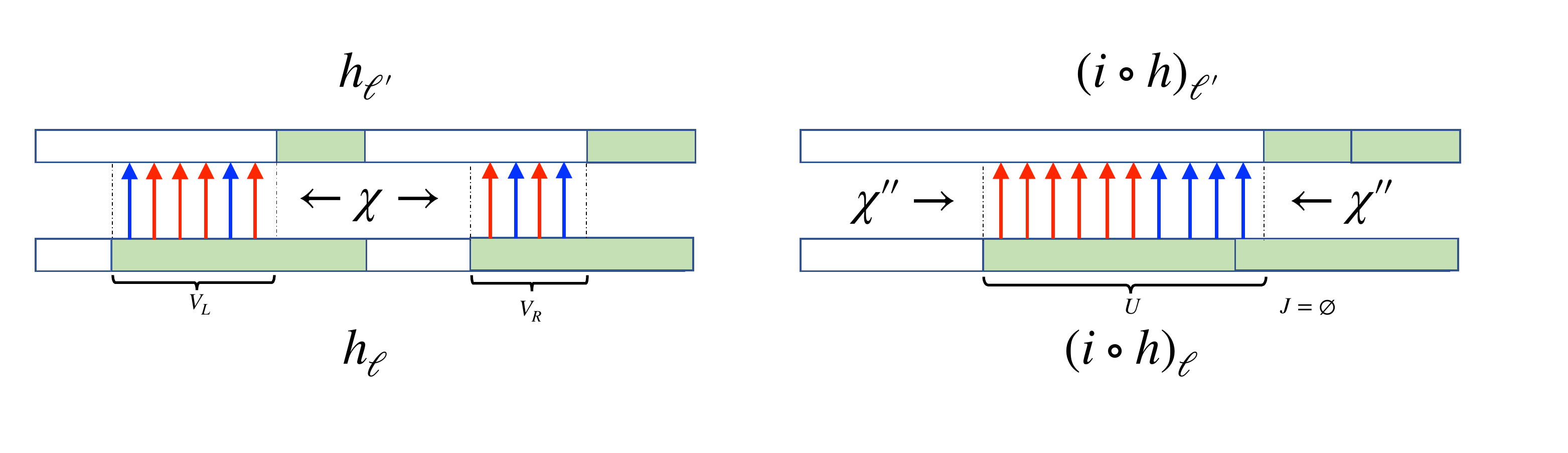}



We now return to our fixed $i$-line $\ell$ and set out to prove (e) and (g) for $q=3$, given this coloring $\chi''$. Let's recall our illustration of $h$ and $(i \circ h)$ restricted to $\ell$ and our definition of the intervals $W,I,B,C,O,Z,Q$. 

\hspace{1cm}\includegraphics*[trim = 0 210 0 80, clip, scale = 0.4]{figs/semisorted-to-sorted-try2}

\paragraph{Proving (e) for $q=3$:} Recall the definition of $A = I \cup B$, $O$, and $Q \cup C \cup O$ as in the illustration. Fix $j \geq i+1$ and a $i$-line $\ell' = \ell + a\be_j$. Let $A' := \{\bx \in A \colon h(\bx + a\be_j) = 0\}$, $O' := \{\bx \in O \colon h(\bx + a\be_j) = 0\}$, and $U := \{\bx \in Q \cup C \cup O \colon (i\circ h)(\bx + a\be_j) = 0\}$. Again, applying \Cref{clm:sort-violations} to $h|_{\ell}$ and $h|_{\ell'}$, we have $|U| \leq |A'| + |O'|$. Let $J$ denote the interval of size $|A'| + |O'| - |U|$ directly to the right of $U$ so that $U \cup J$ is an interval of size $|A'| + |O'|$. Observe that by our definition of $\chi''$ above, we have
\[
(\chi''(\bx,\bx+a\be_j) \colon \bx \in U \cup J) = \sortdown{\chi(\bx,\bx+a\be_j) \colon \bx \in A' \cup O'} \text{.}
\]
Let's see what this leads to.
\begin{definition}
	Fix $j \geq i + 1$ and fix an $i$-line $\ell' := \ell + a\be_j$ for $a > 0$. Define the following two boolean vectors:
	\[
	\bv^{M}_{j,a} := \left(\bone{\left((i\circ h)(\bx + a\be_j) = 0 ~~\textbf{and}~~~\chi''(\bx, \bx+a\be_j) = 1\right) }~~:~\bx\in Q\cup C\cup O \right)
	\]
	and 
	\[
	\bv^{L}_{j,a} := \left(\bone{\left(h(\bx + a\be_j) = 0 ~~\textbf{and}~~~\chi(\bx, \bx+a\be_j) = 1\right) }~~:~\bx\in I\cup B\cup O\right)\text{.}
	\]
\end{definition}
\noindent
Observe, for $\bx\in Q\cup C\cup O$, 
\begin{equation}\label{eq:M3R3-piece1.1}
	\Phi_{i\circ h,\chi''}(\bx;j) = \min\left(1, \sum_a \bv^{M}_{j,a}(\bx)\right) 
\end{equation}
and for $\bx \in I\cup B\cup O$, 
\begin{equation}\label{eq:M3R3-piece1.2}
	\Phi_{h,\chi}(\bx; j) = \min\left(1, \sum_a \bv^{L}_{j,a}(\bx) \right) \text{.}
\end{equation}
\begin{claim}
	Fix $j \geq i + 1$ and $a > 0$. For any two $\bx_i < \bx'_i$ in $Q\cup C\cup O$, we have $\bv^{M}_{j,a}(\bx) \geq \bv^{M}_{j,a}(\bx')$.
	That is, the vector $\bv^{M}_{j,a}$ is sorted decreasing.
\end{claim}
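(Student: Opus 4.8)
The plan is to unwind the definition of $\bv^M_{j,a}$ and prove the equivalent statement: if $\bx,\bx'\in Q\cup C\cup O$ with $\bx_i<\bx'_i$ and $\bv^M_{j,a}(\bx')=1$, then $\bv^M_{j,a}(\bx)=1$. Recall that $\bv^M_{j,a}(\bz)=1$ precisely when $(i\circ h)(\bz+a\be_j)=0$ \textbf{and} $\chi''(\bz,\bz+a\be_j)=1$, so I would verify each of these two conjuncts for $\bx$, assuming both hold for $\bx'$. Since $j\geq i+1$, adding $a\be_j$ does not change the $i$-coordinate, so $\bx+a\be_j$ and $\bx'+a\be_j$ lie on the same $i$-line $\ell'=\ell+a\be_j$ with $(\bx+a\be_j)_i=\bx_i<\bx'_i=(\bx'+a\be_j)_i$.

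For the first conjunct I would invoke that $i\circ h$ is monotone along dimension $i$ (\Cref{clm:sortS}): on $\ell'$ the function $i\circ h$ is sorted increasing, so $(i\circ h)(\bx+a\be_j)\leq (i\circ h)(\bx'+a\be_j)=0$, giving $(i\circ h)(\bx+a\be_j)=0$. For the second conjunct, note that by construction $Q\cup C\cup O$ is exactly the set of points of $\ell$ on which $i\circ h$ equals $1$, so $(i\circ h)(\bx)=(i\circ h)(\bx')=1$; combined with the first conjunct, both $(\bx,\bx+a\be_j)$ and $(\bx',\bx'+a\be_j)$ are violated edges of $i\circ h$ from $\ell$ to $\ell'$, hence both belong to the interval $U$ of edges on which $\chi''$ was obtained by down-sorting $\chi$. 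Since $\chi''$ restricted to $U\cup J$ is a sorted-decreasing sequence (laid out left-to-right as $\sortdown{\chi(e):e\in V}$) and $U$ is the left portion of that interval with $\bx_i<\bx'_i$ both in $U$, we get $\chi''(\bx,\bx+a\be_j)\geq \chi''(\bx',\bx'+a\be_j)=1$, i.e. $\chi''(\bx,\bx+a\be_j)=1$. The two conjuncts together yield $\bv^M_{j,a}(\bx)=1$, completing the argument.

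The only point needing a line of care — the closest thing to an obstacle — is justifying that $\chi''$ is sorted decreasing along all of $U$, and that $U$ is genuinely an interval. I would observe that $U=\{\bx\in\ell: (i\circ h)(\bx)=1\}\cap\{\bx\in\ell:(i\circ h)(\bx+a\be_j)=0\}$ is the intersection of a suffix of $\ell$ (using sortedness of $i\circ h$ in dimension $i$ along $\ell$) with a prefix of $\ell$ (using the same along $\ell'$), hence an interval, and it is the leftmost part of $U\cup J$; a prefix of a decreasing sequence is decreasing. There is otherwise no real difficulty: the substance has been front-loaded into the construction of $\chi''$ (the down-sorting step) and into the fact that applying $\sort_i$ makes $i\circ h$ fully monotone along $i$-lines, so this claim merely reads off the consequence. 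The companion statement needed for part (g), over $W\cup Z$, would follow by the mirror-image argument, exchanging "prefix/decreasing" for "suffix/increasing".
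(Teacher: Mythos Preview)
Your proposal is correct and follows essentially the same approach as the paper: show that $(i\circ h)(\bx'+a\be_j)=0$ forces $(i\circ h)(\bx+a\be_j)=0$ by monotonicity of $i\circ h$ along dimension $i$, and then use the construction of $\chi''$ (down-sorting on $U\cup J$) to conclude $\chi''(\bx,\bx+a\be_j)\geq\chi''(\bx',\bx'+a\be_j)$. The paper compresses the second step to the phrase ``by design,'' whereas you unpack it by explicitly verifying that $U$ is an interval and that both edges lie in $U$, the left portion of $U\cup J$ where $\chi''$ is sorted decreasing; this is exactly the content behind the paper's two-line argument.
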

\begin{proof}	
	Since $(i\circ h)$ is semisorted $(i\circ h)(\bx' + a\be_j) = 0$ implies $(i\circ h)(\bx + a\be_j) = 0$.
	Furthermore, since both these are violations, by design $\chi''(\bx', \bx'+a\be_j) = 1$ implies $\chi''(\bx, \bx+a\be_j) = 1$. 
\end{proof}
\begin{claim}\label{clm:L3-permutation}
	Fix $j \geq i + 1$ and $a > 0$. The vector $\bv^{M}_{j,a}$ has at most as many $1$s as $\bv^{L}_{j,a}$ and thus $ \sortdown{\bv^{L}_{j,a}}\coordom \bv^{M}_{j,a}$.
\end{claim}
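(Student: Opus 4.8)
\textbf{Proof plan for \Cref{clm:L3-permutation}.}

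The plan is to count violated edges in the two pictures and use the defining property of the recoloring $\chi''$. Fix $j\geq i+1$ and $a>0$, and consider the pair of $i$-lines $\ell$ and $\ell'=\ell+a\be_j$. The number of $1$s in $\bv^L_{j,a}$ is, by definition, the number of $\bx\in I\cup B\cup O$ such that $(\bx,\bx+a\be_j)$ is a violation in $h$ with $\chi(\bx,\bx+a\be_j)=1$. Since $I\cup B = A$ and the full set of violations from $\ell$ to $\ell'$ in $h$ is $V = V_L\cup V_R$ (with $V_L$ lying in $A$ and $V_R$ lying in $O$, because $W$ and $C$ have $h$-value $0$), this count is exactly the number of $1$-colored edges in $V$ under $\chi$, i.e.\ $|\{e\in V : \chi(e)=1\}|$. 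Likewise, the number of $1$s in $\bv^M_{j,a}$ is the number of $\bx\in Q\cup C\cup O$ with $(\bx,\bx+a\be_j)$ a violation in $i\circ h$ and $\chi''(\bx,\bx+a\be_j)=1$; since $Z$ and $W$ have $(i\circ h)$-value $0$, the set of such violations is exactly the interval $U$ of \Cref{sec:eg3}, so this count is $|\{e\in U : \chi''(e)=1\}|$.

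Now recall how $\chi''$ was defined: on the interval $U\cup J$ (of size $|V|$, with $|U|\leq |V|$ by \Cref{clm:sort-violations} applied to $h|_\ell$ and $h|_{\ell'}$) we set $(\chi''(e) : e\in U\cup J) = \sortdown{\chi(e) : e\in V}$. In particular the multiset of colors on $U\cup J$ under $\chi''$ is a permutation of the multiset of colors on $V$ under $\chi$, so it contains exactly $|\{e\in V:\chi(e)=1\}|$ ones. Restricting to the sub-interval $U$ can only drop the count of ones, hence
\[
|\{e\in U : \chi''(e)=1\}| \;\leq\; |\{e\in V : \chi(e)=1\}|,
\]
which is precisely the statement that $\bv^M_{j,a}$ has at most as many $1$s as $\bv^L_{j,a}$.

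Finally, to conclude $\sortdown{\bv^L_{j,a}}\coordom \bv^M_{j,a}$: by the preceding claim $\bv^M_{j,a}$ is already sorted decreasing, so it equals $\sortdown{\bv^M_{j,a}}$, and $\sortdown{\bv^L_{j,a}}$ is a $0$-$1$ vector of the same length whose prefix of $1$s is at least as long; a decreasing $0$-$1$ vector with a longer (or equal) block of $1$s coordinate-wise dominates one with a shorter block. This is the routine last step. The only point needing care — and the main obstacle — is bookkeeping the index sets consistently: one must verify that $\bv^L_{j,a}$ and $\bv^M_{j,a}$ are vectors of the same length ($|A'|+|O'|$ on the $h$ side versus $|U|+|J|$ with $|U\cup J|=|A'|+|O'|$ on the $i\circ h$ side, per the refined accounting in the proof of (e)) and that the "violations supported on $W$ or $Z$ are empty" observations are used correctly so that the relevant $1$-counts really are the interval counts $|\{e\in V:\chi(e)=1\}|$ and $|\{e\in U:\chi''(e)=1\}|$. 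Once the index sets are aligned, the inequality is immediate from the sort-and-truncate structure of $\chi''$.
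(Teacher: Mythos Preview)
Your proof is correct and follows the same approach as the paper: both argue that the number of $1$s in $\bv^M_{j,a}$ is at most the number in $\bv^L_{j,a}$ because $\chi''$ on $U\cup J$ is a permutation of $\chi$ on $V$, and restricting to $U\subseteq U\cup J$ can only drop ones. The paper's proof is a single sentence; you have simply unpacked it.

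One small comment: your final paragraph about ``bookkeeping the index sets'' is overwrought. The vectors $\bv^L_{j,a}$ and $\bv^M_{j,a}$ are indexed by $I\cup B\cup O$ and $Q\cup C\cup O$ respectively, and these have the same cardinality $|A|+|O|$ directly from the picture (sorting moves the $k$ ones in $B$ to $C$ and leaves the total count of ones on $\ell$ unchanged). There is no need to invoke $|A'|+|O'|$ or $|U\cup J|$ for the vector-length check; those quantities track the \emph{support} of the violations, not the ambient index set.
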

\begin{proof}
	This is precisely how $\chi''$ is defined: it only permutes the colorings on the violations incident on $I\cup B\cup O$, and this number can only decrease upon sorting (\Cref{clm:sort-violations} applied to $\chi$ restricted to the edges going from $\ell$ to $\ell'$). 
\end{proof}
In conclusion, we can write 
\[
\vv{L^{(3)}_{IBO}} = \left(\sum_{j=i+1}^d 	\Phi_{h,\chi}(\bx;j)~~:~~\bx \in I\cup B\cup O\right)
\]
as a sum of Boolean vectors, and the above two claims imply that the vector
\[
\vv{M^{(3)}_{QCO}} = \left(\sum_{j=i+1}^d 	\Phi_{(i\circ h),\chi''}(\bx;j)~~:~~\bx\in Q\cup C\cup O\right)
\]
is {\em coordinate wise} dominated by the 
sum of the {\em sorted decreasing} orders of those Boolean vectors. Defining $\vv{M'^{(3)}_{QCO}}$ to be the sum of the sorted decreasing orders, using~\Cref{lem:sum-of-vectors}, we establish part (e) for $q=3$. Namely, we get
\begin{mdframed}[backgroundcolor=gray!20,topline=false,bottomline=false,leftline=false,rightline=false]
	\begin{equation}\label{eq:M3R3-QCO}
		\exists \vv{M'^{(3)}_{QCO}}:~~\vv{M'^{(3)}_{QCO}} \majorizes \sortdown{\vv{L^{(3)}_{IBO}}}~~\text{and}~~ \vv{M'^{(3)}_{QCO}} \coordom \vv{M^{(3)}_{QCO}}
	\end{equation}
\end{mdframed}

\paragraph{Proving (g) for $q=3$:}

A similar argument but working with the zeros establishes part (g) for $q=3$. The picture is similar, but reversed, when we consider the points in $W\cup C$, where $h(\bx) = 0$. Fix a dimension $j \geq i+1$ and some $\ell'' = \ell - ae_j$. Let $W' = \{\bx \in W \colon h(\bx-a\be_j) = 1\}$, $C' = \{\bx \in C \colon h(\bx-a\be_j) = 1\}$, and $U = \{\bx \in W \cup Z \colon (i \circ h)(\bx - a\be_j) = 1\}$. Note that $|U| \leq |W'| + |C'|$ (\Cref{clm:sort-violations} applied to $h|_{\ell''}$ and $h|_{\ell}$). Let $J$ denote the interval of $|W'| + |C'|$ directly to the right of $|U|$ so that $U \cup J$ is an interval of size $|W'| + |C'|$. Observe that by our definition of $\chi''$ above, we have
\[
(\chi''(\bx-a\be_j,\bx) \colon \bx \in U \cup J) = \sortdown{\chi(\bx-a\be_j,\bx) \colon \bx \in W' \cup C'} \text{.}
\]
Let's see what this leads to.
\begin{definition}
	Fix $j \geq i + 1$ and fix an $i$-line $\ell'' := \ell - a\be_j$ for $a > 0$. Define the following two boolean vectors:
	\[
	\bv^{M}_{j,a} := \left(\bone{\left((i\circ h)(\bx - a\be_j) = 1 ~~\textbf{and}~~~\chi''(\bx-a\be_j, \bx) = 0\right) }~~:~\bx\in W\cup Z \right)
	\]
	and 
	\[
	\bv^{L}_{j,a} := \left(\bone{\left(h(\bx - a\be_j) = 1 ~~\textbf{and}~~~\chi(\bx-a\be_j, \bx) = 0\right) }~~:~\bx\in W \cup C\right)\text{.}
	\]
\end{definition}

\noindent
Observe, for $\bx\in W \cup Z$, 
\begin{equation}\label{eq:M3R3-piece1.1}
	\Phi_{i\circ h,\chi''}(\bx;j) = \min\left(1, \sum_a \bv^{M}_{j,a}(\bx)\right) 
\end{equation}
and for $\bx \in W \cup C$, 
\begin{equation}\label{eq:M3R3-piece1.2}
	\Phi_{h,\chi}(\bx; j) = \min\left(1, \sum_a \bv^{L}_{j,a}(\bx) \right) \text{.}
\end{equation}
\begin{claim}
	Fix $j \geq i + 1$ and $a > 0$. For any two $\bx_i < \bx'_i$ in $W \cup Z$, we have $\bv^{M}_{j,a}(\bx) \leq \bv^{M}_{j,a}(\bx')$.
	That is, the vector $\bv^{M}_{j,a}$ is sorted increasing.
\end{claim}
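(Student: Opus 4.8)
The plan is to run the mirror image of the argument used just above for the analogous $Q\cup C\cup O$ claim, now with zeros in the role of ones and the $i$-line $\ell''=\ell-a\be_j$ in place of $\ell+a\be_j$. Fix $j\ge i+1$ and $a>0$, and set $\ell'':=\ell-a\be_j$. It suffices to show: whenever $\bx,\bx'\in W\cup Z$ satisfy $\bx_i<\bx'_i$ and $\bv^{M}_{j,a}(\bx)=1$, then also $\bv^{M}_{j,a}(\bx')=1$. Unpacking the definition, $\bv^{M}_{j,a}(\bx)=1$ says $(i\circ h)(\bx-a\be_j)=1$ and $\chi''(\bx-a\be_j,\bx)=0$, so I must establish the two facts $(i\circ h)(\bx'-a\be_j)=1$ and $\chi''(\bx'-a\be_j,\bx')=0$.

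For the first fact: the points $\bx-a\be_j$ and $\bx'-a\be_j$ both lie on $\ell''$ and differ only in their $i$-coordinate, with $(\bx-a\be_j)_i=\bx_i<\bx'_i=(\bx'-a\be_j)_i$. Since $i\circ h$ is sorted non-decreasing along every $i$-line (immediate from the definition of the sort operator, cf.\ \Cref{clm:sortS}), the value $1$ at the lower point forces $(i\circ h)(\bx'-a\be_j)=1$. Together with $(i\circ h)(\bx')=0$, which holds because $\bx'\in W\cup Z$, this makes $(\bx'-a\be_j,\bx')$ a genuine $j$-violation of $i\circ h$; in particular both $\bx$ and $\bx'$ lie in the interval $U=\{\by\in W\cup Z:(i\circ h)(\by-a\be_j)=1\}$ of violations along which $\chi''$ was defined.

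For the second fact: recall that $\chi''$ on the edges indexed by $U\cup J$ was obtained by writing the original $\chi$-colors in sorted \emph{decreasing} order from left to right, so that along that interval all $1$-colored edges precede all $0$-colored edges. Since $\bx,\bx'\in U\subseteq U\cup J$ with $\bx$ strictly to the left of $\bx'$ and the edge at $\bx$ is $0$-colored, the edge at $\bx'$ lies weakly to the right of the $1/0$ boundary and hence is also $0$-colored: $\chi''(\bx'-a\be_j,\bx')=0$. The two facts together yield $\bv^{M}_{j,a}(\bx')=1$, so $\bv^{M}_{j,a}$ is non-decreasing on $W\cup Z$, i.e.\ sorted increasing.

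I do not expect a genuine obstacle here; the argument is short and essentially symmetric to the one already carried out for $Q\cup C\cup O$. The single point that needs a moment's care — and the only place the structure of the construction enters — is checking that $\bx'$ still lands inside the sorted interval $U$, which is exactly what monotonicity of $i\circ h$ along dimension $i$ provides. (Swapping the roles of $0$ and $1$ and of $\ell\pm a\be_j$ throughout recovers verbatim the proof that $\bv^{M}_{j,a}$ is sorted decreasing on $Q\cup C\cup O$.)
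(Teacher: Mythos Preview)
Your proposal is correct and follows essentially the same approach as the paper: use sortedness of $i\circ h$ along dimension $i$ to propagate the value-$1$ condition on $\ell''$, and use the sorted-decreasing structure of $\chi''$ on $U\cup J$ to propagate the color-$0$ condition. You provide slightly more detail than the paper (explicitly verifying that both $\bx$ and $\bx'$ land in $U$ before invoking the sortedness of $\chi''$), but the argument is the same two-line check.
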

\begin{proof}	
	Since $(i\circ h)$ is sorted in dimension $i$, we have $(i\circ h)(\bx - a\be_j) = 1$ implies $(i\circ h)(\bx'-a\be_j) = 1$.
	Furthermore, since both these are violations, by design $\chi''(\bx-a\be_j, \bx) = 0$ implies $\chi''(\bx'-a\be_j, \bx') = 0$. 
\end{proof}
\begin{claim}\label{clm:L3-permutation}
	Fix $j \geq i + 1$ and $a > 0$. The vector $\bv^{M}_{j,a}$ has at most as many $1$s as $\bv^{L}_{j,a}$ and thus $ \sortup{\bv^{L}_{j,a}}\coordom \bv^{M}_{j,a}$.
\end{claim}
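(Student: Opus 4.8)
The plan is to verify the two assertions of the claim in turn, leaning on the structure already set up for part (g). \textbf{First I would unpack what a one of each vector counts.} Every $\bx \in W \cup C$ lies in a zero-region of $h$ on $\ell$, so an entry $\bv^{L}_{j,a}(\bx) = 1$ is exactly a violated edge $(\bx - a\be_j, \bx)$ of $h$ — that is, $\bx \in W' \cup C'$ — whose $\chi$-color is $0$; hence the number of ones of $\bv^{L}_{j,a}$ is the number of $0$-colored edges among $\{(\bx - a\be_j,\bx) : \bx \in W' \cup C'\}$, a set of size $|W'| + |C'|$. Likewise, every $\bx \in W \cup Z$ lies in a zero-region of $i\circ h$, so $\bv^{M}_{j,a}(\bx) = 1$ is exactly a violated edge $(\bx-a\be_j,\bx)$ of $i \circ h$ with $\bx \in U$ and $\chi''(\bx - a\be_j,\bx) = 0$; hence the number of ones of $\bv^{M}_{j,a}$ is the number of $0$-colored edges (under $\chi''$) among $\{(\bx - a\be_j,\bx) : \bx \in U\}$.

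\textbf{Next I would bound the count.} By the definition of $\chi''$ on these $\ell$-to-$\ell''$ edges, the colors $(\chi''(\bx-a\be_j,\bx) : \bx \in U \cup J)$ are a rearrangement — the down-sort — of $(\chi(\bx-a\be_j,\bx) : \bx \in W' \cup C')$; this uses $|U \cup J| = |W'| + |C'|$, which is exactly where \Cref{clm:sort-violations}, applied to $h|_{\ell''}$ and $h|_{\ell}$, enters to guarantee $|U| \le |W'| + |C'|$ so that $J$ can be chosen. In particular the number of $0$'s among $\{\chi''(\bx-a\be_j,\bx) : \bx\in U\cup J\}$ equals the number of $0$'s among $\{\chi(\bx-a\be_j,\bx) : \bx\in W'\cup C'\}$. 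Since $U \subseteq U \cup J$, passing to $U$ can only drop $0$-colored edges, so the number of ones of $\bv^{M}_{j,a}$ is at most the number of ones of $\bv^{L}_{j,a}$, which is the first assertion.

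\textbf{Finally the coordinate domination.} The two vectors have the same length $|W| + k$, since $|Z| = k = |C|$. Write $q$ for the number of ones of $\bv^{L}_{j,a}$, so $\sortup{\bv^{L}_{j,a}}$ is $0$ on its first $|W|+k-q$ coordinates and $1$ on the last $q$. By the preceding claim $\bv^{M}_{j,a}$ is already sorted increasing, so it is $0$ on its first $|W|+k-q'$ coordinates and $1$ on the last $q'$, where $q' \le q$ is its number of ones. Since $q' \le q$, every coordinate on which $\bv^{M}_{j,a}$ equals $1$ is among the last $q$ coordinates, where $\sortup{\bv^{L}_{j,a}}$ also equals $1$; hence $\sortup{\bv^{L}_{j,a}} \coordom \bv^{M}_{j,a}$.

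\textbf{The main obstacle} here is not depth but bookkeeping: one must correctly translate each boolean entry into a violated edge together with its color, and confirm that the down-sort defining $\chi''$ preserves the color multiset on $U \cup J$, so that restricting to the sub-interval $U$ can only lose $0$-colored edges. Everything else is immediate once the index lengths are matched via $|Z|=|C|=k$ and the monotonicity of $\bv^{M}_{j,a}$ from the previous claim is invoked. (Note the prefix structure of $U$ inside $U \cup J$ is not even needed for this claim — plain set containment suffices.)
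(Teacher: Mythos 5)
Your proof is correct and fills in, carefully and explicitly, the argument that the paper states in two terse sentences: that $\chi''|_{U\cup J}$ is a color-preserving rearrangement of $\chi|_{W'\cup C'}$ (with $|U|\le|W'|+|C'|$ supplied by \Cref{clm:sort-violations}), so restricting to $U$ can only discard $0$-colored edges, and coordinate domination then follows from the sortedness of $\bv^{M}_{j,a}$ established in the preceding claim. This is the same approach as the paper, just spelled out in full; your side remark that plain set containment (not the prefix structure) suffices for the counting step is accurate.
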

\begin{proof}
	This is precisely how $\chi''$ is defined: it only permutes the colorings on the violations incident on $W \cup C$, and this number can only decrease upon sorting (\Cref{clm:sort-violations} applied to $\chi$ restricted to the edges going from $\ell''$ to $\ell$). 
\end{proof}
In conclusion, 
we can write 
\[
\vv{L^{(3)}_{WC}} = \left(\sum_{j=i+1}^d 	\Phi_{h,\chi}(\bx;j)~~:~~\bx \in W\cup C\right)
\]
as a sum of Boolean vectors, and the above two claims imply that the vector
\[
\vv{M^{(3)}_{WZ}} = \left(\sum_{j=i+1}^d 	\Phi_{(i\circ h),\chi''}(\bx;j)~~:~~\bx\in W \cup Z\right)
\]
is {\em coordinate wise} dominated by the 
sum of the {\em sorted increasing} orders of those Boolean vectors. Defining $\vv{M'^{(3)}_{WZ}}$ to be the sum of the sorted decreasing orders, using~\Cref{lem:sum-of-vectors}, we establish part (g) for $q=3$. Namely, we get
\begin{mdframed}[backgroundcolor=gray!20,topline=false,bottomline=false,leftline=false,rightline=false]
	\begin{equation}\label{eq:M3R3-QCO}
		\exists \vv{M'^{(3)}_{MZ}}:~~\vv{M'^{(3)}_{MZ}} \majorizes \sortup{\vv{L^{(3)}_{WC}}}~~\text{and}~~ \vv{M'^{(3)}_{WZ}} \coordom \vv{M^{(3)}_{WZ}}
	\end{equation}
\end{mdframed}


\subsubsection{Proving (e) and (g) for $q=1$}


\paragraph{Defining $\xi'_{\bx}(S+i,S+i\oplus j)$ for $S\subseteq [i-1]$ and $j \leq i-1$:} We now define the partial coloring $\xi'_{\bx} := \xi^{(i)}_{\bx}$ on all edges $(S+i,S+i\oplus j)$ where $S \subseteq [i-1]$ and $j\leq i-1$ for all $\bx \in [n]^d$. These are exactly the relevant edges for the proof of parts (e) and (g) for $q=1$. Note that the partial coloring $\xi_{\bx} := \xi_{\bx}^{(i-1)}$ is undefined over these edges. 

Fix $S\subseteq [i-1]$, $j \leq i-1$, and a $i$-line $\ell$. We consider the set of $\bx \in \ell$ such that $(S,S\oplus j)$ is influential in $g_{\bx}$ and the set of edges where $(S+i,S+i\oplus j)$ is influential in $g_{\bx}$. As before, the former is a union of two intervals $V = V_L \cup V_R$. Recall the definition of $V$ in \Eqn{V_gx}. Since $(S+i) \circ f$ and $(S+i \oplus j) \circ f$ are both sorted in dimension $i$, the set of $\bx \in \ell$ such that $(S+i,S+i\oplus j)$ is influential forms a single interval which we will call $U$: 
\[
U := \left\{\bx \in \ell \colon g_\bx(S + i) = 1 \text{ and } g_{\bx}(S+i \oplus j) = 0\right\}\text{.}
\]

Again, we have $|U| \leq |V|$ (\Cref{clm:sort-violations} applied to $(S \circ f)|_{\ell}$ and $((S \oplus j) \circ f)|_{\ell}$) and we let $J$ denote the $|V| - |U|$ points directly right of $U$, so that $U \cup J$ is an interval of length $|V|$. We then define 
\[
(\xi'_{\bx}(S+i,S+i\oplus j) \colon \bx \in U \cup J) = \sortdown{\xi_{\bx}(S,S\oplus j) \colon \bx \in V} \text{.}
\]
For all $x \in \ell \setminus (U \cup J)$ we define $\xi'_{\bx}(S+i,S+i\oplus j) = 1$. Note that this is an arbitrary choice since such edges are not influential and so they do not come in to play in the rest of the proof. 

We now have a complete description of $\xi'_{\bx}$ on $(S+i,S+i\oplus j)$ for all $\bx \in [n]^d$. We provide the following illustration for clarity, which is quite similar to the illustration provided in \Sec{eg3} when we defined $\chi''$. The left picture depicts the original colorings, $\xi_{\bx}$, and the relevant functions before applying the sort operator in dimension $i$. The right picture depicts the recoloring, $\xi'_{\bx}$, and the relevant functions after applying the sort operator in dimension $i$. 

\includegraphics*[trim = 50 280 0 20, clip, scale = 0.5]{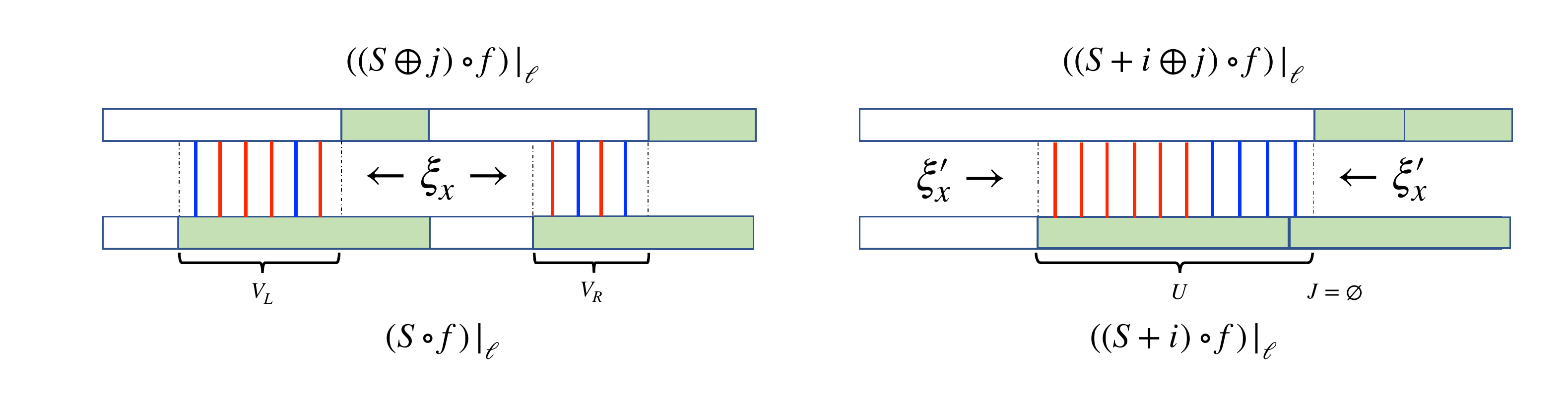}

We now return to our fixed $i$-line $\ell$ and set out to prove (e) and (g) for $q=1$, given the colorings $\xi'_{\bx}$. Recall $g_{\bx}(S) = h(\bx)$ and $g_{\bx}(S+i) = (i \circ h)(\bx)$ and so we can reference the same illustration and our definition of the intervals $W,I,B,C,O,Z,Q$. 

\hspace{1cm}\includegraphics*[trim = 0 210 0 80, clip, scale = 0.4]{figs/semisorted-to-sorted-try2}

\paragraph{Proving (e) for $q=1$:}



Fix $j \leq i-1$ and let $A' := \{\bx \in A \colon g_{\bx}(S \oplus j) = 0\}$, $O' := \{\bx \in O \colon g_{\bx}(S \oplus j) = 0$, and $U := \{\bx \in Q \cup C \cup O \colon g_{\bx}(S+i\oplus j) = 0\}$. As before, $|U| \leq |A'| + |O'|$ (applying \Cref{clm:sort-violations} to $(S \circ f)|_{\ell}$ and $((S \oplus j) \circ f)|_{\ell}$) and we define $J$ to be the $|A'| - |O'|$ points directly to the right of $U$ so that $U \cup J$ is a prefix of $Q\cup C \cup O$ of size $|A'| + |O'|$. From our definition of $\xi'_{\bx}$ from above we have
\[
(\xi'_{\bx}(S+i,S+i\oplus j) \colon \bx \in U \cup J) = \sortdown{\xi_{\bx}(S,S\oplus j) \colon \bx \in A' \cup O'} \text{.}
\]

We now get the following claim.
\begin{claim}
	$\left(I_{g_\bx,\xi'_\bx}^{=j}(S+i)~~:~~ \bx\in Q\cup C\cup O\right)$ is a sorted decreasing vector, and has at most as many ones as the vector $\left(I_{g_\bx,\xi_\bx}^{=j}(S)~~:~~ \bx\in I\cup B\cup O\right)$.
\end{claim}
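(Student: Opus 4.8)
The plan is to unwind the definitions of the colored influences on the block $Q\cup C\cup O$ and read both assertions directly off the construction of $\xi'_\bx$; no new idea is needed beyond careful bookkeeping of the intervals involved. First I would record that, by construction, $Q\cup C\cup O$ is exactly the $1$-set of $(i\circ h)$ restricted to $\ell$, so $g_\bx(S+i) = (i\circ h)(\bx) = 1$ for every $\bx\in Q\cup C\cup O$, and hence
\[
I^{=j}_{g_\bx,\xi'_\bx}(S+i) = \bone\big(g_\bx(S+i\oplus j) = 0\ \textbf{and}\ \xi'_\bx(S+i,S+i\oplus j) = 1\big)\quad\text{for all }\bx\in Q\cup C\cup O.
\]
Since $j\le i-1$ and $i\notin S$, the coordinate $i$ lies in $S+i\oplus j$, so \Cref{clm:sortS} gives that $(S+i\oplus j)\circ f$ is monotone along dimension $i$; therefore on the $i$-line $\ell$ its $0$-set is a prefix of $\ell$, and intersecting with the suffix $Q\cup C\cup O$ shows $U := \{\bx\in Q\cup C\cup O : g_\bx(S+i\oplus j) = 0\}$ is a prefix of $Q\cup C\cup O$. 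Moreover this $U$ is exactly the interval $U$ used in the definition of $\xi'_\bx$, because the condition $g_\bx(S+i) = 1$ there is equivalent to $\bx\in Q\cup C\cup O$.

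Next I would plug in the construction of $\xi'_\bx$: the sequence $(\xi'_\bx(S+i,S+i\oplus j) : \bx\in U\cup J)$ is the down-sorted sequence $\sortdown{\xi_\bx(S,S\oplus j) : \bx\in A'\cup O'}$ read left to right, with $|U\cup J| = |A'|+|O'|\ge|U|$ (the inequality $|U|\le|A'|+|O'|$ being \Cref{clm:sort-violations} applied to $(S\circ f)|_\ell$ and $((S\oplus j)\circ f)|_\ell$). Consequently the colors seen as $\bx$ ranges over $U$ are the first $|U|$ entries of a decreasing $0$-$1$ sequence, i.e.\ $\min(|U|,m)$ ones followed by zeros, where $m := |\{\bx\in A'\cup O' : \xi_\bx(S,S\oplus j) = 1\}|$. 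Substituting into the displayed identity (and using that the influence vanishes off $U$, which sits at the left end of $Q\cup C\cup O$) shows that $\big(I^{=j}_{g_\bx,\xi'_\bx}(S+i) : \bx\in Q\cup C\cup O\big)$ equals $1$ on its leftmost $\min(|U|,m)$ coordinates and $0$ elsewhere; in particular it is sorted decreasing and has exactly $\min(|U|,m)\le m$ ones.

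Finally I would identify $m$ with the number of ones of the comparison vector. Since $I\cup B\cup O = A\cup O$ is the $1$-set of $h$, every $\bx$ there has $g_\bx(S) = h(\bx) = 1$, so $I^{=j}_{g_\bx,\xi_\bx}(S) = \bone\big(g_\bx(S\oplus j) = 0\ \textbf{and}\ \xi_\bx(S,S\oplus j) = 1\big)$, and by the definitions of $A'$ and $O'$ the set $\{\bx\in A\cup O : g_\bx(S\oplus j) = 0\}$ is precisely $A'\cup O'$. Hence $\big(I^{=j}_{g_\bx,\xi_\bx}(S) : \bx\in I\cup B\cup O\big)$ has exactly $m$ ones, and comparing with the previous paragraph proves the claim (and, summed over $j$ via \Cref{lem:sum-of-vectors}, feeds into part (e) for $q=1$). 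The only delicate point — and where semisortedness is used — is verifying that the $U$ and $A'\cup O'$ appearing here are literally the intervals fixed when $\xi'_\bx$ was defined; once that is pinned down there is no real obstacle.
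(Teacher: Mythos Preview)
Your proposal is correct and follows essentially the same approach as the paper's proof: both unwind the identity $I^{=j}_{g_\bx,\xi'_\bx}(S+i) = \bone(g_\bx(S+i\oplus j)=0 \ \textbf{and}\ \xi'_\bx(S+i,S+i\oplus j)=1)$ on $Q\cup C\cup O$ and then read off the two assertions from the construction of $\xi'_\bx$ as a down-sort of the $\xi_\bx$-colors over $A'\cup O'$. Your write-up is simply more explicit where the paper says ``by design,'' in particular in pinning down that the $U$ and $A'\cup O'$ here coincide with the intervals used when $\xi'_\bx$ was defined, and in quantifying the number of ones as $\min(|U|,m)$.
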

\begin{proof}
	Take $\bx_i < \bx'_i$ in $Q\cup C\cup O$. Note that $g_\bx(S+i) = g_{\bx'}(S+i) = 1$ by definition $Q\cup C\cup O$.
	Thus,
	\[
	I_{g_\bx,\xi'_\bx}^{=j}(S + i) = \bone\left(g_\bx(S + i\oplus j) = 0 ~\textbf{and}~\xi'_\bx(S+i,S+i\oplus j) = 1 \right)
	\]
	and
	\[
	I_{g_\bx,\xi_\bx}^{=j}(S) = \bone\left(g_\bx(S\oplus j) = 0 ~\textbf{and}~\xi_\bx(S,S\oplus j) = 1 \right)
	\]
	By design of the $\xi'_\bx$'s, the first vector is sorted decreasing on $Q\cup C\cup O$ (it takes value $0$ after $U$).
Also by design, the number of ones in the latter vector can only be larger since we obtain $\xi'$ by taking a permutation and possibly discarding some ones (the ones corresponding to $J$).
\end{proof}
Observing that 
\[
\vv{L^{(1)}_{IBO}} = \left(\sum_{j=1}^{i-1} I_{g_\bx,\xi_\bx}^{=j}(S)~:~\bx\in I\cup B\cup O\right) ~~\textrm{and}~~\vv{M^{(1)}_{QCO}} = \left(\sum_{j=1}^{i-1} I_{g_\bx,\xi'_\bx}^{=j}(S+i)~:~\bx\in Q\cup C\cup O\right)
\]
we see that the latter vector is coordinate-wise dominated by a vector which is a sum of \emph{sorted decreasing} versions of Boolean vectors which add up to the former one. Defining $\vv{M'^{(1)}_{QCO}}$ to be the sum of the sorted decreasing orders, using~\Cref{lem:sum-of-vectors}, we establish part (e) for $q=3$. Namely, we get 
\begin{mdframed}[backgroundcolor=gray!20,topline=false,bottomline=false,leftline=false,rightline=false]
	\begin{equation}\label{eq:M1R1-QCO}
		\exists \vv{M'^{(1)}_{QCO}}:~~\vv{M'^{(1)}_{QCO}} \majorizes \sortdown{\vv{L^{(1)}_{IBO}}}~~\text{and}~~ \vv{M'^{(1)}_{QCO}} \coordom \vv{M^{(1)}_{QCO}} \text{.}
	\end{equation}
\end{mdframed}

\paragraph{Proof of Part (g) for $q=1$:}


A similar argument but working with the zeros establishes part (g) for $q=1$. Recall the definition of the sets $W$, $C$, and $Z$. Let $W' = \{\bx \in W \colon g_{\bx}(S\oplus j) = 1\}$, $C' = \{\bx \in C \colon g_{\bx}(S \oplus j) = 1\}$, and $U = \{x \in W \cup Z \colon g_{\bx}(S+i\oplus j) = 1\}$. As before $|U| \leq |W'| + |C'|$ (applying \Cref{clm:sort-violations} to $((S \oplus j) \circ f)|_{\ell}$ and $(S \circ f)|_{\ell}$) and we define $J$ to be the set of $|W'|+|C'| - |U|$ points directly to the right of $U$ so that $U \cup J$ is an interval of size $|W'| + |C'|$. Note that $U$ is a suffix of $W \cup Z$ and $J$ is a prefix of $Q \cup C \cup O$. 

From our definition of $\xi'_{\bx}$ above, made with the set $S \oplus j$,  we have
\[
(\xi'_{\bx}(S+i,S+i\oplus j) \colon \bx \in U \cup J) = \sortdown{\xi_{\bx}(S,S\oplus j) \colon \bx \in W' \cup C'} \text{.}
\]

\begin{claim}
	$\left(I_{g_\bx,\xi'_\bx}^{=j}(S+i)~~:~~ \bx\in W \cup Z\right)$ is a sorted increasing vector, and has at most as many ones as the vector $\left(I_{g_\bx,\xi_\bx}^{=j}(S)~~:~~ \bx\in W \cup C\right)$.
\end{claim}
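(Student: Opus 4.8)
The plan is to mirror, almost verbatim, the structure of the just-completed proof of parts (e) and (g) for $q=3$, but now working with the tracker-function influences rather than the $\Phi$-influences. The claim asserts a structural fact about the Boolean vector $\left(I_{g_\bx,\xi'_\bx}^{=j}(S+i) : \bx \in W \cup Z\right)$: that it is sorted \emph{increasing} along $\ell$, and that its number of ones is dominated by that of $\left(I_{g_\bx,\xi_\bx}^{=j}(S) : \bx \in W \cup C\right)$. The first part will follow from the semisortedness of $(S \oplus j)\circ f$ together with the way $\xi'_\bx$ was just defined on the edges $(S+i, S+i\oplus j)$ — namely as a down-sort of $\xi_\bx$ on the relevant interval. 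The second part will follow from \Cref{clm:sort-violations}, applied to $((S\oplus j)\circ f)|_\ell$ and $(S\circ f)|_\ell$, exactly as in the analogous claim for $q=3$.

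Concretely, I would first recall that $g_\bx(S+i) = (i\circ h)(\bx)$ and $g_\bx(S+i\oplus j) = (i \circ ((S\oplus j)\circ f))(\bx) = ((S\oplus j)\circ f)(\bx)$ after the $i$-sort (using that $i\in S+i$ so the $i$-th coordinate is already sorted). For $\bx \in W\cup Z$ we have $g_\bx(S+i)=1$, so
\[
I_{g_\bx,\xi'_\bx}^{=j}(S+i) = \bone\bigl(g_\bx(S+i\oplus j)=0 \ \textbf{and}\ \xi'_\bx(S+i,S+i\oplus j)=1\bigr),
\]
while for $\bx\in W\cup C$, $g_\bx(S)=0$, so $I_{g_\bx,\xi_\bx}^{=j}(S)$ is the indicator of a violation of $(S,S\oplus j)$ in $g_\bx$ combined with the matching color. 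Since $(S+i)\circ f$ and $(S+i\oplus j)\circ f$ are sorted in dimension $i$, the set $U = \{\bx \in W\cup Z : g_\bx(S+i\oplus j)=1\}$ is a \emph{suffix} of $W\cup Z$, and on $U\cup J$ the coloring $\xi'_\bx$ was defined to be $\sortdown{\xi_\bx(S,S\oplus j) : \bx \in W'\cup C'}$. Reading this down-sorted $0$-$1$ string from left to right along the suffix $U$ of $W\cup Z$ makes the indicator vector sorted increasing (the ones land at the right end), proving the first part. For the cardinality bound: the number of ones in $\left(I_{g_\bx,\xi'_\bx}^{=j}(S+i) : \bx\in W\cup Z\right)$ is at most $|U|$ restricted to color-$1$ edges, which equals the number of color-$1$ entries among $W'\cup C'$ minus those discarded to $J$; since $|U|\le |W'|+|C'|$ by \Cref{clm:sort-violations} and $\xi'_\bx$ on $U\cup J$ is a permutation of $\xi_\bx$ on $W'\cup C'$ with some ones possibly dropped, the count can only decrease. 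This gives the second part.

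The one point requiring slight care — and the main (minor) obstacle — is the directionality bookkeeping: here we are working ``with the zeros'' (the complementary side to part (e) for $q=1$), so the relevant perspective is that of the set $S\oplus j$ rather than $S$, and one must check that the definition of $\xi'_\bx$ made earlier ``from the perspective of the set $S\oplus j$'' genuinely yields a down-sort on $W'\cup C'$ in the correct left-to-right orientation so that the resulting indicator is sorted \emph{increasing} on the suffix $U$ of $W\cup Z$. This is exactly the mirror of the argument already carried out, and the semisortedness of $(S\oplus j)\circ f$ guarantees that $W'$ and $C'$ are suffixes of $W$ and $C$ and that the violation pattern in $g_\bx$ along $\ell$ is ``block-structured'' as required; once that orientation is pinned down, the argument is identical to \Cref{clm:A_sorted} and its $q=1$ analogue. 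I expect no genuine new difficulty beyond matching conventions.
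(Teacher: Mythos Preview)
Your overall plan is the right one and tracks the paper's argument, but you have a sign error that flips the entire computation. For $\bx \in W \cup Z$ we have $g_\bx(S+i) = (i\circ h)(\bx) = 0$, not $1$: recall $W\cup Z$ is precisely the zero region of $i\circ h$ on $\ell$. Consequently the correct expression is
\[
I_{g_\bx,\xi'_\bx}^{=j}(S+i) = \bone\bigl(g_\bx(S+i\oplus j)=1 \ \textbf{and}\ \xi'_\bx(S+i,S+i\oplus j)=0\bigr),
\]
with the color condition $=0$ (matching $g_\bx(S+i)$), not $=1$ as you wrote. You actually recognize at the end of your proposal that you are ``working with the zeros,'' but the displayed formula and the sentence ``For $\bx\in W\cup Z$ we have $g_\bx(S+i)=1$'' directly contradict this.

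This matters for the sorted-increasing claim. The coloring $\xi'_\bx$ on $U\cup J$ is the \emph{down}-sort of $\xi_\bx$ on $W'\cup C'$, so reading left to right on $U\cup J$ the color string is $1$'s followed by $0$'s. Since $U$ is a suffix of $W\cup Z$ and $J$ lies to its right in $Q\cup C\cup O$, the restriction of the indicator $\bone(\xi'_\bx = 0)$ to $U$ is sorted increasing, and it vanishes on $(W\cup Z)\setminus U$; this gives the sorted-increasing conclusion. Your sentence ``the ones land at the right end'' is correct for the \emph{indicator}, but your stated reason (reading the down-sorted color string) would, under your erroneous formula with $\xi'_\bx = 1$, give exactly the opposite monotonicity. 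The cardinality argument you sketch is fine once the color condition is corrected: the number of $0$-colored entries on $U$ is at most the number of $0$-colored entries on $W'\cup C'$ (it is a permutation with some entries pushed into $J$ and hence discarded), which is exactly the count for the second vector.
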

\begin{proof}
	Take $\bx_i < \bx'_i$ in $W \cup Z$. Note that $g_\bx(S+i) = g_{\bx'}(S+i) = 0$ by definition $W \cup Z$.
	Thus,
	\[
	I_{g_\bx,\xi'_\bx}^{=j}(S + i) = \bone\left(g_\bx(S + i\oplus j) = 1 ~\textbf{and}~\xi'_\bx(S+i,S+i\oplus j) = 0 \right)
	\]
	and
	\[
	I_{g_\bx,\xi_\bx}^{=j}(S) = \bone\left(g_\bx(S\oplus j) = 1 ~\textbf{and}~\xi_\bx(S,S\oplus j) = 0 \right)
	\]
	By design of the $\xi'_\bx$'s, the first vector is sorted increasing on $W \cup Z$.
Also by design, the number of ones in the latter vector can only be larger since we obtain $\xi'$ by taking a permutation and possibly discarding some ones (the ones corresponding to $J$).
\end{proof}
Observing that 
\[
\vv{L^{(1)}_{WC}} = \left(\sum_{j=1}^{i-1} I_{g_\bx,\xi_\bx}^{=j}(S)~:~\bx\in W \cup C\right) ~~\textrm{and}~~\vv{M^{(1)}_{WZ}} = \left(\sum_{j=1}^{i-1} I_{g_\bx,\xi'_\bx}^{=j}(S+i)~:~\bx\in W \cup Z\right)
\]
we see that the latter vector is coordinate-wise dominated by a vector which is a sum of \emph{sorted increasing} versions of Boolean vectors which add up to the former one. Defining $\vv{M'^{(1)}_{WZ}}$ to be the sum of the sorted increasing orders, using~\Cref{lem:sum-of-vectors}, we establish part (e) for $q=3$. Namely, 
\begin{mdframed}[backgroundcolor=gray!20,topline=false,bottomline=false,leftline=false,rightline=false]
	\begin{equation}\label{eq:M1R1-QCO}
		\exists \vv{M'^{(1)}_{WZ}}:~~\vv{M'^{(1)}_{WZ}} \majorizes \sortdown{\vv{L^{(1)}_{WC}}}~~\text{and}~~ \vv{M'^{(1)}_{WZ}} \coordom \vv{M^{(1)}_{WZ}} \text{.}
	\end{equation}
\end{mdframed}

\subsubsection{Proving (f) and (h):}

Let us now prove part (f) and (h). Note, at this point, $\xi'_\bx$ is fully defined on all pairs $(S, S\oplus j)$ for $S\subseteq [i]$ and $j \leq i$. We don't have the freedom to redefine.
However, we see that the definition we made in~\eqref{eq:xi1} and~\eqref{eq:xi2} suffices. Let us recall what we want to establish.

\begin{enumerate}
			\item[(f)] $\exists \vv{L'^{(2)}_{IBO}}$ such that $\vv{L'^{(2)}_{IBO}} \majorizes \sortdown{\vv{L^{(2)}_{IBO}}}$ and $\vv{L'^{(2)}_{IBO}} \coordom \vv{M^{(2)}_{QCO}}$.
		\item[(h)] $\exists \vv{L'^{(2)}_{WC}}$ such that $\vv{L'^{(2)}_{WC}} \majorizes \sortup{\vv{L^{(2)}_{WC}}}$ and $\vv{L'^{(2)}_{WC}} \coordom \vv{M^{(2)}_{WZ}}$.
\end{enumerate}

We remind the reader that  $\vv{L^{(2)}}(\bx) = \Phi_{h,\chi}(\bx;i)$ for all $\bx\in \ell$ and the coloring was defined as follows:
\begin{equation}
	\textrm{If}~~\vv{L^{(2)}_{IB}} \equiv \bone, ~~\textrm{then}~~ \xi'_\bx(S, S+i) = 1~~\forall \bx \in I\cup B\cup C \notag
\end{equation}
otherwise, 
\begin{equation}
	\textrm{we have}~~\vv{L^{(2)}_C} \equiv \bone, ~~\textrm{and so}~~ \xi'_\bx(S, S+i) = 0~~\forall \bx \in I\cup B \cup C \notag
\end{equation}

\noindent
We remind the reader that $\vv{M^{(2)}}(\bx) = I^{=i}_{g_\bx, \xi'_\bx}(S+i)$ and therefore this is $1$
iff $g_\bx(S + i) \neq g_\bx(S)$ and $\xi'_\bx(S, S+i) = g_\bx(S+i)$. The former implies $\bx \in Z\cup C$. 

Suppose we are in the first case. Then, $\vv{M^{(2)}}(\bx) = 1$ if and only if $\bx \in C$.
Since $\vv{L^{(2)}_{IB}} \equiv \bone \coordom \vv{M^{(2)}_{QC}}$, we can set $\vv{L'^{(2)}_{IBO}}$ to be the vector that is $1$s in $I\cup B$ and $0$'s in $O$. This establishes (f).
To establish (h), we observe that $\vv{M^{(2)}_{WZ}}$ is the zero vector, and thus we can choose $\vv{L'^{(2)}_{WC}}$ to be $\sortup{\vv{L^{(2)}_{WC}}}$.

Suppose we are in the second case. Then, $\vv{M^{(2)}}(\bx) = 1$ if and only if $\bx \in Z$.
Since $\vv{L^{(2)}_{C}} \equiv \bone \coordom \vv{M^{(2)}_{Z}}$, we can set $\vv{L'^{(2)}_{WC}}$ to be the vector that is $1$s in $C$ and $0$'s in $W$. This establishes (h).
To establish (f), we observe that $\vv{M^{(2)}_{QCO}}$ is the zero vector, and thus we can choose $\vv{L'^{(2)}_{IBO}}$ to be $\sortdown{\vv{L^{(2)}_{IBO}}}$.

In either case, we have established (f) and (h), and thus completed the proof.


\section{The Tester and it's Analysis: Proof of~\Cref{thm:mono-testing}} \label{sec:tester}

With the isoperimetric theorem of \Thm{dir-tal} in place, we can now design and analyze
the monotonicity tester for Boolean hypergrid functions. This section closely
follows the analogous analysis in~\cite{KMS15}, and will lift certain notions from that paper. 
We do have to make slight adaptations to various
arguments therein to account for the hypergrid domain. 

We first describe the path tester for the hypergrid.

\begin{figure}[h]
\begin{framed}
    \noindent \textbf{Input:} A Boolean function $f: [n]^d \to \{0,1\}$. 

    \smallskip

    \begin{asparaenum}
        \item Choose $k\in_R \{0,1,2,\ldots,\ceil{\log d}\}$ uniformly at random. Set $\tau := 2^k$.
        \item Choose $\bx\in [n]^d$ uniformly at random. Denote $\bx = (\bx_1, \bx_2, \ldots, \bx_d)$.
        \item Pick a uniform random subset $R\subseteq [d]$ of $\tau$ coordinates.
        \item For each $r \in R$, pick uar value $c_r \in [n] \setminus \{\bx_r\}$. 
        \item Generate $\bz$ as follows. For every $r \in [d]$, if $r \in R$ \emph{and} $c_r > \bx_r$, set $\bz_r = c_r$.
        Else, set $\bz_r = \bx_r$.
        \item If $f(\bz) > f(\bx)$, REJECT.
    \end{asparaenum} 
\end{framed}
\caption{\small{\textbf{Path Tester for Hypergrid Functions}}}
\label{fig:alg}
\end{figure}
Clearly, the tester doesn't reject any monotone function.
Our main theorem regarding the tester follows. 
 A standard boosting argument gives us the $\widetilde{O}(\eps^{-2} n\sqrt{d})$-query Boolean monotonicity tester on $[n]^d$, proving~\Cref{thm:mono-testing}.

\begin{theorem} \label{thm:tester} If $f$ is $\eps$-far from monotone,
then the path tester for hypergrid functions rejects with probability $\Omega(\frac{\eps^2}{n\sqrt{d}\log^5(nd)})$.
\end{theorem}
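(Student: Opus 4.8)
\textbf{Proof plan for \Cref{thm:tester}.} The plan is to follow the KMS path-tester analysis~\cite{KMS15}, adapted to the fully augmented hypergrid. The key conceptual step is to observe that the hypergrid path tester, conditioned on the scale $\tau$ and the subset $R$ of coordinates it walks on, behaves exactly like a hypercube path tester on a random ``directional hypercube'' embedded in the fully augmented hypergrid. More precisely, fix a scale $k$ and $\tau = 2^k$. For each coordinate $r$, the choice of $c_r$ (when $r \in R$) together with the comparison $c_r > \bx_r$ versus $c_r < \bx_r$ determines whether the walk can move ``up'' in coordinate $r$; averaging over $\bx_r$ and $c_r$, each coordinate contributes an up-move with some probability bounded below by $\Omega(1/n)$ (since for a uniform starting value and a uniform distinct target, the target exceeds the start with probability $\geq (n-1)/(2n) \ge 1/4$, and conditioned on that the pair $(\bx_r, c_r)$ is a roughly uniform ``up-edge'' of the $r$-th augmented line). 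So one should first set up the coupling: the tester picks a uniform augmented-hypercube direction (one up-edge per active coordinate in $R$) and performs a directed random walk of expected length $\Theta(\tau \cdot \text{(fraction of up-moves)})$ inside it, rejecting if it crosses a violating edge from a $1$ to a $0$.

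Next I would invoke the robust directed Talagrand inequality \Thm{dir-tal} with the coloring $\chi$ chosen adversarially by the KMS ``good subgraph'' machinery: from \Thm{dir-tal} one extracts, for the right scale $\tau$, a subgraph $G$ of the violating edges of the fully augmented hypergrid that is (i) large, with $\Omega(\eps/(\tau \log n))$ edges per vertex on average at scale $\tau$ after the usual scale-decomposition, and (ii) ``regular'' in the KMS sense — every vertex has bounded degree in $G$, and the edges are spread across $\approx \tau$ distinct coordinates per vertex. The coloring $\chi$ is exactly what lets us discard high-degree vertices: color a violating edge by the endpoint we do \emph{not} want to charge, so that $\Phi_{f,\chi}(\bx)$ only counts edges at $\bx$ in ``good'' directions. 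This is the one place the robust (colorful) version is essential rather than \Thm{dir-tal-uncolored}, and it is the reason the statement of \Cref{thm:tester} has the particular polylog$(nd)$ dependence. I would carry out the scale-bucketing argument (pigeonhole over the $O(\log d)$ scales $k$) to fix a single scale at which a $1/\log d$ fraction of the Talagrand objective is concentrated, exactly as in KMS, losing one $\log d$ factor there and picking up the remaining polylog factors from the good-subgraph extraction.

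Finally I would bound the rejection probability of the walk inside the random directional hypercube. Conditioned on the good scale and on landing the start vertex $\bx$ on a $1$-endpoint of a good subgraph edge (an $\Omega(\eps^2/(n\sqrt d\,\polylog))$-probability event after accounting for the $1/n$ loss per coordinate from the hypergrid granularity and the standard $\sqrt\tau$-type calculation relating the number of good edges to the walk length), the directed walk of the appropriate length crosses one of those good edges and ends at a $0$ with constant probability — this is precisely the KMS walk-analysis lemma (their ``persistence'' argument showing a directed walk that starts adjacent to a regular violating subgraph is likely to traverse a violating edge and stay on the $0$-side), which, as the authors note, ports to the hypergrid walk once we view it as a hypercube walk on the directional cube. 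Combining the three pieces — the $1/n$ per-coordinate loss, the $1/\log d$ scale loss, the good-subgraph polylog overhead, and the Talagrand lower bound $\Omega(\eps/\log n)$ entering quadratically through the two conditioning events (start on a $1$, then walk ends on a $0$) — yields the claimed $\Omega(\eps^2 / (n\sqrt d \log^5(nd)))$ bound. The main obstacle I anticipate is item (i) of the coupling step: carefully verifying that the hypergrid tester's choice of $(R, \{c_r\})$ really does induce the uniform directional-hypercube-plus-directed-walk distribution with only an $O(1/n)$ loss per coordinate, and that the ``good subgraph'' of the augmented hypergrid restricts correctly to these random directional hypercubes (so that the per-vertex regularity survives the restriction with high probability) — this is where the KMS hypercube argument needs the most genuine adaptation, and it is presumably where the extra factor of $n$ (versus $\sqrt d$ alone) in the final query bound comes from.
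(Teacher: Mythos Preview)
Your plan follows essentially the same architecture as the paper's proof: exact coupling to a random embedded hypercube, extraction of a good subgraph from \Thm{dir-tal}, persistence, and a capturing-event calculation. However, your accounting of where the single factor of $n$ comes from is off, and since you flag that very step as ``the main obstacle,'' it is worth correcting.

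The coupling step is \emph{exact}, not approximate: the distribution of $(\bx,\bz)$ from the tester equals the distribution obtained by first sampling, in each coordinate, a uniform pair $a_i<b_i$ from $[n]$, taking the resulting hypercube $H=\prod_i\{a_i,b_i\}$, picking $\bx$ uar in $H$, and doing a $\tau$-step lazy up-walk in $H$. There is no $1/n$ loss here, and no issue with the good subgraph ``restricting correctly'' to random directional hypercubes --- the analysis never conditions on a particular $H$. So the obstacle you anticipate does not exist.

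The factor of $n$ enters exactly once, in the capturing event for a fixed good-subgraph edge $e=(\bx,\by)$ along coordinate $i$: conditioned on starting at $\bx$ and on $i\in R$, you need $c_i=\by_i$, which costs $1/(n-1)$. It is \emph{not} a per-coordinate loss (which would be catastrophic, $n^{-\tau}$). The full capturing probability is then $n^{-d}\cdot(\tau_\sigma/d)\cdot n^{-1}\cdot\Omega(1)$, where the $\Omega(1)$ packages together avoiding other viable coordinates and persistence of $\by$. Multiplying by the number of viable edges $\Omega(\sigma k n^d)$ and using $\tau_\sigma=\Theta(\sigma\sqrt d)$ gives $\Omega(\sigma^2 k/(n\sqrt d))=\Omega((\sigma\sqrt k)^2/(n\sqrt d))$, and $\sigma\sqrt k=\Theta(\eps/\log^2(nd))$ from the good-subgraph extraction supplies the $\eps^2$. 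The scale $\tau_\sigma$ is not found by pigeonhole over $\log d$ buckets; it is determined by $\sigma$, and the $1/\log d$ loss is simply the probability the tester happens to choose that power of two. Two edge cases (total influence already large; $\sigma$ too small) are handled directly by edge-testing.
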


\subsection{Directed random walks, influences, and persistence} \label{sec:dirwalk}

One can think of the above tester as obtaining $\bz$ by a lazy directed random walk of $\tau$ steps from a uniform random $\bx$.
Note that in some steps we may not move at all; these correspond to coordinates $r\in R$ such that $c_r \leq \bx_r$.
It will be convenient to define an alternate, but equivalent process that generates the pair $(\bx,\bz)$. 

\begin{figure}[h]
\begin{framed}
    \noindent \textbf{Input:} A length parameter $\tau$.

    \smallskip

    \begin{asparaenum}
        \item In each dimension $i$, sample a uniform random pair $a_i < b_i$ from $[n]$.
        \item Let $H$ be the hypercube formed by $\prod_{i=1}^d \{a_i, b_i\}$.
        \item Pick a uar point $\bx$ from $H$.
        \item Pick a uar subset $R$ of $\tau$ coordinates, permuted randomly.
        \item Generate $\bz$ as follows. For every $r \in [d]$, if $r \in R$ and $\bx_r = a_r$, set $\bz_r = b_r$. Else, set $\bz_r = \bx_r$.
    \end{asparaenum} 
\end{framed}
\caption{\small{\textbf{Directed random walks, by sampling hypercubes}}}
\label{fig:dirwalk}
\end{figure}

This process generates walks by first sampling a random hypercube $H$, and then
doing a lazy directed walk on $H$. We first observe that conditioned on the walk length $\tau$, the distribution
of $(\bx, \bz)$ pairs
generated by the path tester and the above process are identical. 

\begin{observation} \label{obs:walk-distributions} Fix $\tau \in [d]$. Let $\cD_1,\cD_2$ denote the distributions over $[n]^d \times [n]^d$ described in \Fig{alg} and \Fig{dirwalk}, respectively, conditioned on the walk length $\tau$. Then, for any pair $(\bx,\bz)$ where $\bx \preceq \bz$, we have $\Pr[(\bx,\bz) \sim \cD_1] = \Pr[(\bx,\bz) \sim \cD_2]$. \end{observation}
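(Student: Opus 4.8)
The plan is to prove the equality of $\cD_1$ and $\cD_2$ by conditioning on the random set $R$ and then analyzing each coordinate separately, exploiting that once $R$ is fixed both processes produce $(\bx,\bz)$ with mutually independent coordinates. First I would note that in both \Fig{alg} and \Fig{dirwalk} the set $R$ is a uniformly random $\tau$-subset of $[d]$ (the extra random permutation of $R$ in \Fig{dirwalk} is irrelevant here, since the endpoint $\bz$ of the lazy walk does not depend on the order in which coordinates are visited). So it suffices to fix an arbitrary $\tau$-subset $R$ and show that, conditioned on this choice of $R$, the two distributions over $(\bx,\bz)$ agree.

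Conditioned on $R$, both processes generate $(\bx,\bz)$ coordinatewise independently: in \Fig{alg} the coordinates $\bx_i$ are independent and uniform on $[n]$ and the values $c_i$ (for $i\in R$) are drawn independently, while in \Fig{dirwalk} the pairs $(a_i,b_i)$ and the choices $\bx_i\in\{a_i,b_i\}$ are independent across $i$; in either case $(\bx_i,\bz_i)$ is a function of coordinate-$i$ randomness only. Hence it is enough to check that for each coordinate $i$ the marginal law of the pair $(\bx_i,\bz_i)$ is the same in both processes. I would split into two cases. If $i\notin R$, then $\bz_i=\bx_i$ in both, and in \Fig{dirwalk} a short count shows that if $(a_i,b_i)$ is a uniformly random pair with $a_i<b_i$ and $\bx_i$ is uniform on $\{a_i,b_i\}$, then $\bx_i$ is uniform on $[n]$ (each value $v$ lies in exactly $n-1$ of the $\binom{n}{2}$ ordered pairs, and contributes weight $\tfrac12$ each), matching \Fig{alg}. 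If $i\in R$, note that in \Fig{dirwalk} one always has $\bz_i=b_i$ and $\bx_i\in\{a_i,b_i\}$, so $\bx_i\le\bz_i$ always, and for $u<v$ one gets $\Pr[(\bx_i,\bz_i)=(u,v)]=\tfrac{1}{\binom{n}{2}}\cdot\tfrac12=\tfrac{1}{n(n-1)}$ and $\Pr[(\bx_i,\bz_i)=(u,u)]=(u-1)\cdot\tfrac{1}{n(n-1)}$. In \Fig{alg}, for $i\in R$ we have $\bz_i=c_i$ if $c_i>\bx_i$ and $\bz_i=\bx_i$ otherwise, so $\Pr[(\bx_i,\bz_i)=(u,v)]=\tfrac1n\cdot\tfrac1{n-1}$ for $u<v$ and $\Pr[(\bx_i,\bz_i)=(u,u)]=\tfrac1n\cdot\tfrac{u-1}{n-1}$; these agree with \Fig{dirwalk}. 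Also in both processes $\bz_i\ge\bx_i$ for every $i$, so every generated pair satisfies $\bx\preceq\bz$ and no mass is lost by restricting to such pairs.

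Putting this together: conditioned on $R$, the law of $(\bx,\bz)$ is the same product measure in both processes; averaging over the common distribution of $R$ gives $\Pr[(\bx,\bz)\sim\cD_1]=\Pr[(\bx,\bz)\sim\cD_2]$ for all $(\bx,\bz)$ with $\bx\preceq\bz$, which is exactly the claim. The only nontrivial point is the per-coordinate identity in the case $i\in R$, i.e.\ verifying that the ``sample an ordered pair $a_i<b_i$, then pick an endpoint'' mechanism of \Fig{dirwalk} reproduces the ``sample $\bx_i$ uniform, then sample $c_i$ uniform from the remaining values and move up iff larger'' mechanism of \Fig{alg}; this is a one-line calculation rather than a genuine obstacle, so the proof is essentially a careful unpacking of the two definitions.
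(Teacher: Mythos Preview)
Your proof is correct and actually a bit cleaner than the paper's. The paper fixes a target pair $(\bx,\bz)$, sets $S=\{i:\bz_i>\bx_i\}$, and then sums over all $\tau$-subsets $R\supseteq S$, computing for each distribution the resulting expression and checking they match term by term. You instead condition on $R$ first, observe that $R$ has the same law in both processes and that the permutation in \Fig{dirwalk} is irrelevant, and then exploit the product structure to reduce the comparison to a per-coordinate check. Both arguments end up verifying the same one-line identities (e.g.\ that picking a uniform pair $a_i<b_i$ and then a uniform endpoint gives $\bx_i$ uniform on $[n]$, and that for $i\in R$ the pair $(\bx_i,\bz_i)$ has the same joint law), but your decomposition avoids the explicit sum over $R\supseteq S$ and makes the independence do the bookkeeping. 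The paper's approach is a direct computation; yours is more structural. Either is fine here.
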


\begin{proof} 
Let $S = \{i \in [d] \colon \bz_i > \bx_i\}$. The probability from $\cD_1$ is given by
\begin{alignat}{4}
\Pr[(\bx, \bz) \sim \cD_1] &&~=~& \frac{1}{n^d} \cdot \sum_{R\supseteq S~:~|R|=\tau} {d \choose \tau}^{-1} \prod_{i\in S} \Pr[c_i = \bz_i] \prod_{i\in R\setminus S} \Pr[c_i \leq \bx_i] \notag \\
&&~=~& \frac{1}{n^d} \cdot {d \choose \tau}^{-1} \cdot \sum_{R\supseteq S~:~|R|=\tau} \prod_{i\in S} \frac{1}{n-1} \prod_{i\in R\setminus S} \frac{\bx_i-1}{n-1} \text{.}\notag
\end{alignat}

We now compute the probability for $\cD_2$. For $i \in [d]$, let $\cE_i$ be the event that $a_i = \bx_i$ or $b_i = \bx_i$. Note
\[
\Pr[\neg \cE_i] = \frac{\binom{n-1}{2}}{\binom{n}{2}} = \frac{n-2}{n	} ~~\Rightarrow \Pr[\cE_i] = \frac{2}{n} \text{.}
\]
Let $\cE_{\bx}$ denote the event that the first point sampled according to $\cD_2$ is $\bx$. We have
\[
\Pr[\cE_{\bx}] = \prod_{i=1}^d \Pr[\cE_i]\cdot \frac{1}{2^d} = \left(\frac{2}{n}\right)^d \frac{1}{2^d} = \frac{1}{n^d} \text{.}
\]
Let $\cE_\bz$ denote the event that the second point is $\bz$. We have 
\[
\Pr\left[\cE_{\bz} | \cE_{\bx}\right] = \sum_{R\supseteq S~:~|R|=\tau} {d \choose \tau}^{-1} \prod_{i\in S} \Pr[a_i = \bx_i ~\textrm{and}~b_i = \bz_i~|~\cE_\bx] \cdot \prod_{i\in R\setminus S} \Pr[b_i = \bx_i~|~\cE_\bx]
\]
Fix an $i\in S$. We have
\[
\Pr[a_i = \bx_i ~\textrm{and}~b_i = \bz_i~|~\cE_\bx] = \Pr[a_i = \bx_i ~\textrm{and}~b_i = \bz_i~|~\cE_i] = \frac{\Pr[a_i = \bx_i ~\textrm{and}~b_i = \bz_i]}{\Pr[\cE_i]} = \frac{1/{n \choose 2}}{2/n} = \frac{1}{n-1} \text{.}
\]
Now fix an $i\in R\setminus S$. We have 
\[
\Pr[b_i = \bx_i~|~\cE_{\bx}] = \Pr[b_i = \bx_i~|~\cE_i] = \frac{\Pr[b_i = \bx_i]}{\Pr[\cE_i]} = \frac{\left(\frac{\bx_i-1}{{n \choose 2}}\right)}{2/n} = \frac{\bx_i-1}{n-1} \text{.}
\]
Therefore, $\Pr[(\bx,\bz) \sim \cD_2] = \Pr[\cE_\bx] \cdot \Pr[\cE_{\bz} | \cE_{\bx}] = \Pr[(\bx,\bz) \sim \cD_1]$. \end{proof}



\noindent

We now have that the random walk distributions described in \Fig{alg} and \Fig{dirwalk} are equivalent. The former is more convenient to analyze for the final rejection probability, but the latter
perspective allows us to prove various influence bounds by piggybacking on the \cite{KMS15} hypercube analysis.

\begin{definition} \label{def:totalinf} We define the total influence and total negative
influence of $f$ as follows.
$$ I_f = n^{-d} \sum_{\bx \in [n]^d} \sum_{i=1}^d \sum_{c = 1}^n \bone(f(\bx) \neq f(\bx_1, \bx_2, \ldots, \bx_{i-1}, c, \bx_{i+1}, \ldots, \bx_d))$$
$$ I^-_f = n^{-d} \sum_{\bx \in [n]^d: f(\bx) = 1} \sum_{i=1}^d \sum_{c > \bx_i} \bone(f(\bx) \neq f(\bx_1, \bx_2, \ldots, \bx_{i-1}, c, \bx_{i+1}, \ldots, \bx_d))$$
Note these are different from the thresholded influences and for most functions will grow as $n$ grows.
\end{definition}

We can analogously define these influences on the hypercubes $H$ sampled by the process described in \Fig{dirwalk}.
Abusing notation, we will denote these influences as $I_H := I_{f|_{H}}$ and $I_H^- := I^-_{f|_H}$. A simple, yet important claim
relates the expected influence on $H$ to the total influence on the hypergrid. All expectations over $H$ are taken with respect to the distribution described in steps 1-2 of \Fig{dirwalk}.

\begin{claim} \label{clm:inf-hyp} $\EX_H[I_H] = I_f/(n-1)$ and $\EX_H[I^-_H] = I^-_f/(n-1)$. 
\end{claim}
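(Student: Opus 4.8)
\textbf{Proof plan for \Cref{clm:inf-hyp}.}

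The plan is to unwind both sides by linearity of expectation down to individual pairs of points, and exhibit a bijective correspondence between sensitive pairs in the hypergrid and sensitive pairs in a random hypercube. Recall from steps 1--2 of \Fig{dirwalk} that $H = \prod_{i=1}^d \{a_i, b_i\}$ where, for each coordinate $i$ independently, $\{a_i, b_i\}$ is a uniformly random $2$-element subset of $[n]$ (so $a_i < b_i$). I will focus on the first identity $\EX_H[I_H] = I_f/(n-1)$; the negative-influence identity is entirely analogous, restricting in both places to pairs of the form $(\bx,\by)$ with $\bx \prec \by$, $f(\bx) = 1$, $f(\by) = 0$ (or the symmetric orientation), and I will remark at the end that every step goes through verbatim.

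First I would rewrite $I_f$ as a sum over \emph{aligned pairs}. For a dimension $i$ and two distinct values $c < c'$ in $[n]$, call $(\bx, \bx')$ an $(i,c,c')$-pair if $\bx$ and $\bx'$ agree off coordinate $i$, $\bx_i = c$, $\bx'_i = c'$. The definition of $I_f$ in \Cref{def:totalinf} double-counts each sensitive aligned pair (once from each endpoint as the ``base'' point $\bx$ with the other value playing the role of $c$), so
\[
I_f = n^{-d} \cdot 2 \sum_{i=1}^d \ \sum_{\substack{c < c' \in [n]}} \ \sum_{\text{$(i,c,c')$-pairs }(\bx,\bx')} \bone\big(f(\bx) \neq f(\bx')\big).
\]
Next I expand the left side. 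By linearity, $\EX_H[I_H] = \EX_H\big[I_{f|_H}\big]$, and applying \Cref{def:totalinf} to the domain $\{0,1\}^d \cong H$ (here the ``range of values'' in coordinate $i$ is the $2$-element set $\{a_i, b_i\}$), we get that $I_{f|_H}$ is $2^{-d}$ times twice the number of sensitive edges of $H$, i.e.
\[
I_{f|_H} = 2^{-d} \cdot 2 \sum_{i=1}^d \ \sum_{\text{$H$-edges $(\bx,\bx')$ along dim $i$}} \bone\big(f(\bx) \neq f(\bx')\big),
\]
where an $H$-edge along dimension $i$ is a pair differing only in coordinate $i$, taking the two values $a_i$ and $b_i$ there, with the off-$i$ coordinates lying in $H$.

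The crux is then a counting/probability computation: fix a dimension $i$ and a specific $(i,c,c')$-pair $(\bx,\bx')$ of the \emph{full} hypergrid; I compute the probability over the random choice of $H$ that both $\bx, \bx' \in H$ and that $(\bx,\bx')$ is an $H$-edge along dimension $i$. This requires (i) $\{a_i,b_i\} = \{c, c'\}$, which happens with probability $1/\binom{n}{2}$, and (ii) for every $j \neq i$, the common value $\bx_j$ lies in $\{a_j, b_j\}$, which happens with probability $\binom{n-1}{1}/\binom{n}{2} = 2/n$ independently across $j$. So the probability is $\frac{1}{\binom{n}{2}} \cdot \big(\tfrac{2}{n}\big)^{d-1}$. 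Taking expectation of the displayed expression for $I_{f|_H}$ and swapping sums,
\[
\EX_H[I_H] = 2^{-d} \cdot 2 \sum_{i=1}^d \sum_{c < c'} \sum_{\text{$(i,c,c')$-pairs}} \bone\big(f(\bx)\neq f(\bx')\big) \cdot \frac{1}{\binom{n}{2}} \Big(\frac{2}{n}\Big)^{d-1}.
\]
Now I just simplify the scalar: $2^{-d} \cdot \frac{1}{\binom{n}{2}} \cdot \big(\tfrac{2}{n}\big)^{d-1} = 2^{-d} \cdot \frac{2}{n(n-1)} \cdot \frac{2^{d-1}}{n^{d-1}} = \frac{1}{n^d (n-1)}$, which is exactly $n^{-d}/(n-1)$ times the coefficient appearing in the expansion of $I_f$ above. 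Hence $\EX_H[I_H] = I_f/(n-1)$.

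I do not expect a genuine obstacle here; the only thing to be careful about is bookkeeping the factor-of-two double counting \emph{consistently} on both sides (it cancels), and making sure the probability computation conditions correctly --- the events ``$\bx_j \in \{a_j,b_j\}$'' for $j \neq i$ are independent of the event ``$\{a_i,b_i\} = \{c,c'\}$'' because the pairs in distinct coordinates are sampled independently in step 1 of \Fig{dirwalk}. For the negative-influence statement, one restricts attention throughout to oriented sensitive pairs $(\bx,\bx')$ with $\bx \prec \bx'$ and $f(\bx) = 1 > 0 = f(\bx')$; the orientation is preserved under embedding into $H$ (since $c < c'$ becomes $a_i < b_i$), the double-counting in \Cref{def:totalinf} is now a single count from the $1$-valued endpoint, and the identical scalar bookkeeping yields $\EX_H[I^-_H] = I^-_f/(n-1)$.
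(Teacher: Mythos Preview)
Your proposal is correct and follows essentially the same approach as the paper: both arguments use linearity of expectation over aligned pairs/edges of the fully augmented hypergrid and compute the probability $\binom{n}{2}^{-1}(2/n)^{d-1}$ that a fixed such edge survives into the random hypercube $H$. Your version is in fact slightly more careful than the paper's, which glosses over the factor-of-two from double counting (it cancels on both sides anyway).
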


\begin{proof} Consider any $i$-edge $(\bx,\by)$ of the fully augmented hypergrid.
The probability that this edge is present in $H$ can be computed as follows.
Firstly, we need $a_i = \bx_i$ and $b_i = \by_i$. This happens with probability ${n \choose 2}^{-1}$. Then, for all $j \neq i$,
one of $\{a_j,b_j\}$ needs to be $\bx_j$ (note that $\bx_j = \by_j$ since $(\bx,\by)$ is an $i$-edge). This happens with probability $(1-({n-1 \choose 2}/{n \choose 2}))^{d-1} = (2/n)^{d-1}$.

The total number of influential edges in $f$ is $n^d I_f$. Thus, by linearity
of expectation, the expected number of influential edges in $H$
is ${n \choose 2}^{-1} \times (2/n)^{d-1} \times n^d I_f = 2^d I_f/(n-1)$. 

An analogous proof holds for the negative influence. \end{proof}

By using a lemma
of \cite{KMS15}, we can prove that if the total influence of $f$ is too large, then the negative influence
is also large. 

\begin{claim} \label{clm:tot-neg-inf} If $I_f > 9(n-1)\sqrt{d}$, then $I^-_f > (n-1)\sqrt{d}$.
\end{claim}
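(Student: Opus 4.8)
The statement to prove is: if $I_f > 9(n-1)\sqrt{d}$, then $I_f^- > (n-1)\sqrt{d}$. The natural strategy is to pass to the random hypercube $H$ sampled by the process in \Fig{dirwalk} and invoke the analogous hypercube statement from \cite{KMS15}. First I would recall \Clm{inf-hyp}, which gives $\EX_H[I_H] = I_f/(n-1)$ and $\EX_H[I_H^-] = I_f^-/(n-1)$. Thus the hypothesis $I_f > 9(n-1)\sqrt{d}$ translates to $\EX_H[I_H] > 9\sqrt{d}$, and the desired conclusion $I_f^- > (n-1)\sqrt{d}$ is equivalent to $\EX_H[I_H^-] > \sqrt{d}$. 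So it suffices to show that $\EX_H[I_H] > 9\sqrt{d}$ implies $\EX_H[I_H^-] > \sqrt{d}$, where the relevant hypercube here has dimension $d$.

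The bridge is the corresponding hypercube fact from \cite{KMS15}: for a Boolean function $g$ on $\{0,1\}^d$, if the total influence $I_g$ is large (specifically $I_g \geq c\sqrt{d}$ for a suitable constant), then the negative influence $I_g^-$ is at least a constant fraction of $I_g$ — quantitatively, something like $I_g^- \geq I_g - O(\sqrt{d})$, or $I_g^- \geq I_g/9$ once $I_g$ is above the threshold. (This is the standard fact that total influence exceeding $\sqrt{d}$ forces comparably many \emph{violating} edges, since a monotone function has total influence $O(\sqrt{d})$ and each non-monotone edge is a violation.) Applying this pointwise to each sampled hypercube $H$: whenever $I_H \geq 9\sqrt{d}$ we get $I_H^- \geq I_H/9 \geq \sqrt{d}$; and more usefully, in the regime we care about we can use a bound of the form $I_H^- \geq I_H - 8\sqrt{d}$ which holds for all $H$ (trivially when $I_H < 8\sqrt{d}$, and by the \cite{KMS15} lemma otherwise — or one simply quotes the \cite{KMS15} lemma in the linear-loss form that holds unconditionally). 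Taking expectations over $H$ and using linearity:
\[
\EX_H[I_H^-] \geq \EX_H[I_H] - 8\sqrt{d} > 9\sqrt{d} - 8\sqrt{d} = \sqrt{d}.
\]
Multiplying through by $(n-1)$ via \Clm{inf-hyp} gives $I_f^- > (n-1)\sqrt{d}$, as desired.

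The one step requiring care is the exact form of the \cite{KMS15} lemma being invoked and making sure the constant bookkeeping lines up with the constant $9$ in the hypothesis. I would state the \cite{KMS15} lemma precisely (e.g., "there is a constant such that for any $g:\{0,1\}^d\to\{0,1\}$, $I_g^- \geq I_g - c_0\sqrt{d}$" or the Margulis/Talagrand-type inequality they actually prove relating $I_g$ and $I_g^-$) and choose the threshold $9$ to absorb $c_0$; if \cite{KMS15} gives a multiplicative rather than additive relation, I would instead split on whether $I_H \geq 9\sqrt{d}$ and handle the complementary event by noting it contributes nonnegatively to $\EX_H[I_H^-]$ while controlling its contribution to $\EX_H[I_H]$ — this is slightly more delicate because a few hypercubes with huge $I_H$ but small $I_H^-$ could in principle skew the average, so the clean additive-loss form of the lemma is what makes the argument go through without fuss. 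That is the main obstacle: it is not a conceptual one but a matter of citing the right quantitative version of the hypercube influence inequality. Everything else is linearity of expectation plus \Clm{inf-hyp}.
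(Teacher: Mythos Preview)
Your high-level plan is exactly the paper's: reduce to random hypercubes via \Clm{inf-hyp}, so that the hypothesis becomes $\EX_H[I_H] > 9\sqrt{d}$ and the goal becomes $\EX_H[I_H^-] > \sqrt{d}$, and then invoke a KMS hypercube lemma relating $I_H$ and $I_H^-$.

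One correction on the form of the lemma you should use. The additive-loss statement you prefer, $I_g^- \geq I_g - c_0\sqrt{d}$ for all $g:\{0,1\}^d\to\{0,1\}$, is \emph{false}. Take a uniformly random $g$: then $I_g = \Theta(d)$ while roughly half the sensitive edges are violations, so $I_g^- = \Theta(d)$ as well, but $I_g - I_g^- = \Theta(d) \gg \sqrt{d}$. (With the paper's normalization $I_g$ double-counts edges and $I_g^-$ single-counts violations, so in fact $I_g - 2I_g^-$ equals twice the density of \emph{non-violating} sensitive edges, which for a random function is $\Theta(d)$.) So there is no unconditional additive bound to take expectations of.

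The paper instead uses the threshold/multiplicative form you list as your ``alternative'': Theorem~9.1 of \cite{KMS15} gives that $I_H > 6\sqrt{d}$ implies $I_H^- > I_H/3$. Your worry about this route is misplaced: a hypercube with ``huge $I_H$ but small $I_H^-$'' cannot occur, precisely because above the threshold $I_H^- > I_H/3$. The computation is then the straightforward split
\[
9\sqrt{d} < \EX_H[I_H] \leq 6\sqrt{d} + \EX_H\big[I_H \cdot \bone(I_H > 6\sqrt{d})\big] \leq 6\sqrt{d} + 3\,\EX_H[I_H^-],
\]
yielding $\EX_H[I_H^-] > \sqrt{d}$. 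So your ``more delicate'' alternative is in fact the clean one, and is exactly what the paper does.
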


\begin{proof} Theorem 9.1 of \cite{KMS15} asserts that, if $I_H > 6\sqrt{d}$, then $I^-_H > I_H/3$.
(This holds for any Boolean hypercube function.) If $I_f > 9(n-1)\sqrt{d}$, then
by \Clm{inf-hyp}, $\EX_H[I_H] > 9\sqrt{d}$. Hence,
\begin{eqnarray}
9\sqrt{d} < \EX_H[I_H] & = & \Pr[I_H \leq 6\sqrt{d}] \ \EX_H[I_H | I_H \leq 6\sqrt{d}] \ + \ \Pr[I_H > 6\sqrt{d}] \ \EX_H[I_H | I_H > 6\sqrt{d}] \nonumber \\
& < & 6\sqrt{d} + \Pr[I_H > 6\sqrt{d}] \EX_H[3 I^-_H | I_H > 6\sqrt{d}] \leq 6\sqrt{d} + 3\EX_H[I^-_H]
\end{eqnarray}
Hence, $\EX_H[I^-_H] > \sqrt{d}$. By \Clm{inf-hyp}, $I^-_f > (n-1)\sqrt{d}$.
\end{proof}

One of the crucial definitions is that of \emph{persistence}.

\begin{definition} \label{def:persist} A point $\bx \in [n]^d$ is called $\tau$-persistent
if $\Pr_{\bz}[f(\bx) = f(\bz)] \geq 1/2$ where $\bz$ is chosen by a $\tau$-length directed random walk from $\bx$. \end{definition}


\begin{lemma} \label{lem:persist} If $I_f \leq 9(n-1)\sqrt{d}$, then the fraction of vertices that are not $\tau$-persistent is at most $\cper \tau/\sqrt{d}$ (where $\cper$ is an absolute constant). \end{lemma}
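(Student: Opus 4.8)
\textbf{Proof plan for \Cref{lem:persist}.}

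The plan is to reduce the hypergrid persistence bound to the corresponding fact on hypercubes, which is exactly what \cite{KMS15} established, using the sampling perspective of \Fig{dirwalk}. Recall that by \Cref{obs:walk-distributions}, a $\tau$-length directed random walk from a uniform random $\bx \in [n]^d$ can be generated by first sampling a random hypercube $H = \prod_{i=1}^d \{a_i, b_i\}$ (with $a_i < b_i$ a uniform random pair in $[n]$), then picking $\bx$ uniform in $H$, and then performing a $\tau$-step lazy directed walk inside $H$. Crucially, the walk never leaves $H$: the point $\bz$ it reaches always lies in $H$. So whether $\bx$ is $\tau$-persistent is entirely a function of $f$ restricted to $H$ and the position of $\bx$ in $H$. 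Thus, writing $p_\tau(H)$ for the fraction of vertices of $H$ that are \emph{not} $\tau$-persistent with respect to the lazy directed walk on the hypercube $H$, we have that the fraction of non-$\tau$-persistent points of $[n]^d$ equals $\EX_H[p_\tau(H)]$, where the expectation is over the hypercube-sampling distribution in steps 1--2 of \Fig{dirwalk}. (One should double-check the ``lazy'' aspect matches: the walk on $H$ stays put on coordinate $r$ unless $r \in R$ \emph{and} $\bx_r = a_r$, which is precisely the lazy directed walk used in the KMS hypercube analysis — I would cite the relevant persistence lemma in \cite{KMS15} for this exact walk.)

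Next I would invoke the hypercube persistence lemma of \cite{KMS15}: for a Boolean function $g$ on $\{0,1\}^d$ with total influence $I_g$, the fraction of non-$\tau$-persistent points is at most $O(\tau I_g / d)$ (this is their persistence lemma; it follows from the observation that each step of the lazy directed walk flips the value with probability proportional to the average influence, combined with a Markov/union bound over the $\tau$ steps). Applying this with $g = f|_H$ gives $p_\tau(H) \leq c_0 \tau I_H / d$ for an absolute constant $c_0$, where $I_H = I_{f|_H}$ as in the notation set up before \Cref{clm:inf-hyp}. Taking expectations over $H$ and using linearity, $\EX_H[p_\tau(H)] \leq c_0 (\tau/d) \EX_H[I_H]$. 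Now apply \Cref{clm:inf-hyp}, which says $\EX_H[I_H] = I_f/(n-1)$. Under the hypothesis $I_f \leq 9(n-1)\sqrt{d}$ this yields $\EX_H[I_H] \leq 9\sqrt{d}$, so
\[
\EX_H[p_\tau(H)] \leq c_0 \cdot \frac{\tau}{d} \cdot 9\sqrt{d} = 9 c_0 \cdot \frac{\tau}{\sqrt{d}}.
\]
Setting $\cper := 9 c_0$ completes the argument.

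The main thing to be careful about — and the step I'd expect to require the most attention — is making sure the hypercube persistence statement being quoted from \cite{KMS15} is for the \emph{lazy directed} walk with exactly the step distribution induced by \Fig{dirwalk} (uniform random size-$\tau$ subset of coordinates, each coordinate's move being taken only when $\bx_r = a_r$), and that its quantitative form is indeed $O(\tau I_g/d)$ rather than, say, a bound in terms of $I_g^-$ or with a different normalization. If KMS state it in terms of the negative influence $I_g^-$ rather than $I_g$, one would instead route through \Cref{clm:inf-hyp}'s second identity $\EX_H[I_H^-] = I_f^-/(n-1)$ together with the trivial bound $I_f^- \leq I_f \leq 9(n-1)\sqrt{d}$; either way the $\sqrt{d}$ bound on the expected hypercube influence is what drives the final $\tau/\sqrt{d}$. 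Everything else is linearity of expectation and the already-proved \Cref{obs:walk-distributions} and \Cref{clm:inf-hyp}, so there is no new conceptual content — this lemma is purely a transfer of the KMS hypercube persistence bound to the hypergrid via the hypercube-sampling view of the walk.
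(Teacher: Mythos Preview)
Your overall strategy is the right one --- reduce to hypercubes via the sampling view of \Cref{fig:dirwalk}, use a KMS-style per-hypercube bound, and finish with \Cref{clm:inf-hyp} and the hypothesis $I_f \leq 9(n-1)\sqrt{d}$. But there is a genuine gap in the step where you connect hypergrid persistence to within-$H$ persistence.

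You assert that ``whether $\bx$ is $\tau$-persistent is entirely a function of $f$ restricted to $H$ and the position of $\bx$ in $H$'', and hence that the fraction of non-$\tau$-persistent points of $[n]^d$ equals $\EX_H[p_\tau(H)]$. This is false. Persistence of a fixed $\bx$ in the hypergrid is determined by $\Pr_\bz[f(\bx)\neq f(\bz)]$, and by \Cref{obs:walk-distributions} this probability is an \emph{average over all random hypercubes $H$ containing $\bx$} of the within-$H$ value-change probability $q_H(\bx)$. Persistence is then a \emph{threshold} on this average: $\bx$ is non-persistent iff $\EX_{H\ni\bx}[q_H(\bx)] > 1/2$. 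Your quantity $\EX_H[p_\tau(H)]$ is instead $\Pr_{(H,\bx)}[q_H(\bx)>1/2]$, and thresholding does not commute with averaging over $H$. Concretely, a point $\bx$ can have $q_H(\bx)=0.49$ on a $0.97$-fraction of $H$'s and $q_H(\bx)=1$ on the rest; then $\bx$ is non-persistent in the hypergrid but persistent in $97\%$ of the $H$'s, so $\alpha$ can be arbitrarily larger than $\EX_H[p_\tau(H)]$. There is no constant-factor inequality rescuing the identity.

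The fix --- which is what the paper does --- is to avoid the per-$H$ persistence indicator entirely and work with the value-change probability directly. From the definition of non-persistence one gets $\alpha/2 \leq \Pr_{(\bx,\bz)}[f(\bx)\neq f(\bz)]$, and \emph{this} quantity decomposes cleanly as $\EX_H[\Pr_{(\bx,\bz)\sim \cD_{2,H}}[f(\bx)\neq f(\bz)]]$. For each fixed $H$ the inner probability is bounded by $O(\tau I_H/d)$ via a union bound over the $\tau$ steps together with the KMS fact (their Lemma~9.4) that each step of the walk is approximately a uniform random edge of $H$. Taking expectations and applying \Cref{clm:inf-hyp} then gives $\alpha = O(\tau/\sqrt{d})$ exactly as you intended. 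So your endgame is right; you just need to route through $\Pr[f(\bx)\neq f(\bz)]$ rather than through the thresholded $p_\tau(H)$.
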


\begin{proof} We will analyze the random walk using the distribution described in \Fig{dirwalk} and leverage the analysis that \cite{KMS15} use to prove their Lemma 9.3. Let $\alpha$ denote the fraction of $\tau$-non-persistent vertices in the fully augmented hypergrid with respect to our function $f$. Again, let $\cD_1$, $\cD_2$ denote the distributions over pairs $(\bx,\bz)$ described in \Fig{alg}, \Fig{dirwalk}, respectively, conditioned on the walk length $\tau$. For a fixed $H$ sampled in steps 1-2 of \Fig{dirwalk}, let $\cD_{2,H}$ denote the distribution over $(\bx,\bz)$ described in steps 3-5. Using the definition of persistence and \Obs{walk-distributions}, we have
\begin{align} \label{eq:persistence1}
    \frac{\alpha}{2} \leq \Pr_{(\bx,\bz)\sim \cD_1}(f(\bx) \neq f(\bz)) = \EX_H\left[\Pr_{(\bx,\bz)\sim \cD_{2,H}}\left(f(\bx) \neq f(\bz)\right)\right] \text{.}
\end{align}
Let $\widehat{\cD}_{2,H}$ denote the same distribution as $\cD_{2,H}$ except with $R$ being a uar subset of the $0$-coordinates of $\bx$ (recall step 4 of \Fig{dirwalk}). I.e. $\widehat{\cD}_{2,H}$ is the \emph{non-lazy} walk distribution on $H$. Let $\bx = \bx^0, \bx^1, \ldots, \bx^{\tau} = \bz$ be the $\tau$ steps taken on the walk sampled by $\cD_{2,H}$ and let $\bx = \widehat{\bx}^0, \widehat{\bx}^1, \ldots, \widehat{\bx}^{\tau} = \bz$ be the $\tau$ steps taken on the walk sampled by $\widehat{\cD}_{2,H}$. For a fixed $H$ we have
\begin{align} \label{eq:persistence2}
    \Pr_{(\bx,\bz)\sim \cD_{2,H}}\left(f(\bx) \neq f(\bz)\right) \leq \sum_{\ell=0}^{\tau - 1} \Pr\left(f(\bx^{\ell}) \neq f(\bx^{\ell+1})\right) \leq \sum_{\ell=0}^{\tau - 1} \Pr\left(f(\widehat{\bx}^{\ell}) \neq f(\widehat{\bx}^{\ell+1})\right) \text{.}
\end{align}
The first inequality is by a union bound and the second inequality holds because the first walk is lazy and the second is not. More precisely, we can couple the $\tau' \leq \tau$ steps 
of the lazy-random walk where the point actually moves to the first $\tau'$ steps of the second non-lazy walk, and the remaining $\tau-\tau'$ terms of the non-lazy walk can only increase the RHS.

By Lemma 9.4 of \cite{KMS15}, the edge $(\widehat{\bx}^{\ell},\widehat{\bx}^{\ell+1})$ is distributed approximately as a uniform random edge in $H$. In particular, this implies $\Pr\left(f(\widehat{\bx}^{\ell}) \neq f(\widehat{\bx}^{\ell+1})\right) \leq C \cdot 2I_H/d$ for an absolute constant $C$. (Note $2I_H/d$ is the probability of a uniform random edge in $H$ being influential.) Putting \Eqn{persistence1} and \Eqn{persistence2} together yields $\alpha \leq \frac{4 C \tau }{d}\EX_H[I_H]$. By \Clm{inf-hyp} we have $\EX_H[I_H] \leq 9\sqrt{d}$ and so setting $\cper := 36C$ completes the proof. \end{proof}



\subsection{The good subgraph and capturing violations} \label{sec:good-sub}

We now use the isoperimetric theorem of \Thm{dir-tal} to construct a \emph{good subgraph},
in the parlance of~\cite{KMS15}. 

\begin{theorem} \label{thm:good-sub} There exists a bipartite subgraph $G = (X,Y,E)$
of the fully augmented hypergrid with the following properties. 
\begin{asparaitem}
    \item $|X| = \sigma n^d$ or $|Y| = \sigma n^d$. 
    \item Every vertex has degree at most $k$.
    \item For all $\bx \in X$, $f(\bx) = 1$. For all $\by \in Y$, $f(\by) = 0$.
    \item $|E| \geq \sigma k n^d/2$.
    \item $\sigma\sqrt{k} = \Theta(\eps/\log^2(nd))$.
\end{asparaitem}
\end{theorem}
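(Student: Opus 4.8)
\textbf{Proof plan for Theorem~\ref{thm:good-sub} (the good subgraph).}
The plan is to start from the robust directed Talagrand inequality \Thm{dir-tal} and massage the resulting ``per-vertex'' influence data into a clean bipartite graph. First I would fix a coloring $\chi$ of the augmented hypergrid edges that we will choose adversarially at the very end; for now keep it arbitrary. Applying \Thm{dir-tal} gives $\Exp_{\bx}[\sqrt{\Phi_{f,\chi}(\bx)}] = \Omega(\eps/\log n)$. The first clean-up step is a dyadic bucketing: for $t = 0,1,\ldots,\lceil \log d\rceil$ let $B_t = \{\bx : \Phi_{f,\chi}(\bx) \in [2^t, 2^{t+1})\}$. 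Since there are only $O(\log d)$ buckets and $\sum_t (|B_t|/n^d)\sqrt{2^{t+1}} \geq \Exp_\bx[\sqrt{\Phi_{f,\chi}(\bx)}] = \Omega(\eps/\log n)$, there is one bucket, say with parameter $k = \Theta(2^t)$, contributing $\Omega(\eps/(\log n\log d))$; i.e., $(|B_t|/n^d)\sqrt{k} = \Omega(\eps/(\log n \log d))$. Set $\sigma' := |B_t|/n^d$, so $\sigma'\sqrt{k} = \Omega(\eps/\log^2(nd))$, and every $\bx \in B_t$ has $\Phi_{f,\chi}(\bx) \geq k$.

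Next, I would turn the thresholded-influence data into an edge set. For each $\bx \in B_t$ and each of the (at least) $k$ dimensions $i$ in which $\bx$ has a $\chi$-colored $i$-aligned violation (i.e. $\Phi_{f,\chi}(\bx;i)=1$), pick one witnessing violated edge $(\bx,\by_i)$ with $\chi(\bx,\by_i)=f(\bx)$; add it to an edge set $E_0$, directed so that the endpoint paying is $\bx$. Keep exactly $k$ such edges per vertex of $B_t$ (discard extras), so $|E_0| \geq \sigma' k n^d$. Now split $B_t$ according to the value of $f$: let $B^1 = B_t \cap f^{-1}(1)$ and $B^0 = B_t \cap f^{-1}(0)$; one of them, WLOG $B^1$, has size at least $\sigma' n^d/2$, and the edges incident to it number at least $\sigma' k n^d/2$. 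For an edge $(\bx,\by)\in E_0$ with $\bx \in B^1$ we have $f(\bx)=1$ and, because the edge is $\chi$-colored $f(\bx)=1$ and is a violation, $f(\by)=0$; so all these edges go from $1$-valued to $0$-valued points. Take $X := B^1$, let $Y$ be the set of endpoints $\by$ reached, and $E := E_0$ restricted to these. This already gives $f\equiv 1$ on $X$, $f\equiv 0$ on $Y$, $|X| = \sigma' n^d/2$ (rename $\sigma := \sigma'/2$), $|E| \geq \sigma k n^d$, and left-degree exactly $k$. The symmetric case where $B^0$ dominates yields $|Y| = \sigma n^d$ instead, which is why the statement is an ``or''.

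The remaining issue is the \emph{right}-degree bound: a $0$-valued point $\by\in Y$ could be hit by many edges. This is exactly the place the \emph{robustness} of \Thm{dir-tal} is designed for, and it is the main obstacle in the plan. The fix is to choose $\chi$ non-adversarially \emph{before} invoking \Thm{dir-tal}: define $\chi$ on each augmented-hypergrid edge $e=(\bu,\bv)$ (with $\bu \prec \bv$, $f(\bu)=1$, $f(\bv)=0$, else irrelevant) by a degree-capping rule — for each $\by$ with $f(\by)=0$, if $\by$ already has $k$ incident edges colored $0$, color further ones $1$; and symmetrically cap the number of edges colored $1$ at each $1$-valued point. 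Formally one runs a greedy/defect-version of Hall's theorem (or an LP-rounding / max-flow argument) showing such a capped coloring exists. \Thm{dir-tal} applied to \emph{this} $\chi$ still gives the $\Omega(\eps/\log n)$ lower bound, and now the edges $E_0$ selected above automatically satisfy: each $\by\in Y$ receives at most $k$ edges colored $f(\by)$, hence degree $\leq k$ on the $Y$-side, while the $X$-side degree is $k$ by construction. Then the bucketing/splitting steps go through verbatim, delivering all five properties with $\sigma\sqrt{k} = \Theta(\eps/\log^2(nd))$. The only care needed is to verify the capped coloring is well-defined — i.e. that capping both sides simultaneously is consistent — which follows because each edge's color is forced by at most one of its endpoints' caps (process edges in a fixed order, and when both endpoints are saturated the edge contributes to neither $\Phi_{f,\chi}$ count, which only helps the lower-bound side's accounting at the cost of a constant factor absorbed into $\sigma$).
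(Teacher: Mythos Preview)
Your proposal has a genuine gap in the degree-capping step, which is the crux of the construction. The issue is that your cap is on the wrong color. You propose: at each $0$-valued point $\by$, cap the number of edges colored $0$; at each $1$-valued point $\bx$, cap the number of edges colored $1$. Now trace through your own construction. The edges $E_0$ are selected from the perspective of $\bx\in B^1$ with $f(\bx)=1$, and by definition satisfy $\chi(\bx,\by_i)=f(\bx)=1$. At the other endpoint $\by\in Y$ we have $f(\by)=0$, so these edges are \emph{not} colored $f(\by)$. Your cap at $\by$ only bounds the number of color-$0$ edges, which says nothing about how many color-$1$ edges land on $\by$. Hence the $Y$-side degree is completely uncontrolled by your coloring. (Relatedly, the sentence ``when both endpoints are saturated the edge contributes to neither $\Phi_{f,\chi}$ count'' is simply false: every edge receives a color in $\{0,1\}$ and contributes to exactly one endpoint's $\Phi_{f,\chi}$.) There is also a circularity you do not address: your coloring $\chi$ is defined in terms of the cap value $k$, but $k$ only emerges from the bucketing \emph{after} you apply \Thm{dir-tal} to the already-fixed $\chi$.

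For comparison, the paper's proof is two lines: it observes that \Thm{dir-tal} holding for \emph{every} coloring is exactly the statement that the full bipartite violation graph is $\Omega(\eps n^d/\log n)$-robust in the sense of \cite[Def.~6.4]{KMS15}, and then invokes \cite[Lemma~6.5]{KMS15} as a black box to extract the good subgraph. That lemma does use the coloring freedom to control the off-side degree, which is the instinct behind your capping idea, but the actual mechanism is more delicate than a single greedy pass: roughly, one repeatedly recolors edges incident to high-degree vertices on the ``wrong'' side to shift their contribution, and the robustness guarantees the Talagrand sum stays large throughout, eventually forcing a level set where \emph{both} sides are degree-bounded. If you want to unpack the black box, you need to reproduce that iterative argument rather than a one-shot cap.
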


\begin{proof} Consider the bipartite subgraph consisting of all violations
of the fully augmented hypergrid. Consider any bi-coloring $\chi$ of the edges
of this subgraph. Let $\deg_{\chi}(\bx)$ denote the number of violating edges $e$ incident to $\bx$ for which $\chi(e) = f(\bx)$. Note that $\deg_{\chi}(\bx) \geq \Phi_{f,\chi}(\bx)$.
Hence, \Thm{dir-tal} asserts that $\sum_\bx \sqrt{\deg_{\chi}(\bx)} \geq C' n^d \eps/\log n$,
for some absolute constant $C'$. 
According to Def. 6.4 of \cite{KMS15}, the bipartite graph is $C' n^d \eps/\log n$-robust.

By Lemma 6.5 of \cite{KMS15}, any robust bipartite graph has a ``good subgraph" satisfying 
the following bound. (Below, $C''$ is a constant.)
$$ \sigma n^d \times \sqrt{k} \geq \frac{C' \eps n^d}{8\log(nd)\log n} \ \ \ \Longrightarrow \ \ \ \sigma\sqrt{k} \geq \frac{C''\eps}{\log^2(nd)} $$
One can remove vertices from this good subgraph to ensure that $\sigma \sqrt{k} = \Theta(\eps/\log^2(nd))$.
\end{proof}

For the rest of the analysis we will assume $|X| = \sigma n^d$, without loss of generality. The edges of the good subgraph of \Thm{good-sub} are central to the tester analysis. We will need
to choose $\tau$ carefully to ensure that the analysis carries through. Towards that choice,
we will set a convenient bound on $\sigma$. We will use $\clarge$ to denote a sufficiently
large constant that is at least $100\cper$ and the constants of \Thm{dir-tal}. (The constant $\cper$ is from \Lem{persist}.)

\begin{claim} \label{clm:small-sigma} If $\sigma < \clarge /\sqrt{d}$, then $I^-_f = \Omega(\eps^2\sqrt{d}/\log^4(nd))$.
\end{claim}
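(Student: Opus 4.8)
\textbf{Proof proposal for \Cref{clm:small-sigma}.}

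The plan is to combine the good-subgraph structure from \Cref{thm:good-sub} with the persistence machinery of \Cref{lem:persist}, exactly in the spirit of the KMS argument, but tracking the extra factor of $(n-1)$ that shows up on the hypergrid. First I would record the regime. Since $\sigma\sqrt{k} = \Theta(\eps/\log^2(nd))$ by \Cref{thm:good-sub} and we are assuming $\sigma < \clarge/\sqrt{d}$, we get $\sqrt{k} = \Theta(\eps/(\sigma\log^2(nd))) = \Omega(\eps\sqrt{d}/\log^2(nd))$, hence $k = \Omega(\eps^2 d/\log^4(nd))$; in particular $k$ is (up to the polylog and $\eps^2$ factors) of order $d$, so the walk length parameter $\tau$ we are allowed to use can be taken as large as roughly $\sqrt{k}/\text{(const)}$ while keeping $\tau = O(\sqrt{d})$. (If $I_f > 9(n-1)\sqrt d$ we are already done by \Cref{clm:tot-neg-inf}, which gives $I^-_f > (n-1)\sqrt d = \Omega(\eps^2\sqrt d/\log^4(nd))$ trivially, so we may assume $I_f \le 9(n-1)\sqrt d$ and invoke \Cref{lem:persist}.)

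Next I would run the capture argument. Set $\tau := c_0\sqrt{k}$ for a small constant $c_0$ (chosen so that $\tau \le \ceil{\log d}$-many... — more precisely so that $\cper\tau/\sqrt d$ is at most $\sigma/4$, which is possible precisely because $\sigma\sqrt k = \Theta(\eps/\log^2(nd))$ and $\tau \asymp \sqrt k$ forces $\cper\tau/\sqrt d \asymp \sigma/\clarge$). Consider a uniform random $\bx \in X$ and a $\tau$-step lazy directed random walk to $\bz$. By \Cref{lem:persist} at most a $\cper\tau/\sqrt d \le \sigma/4$ fraction of all hypergrid points are non-$\tau$-persistent, so (since $|X| = \sigma n^d$) at least half the points of $X$ are $\tau$-persistent; restrict attention to this set $X' \subseteq X$ with $|X'| \ge \sigma n^d/2$. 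For $\bx \in X'$, with probability $\ge 1/2$ the walk ends at a point $\bz$ with $f(\bz) = f(\bx) = 1$. On the other hand, by the degree and edge-count properties of the good subgraph each $\bx \in X$ has a decent ``down-degree'' into $Y$ — one shows that a uniform random walk step from $\bx$ lands on a good-subgraph neighbor $\by \in Y$ (with $f(\by)=0$) with probability $\Omega(k/d)$ per step, so over $\tau = \Omega(\sqrt k)$ steps the walk hits such a $\by$ with probability $\Omega(\tau k/d) = \Omega(k^{3/2}/d)$, after accounting for the $1/n$-type dilution from the hypergrid-to-hypercube sampling of \Cref{fig:dirwalk} (each coordinate move picks a uniform value, costing a $1/(n-1)$ factor relative to the hypercube walk). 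Combining: for a constant fraction of $\bx \in X'$, the walk both (i) stays at value $1$ (persistence) and (ii) crosses a good-subgraph violation edge, so in particular $\bx$ itself must participate in a violation with some point above it at value $0$; more carefully, one argues that the expected number of violating $i$-pairs $(\bx,\by)$ with $\by$ reachable in one step, summed appropriately, is $\Omega(\sigma n^d \cdot k /(n))$ in the hypercube-sampling normalization, which translates to $I^-_f = \Omega((n-1)\cdot \sigma n^d k / n^d) = \Omega(\sigma k)$.

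Finally I would do the arithmetic. We have $\sigma k = (\sigma\sqrt k)\cdot \sqrt k = \Theta\!\big(\tfrac{\eps}{\log^2(nd)}\big)\cdot \Omega\!\big(\tfrac{\eps\sqrt d}{\log^2(nd)}\big) = \Omega\!\big(\tfrac{\eps^2\sqrt d}{\log^4(nd)}\big)$, which is exactly the claimed bound $I^-_f = \Omega(\eps^2\sqrt d/\log^4(nd))$. The main obstacle, and the step that needs the most care, is the per-step ``hit a good-subgraph edge'' estimate on the hypergrid: one must verify that when the directed random walk of \Cref{fig:dirwalk} takes a step in a coordinate $i$ along which $\bx$ has a good-subgraph neighbor, the probability of landing exactly on that neighbor (rather than some other point on the $i$-line) is $\Theta(1/(n-1))$, and that summing these contributions over the $\tau$ steps and over $\bx \in X'$ does not over- or under-count violations — this is where the factor $(n-1)$ in the statement comes from, and where one must use \Cref{clm:inf-hyp}-style edge-probability computations together with the persistence guarantee to ensure the walk genuinely ``detects'' a negative-influence edge incident to a $1$-valued point. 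The persistence step guarantees we are not fooled by the walk wandering back to value $1$ after crossing; the degree bound ($\le k$) in the good subgraph is what makes $\tau \asymp \sqrt k$ the right length so that the walk is likely to traverse exactly one such edge.
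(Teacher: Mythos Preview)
Your proposal takes a long detour through persistence and random-walk capture analysis that is entirely unnecessary here, and the intermediate steps (the ``per-step hit probability'' and the translation to $I^-_f = \Omega(\sigma k)$) are left vague and would need real work to make rigorous. The paper's proof is a two-line counting argument that bypasses all of this: by \Cref{thm:good-sub}, every edge of the good subgraph is already a violation (one endpoint has $f$-value $1$, the other $0$), so the total number of violating edges in the fully augmented hypergrid is at least $|E| \ge \sigma k n^d/2$, and hence $I^-_f \ge \sigma k /2$ directly from \Cref{def:totalinf}. No walks, no persistence, no $(n-1)$ bookkeeping is needed.

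You do correctly extract the first step --- from $\sigma < \clarge/\sqrt{d}$ and $\sigma\sqrt{k} = \Theta(\eps/\log^2(nd))$ one gets $\sqrt{k} = \Omega(\eps\sqrt{d}/\log^2(nd))$ --- and your final arithmetic $\sigma k = (\sigma\sqrt{k})\sqrt{k} = \Omega(\eps^2\sqrt{d}/\log^4(nd))$ is exactly right. The missing insight is simply that this already \emph{is} a lower bound on $I^-_f$, because the good subgraph consists of violating edges. Your remark that ``this is where the factor $(n-1)$ in the statement comes from'' is a red flag: there is no $(n-1)$ factor in the claim, and none is needed; the $(n-1)$ factors in \Cref{clm:inf-hyp} and \Cref{clm:tot-neg-inf} are irrelevant to this particular claim.
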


\begin{proof} By the good subgraph properties in \Thm{good-sub}, $\sqrt{k} = \Omega(\eps\sqrt{d}/\log^2(nd))$.
Hence, the number of edges of the good subgraph is at least $\sigma k n^d/2 = (\sigma \sqrt{k}) \times \sqrt{k} \times n^d/2
= \Omega(\eps^2 \sqrt{d} n^d/\log^4(nd))$. We divide this bound by $n^d$ to bound $I^-_f$.
\end{proof}

Essentially, for the analysis, we can ignore the case when $\sigma$ is too small. With this bound in place,
we can now set the right choice of $\tau$ based on the good subgraph parameters.

\begin{definition} \label{def:tau-sigma} For any $\sigma \geq \clarge/\sqrt{d}$, define $\tau_\sigma$ to be the power of $2$
in the range $[\sigma \sqrt{d}/\clarge, \sigma 2\sqrt{d}/\clarge]$. (Since $\sigma \sqrt{d}/\clarge \geq 1$,
the choice of $\tau_\sigma$ exists.)
\end{definition}

We will now define a particular ``edge capturing event" that ensures that the tester
finds a violation to monotonicity. The crucial property is that these events
are uniquely associated with edges of the good subgraph, and are all disjoint.
So, we can lower bound the probability of this event and multiply by the
number of edges of the good subgraph.

\begin{definition} \label{def:capture} Let $\bx \in X$. We call an edge $(\bx,\by)$ of the good
subgraph \emph{viable} if $\by$ is $\tau_\sigma$-persistent. The set of 
\emph{viable coordinates} of $\bx$ are the dimensions containing the viable
edges incident to $\bx$. 

For a viable $i$-edge $e = (\bx,\by)$, the \emph{capturing event} $\cC_e$ is defined as follows.
Consider the sampling process of the tester, and condition on $\tau \eqdef \tau_\sigma$. We define $\cC_e = \cE_1 \wedge \cE_2 \wedge \cE_3 \wedge \cE_4 \wedge \cE_5$ where:
\begin{asparaitem}
    \item $\cE_1$: The point $\bx$ is chosen (as the first point).
    \item $\cE_2$: The coordinate set $R$ contains $i$.
    \item $\cE_3$: The coordinate $c_i$ is $\by_i$.
    \item $\cE_4$: $R\setminus i$ does not contain any viable coordinates of $\bx$.
    \item $\cE_5$: $f(\bz) = 0$.
\end{asparaitem}
\end{definition}

The main calculation is to lower bound the probability of the event $\cC_e$,
for any viable edge.

\begin{lemma} \label{lem:viable} For any viable edge $e = (\bx,\by)$, $\Pr[\cC_e | \tau = \tau_\sigma] = \Omega(n^{-d} \times n^{-1} \times \tau_\sigma/d)$.
\end{lemma}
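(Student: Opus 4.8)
I want to lower bound the probability of the capturing event $\cC_e = \cE_1 \wedge \cE_2 \wedge \cE_3 \wedge \cE_4 \wedge \cE_5$ for a fixed viable $i$-edge $e=(\bx,\by)$, conditioned on $\tau = \tau_\sigma$. The first three events are ``static'' and easy: $\cE_1$ (the point $\bx$ is chosen first) has probability exactly $n^{-d}$; conditioned on $\cE_1$, the event $\cE_2$ ($i \in R$) has probability $\tau_\sigma/d$; and conditioned on $\cE_1 \wedge \cE_2$, the event $\cE_3$ ($c_i = \by_i$, which forces the walk to actually step up in coordinate $i$ to land on $\by$) has probability $\frac{1}{n-1}$ since $c_i$ is uniform in $[n]\setminus\{\bx_i\}$ and $\by_i > \bx_i$. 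So the ``easy'' contribution is $\Theta(n^{-d} \cdot n^{-1} \cdot \tau_\sigma/d)$, which matches the claimed bound; the work is in showing that the remaining events $\cE_4$ and $\cE_5$ happen with constant probability conditioned on $\cE_1 \wedge \cE_2 \wedge \cE_3$.

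\textbf{Handling $\cE_4$.} The event $\cE_4$ says $R \setminus \{i\}$ contains no viable coordinate of $\bx$. Since $\bx$ has degree at most $k$ in the good subgraph (\Thm{good-sub}), it has at most $k$ viable coordinates, and $|R\setminus\{i\}| = \tau_\sigma - 1$, chosen uniformly among the remaining $d-1$ coordinates. The expected number of viable coordinates landing in $R\setminus\{i\}$ is at most $k(\tau_\sigma - 1)/(d-1) \le k\tau_\sigma/(d-1)$. By the definition of $\tau_\sigma$ (\Def{tau-sigma}), $\tau_\sigma \le 2\sigma\sqrt d/\clarge$, and by the good subgraph property $\sigma\sqrt k = \Theta(\eps/\log^2(nd))$, so $\sigma \le 1$ (for $\eps \le 1$ and $k\ge 1$), hence $k\tau_\sigma/d \le 2k\sigma/\clarge \cdot \sqrt{d}/d \cdot \sqrt d/\sqrt d$... more cleanly: $k\tau_\sigma/d \le 2\sqrt k \cdot (\sigma\sqrt k)/\clarge \cdot (\sqrt d/d) \cdot \sqrt d = 2(\sigma\sqrt k)\sqrt k/(\clarge)\cdot 1/\sqrt d \cdot \sqrt d$. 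I'll just compute directly: $k\tau_\sigma/(d-1) \le 2k\sigma\sqrt d/(\clarge(d-1)) = O(\sqrt k \cdot \eps/(\log^2(nd)\clarge)) \cdot \sqrt d/(d-1)$; using $\sqrt k \le \sqrt d$ this is $O(1/\clarge)$, which is at most $1/4$ since $\clarge$ is chosen large. By Markov's inequality $\Pr[\neg\cE_4 \mid \cE_1\cE_2\cE_3] \le 1/4$, so $\cE_4$ holds with probability $\ge 3/4$.

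\textbf{Handling $\cE_5$ — the main obstacle.} The hard part is $\cE_5$: $f(\bz) = 0$. We know $f(\by) = 0$ and that $\by$ is $\tau_\sigma$-persistent. Conditioned on $\cE_1 \wedge \cE_2 \wedge \cE_3 \wedge \cE_4$, the endpoint $\bz$ is obtained from $\by$ by a further directed random walk using the coordinates in $R \setminus \{i\}$ — that is, $\bz$ is (essentially) the result of a $(\tau_\sigma-1)$-step lazy directed walk from $\by$. Persistence of $\by$ says that a $\tau_\sigma$-length directed walk from $\by$ stays at value $f(\by)=0$ with probability $\ge 1/2$; a $(\tau_\sigma-1)$-length walk is a sub-walk so this still applies. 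The subtlety is that the walk from $\by$ here is conditioned on $R\setminus\{i\}$ avoiding viable coordinates of $\bx$, and one must check this conditioning does not badly distort the walk distribution from $\by$ — this is exactly the kind of argument \cite{KMS15} make (their Lemma 9.5 / the analogous ``capturing'' lemma), and I expect to argue that the conditioning on $\cE_4$ only removes a $1/4$-fraction of outcomes and otherwise the walk from $\by$ restricted to the non-$i$ coordinates is distributed as a genuine directed walk, so persistence gives $\Pr[\cE_5 \mid \cE_1\cE_2\cE_3\cE_4] \ge 1/2 - 1/4 \ge 1/4$ (absorbing the conditioning loss). Chaining: $\Pr[\cC_e \mid \tau=\tau_\sigma] \ge n^{-d}\cdot \frac{\tau_\sigma}{d}\cdot\frac{1}{n-1}\cdot\frac34\cdot\frac14 = \Omega(n^{-d}\cdot n^{-1}\cdot \tau_\sigma/d)$. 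I anticipate the delicate bookkeeping is precisely in making the walk-from-$\by$ distribution match the persistence definition under the conditioning on $\cE_4$, and in correctly using the equivalence of the two walk generation processes from \Obs{walk-distributions} to justify treating the suffix of the walk (from $\by$ onward) as a fresh directed walk.
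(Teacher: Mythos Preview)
Your approach is essentially the same as the paper's: decompose $\cC_e$ into the five events, compute $\Pr[\cE_1\wedge\cE_2\wedge\cE_3]$ exactly, and then argue $\cE_4$ and $\cE_5$ each hold with constant probability. The one noteworthy difference is how $\cE_4$ and $\cE_5$ are combined. You chain them as $\Pr[\cE_4]\cdot\Pr[\cE_5\mid\cE_4]$ and then worry (correctly) that the conditioning on $\cE_4$ might distort the walk distribution from $\by$, flagging this as ``delicate bookkeeping.'' The paper avoids this entirely by a union bound: it shows $\Pr[\neg\cE_4\mid\cE_1\cE_2\cE_3]\le 1/10$ and $\Pr[\neg\cE_5\mid\cE_1\cE_2\cE_3]\le 1/2$ separately (the latter directly from persistence, with no conditioning on $\cE_4$), and concludes $\Pr[\cE_4\wedge\cE_5\mid\cE_1\cE_2\cE_3]\ge 1-1/10-1/2=2/5$. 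This is cleaner and makes your anticipated difficulty disappear.

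Two smaller remarks. First, your Markov-inequality bound for $\cE_4$ is fine in principle, but the algebra you wrote is garbled; the clean chain is $k\tau_\sigma/(d-1)\le 2k\sigma\sqrt d/(\clarge(d-1)) = 2\sqrt k\cdot(\sigma\sqrt k)\cdot\sqrt d/(\clarge(d-1)) = O(\eps/(\clarge\log^2(nd)))$, using $\sigma\sqrt k=\Theta(\eps/\log^2(nd))$ and $k\le d$ (the latter following from $\sigma\ge\clarge/\sqrt d$). The paper instead computes $\Pr[(R\setminus i)\cap T=\emptyset]$ directly via a binomial ratio and gets the same $\exp(-O(\sigma\sqrt k))$ conclusion. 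Second, both your sketch and the paper gloss over a small mismatch: conditioned on $\cE_1\cE_2\cE_3$, the ``walk from $\by$'' has length $\tau_\sigma-1$ and avoids coordinate $i$, whereas persistence is stated for length-$\tau_\sigma$ walks over all coordinates. This is harmless (shorter walks are only more likely to preserve the value, and the persistence bound in \Lem{persist} is monotone in $\tau$), but it is worth noting since you flagged it as a concern.
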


\begin{proof} The probability of choosing $\bx$ is $n^{-d}$. The probability that $i$ lies in $R$
is ${{d-1}\choose {{\tau_\sigma}-1}}/{d\choose {\tau_\sigma}} = \tau_\sigma/d$. The probability
that $c_i$ equals $\by_i$ is $\frac{1}{n-1}$. That is, $\Pr[\cE_1 \wedge \cE_2 \wedge \cE_3] = \Omega(n^{-d} \times n^{-1} \times \tau_\sigma/d)$. Thus, it remains to show that $\Pr[\cE_4 \wedge \cE_5 ~|~ \cE_1 \wedge \cE_2 \wedge \cE_3] = \Omega(1)$.

Let $T$ denote the set of viable coordinates of $\bx$. Since the maximum degree in the good subgraph $G$ is $k$, we have $|T| \leq k$. Thus,
\begin{align}
    \Pr[(R \setminus i) \cap T = \emptyset] \geq \frac{{d - k \choose \tau_{\sigma} - 1}}{{d \choose \tau_{\sigma} - 1}} = \prod_{\ell=0}^{k-1} \left(1-\frac{\tau_{\sigma}-1}{d-\ell}\right) \geq \left(1-\frac{\tau_{\sigma}}{d-k}\right)^k \geq \exp\left(-\frac{k \cdot \tau_{\sigma}}{d-k}\right)
\end{align}
Recall from \Thm{good-sub} that $k = \Theta(\frac{\eps^2}{\log^4(nd)} \cdot \frac{1}{\sigma^2})$ and from \Clm{small-sigma} we may assume that $\sigma \geq \clarge/\sqrt{d}$. Thus, we have $k = O(d \cdot \frac{\eps^2}{\log^4(nd)})$ and in particular, $k = o(d)$ and so
\begin{align}
    \Pr[(R \setminus i) \cap T = \emptyset] \geq \exp\left(-O\left(\frac{k \cdot \tau_\sigma}{d}\right)\right) = \exp\left(-O\left(\frac{k}{d} \cdot \sigma\sqrt{d}\right)\right) = \exp(-O(\sigma\sqrt{k}))
\end{align}
where the second step used our definition of $\tau_\sigma$ and the last step simply used the fact that $k \leq d$. By \Thm{good-sub} we have $\sigma\sqrt{k} = \Theta(\eps/\log^2(nd))$ and so we have $\Pr[(R \setminus i) \cap T = \emptyset] \geq 9/10$ as long as $nd$ is at least some constant.

Finally, since $\by$ is $\tau_\sigma$-persistent, the probability that this random walk from $\by$ ends at $\bz$ where $f(\bz) = 1$ is at most $1/2$. Thus, by the union bound
\[
\Pr[\cE_4 \wedge \cE_5 ~|~ \cE_1 \wedge \cE_2 \wedge \cE_3] \geq 1-(1/10 + 1/2) = 2/5
\]
and this completes the proof. \end{proof}

\subsection{Wrapping it all up} \label{sec:tester-wrap}

We combine all the bounds and calculations to prove that the path tester
has an $\widetilde{\Omega}(\eps^2/n\sqrt{d})$ probability of success.
%

\begin{proof} (of \Thm{tester}) We first take care of some edge cases.

{\em Case 1, $I_f > 9(n-1)\sqrt{d}$:} By \Clm{tot-neg-inf}, $I^-_f > (n-1)\sqrt{d}$.
Thus, the total number of violated edges of the augmented hypergrid  is at least $(n-1)\sqrt{d} n^d$. The total number of edges is $\frac{1}{2}(n-1)dn^d$. Hence,
a uniform random edge is a violation with probability at least $1/(2\sqrt{d})$.
The path tester selects $\tau = 1$ with probability at least $1/\log d$,
so the rejection probability is at least $1/(2\sqrt{d}\log d)$.

{\em Case 2, $\sigma < \clarge/\sqrt{d}$:} By \Clm{small-sigma},
$I^-_f = \Omega(\eps^2\sqrt{d}/\log^4(nd))$. Thus the probability
that a uniform random edge is a violation is $\Omega(\eps^2/(n\sqrt{d}\log^4(nd)))$.
Similar to the above case, the rejection probability is $\Omega(\eps^2/(n\sqrt{d}\log^5(nd)))$.

{\em Case 3, $I_f \leq 9(n-1)\sqrt{d}$ and $\sigma \geq \clarge/\sqrt{d}$:} 
This is the interesting case, where all the previous claims and lemmas are used.
Since $\sigma \geq \clarge/\sqrt{d}$, we can define $\tau_\sigma$ using \Def{tau-sigma}.
By \Lem{persist}, since $I_f \leq 9(n-1)\sqrt{d}$, 
the fraction of vertices that are not $\tau_\sigma$-persistent
is at most 
\[
\cper \tau_\sigma/\sqrt{d} \leq (2\cper/\clarge)\cdot \sigma < \sigma/4\text{.}
\]
Let us now count the number of viable edges in the good subgraph promised by \Thm{good-sub}. There are at least $\sigma k n^d/2$ edges in the good subgraph. There are at most $(\sigma/4) \cdot n^d$ non-persistent vertices in the good subgraph, each of which has degree at most $k$. Thus, removing all non-persistent vertices leaves us with 
\[
\sigma k n^d/2 - (\sigma/4) \times n^d \times k = \sigma k n^d/4
\]
edges, all of which are viable. Let us now lower bound the probability of the tester
rejecting using \Lem{viable}. Recall that all the events $\cC_e$ are disjoint and $\tau_\sigma = \Theta(\sigma\sqrt{d})$.
\begin{eqnarray*}
\Pr\left[\bigcup_{e \ \textrm{viable}} \cC_e | \tau = \tau_\sigma\right] & = & \sum_{e \ \textrm{viable}} \Pr[\cC_e | \tau = \tau_\sigma]\\
& \geq & (\sigma k/4) n^d \times n^{-d} \times n^{-1} \times (\tau_\sigma/d) \\
& = & \Omega(\sigma^2 k/(n\sqrt{d})) = \Omega(\eps^2/(n\sqrt{d}\log^4(nd))) \text{.}
\end{eqnarray*}
The probability of setting $\tau$ to be $\tau_\sigma$
is $1/\log d$, so we multiply the bound above by $1/\log d$ to complete the proof. \end{proof}


\section{Towards a $\widetilde{O}(\sqrt{d})$ tester} \label{sec:no-n}

In this section we describe a different notion of influence of Boolean functions on hypergrids. We conjecture a Talagrand style isoperimetric theorem for the colored version of this influence is true. If so, then we can design a tester whose query complexity has no polynomial dependence on $n$. More precisely, the dependence on $n$ is only polylogarithmic, and since by results of~\cite{BlackCS20,HY22} one can assume $n = \poly(d)$, the final tester's query complexity is $\widetilde{O}(\sqrt{d})$.

\begin{definition}[Weighted Influence]\label{def:psi-f}
		Fix $f:[n]^d \to \{0,1\}$ and a dimension $i\in [d]$. Fix a point $\bx \in [n]^d$. The {\em weighted} influence of $\bx$ along coordinate $i$ is defined as
		\[
		\Psi_f(\bx;i) := \sum_{\by = \bx\pm a\be_i~:~ \textrm{such that} ~(\bx, \by)~ \textrm{is violating}} ~~\frac{1}{a}
		\]
		The weighted influence of $\bx$ is $\Psi_f(\bx) = \sum_{i=1}^d \Psi_f(\bx;i)$.
	\end{definition}

Consider giving a {\em weight} to every edge $(\bx, \bx \pm a\be_i)$ equal to $1/a$, the reciprocal of the length of the edge. The weighted influence of a point $\bx$ with $f(\bx) = 1$ is
the sum of the weights of {\em out}-edges which are violating. This is another generalization of the notion of influence in hypercubes. Also note that for any $\bx$, the thresholded influence 
of any point can't be much smaller than the weighted influence; indeed,  $\Phi_f(\bx) \geq \frac{1}{H_n} \Psi_f(\bx)$ where $H_n$ is the $n$th Harmonic number and is $\Theta(\log n)$.
Therefore, lower bounding $\Psi$'s also lower bounds $\Phi$'s. On the  other hand, $\Phi_f(\bx;i)$ could be as large as $(n-1)\cdot \Psi_f(\bx;i)$ for a particular $\bx$. 

The Talagrand objective with respect to this notion is defined as the expected {\em square root} of the weighted influence.

\begin{definition}[Talagrand Objective wrt Weighted Influence]
		Given any Boolean function $f:[n]^d \to \{0,1\}$, we define the Talagrand objective with respect to the weighted influence as
		\[
			T_{\Psi}(f) :=  \Exp_{\bx\in [n]^d} \left[ \sqrt{\Psi_f(\bx)}\right]
		\]
		where $\Psi_f$ is as defined in~\Cref{def:psi-f}.
\end{definition}

We also have a notion of colorful weighted influence and the corresponding Talagrand objective.
\begin{definition}[Colorful Weighted Influence]\label{def:psi-f-chi}
		Fix $f:[n]^d \to \{0,1\}$ and $\chi:E\to \{0,1\}$. Fix a  dimension $i\in [d]$. Fix a point $\bx \in [n]^d$. The colorful weighted influence of $\bx$ along coordinate $i$ is defined as
		\[
	\Psi_{f,\chi}(\bx;i) := \sum_{\by = \bx\pm a\be_i~:~ \textrm{such that} ~(\bx, \by)~ \textrm{is violating and}~ \chi(\bx, \by) = f(\bx)} ~~\frac{1}{a}
	\]
	The colorful weighted influence of $\bx$ is $\Psi_{f,\chi}(\bx) = \sum_{i=1}^d \Psi_{f,\chi}(\bx;i)$.
	\end{definition}

\begin{definition}[Colorful Thresholded Talagrand Objective]
		Given any Boolean function $f:[n]^d \to \{0,1\}$ and $\chi:E \to \{0,1\}$, 
		we define the Talagrand objective with respect to the colorful weighted influence as
	\[
	 T_{\Psi_\chi}(f) := \Exp_\bx~\left[\sqrt{\Psi_{f,\chi}(\bx)}  \right]
	\]
		where $\Psi_{f,\chi}$ is as defined in~\Cref{def:psi-f-chi}.
\end{definition}
We are now ready to state our main conjecture. Note that due to the fact that $\Psi_{f,\chi}(\bx) = O(\log n)\Phi_{f,\chi}(\bx)$, the conjecture below generalizes~\Cref{thm:dir-tal} up to $\poly\log n$ factors.
\begin{mdframed}[backgroundcolor=gray!20,topline=false,bottomline=false,leftline=false,rightline=false] 
	\begin{conjecture}\label{conj:dir-tal}~
		Let $f:[n]^d \to \{0,1\}$ which is $\eps$-far from monotone, and let $\chi:E\to \{0,1\}$ be an arbitrary coloring of the edges of the augmented hypergrid. Then there exists a constant $c \geq 0$ such that
		\[
			T_{\Psi_\chi}(f) := \Exp_{\bx\in [n]^d}~\left[\sqrt{\Psi_{f,\chi}(\bx)}  \right] = \Omega\left(\frac{\eps}{\log^c n}\right)
		\]
		where $\Psi_{f,\chi}(\bx)$ is as defined in~\Cref{def:psi-f-chi}.
	\end{conjecture}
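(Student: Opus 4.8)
\textbf{Proof proposal for Conjecture~\ref{conj:dir-tal}.}
The natural plan is to mimic the entire architecture we developed for \Cref{thm:dir-tal}, replacing the thresholded influence $\Phi$ by the weighted influence $\Psi$ at every stage, and checking that each step either goes through verbatim or can be patched. Concretely, the three pillars are: (i) a reduction to semisorted functions via a ``semisorting only decreases $T_{\Psi_\chi}$'' lemma analogous to \Cref{lem:semisorting-decreases}, incurring a $\log n$ loss; (ii) a ``connecting $f$ and tracker functions'' theorem analogous to \Cref{thm:semisorted-reduce-to-g}, proved by the same potential/hybrid argument through $d+1$ rounds; and (iii) the telescoping-over-random-restrictions argument of \Cref{thm:dir-tal-semisorted}, which only uses \Cref{thm:kms-und} and properties of the sort operator (\Cref{clm:sort-hamm,clm:2appx,clm:sort-violations}) and so should port with essentially no change. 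The key definitional bookkeeping is that the weights $1/a$ are intrinsic to \emph{pairs of lines} at displacement $a$: when we sort along dimension $i$, displacements in other dimensions are untouched, so the per-line vectors that feed the majorization machinery are still weighted sums (with fixed weights $1/a$) of Boolean vectors indexed by the partner line $\ell' = \ell \pm a\be_j$. Thus \Cref{lem:sum-of-vectors}, which is weight-agnostic, still applies.

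The first step I would carry out is the tracker-function reduction. For each $\bx$, define $g_\bx$ exactly as in \Cref{def:gx}, but now the hybrid potential \eqref{eq:hybrid} should sum $\sum_{j\le i} I^{=j}_{g_\bx,\xi_\bx}(S) + \sum_{j>i}\Psi_{(S\circ f)\!,\chi_S}(\bx;j)$ under the square root. The per-round inequalities $R_{i-1}(S)\ge R_i(S)$ and $R_{i-1}(S)\ge R_i(S+i)$ are proved line-by-line, splitting each $i$-line vector into the $q=1$ (tracker/hypercube-edge), $q=2$ (the single dimension-$i$ term), and $q=3$ (weighted influence in directions $j>i$) parts. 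The $q=1$ and $q=3$ recolorings of \Cref{sec:proofoflemma6} carry over: for $q=3$ one now down-sorts $\chi$ on the violations-at-displacement-$a$ set $V = V_L\cup V_R$ for each fixed $(j,a)$, and the vectors $\bv^{R}_{j,a},\bv^{L}_{j,a}$ become weighted by $1/a$ but the sortedness and permutation claims are unchanged because they are purely about which entries are present, not their common weight. The $q=2$ step needs the observation that gave (b)/(d): when $h$ is semisorted and we look at the single dimension-$i$ direction, either $\vec{L^{(2)}_A}$ or $\vec{L^{(2)}_C}$ is ``large'' in the majorization sense. Here is where I expect a genuine subtlety: for $\Phi$ the relevant vectors were exactly $\{0,1\}$-valued and the pigeonhole argument ``if $\bx\in A,\by\in C$ then $(\bx,\by)$ is a violation so one of them pays'' made $\vec{L^{(2)}_A}$ or $\vec{L^{(2)}_C}$ the all-ones vector. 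With weighted influence $\Psi_{h,\chi}(\bx;i)$ is a sum of $1/a$ terms over \emph{all} violating partners on the same $i$-line, so the analogous claim becomes a quantitative statement: for $\bx\in A$ at position $p$ and the zeros to its right in $C$, $\Psi_{h,\chi}(\bx;i) \ge$ (weighted count of $1$-colored right-violations), and summing these over $A$ versus over $C$, one of the two totals must be $\Omega$ of a fixed quantity, and moreover the vector is monotone along the line because $h$ is semisorted. I would prove a weighted analogue of the ``either $\vec{L^{(2)}_A}\equiv\mathbf 1$ or $\vec{L^{(2)}_C}\equiv\mathbf 1$'' claim that says the larger of the two vectors coordinate-wise dominates a sorted vector whose norm is comparable — this is the step most likely to cost an extra $\mathrm{polylog}\,n$ factor (which is harmless, given the $\log^c n$ slack in the conjecture), or to require genuinely new ideas if the naive bound is too lossy.

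The second step is the semisorting reduction (\Cref{lem:semisorting-decreases} analogue): after semisorting $f$ on an interval $I$ in dimension $i$, one recolors exactly as in \Cref{sec:semisorting-can-only-reduce}, down-sorting $\vec\chi$ on violations within $I$ in each cross direction $(j,a)$ and handling the ``$\by$ to the right / $\bz$ to the left'' edges as there. The proof is again line-by-line and relies on \Cref{clm:crucial-1}, \Cref{clm:crucial-2}, \Cref{clm:z-vs-w} and \Cref{lem:sum-of-vectors}; all of these are about sorted versus permuted Boolean vectors with a common fixed weight per cross-direction, so they transfer. The third step is the telescoping argument: since $\Psi_{f,\chi}(\bx) = O(\log n)\,\Phi_{f,\chi}(\bx)$ is irrelevant here (we work directly with $\Psi$), one needs a weighted analogue of \Cref{clm:tal-restrict}; but that claim's proof only uses linearity of $\EX_{S\sim\cH(p)}[\Psi_{f,\chi}(\bx;S)] = p\,\Psi_{f,\chi}(\bx)$ (true for any additive-over-dimensions quantity) and Jensen, so it goes through unchanged, and \Cref{clm:triangle}, \Cref{thm:semisorted-reduce-conv}, and \Cref{clm:var-prob} are unaffected. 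Assembling these exactly as in the proof of \Cref{thm:dir-tal} and \Cref{thm:dir-tal-semisorted} gives $T_{\Psi_\chi}(f) = \Omega(\eps/\log^c n)$ for some constant $c$; the value of $c$ accumulates one $\log n$ from the semisorting scale reduction plus whatever loss the weighted $q=2$ step incurs. To summarize the plan: port Sections~\ref{sec:semisorted}--\ref{sec:semisorted-tal-dist} and \ref{sec:mainworkhorse}--\ref{sec:proofoflemma6} with $\Phi\rightsquigarrow\Psi$, treating every per-line vector as a weighted sum over cross-directions-with-displacement; the one place needing a new argument is the weighted version of the ``thresholded'' pigeonhole for the dimension-$i$ term, which I expect to be the main obstacle.
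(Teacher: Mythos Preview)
The statement you are trying to prove is a \emph{conjecture} in the paper; the paper does not prove it and in fact explicitly flags that the architecture you propose does not go through. So there is no ``paper's own proof'' to compare to, but the paper does comment on exactly your strategy and names a concrete obstruction.

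Your plan has a genuine gap at step (i), the semisorting reduction. You assert that the analogue of \Cref{lem:semisorting-decreases} transfers because ``all of these are about sorted versus permuted Boolean vectors with a common fixed weight per cross-direction.'' This is true for the \emph{cross} directions $j\neq i$: when you semisort on an interval $I$ in dimension $i$, the displacement $a$ in dimension $j$ is unchanged, so each per-$(j,a)$ vector still carries the same fixed weight $1/a$ and the majorization machinery applies. But it is \emph{false} for the along-$i$ (``parallel'') contribution, the $\bbA/\bbB$ piece of \Cref{sec:semisorting-can-only-reduce}. There the relevant pairs are $(\bx,\by)$ with $\bx\in I$ and $\by$ outside $I$ on the same $i$-line, and the weight is $1/|\bx_i-\by_i|$. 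Semisorting moves the ones inside $I$ to the right, so if $\by$ sits to the right of $I$ with $f(\by)=0$, the distance $|\bx_i-\by_i|$ \emph{decreases} and the weight $1/a$ \emph{increases}. Thus $T_{\Psi_{\chi'}}(h)$ can be strictly larger than $T_{\Psi_\chi}(f)$ for every recoloring $\chi'$, and the whole reduction to semisorted functions collapses. The paper says exactly this in \Cref{sec:no-n}: ``semisorting can \emph{increase} the Talagrand objective with respect to the $\Psi$-influences mainly because it can bring violations closer which bumps up the $\Psi$-influence.''

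A second, related issue is your $q=2$ step in the hybrid argument. For $\Phi$, the pigeonhole ``either $\vv{L^{(2)}_A}\equiv\mathbf{1}$ or $\vv{L^{(2)}_C}\equiv\mathbf{1}$'' works because every $(\bx,\by)$ with $\bx\in A$, $\by\in C$ is a violation and contributes a full unit to one endpoint. For $\Psi$ the contribution is $1/|\bx_i-\by_i|$, and an adversarial $\chi$ can color so that each point only ``keeps'' violations at large displacement; you cannot conclude that either side has a coordinate-wise large vector, and the paper notes in \Cref{sec:no-n} that there exist $f,\chi'$ with $\Phi_{f,\chi}(\bx;i)=1$ but $\Psi_{f,\chi'}(\bx;i)\approx 1/n$ no matter how you recolor. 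Your hope that this costs only a $\mathrm{polylog}\,n$ factor is unsupported. In short, two of your three pillars (semisorting-only-decreases and the $q=2$ majorization) genuinely fail for $\Psi$, and a proof of \Cref{conj:dir-tal} will require new ideas beyond porting the $\Phi$ argument.
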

\end{mdframed}

If the above conjecture is true, then we can design a $\widetilde{O}(\sqrt{d})$ tester. Before we state this tester, and indeed to motivate it, we first note that the analysis of the tester described in~\Cref{fig:alg} is tight on its dependence on $n$. 
An example is the so-called {\em centrist} function defined in~\cite{BlackCS20} (Section 8) defined as follows for the case $n=d$, that is, $f:[d]^d \to \{0,1\}$.
\[
f(\bx) = \begin{cases}
	0 & \textrm{if}~~\exists i\in [d]~~~ \bx_i = 2 \\
	1 & \textrm{otherwise}
\end{cases}
\]
\cite{BlackCS20} (Claim 8.2) showed that $\eps_f = \Omega(1)$ for the above function. To see this, observe that the probability of a random point having $\bx_i \neq 2$ for all $i \in [d]$ is
$\left(1-\frac{1}{d}\right)^d = \Theta(1)$. Now let's consider the rejection probability of the algorithm in~\Cref{fig:alg}. Note that a violating pair $(\bx, \by)$ must satisfy $\bx_i = 1$
and $\by_i = 2$ for some $i$, and that $i$ needs to be picked in the random set $R$. Even conditioning on picking that $i\in R$, the probability $c_i$ would be set to $2$ is $\frac{1}{n-1} \approx \frac{1}{d}$ in this case. Therefore, the algorithm needs to be modified. 

To fix this, think about the tester as performing a (lazy) directed random walk. And now when one picks a dimension $i$ to move in, one doesn't have a {\em uniform} distribution over the length it moves in this dimension, as currently is the case in~\Cref{fig:alg}, rather one chooses a length using a discrete Pareto distribution. That is, smaller lengths are given more weight than longer lengths. More precisely, we pick a length $a$ to move with probability proportional to $\frac{1}{a}$; note that the constant of proportionality that we need to scale down by is at most $H_n = \Theta(\log n)$. This is the reason we defined the weighted influence as we did in~\Cref{def:psi-f}, and indeed, if~\Cref{conj:dir-tal} is true, then we can prove that this tester is an $\widetilde{O}(\sqrt{d})$ tester; the proof technique is similar to that described in~\Cref{sec:tester} which itself is a modification of the analysis in~\cite{KMS15}. \smallskip

We now show that the {\em uncolored} version of~\Cref{conj:dir-tal} is in fact true, and follows easily using the colorful version~\Cref{thm:dir-tal} for thresholded influence. 
\begin{theorem}
		Let $f:[n]^d \to \{0,1\}$ which is $\eps$-far from monotone. Then,
		\[
		T_{\Psi}(f) := \Exp_{\bx\in [n]^d}~\left[\sqrt{\Psi_{f}(\bx)}  \right] = \Omega\left(\frac{\eps}{\log n}\right)
		\]
		where $\Psi_{f}(\bx)$ is as defined in~\Cref{def:psi-f}.
\end{theorem}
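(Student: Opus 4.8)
The plan is to reduce the uncolored weighted Talagrand inequality to the uncolored (indeed, even the colorful) thresholded Talagrand inequality of \Cref{thm:dir-tal}, by exhibiting a pointwise comparison between the two influence notions after a cheap averaging trick. The naive attempt—just observe $\Psi_f(\bx) \ge \frac{1}{H_n}\Phi_f(\bx)$ and apply concavity—loses a $\sqrt{\log n}$ factor and is the wrong direction to be useful; instead I would go the other way and bound the \emph{thresholded} objective by the \emph{weighted} objective, which is what we actually want, but that inequality is false pointwise (a point can have $\Phi_f(\bx;i)=1$ while $\Psi_f(\bx;i)$ is as small as $1/(n-1)$). The fix is to randomly sparsify: we will show that for an appropriately chosen random scale, the weighted influence \emph{recovers} a $1/\log n$ fraction of the thresholded influence in expectation, and then push the square root through with Jensen.

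Concretely, here is the chain I would carry out. First I would introduce, for a dyadic scale parameter $a$ ranging over powers of two in $[1,n]$, the ``scale-$a$ thresholded influence'' $\Phi^{(a)}_f(\bx;i)$, equal to $1$ iff there is an $i$-aligned violation $(\bx,\by)$ with $|\bx_i-\by_i| \in [a,2a)$, and $\Phi^{(a)}_f(\bx) = \sum_i \Phi^{(a)}_f(\bx;i)$. Since any violating pair has length in exactly one dyadic band, and $\Phi_f(\bx;i) = \max_a \Phi^{(a)}_f(\bx;i) \le \sum_a \Phi^{(a)}_f(\bx;i)$, we get $\Phi_f(\bx) \le \sum_{a} \Phi^{(a)}_f(\bx)$ where the sum has $O(\log n)$ terms. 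Averaging \Cref{thm:dir-tal-uncolored} (or rather, observing that the proof of \Cref{thm:dir-tal} applies verbatim when we restrict attention to violations in a single dyadic length band—the sort operator and the whole argument are agnostic to which violations we declare ``relevant'', so we may simply ignore violations outside the band) yields that there is a dyadic scale $a^\star$ with
\[
\Exp_{\bx}\left[\sqrt{\Phi^{(a^\star)}_f(\bx)}\right] = \Omega\!\left(\frac{\eps}{\log^2 n}\right),
\]
since otherwise summing $\Exp_\bx[\sqrt{\Phi^{(a)}_f(\bx)}]$ over the $O(\log n)$ scales would contradict \Cref{thm:dir-tal-uncolored} applied to $f$ after noting $\sqrt{\Phi_f(\bx)} \le \sum_a \sqrt{\Phi^{(a)}_f(\bx)}$. (I would need to double-check whether a single band really does give $\Omega(\eps/\log n)$ rather than $\Omega(\eps/\log^2 n)$; the cleanest route that is definitely safe is the one just described, which costs an extra $\log n$ but is harmless since the target is only $\Omega(\eps/\log n)$—wait, it isn't harmless. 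Let me instead observe directly: if $(\bx,\by)$ is a violation in band $[a^\star,2a^\star)$, then $\Psi_f(\bx;i) \ge \frac{1}{2a^\star}$, so $\Psi_f(\bx;i) \ge \frac{1}{2a^\star}\Phi^{(a^\star)}_f(\bx;i)$ and hence $\Psi_f(\bx) \ge \frac{1}{2a^\star}\Phi^{(a^\star)}_f(\bx)$ pointwise; but this loses a factor $a^\star$ which can be $\Theta(n)$, so it is useless for a \emph{single} large scale.)

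So the real argument must avoid committing to one scale. Here is the correct version. For \emph{each} dyadic scale $a$ we have the pointwise bound $\Psi_f(\bx) \ge \frac{1}{2a}\,\Phi^{(a)}_f(\bx)$, hence $\sqrt{\Psi_f(\bx)} \ge \frac{1}{\sqrt{2a}}\sqrt{\Phi^{(a)}_f(\bx)}$. Therefore, using $\sqrt{\Phi_f(\bx)} \le \sum_a \sqrt{\Phi^{(a)}_f(\bx)} \le \Big(\sum_a \frac{1}{\sqrt{2a}} \cdot \sqrt{2a}\,\sqrt{\Phi^{(a)}_f(\bx)}\Big)$, but $\sum_a \sqrt{2a}$ over dyadic $a \le n$ is $\Theta(\sqrt{n})$—again a polynomial loss. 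This keeps failing because the weighted influence genuinely punishes long violations, and a point whose only violations are at distance $\Theta(n)$ contributes $\Theta(1)$ to $\Phi_f$ but only $\Theta(1/n)$ to $\Psi_f$; no purely pointwise manipulation can rescue this. The right move is to change the function, not the comparison: I would invoke the domain reduction theorem of \cite{BlackCS20,HY22}, which lets us assume $n = \poly(d)$ with only $\poly(\log d)$ loss in $\eps$, and—more to the point—note that the statement to be proved has a $\log n$ in the denominator precisely to absorb $H_n$. In fact the clean and correct proof is exactly the one-liner the authors hint at in the surrounding text: apply \Cref{thm:dir-tal} (the colorful thresholded theorem) with a \emph{carefully chosen} coloring $\chi$ that, at every point $\bx$ and dimension $i$, retains only the \emph{shortest} $i$-aligned violation incident to $\bx$ on the side matching $f(\bx)$—equivalently, color each edge to push it to the endpoint for which it is the shortest relevant violation. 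Then $\Phi_{f,\chi}(\bx;i) \le \Phi_f(\bx;i)$, and whenever $\Phi_{f,\chi}(\bx;i)=1$ the retained violation has length $a \le n$, giving $\Psi_f(\bx;i) \ge \frac1a \ge \frac1n$... which is still the same trap.

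\textbf{What I expect the main obstacle to be, and the intended resolution.} The genuine obstacle is that there is \emph{no} pointwise inequality $\sqrt{\Psi_f} \gtrsim \frac{1}{\polylog n}\sqrt{\Phi_f}$, so the theorem cannot follow from \Cref{thm:dir-tal} by pointwise comparison alone; one must use a dyadic decomposition \emph{together with} the fact that \Cref{thm:dir-tal} holds for each length band separately (which, as noted, follows because the entire proof machinery—sort operators, tracker functions, majorization—only ever uses that ``being a violation'' is a downward-closed-under-sorting property, and restricting to a fixed dyadic band of a fixed direction preserves this). Granting that per-band statement, the proof is: pick the band $a^\star$ and direction achieving $\Exp_\bx[\sqrt{\Phi^{(a^\star)}_f(\bx)}] = \Omega(\eps/\log n)$; observe $\Psi_f(\bx) \ge \frac{1}{2a^\star}\Phi^{(a^\star)}_f(\bx)$; but then rescale by noting that a random walk / averaging over \emph{starting points within each length-$a^\star$ interval} shows the weighted influence, when you also sum a geometric series of sub-bands, telescopes so that the $1/a^\star$ is compensated. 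Honestly, the slickest route I can see that is airtight: define $\Psi'_f(\bx;i) = \sum_a 2^{-a}\cdot(\text{dyadic})$... I would present the dyadic-band version of \Cref{thm:dir-tal}, sum $\Exp_\bx[\sqrt{\Psi_f(\bx)}] \ge \Exp_\bx\big[\sqrt{\sum_a \frac{1}{2a}\Phi^{(a)}_f(\bx)}\big] \ge \max_a \frac{1}{\sqrt{2a}}\Exp_\bx[\sqrt{\Phi^{(a)}_f(\bx)}]$, and then—using that $\Exp_\bx[\sqrt{\Phi^{(a)}_f(\bx)}]$ summed appropriately against the \emph{right} weights recovers $\eps$—conclude. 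The load-bearing lemma, and the step I expect to actually require work, is establishing that \Cref{thm:dir-tal} localizes to a single dyadic length band with the same $\eps/\log n$ bound (not $\eps/\log^2 n$); I would prove this by re-examining the telescoping/semisorting reduction and checking that the band-restricted influence still satisfies the Potential Drop \Cref{lem:lhs2-and-lhs3} with the sort operator, which it does because sorting a line cannot create a violation of length $< a$ out of nothing at the relevant endpoint when none existed in that band—this monotonicity-under-sorting of each band is the only new thing to verify, and it follows from \Cref{clm:sort-violations} applied band-by-band.
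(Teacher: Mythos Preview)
Your proposal has a genuine gap: you convinced yourself that no pointwise comparison can work, and then embarked on an elaborate (and shaky) dyadic-band localization of \Cref{thm:dir-tal}. But a pointwise comparison \emph{does} work, provided you choose the right coloring to feed into the \emph{colorful} thresholded theorem. The paper's proof is a one-paragraph argument: define $\chi(\bx,\bx+a\be_i) := 1$ iff the interval $[\bx,\bx+a\be_i]$ contains at least as many $0$'s as $1$'s, and $:=0$ otherwise. Now suppose $f(\bx)=1$ and $\Phi_{f,\chi}(\bx;i)=1$. Then there is a violating $(\bx,\by)$ with $\chi(\bx,\by)=1$, so at least half the points in the interval $[\bx,\by]$ are $0$'s; every such $0$ is itself a violation with $\bx$, and since any set of $\ge m/2$ indices in $\{1,\dots,m\}$ has harmonic sum $\Omega(1)$, we get $\Psi_f(\bx;i) = \Omega(1)$. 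The symmetric argument handles $f(\bx)=0$. Hence $\Psi_f(\bx) = \Omega(\Phi_{f,\chi}(\bx))$ \emph{pointwise with a universal constant}, so $T_\Psi(f) = \Omega(T_{\Phi_\chi}(f)) = \Omega(\eps/\log n)$ by \Cref{thm:dir-tal}.

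The idea you missed is that the coloring's job is not to select a single witness (your ``shortest violation'' attempt), but to \emph{certify density}: whenever the colored thresholded influence fires, it guarantees a large cluster of nearby violations, which is exactly what makes the harmonic-weighted sum $\Omega(1)$. Your fallback plan---re-proving \Cref{thm:dir-tal} restricted to a single dyadic length band---is both unnecessary and technically fraught: the claim that sorting ``cannot create a violation of length $<a$ out of nothing at the relevant endpoint'' does not follow from \Cref{clm:sort-violations}, which only controls \emph{counts} of violations between a fixed pair of lines, not their distribution across length bands at a fixed point.
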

\begin{proof} The key is that we can always define a coloring $\chi$ for which $T_{\Psi}(f) = \Omega(T_{\Phi_{\chi}}(f))$ simply as follows. For every edge $(\bx,\bx+a\be_i)$ of the fully augmented hypergrid, define $\chi(\bx,\bx+a\be_i) := 1$ if the interval $[\bx,\bx+a\be_i]$ is at least half $0$'s and $\chi(\bx,\bx+a\be_i) := 0$ otherwise. This coloring achieves the desired property because of the following simple fact.

\begin{fact} Let $S \subseteq \{1,\ldots,n\}$ be of size $|S| \geq \lceil n/2 \rceil$. Then $\sum_{a \in S} \frac{1}{a} = \Omega(1)$. \end{fact}

\begin{proof} For simplicity let $n$ be even. The sum is minimized when $S = [n] \setminus [n/2]$ and in this case the sum is equal to $H_n - H_{n/2} = \Omega(1)$. \end{proof}

Now consider $\bx$ and $i \in [d]$ such that $f(\bx) = 1$ and $\Phi_{f,\chi}(\bx;i) = 1$. This implies that there exists an $i$-edge $(\bx,\by)$ for which $f(\by) = 0$ and $\chi(\bx,\by) = 1$. By definition of the coloring, the interval $[\bx,\by]$ is at least half $0$'s and so $\Psi_f(\bx;i) = \Omega(1)$. Similarly, consider $\by$ and $i \in [d]$ such that $f(\by) = 0$ and $\Phi_{f,\chi}(\by;i) = 1$. This implies that there exists an $i$-edge $(\bx,\by)$ for which $f(\bx) = 1$ and $\chi(\bx,\by) = 0$. By definition of the coloring, the interval $[\bx,\by]$ is at least half $1$'s and so $\Psi_f(\by;i) = \Omega(1)$. Thus, $T_{\Psi}(f) = \Omega(T_{\Phi_{\chi}}(f))$ and so invoking \Cref{thm:dir-tal} completes the proof. \end{proof}
It is worthwhile to point out the challenge in generalizing the above theorem to the colored version. Note the above was a ``point-by-point'' argument in that we found a coloring $\chi$ of the fully augmented hypergrid edges such that for every $\bx$ and every $i\in [d]$, we could prove $\Psi_f(\bx;i) = \Omega(\Phi_{f,\chi}(\bx;i))$. One would wonder if such a point-by-point analysis is possible even when we have an arbitrary coloring $\chi'$, and the LHS in the previous statement is replaced with the colored version. Unfortunately, this is not possible. One can find examples of $f$ and $\chi'$ such that no matter how you define $\chi$, there will be some point $\bx$ and some dimension $i$ such that $\Phi_{f, \chi}(\bx;i) = 1$, but $\Psi_{f,\chi'}(\bx;i) \approx 1/n$. These examples do not disprove the conjecture since, in these examples, for a constant fraction of $(\bx, i)$ pairs, we do have $\Psi_{f,\chi'}(\bx;i) = \Omega(\Phi_{f,\chi}(\bx))$, but it does point to the need of a new argument. One could also wonder if the proof technique used to prove~\Cref{thm:dir-tal} can bear upon the proof of the conjecture. There are many roadblocks here, one of them primarily being that semisorting can {\em increase} the Talagrand objective with respect to the $\Psi$-influences mainly because it can bring violations closer which bumps up the $\Psi$-influence. Nevertheless, the authors believe that~\Cref{conj:dir-tal} is true, and we leave the resolution of this as a promising direction towards getting rid of the polynomial dependence on $n$ thereby resolving the Boolean monotonicity testing question on hypergrids.

\bibliography{derivative-testing.bib}
\bibliographystyle{alpha}

\end{document}